%% LyX 2.2.3 created this file.  For more info, see http://www.lyx.org/.
%% Do not edit unless you really know what you are doing.
\documentclass[12pt,english]{article}
\usepackage{mathtools,float}

\allowdisplaybreaks[4]
\usepackage{setspace}

\usepackage[dvipsnames]{xcolor}
\usepackage{geometry}
\geometry{verbose,tmargin=1in,bmargin=1in,lmargin=1in,rmargin=1in}

\usepackage{color,tikz}
\usepackage{amsmath}
\usepackage{amsthm}
\usepackage{amssymb}
\usepackage[authoryear]{natbib}
\usepackage{booktabs}
\usepackage{graphicx}
\usepackage{tikz}
\usetikzlibrary{arrows,decorations.markings}
\usepackage{verbatim}

\tikzset{%
  ->-/.style={decoration={markings, mark=at position 0.7 with {\arrow{stealth},bend=right}},
              postaction={decorate}}
}
\usepackage{titletoc}

\makeatletter
%%%%%%%%%%%%%%%%%%%%%%%%%%%%%% Textclass specific LaTeX commands.
 \theoremstyle{definition}
  
   \theoremstyle{plain}
  \newtheorem{assumption}{\protect\assumptionname}
\theoremstyle{plain}
\newtheorem{thm}{\protect\theoremname}
\theoremstyle{plain}

  \theoremstyle{plain}
  
    \theoremstyle{definition}
  \newtheorem{remark}{\protect\remarkname}
    \theoremstyle{plain}
  \newtheorem{lem}{\protect\lemmaname}

    \theoremstyle{plain}
  \newtheorem{algo}{\protect\algorithmname}
  
%%%%%%%%%%%%%%%%%%%%%%%%%%%%%% User specified LaTeX commands.
\definecolor{DarkBlue}{rgb}{0,0,0.6}
\usepackage{hyperref}
\hypersetup{
     colorlinks   = true,
     citecolor    = Maroon,
     urlcolor= DarkBlue,
     linkcolor=DarkBlue
}

\makeatother

\usepackage{babel}
  \providecommand{\assumptionname}{Assumption}
  \providecommand{\examplename}{Example}
  \providecommand{\propositionname}{Proposition}
\providecommand{\theoremname}{Theorem}
\providecommand{\remarkname}{Remark}
\providecommand{\corollaryname}{Corollary}
  \providecommand{\lemmaname}{Lemma}
    \providecommand{\algorithmname}{Algorithm}

\renewcommand{\baselinestretch}{1.1}\small\normalsize
\global\long\def\betamix{\beta_{{\rm mixing}}}

\global\long\def\Acal{\mathcal{A}}

\global\long\def\tF{\tilde{F}}

\global\long\def\hF{\hat{F}}

\global\long\def\bF{\bar{F}}

\global\long\def\cF{\check{F}}

\global\long\def\tH{\widetilde{H}}

\global\long\def\Mcal{\mathcal{M}}

\global\long\def\bZ{\bar{Z}}

\global\long\def\Zb{\mathbf{Z}}

\global\long\def\oneb{\mathbf{1}}

\global\long\def\hSigma{\hat{\Sigma}}

\global\long\def\hmu{\hat{\mu}}

\global\long\def\hQ{\hat{Q}}

\global\long\def\hbeta{\hat{\beta}}

\global\long\def\Wcal{\mathcal{W}}

\global\long\def\Hcal{\mathcal{H}}

\global\long\def\RR{\mathbb{R}}

\global\long\def\heps{\hat{\varepsilon}}

\global\long\def\Bcal{\mathcal{B}}

\global\long\def\bG{\bar{G}}

\global\long\def\Qcal{\mathcal{Q}}

\global\long\def\hu{\hat{u}}

\global\long\def\tbeta{\tilde{\beta}}

\global\long\def\Acal{\mathcal{A}}

\global\long\def\RR{\mathbb{R}}

\global\long\def\hOmega{\hat{\Omega}}

\global\long\def\tOmega{\tilde{\Omega}}

\global\long\def\tmu{\tilde{\mu}}

\global\long\def\trace{\text{trace}}

\global\long\def\tW{\tilde{W}}

\def\spacingset#1{\renewcommand{\baselinestretch}%
	{#1}\small\normalsize} \spacingset{1}

\newcommand{\blind}{1}

  \usepackage{xr}

\begin{document}

\if1\blind
{
  \title{\bf An Exact and Robust Conformal Inference Method for Counterfactual and Synthetic Controls}
  \author{Victor Chernozhukov\thanks{Massachusetts Institute of Technology; 50 Memorial Drive, E52-361B, Cambridge, MA 02142, USA; Email: vchern@mit.edu} \qquad Kaspar W\"uthrich\thanks{Department of Economics, University of California San Diego, 9500 Gilman Dr., La Jolla, CA 92093, USA; Email: kwuthrich@ucsd.edu} \qquad Yinchu Zhu\thanks{Brandeis University; 415 South Street, Waltham, MA 02453, USA; Email: yinchuzhu@brandeis.edu}}
 
} \fi

\if0\blind
{

  \title{\bf An Exact and Robust Conformal Inference Method for Counterfactual and Synthetic Controls}
  \date{ }
} \fi

\maketitle

\begin{abstract}
We introduce new inference procedures for counterfactual and synthetic control methods for policy evaluation. We recast the causal inference problem as a counterfactual prediction and a structural breaks testing problem. This allows us to exploit insights from conformal prediction and structural breaks testing to develop permutation inference procedures that accommodate modern high-dimensional estimators, are valid under weak and easy-to-verify conditions, and are provably robust against misspecification. Our methods work in conjunction with many different approaches for predicting counterfactual mean outcomes in the absence of the policy intervention. Examples include synthetic controls, difference-in-differences, factor and matrix completion models, and (fused) time series panel data models. Our approach demonstrates an excellent small-sample performance in simulations and is taken to a data application where we re-evaluate the consequences of decriminalizing indoor prostitution. Open-source software for implementing our conformal inference methods is available.

\bigskip

\noindent \textit{Keywords:} permutation inference, model-free validity, difference-in-differences, factor model, matrix completion, constrained Lasso

\vfill

\end{abstract}

\newpage
\spacingset{1.5} % DON'T change the spacing!
\pagenumbering{arabic}% Arabic page numbers (and reset to 1)

\section{Introduction}
We consider the problem of making inferences on the causal effect of a policy intervention in an aggregate time series setup with a single treated unit. The treated unit is observed for $T_0$ periods before and $T_\ast$ periods after the intervention occurs. Often, there is additional information in the form of possibly very many untreated units, which can serve as controls. Such settings are ubiquitous in applied research, and there are many different approaches for estimating the causal effect of the policy. Examples include difference-in-differences methods, synthetic control (SC) approaches, factor, matrix completion, and interactive fixed effects (FE) models, and times series models.\footnote{We refer to \citet{DI16}, \citet{gobillon2016regional}, and \citet{abadie2019jel} for excellent comparative overviews and reviews.} 
We refer to these methods as counterfactual and synthetic control (CSC) methods. 

This paper provides generic and robust procedures for making inferences on policy effects estimated by CSC methods. We propose a general counterfactual modeling framework that nests and generalizes many traditional and new methods for counterfactual analysis. We focus on methods that are able to generate mean-unbiased proxies, $P_t^N$, for the counterfactual outcomes of the treated unit in the absence of the policy intervention, $Y_{1t}^N$: 
\[
Y_{1t}^N=P_t^N+u_t, \quad E\left(u_t \right)=0, \quad t=1,\dots,T_0+T_\ast.
\] 
The policy effect in period $t$ is $\theta_t=Y_{1t}^I-Y_{1t}^N$, where $Y_{1t}^I$ is the counterfactual outcome of the treated unit with the policy intervention. We are interested in testing hypotheses about the trajectory of policy effects in the post-treatment period, $\theta=\{\theta_t\}_{t=T_0+1}^{T_0+T_\ast}$. Specifically, we postulate a trajectory $\theta^0= \{ \theta^0_t\}_{t=T_0+1}^{T_0+T_\ast}$ and test the sharp null hypothesis that $\theta=\theta^0$. We also consider the problem of testing hypotheses about per-period effects $\theta_t$ and propose a simple algorithm for constructing pointwise confidence intervals via test inversion. 

We recast the inference problem as a (counterfactual) prediction and a structural breaks testing problem. This allows us to build on the literature on conformal prediction \citep{vovk2005algorithmic} and end-of-sample stability testing \citep[][]{dufour1994generalized,andrews2003end} to construct inference procedures that are provably robust against misspecification and accommodate many classical and modern high-dimensional methods for estimating $P_t^N$. 

 The basic idea of our testing procedures is as follows. Suppose that there is only one post-treatment period and that $P_t^N$ is known. Under the sharp null that $\theta_{T_0+1}=\theta_{T_0+1}^0$, we can compute $Y_{1t}^N$ and $u_t=Y_{1t}^N-P_t^N$ for all time periods. If the stochastic shock process $\{u_t\}$ is stationary and weakly dependent, and its distribution is invariant under the intervention, the distribution of the error in the post-treatment period, $u_{T_0+1}$, should be the same as the distribution of the errors in the pre-treatment period, $\{u_t\}_{t=1}^{T_0}$. We operationalize this idea by proposing inference methods in which $p$-values are obtained by permuting blocks of estimated residuals across the time series dimension.

The proposed methods are valid under two different sets of conditions:
\begin{itemize} \setlength{\parskip}{0pt}
\item[\textbf{(i)}] \textbf{Estimator Consistency and Stationary Weakly Dependent Errors}

If the data exhibit dynamics, trends, and serial dependence but the stochastic shock sequence $\{u_t\}$ is stationary and weakly dependent, our inference procedures are approximately valid if the estimator of $P_t^N$ is consistent (pointwise and in prediction norm). Consistency can be verified for many different CSC methods. We provide concrete sufficient conditions for a representative selection of methods, including difference-in-differences, SC, factor models, matrix completion and interactive FE models, linear and nonlinear time series models, and fused time series panel data models.

\item[\textbf{(ii)}] \textbf{Estimator Stability and Stationary Weakly Dependent Data}

In practice, misspecification is an important concern. We show that even if the model for $P_t^N$ is misspecified and the estimator of $P_t^N$, $\hat{P}_t^N$, is inconsistent, our procedures are still valid, provided that the data are stationary and weakly dependent and $\hat{P}_t^N$ satisfies a \emph{stability} condition. This condition requires that $\hat{P}_t^N$ is stable under perturbations in a few observations. It is implied, for instance, if $\hat{P}_t^N$ is consistent for a pseudo-true parameter value but is shown to hold even in high-dimensional settings where consistency results under misspecification are often not available.

\end{itemize}

The main theoretical results in this paper are finite sample (non-asymptotic) bounds on the size accuracy of our methods; these bounds imply that our methods are exact as $T_0\rightarrow \infty$. Unlike traditional asymptotic results, which are only informative when the sample size is large enough, our non-asymptotic bounds show how different factors affect the finite sample performance. This feature is relevant in CSC applications where sample sizes are often small.

A key feature of our conformal inference methods is that $P_t^N$ is estimated under the null hypothesis based on data from all $T_0+T_\ast$ periods. Estimation under the null guarantees the exact finite sample validity of our procedures if the data are iid or exchangeable. Even when exchangeability fails, imposing the null for estimation is essential for a good performance in CSC applications where $T_0$ is often rather small. Figure \ref{fig:importance_H0} plots the empirical rejection probabilities for testing the null that $\theta_{T_0+1}=0$ when $T_0=19$, $J=50$ (as in our empirical application), $\{u_t\}$ is an AR(1) process, and $P_t^N$ is estimated using SC. The size properties of our method are excellent. By contrast, estimating $P_t^N$ based on the $T_0$ pre-treatment periods without imposing the null yields substantial size distortions. Figure \ref{fig:importance_H0} suggests that imposing the null continues to improve size accuracy even when exchangeability fails and that these improvements can be substantial in small samples.

\begin{center}
[Figure \ref{fig:importance_H0} around here.]
\end{center}

We make two additional contributions that may be of independent interest. First, we introduce the $\ell_1$-constrained least squares estimator or \emph{constrained Lasso} \citep[e.g.,][]{raskutti2011minimax} as an essentially tuning-free alternative to existing penalized regression estimators and study its theoretical properties. Constrained Lasso nests SC and difference-in-differences, providing a unifying approach for the regression-based estimation of the mean proxies $P_t^N$. Second, we obtain theoretical consistency results for SC estimators in settings with potentially very many control units.

We develop three extensions of our main results. First, we show that our method can be modified to test hypotheses about average effects over time. Second, we extend our method to settings with multiple treated units. Third, we propose easy-to-implement placebo tests for assessing the credibility of inferences based on our method. 

Monte Carlo simulations suggest that our procedures exhibit excellent size properties and are robust to misspecification. We find that imposing additional constraints (e.g., using SC instead of the more general constrained Lasso) does not improve power when these additional restrictions are correct but can cause power losses when they are not.

Finally, we re-analyze the causal effect of decriminalizing indoor prostitution on sexually transmitted infections. Following \citet{cunningham2018decriminalizing}, we exploit the unanticipated decriminalization of indoor prostitution in Rhode Island in 2003. We find that decriminalizing indoor prostitution significantly decreased the incidence of female gonorrhea.

\paragraph{Related Literature.} We contribute to the literature on inference procedures for CSC methods with few treated units.  
A popular method is the finite population permutation approach of \citet{abadie10sc}, see also \citet{firpo18synthetic} and \citet{abadie2019jel}. This approach permutes the policy assignment  and relies on permutation distributions for inference. It corresponds to conventional randomization inference \citep{fisher1935design} under random assignment of the policy \citep[e.g.,][]{abadie10sc,abadie2019jel}. However, random assignment is not plausible in typical CSC applications, and assignment mechanisms are difficult to model and estimate when there are only few treated units \citep[e.g.,][]{abadie2019jel}.  \citet{shaikh2019randomization} propose randomization tests for settings with staggered treatment adoption, which encompass the approach of \citet{abadie10sc}. The main assumption of \citet{shaikh2019randomization}'s approach is that policy adoption follows a Cox proportional hazards model. We do not model the assignment mechanism. Instead, we exploit stationarity and weak dependence of the errors across time in a repeated sampling framework. One advantage of exploiting the time series dimension is that we only require a suitable model for the potential outcome of the treated unit. By contrast, cross-sectional approaches often require estimating models for all units. That is, we only require a good ``local'' instead of a good ``global'' fit, which reduces the risk of model misspecification. On the other hand, our approach requires a large number of pre-treatment periods and relies on invariance of the error distribution under the intervention.

There is also an active literature on asymptotic inference methods for CSC models. Several papers focus on testing hypotheses about average or expected effects over time, requiring $T_0$ and $T_\ast$ to be large. \citet{li2017estimation}, \citet{carvalho2018arco}, \citet{chernozhukov2019ttest}, and \citet{li2020statistical} introduce inference methods based on penalized and constrained regression methods. \citet{arkhangelsky2018synthetic} propose inference methods for a version of SC with time and unit weights, which admits a weighted regression formulation. Asymptotic inference methods based on factor and interactive FE models are proposed by \citet{hsiao2012panel}, \citet{gobillon2016regional}, \citet{chan2016policy}, \citet{li2017estimation}, \citet{xu2017generalized}, and \citet{li2018inference}. 
Here we focus on sharp null hypotheses and permutation distributions and provide non-asymptotic performance guarantees. Our approach is generic and valid with many different methods, including constrained regression, factor models, and interactive FE estimators. \citet{conley2011inference} propose inference methods for difference-in-difference settings with few treated units. They exploit the cross-sectional dimension, relying on weak dependence and stationarity of the error terms across units, which may be hard to justify in typical CSC settings. By contrast, our procedures rely on stationarity and weak dependence of the errors  over time. On the other hand, exploiting the time series dimension, our approach requires $T_0$ to be large, whereas \citet{conley2011inference} allow $T_0$ to be fixed.
In related work, \citet{cattaneo2021prediction} provide prediction intervals for per-period effects estimated by SC methods. Their key observation is that there is both randomness from estimating the SC weights and from the prediction error. They propose a sampling-based inference method based on non-asymptotic probability bounds that accounts for both types of randomness and is valid with stationary and non-stationary data.

We recast the causal inference problem as a (counterfactual) prediction problem and build on the literature on conformal prediction \citep[e.g.,][]{vovk2005algorithmic,vovk2009online,lei2013distribution,lei2014distribution,lei2017distributionfree} and on the literature on permutation tests \citep[e.g.,][]{romano1990behavior,lehmann2005testing}, which was started by \citet{fisher1935design} in the context of randomization; see \citet{rubin1984bayesianly} for a Bayesian justification. On a more general level, our approach is also connected to transformation-based approaches to model-free prediction \citep[e.g.,][]{politis2015modelfree}. Let us discuss in more detail the relationship to \citet[][CWZ18 henceforth]{chernozhukov2018exact}, who extend classical conformal prediction to time series settings. Besides a different focus (prediction intervals for future outcome values vs.\ inference on policy effects), there are several important differences. First, we rely on permuting residuals, whereas CWZ18 permute blocks of data. Second, we theoretically analyze different types of permutations. In particular, we study the set of all permutations, which yields precise $p$-values in small samples. This set of permutations cannot be used in the framework of CWZ18 unless the data are iid. Third, we allow for non-stationary data, whereas the prediction methods in CWZ18 are strictly limited to stationary data. Forth, CWZ18 rely on abstract high-level conditions on the test statistics and do not provide any primitive conditions. By contrast, we develop transparent sufficient conditions that can be verified for many traditional and modern CSC methods, and we provide explicit primitive conditions for a large selection of popular approaches. Finally, we establish the validity of our methods with time series data under misspecification and stability, whereas the theoretical results for weakly dependent data in CWZ18 require correct specification and consistency.

Finally, we show that the problem of making inferences on policy effects can be recast as a structural breaks testing problem with a known break date. Therefore, we build on and contribute to the literature on testing for structural breaks and, in particular, to the literature on structural breaks testing using permutation approaches \citep[e.g.,][]{antoch2001permutation,zeileis2013toolbox}. Besides a different focus (inference on policy effects vs.\ testing for structural breaks), our paper differs from the existing literature in that we specifically focus on testing at the end of the sample, allow for a very general class of estimators, including modern high-dimensional methods, and provide non-asymptotic performance guarantees under correct specification and misspecification. Our paper is also related to tests for structural breaks at the end of the sample \citep[e.g.,][]{dufour1994generalized,andrews2003end}.\footnote{\citet{hahn2017synthetic} informally suggest applying a variant of \citet{andrews2003end}'s end-of-sample stability test in the context of SC, and \citet{ferman2019inference} use a version of this test in the context of difference-in-differences approaches with few treated groups.} Let us discuss the differences to \citet{andrews2003end}'s end-of-sample instability test based on subsampling in more detail. First, we focus on causal inference, whereas \citet{andrews2003end} is concerned with structural breaks testing. Second, our procedures are exactly valid under exchangeability, and we obtain finite sample bounds under weak conditions on the estimators, while the theoretical properties of \citet{andrews2003end}'s test rely on asymptotic analyses. Third, our methods are valid under misspecification, whereas \citet{andrews2003end} assumes correct specification. Forth, our results under correct specification only require stationarity and weak dependence of $\{u_t\}$, while \citet{andrews2003end}'s test assumes stationarity of the data.\footnote{ \citet{andrews2003end} briefly comments on page 1681 (comment 4) that his test can be shown to be asymptotically valid under stationary errors but does not provide a formal result.} Finally, our procedures work in conjunction with many modern high-dimensional estimators, whereas \citet{andrews2003end} focuses on low-dimensional GMM models. 
 
\paragraph{Notation.} For $q\geq 1$, the $\ell_q$-norm of a vector is denoted by $\| \cdot\|_{q}$. We use $\|\cdot\|_0$ to denote the number of nonzero entries of a vector; $\|\cdot\|_{\infty}$ is used to denote the maximal absolute value of entries of a vector. We use the
notation $a\lesssim b$ to denote $a \leq cb$ for some constant $c > 0$ that does not depend on the sample size. We use the notation $a \asymp b $ to denote $a\lesssim b$ and $b\lesssim a$. For a set $A$, $|A|$ denotes the cardinality of $A$.  For any $a\in\mathbb{R} $, we define $\lfloor a \rfloor =\max\{z\in\mathbb{Z}:z\leq a\} $  and $\lceil a \rceil =\min\{z\in\mathbb{Z}:z\geq a\} $, where $\mathbb{Z}$ is the set of integers. We use $\mathbb{N}$ to denote the set of natural numbers.

\section{A Conformal Inference Method}
\label{sec:conformal_inference}
\subsection{The Counterfactual Model}

We consider a time series of $T$ outcomes for a treated unit, labeled $j=1$.  During the first
$T_0$ periods, the unit is not treated by a policy and, during the remaining $T-T_0=T_\ast$ periods,  it is treated
by the policy. Extensions to more than one treated unit are discussed in the Appendix. Our typical setting is where $T_\ast$ is short compared to $T_0$. There may be other units that are not exposed to the policy, and they will be introduced below. We denote the observed outcome of the treated unit by $Y_{1t}$. We employ the potential (latent) outcomes framework \citep{neyman1923application,rubin1974estimating} and denote potential outcomes with and without the policy as $Y_{1t}^I$ and $Y_{1t}^N$. The effect of the policy intervention in period $t$ is $\theta_t=Y_{1t}^I-Y_{1t}^N$.

Our conformal inference method will rely on the following counterfactual modeling framework, which nests many traditional and new methods for counterfactual policy analysis; see Sections \ref{subsec:sc_panel}--\ref{subsec:ts_fused} for examples.

\begin{assumption}[Counterfactual Model]\label{ass:dgp} 
Let $\{P_t^N\}$ be a given sequence of mean-unbiased predictors or proxies for the counterfactual outcomes $\{Y_{1t}^N\}$ in the absence of the policy intervention, that is $\{E\left( P_t^N\right)\} = \{E \left( Y_{1t}^N\right)\}$. Let $\{\theta_t\}$ be a fixed policy effect sequence with $\theta_t = 0 $ for $ t \leq T_0$, so that potential outcomes under the intervention are given by $\{Y_{1t}^I\}  = \{Y_{1t}^N + \theta_t\}$.\footnote{In the Appendix, we consider an extension to random policy effects.}
In other words, potential outcomes can be written as
\begin{equation}\tag{CMF}
\begin{array}{l}
Y_{1t}^N = P_t^N + u_t\\
Y_{1t}^I = P_t^N + \theta_t + u_t  \\
\end{array}   \Bigg | \quad E( u_t) = 0,  \quad t=1,\dots,T , \\
\end{equation}
where $\{u_t\}$ is a centered stationary stochastic process. Observed outcomes are related to potential outcomes as $Y_{1t}=Y_{1t}^{N}+D_{t}\left(Y_{1t}^{I}-Y_{1t}^{N}\right)$, where $D_{t}=1\left(t>T_0 \right)$. 
\end{assumption}
Assumption \ref{ass:dgp} introduces the potential outcomes, but also postulates an identifying assumption in the form of the existence of mean-unbiased proxies $P^N_t$ such that $E \left( P_t^N\right) = E \left( Y_{1t}^N \right)$.  Assumption \ref{ass:dgp} allows $\{P_t^N\}$ to be fixed or random and does not impose any restrictions on the dependence between $\{P_t^N\}$ and $\{u_t\}$. In Sections \ref{subsec:sc_panel}--\ref{subsec:ts_fused}, we will discuss specific panel data and time series models that postulate (and identify) what $P_t^N$ is under a variety of conditions.  Additional assumptions on the stochastic shock process $\{u_t\}$ will be introduced later, in essence requiring $\{u_t\}$ to be either iid or, more generally, a stationary and weakly dependent process. 

Assumption \ref{ass:dgp} also postulates that the stochastic shock sequence is invariant under the intervention.  This is the fundamental identifying assumption. It requires that the timing of the policy intervention is independent of factors that change the distribution of $\{u_t\}$.\footnote{In principle, we can relax this assumption by specifying, for example, the scale and quantile shifts in the stochastic shocks that result from the policy, and then working with the resulting model; we leave this extension to future work.} If the policy changes the distribution of $\{u_t\}$, one can either interpret our method as a structural breaks test or view the policy effect as random in which case our method yields valid prediction sets; see the Appendix for details.

Often, there is additional information in the form of untreated units, which can serve as controls. Specifically, suppose that there are $J \geq 1$ control units, indexed by  $j=2,\dots,J+1$.  We assume that we observe all units for all $T$ periods, although this assumption can be relaxed. Let $Y_{jt}$ denote the observed outcome for these untreated units.  This observed outcome is equal to  the outcome in the absence of the policy intervention, i.e., $Y_{jt} = Y_{jt}^N$ for $2\le j\le  J+1$ and $1\le t\le T$. For each unit, we may also observe a vector of covariates $X_{jt}$. This motivates a variety of strategies for modeling and identifying $P_t^N$ as discussed below.

\subsection{Hypotheses of Interest, Test Statistics, and $p$-Values}
\label{subsec:hypotheses}
We are interested in testing hypotheses about the trajectory of policy effects in the post-treatment period, $\theta=\left(\theta_{T_0+1},\dots,\theta_{T}\right)'$. Our main hypothesis of interest is
\begin{eqnarray}\label{eq:H0}
H_0:~\theta=\theta^0,
\end{eqnarray}
where $\theta^0=\left(\theta^0_{T_0+1},\dots,\theta^0_{T}\right)'$ is a postulated policy effect trajectory. Hypothesis \eqref{eq:H0} is a sharp null hypothesis. It fully determines the value of the counterfactual outcome in the absence of the intervention in the post-treatment period since $Y_{1t}^N=Y^I_{1t}-\theta_t=Y_{1t}-\theta_t$. In the Appendix, we show that our method can also be used to test hypotheses about average effects.

To describe our procedure, we write the data under the null hypothesis as $\Zb:=\Zb(\theta^0)=(Z_1,\dots,Z_{T})'$, where
\[
Z_{t}=\begin{cases}
\left(Y^N_{1t},Y^N_{2t},\dots,Y^N_{J+1t},X'_{1t},\dots,X'_{J+1t}\right)', & t\le T_0\\
\left(Y_{1t}^I-\theta_t^{0},Y^N_{2t},\dots,Y^N_{J+1t},X'_{1t},\dots,X'_{J+1t}\right)', & t>T_0.
\end{cases}
\]

Using one of the methods described below, we will obtain a counterfactual proxy estimate, $\hat{P}^N_t$,
based on $\Zb$, and compute the residuals $\hat{u}=\left(\hat{u}_1,\dots,\hat{u}_T \right)'$, where $\hat{u}_t=Y_{1t}^N-\hat{P}^N_t$ for $1\le t\le  T$.
Since $P_t^N $ is computed using $\Zb=\Zb(\theta^0)$,  $P_t^N$ is estimated under the null hypothesis, which is essential for a good small sample performance. In ``ideal'' settings where the data are iid, imposing the null guarantees the model-free exact finite sample validity of our method; see the Appendix for details. By contrast, when $P_t^N$ is estimated based on the pre-treatment data $\{Z_t\}_{t=1}^{T_0}$ without imposing the null, permuting blocks of residuals does not yield procedures with exact finite sample validity, not even with iid data.

\smallskip

\noindent \textbf{Definition of Test Statistic $\mathbf{S}$.} We consider the following test statistic:
\[
S  (\hat u) = S_q(\hat u) = \left ( \frac{1}{\sqrt{T_{\ast}}} \sum_{t=T_0+1}^T|\hat{u}_t|^q \right)^{1/q}.
\]

\smallskip

Note that $S$ is constructed such that high values indicate rejection. Different choices of $q$ lead to power against different alternatives. For instance, if the intervention has a large but only temporary effect (i.e., if $|\theta_t|$ is large for few periods), choosing $q=\infty$ yields high power. On the other hand, if the intervention has a permanent effect (i.e., if $\theta_t$ is non-zero for many post-treatment periods), tests using $S_1$ or $S_2$ exhibit good power properties. In our application, we will be using $S_1$, which behaves well under heavy-tailed data. Throughout the paper, when the nature of the statistic is not essential, we write $S = S_q$. 

\begin{remark}[Choice of Test Statistic] While we focus on $S_q$, other test statistics can be used as well.  For example, when capturing deviations in the average effect $ T_{*}^{-1} \sum_{t=T_0+1}^T \theta_t$, it is useful to consider $S(\hat u) = T_\ast^{-1/2} \left |\sum_{t=T_0+1}^T \hat{u}_t \right |$. \qed
\end{remark}

We use (block) permutations to compute $p$-values. A permutation $\pi$ is a one-to-one mapping $\pi:\{1,\dots,T\}\mapsto\{1,\dots,T\}$. We denote the set of permutations under study as $\Pi$ and assume that $\Pi$ contains the identity map $\mathbb{I}$. We focus on two different sets of permutations: (i) the set of all permutations, which we call \emph{iid permutations},  $\Pi_{\text{all}}$, and (ii) the set of all (overlapping) \emph{moving block permutations}, $\Pi_{\to}$.\footnote{We can also consider other types of permutations; for example, the ``iid block'' permutations. Specifically, let $\{b_1, \dots, b_K\}$ be a partition of $\{1,\dots,T\}$, then we collect all the permutations $\pi$ of these blocks, forming the ``iid m-block'' permutations $\Pi_{mb}$. In our context, choosing $m=T_*$ is natural, though other choices should work as well, similarly to the choice of block size in the time series bootstrap. We refer to CWZ18 for more results on block permutations.} The elements of $\Pi_{\to}$ are indexed by $j \in \{0,1, \dots, T-1\}$, and the permutation $\pi_{j}$ is defined as
\[
\pi_{j}(i)=\begin{cases}
i+j & {\text{if}}\ i+j\leq T\\
i+j-T& {\text{otherwise}}.
\end{cases}
\]
Figure \ref{fig:illustration_permutation} presents a graphical illustration of $\Pi_{\text{all}}$ and $\Pi_{\to}$.

\begin{center}
[Figure  \ref{fig:illustration_permutation} around here.]
\end{center}

The choice of $\Pi$ does not matter for the exact finite sample validity of our procedures if the residuals are exchangeable. However, $\Pi_{\text{all}}$ has more elements than $\Pi_{\to}$, allowing us to compute more precise $p$-values and to test at lower significance levels. For the asymptotic validity under estimator consistency, the choice of $\Pi$ depends on the assumptions that we are willing to impose on the stochastic shock sequence $\{u_t\}$ (cf. Section \ref{subsec:small_error}).

For each $\pi \in \Pi$, let $\hat{u}_\pi=(\hat{u}_{\pi(1)},\dots,\hat{u}_{\pi(T)})'$ denote the vector of permuted residuals.\footnote{If the estimator of $P_t^N$ is invariant under permutations of the data $\{Z_t\}$ across the time series dimension (which is the case for many estimators in Section \ref{subsec:sc_panel}), permuting the residuals $\{\hat{u}_t\}$ is equivalent to permuting the data $\{Z_t\}$.} The permutation $p$-value is defined as follows.

\smallskip

\noindent \textbf{Definition of $p$-Value.} The $p$-value is 
\begin{equation}
\hat{p}=1- \hat{F}\left( S(\hat{u})\right), ~~\text{where}~ \hat{F}\left( x\right)=\frac{1}{|\Pi|}\sum_{\pi\in\Pi}\mathbf{1}\left\{ S\left(\hat{u}_\pi \right)<  x\right\}.
 \label{eq:p_value}
\end{equation}

We are often interested in testing pointwise hypotheses about $\theta_t$, $H_0:\theta_{t}=\theta_{t}^0$, and in constructing pointwise confidence intervals for $\theta_{t}$. Pointwise hypotheses can be tested by defining the data under the null as $\Zb=\left(Z_1,\dots,Z_{T_0},Z_{t}\right)'$, provided that $P_t^N$ can be estimated based on $\Zb$. Pointwise $(1-\alpha)$ confidence intervals for $\theta_{t}$ can be constructed via test inversion as described in Algorithm \ref{algo:pointwise_ci}.

\begin{algo}[Pointwise Confidence Intervals]\label{algo:pointwise_ci} (i) Choose a fine grid of $G$ candidate values $\tilde\Theta_t=\{\tilde\theta_{1t}^0,\dots,\tilde\theta_{Gt}^0\}$. (ii) For $\tilde\theta_{t}^0\in \tilde{\Theta}_t$, define $\Zb$ for the null hypothesis $H_0:\theta_{t}=\tilde\theta_{t}^0$ and compute the corresponding $p$-value, $\hat p (\tilde\theta_{t}^0)$, using \eqref{eq:p_value}. (iii) Return the $(1-\alpha)$ confidence set $\mathcal{C}_{1-\alpha}(t)=\left\{\tilde\theta_{t}^0\in \tilde\Theta_t: ~\hat p (\tilde\theta_{t}^0)> \alpha \right\}$. 
\end{algo}

\subsection{Models for Counterfactual Proxies via Synthetic Control and Panel Data}
\label{subsec:sc_panel}
The availability of control units motivates several strategies for modeling the counterfactual mean proxies $P_t^N$. We estimate $P_t^N$ based on the imputed data under the null hypothesis, $\Zb(\theta^0)$, and write $Y_{1t}^N$ instead of $Y^I_{1t}-\theta_t^0$ to alleviate the exposition. 

\subsubsection{Difference-in-Differences Methods}
\label{ex:did}
The difference-in-differences method postulates the following model for the counterfactual mean proxy \citep[e.g.,][Section 5.1]{DI16}:
$
P_t^N=\mu+\frac{1}{J}\sum_{j=2}^{J+1}Y^N_{jt}.
$
This model automatically embeds the identifying information. The counterfactual mean proxy can be estimated as 
$
\hat{P}^N_t=\frac{1}{T}\sum_{s=1}^{T}\left(Y^N_{1s}-\frac{1}{J}\sum_{j=2}^{J+1}Y^N_{js}\right)+\frac{1}{J}\sum_{j=2}^{J+1}Y^N_{jt}.
$

\subsubsection{Synthetic Control and Constrained Lasso}
\label{ex:sc}
The canonical SC method \citep[e.g.,][]{abadie2003economic,abadie10sc,abadie2015comparative} postulates the following model:
\begin{eqnarray}
P^N_t=\sum_{j=2}^{J+1}w_jY^N_{jt},  ~~\text{where}~~ w\ge 0 ~~\text{and}~~ \sum_{j=2}^{J+1} w_j=1. \label{eq:sc_model}
\end{eqnarray}
We need to impose an identification condition that allows us to identify the weights $w$, for example:\footnote{More generally, other exclusion restrictions and identifying assumptions could be used. See also \citet{abadie10sc}, \citet{ferman2019synthetic} and \citet{ferman2019properties}, who study the behavior of SC when the data are generated by a factor model.}
 
\begin{itemize}
\item[(SC)] Assume that the structural shocks $u_t$ for the treated unit are uncorrelated with 
contemporaneous values of the outcomes, namely: $E \left( u_t Y^N_{jt} \right)= 0~\text{for}~ 2\leq j\leq J+1$.
\end{itemize}

 The counterfactual is estimated as $\hat{P}^N_t=\sum_{j=2}^{J+1}\hat{w}_jY^N_{jt}$.
We focus on the following canonical SC estimator for $w$:\footnote{This formulation of canonical SC
without covariates is due to  \citet{DI16}, who refer to the estimator \eqref{eq:sc_problem} as ``constrained regression''. Note that unlike \citet{DI16}, we estimate $w$ under the null hypothesis based on all the data. We focus on the canonical problem \eqref{eq:sc_problem} for concreteness. \citet{abadie10sc,abadie2015comparative} consider a more general version that also includes covariates into the estimation of the weights. Our inference method also works in conjunction with more recently proposed modified versions of SC, such as the augmented SC estimator of \citet{benmichael2018augmented}.}
\begin{eqnarray}
\hat{w}=\arg\min_{w} \sum_{t=1}^{T}\left(Y^N_{1t}-\sum_{j=2}^{J+1}w_{j}Y^N_{jt}\right)^2~~ \text{s.t.}~~ w\ge 0 ~~\text{and}~~ \sum_{j=2}^{J+1}w_j=1\label{eq:sc_problem}.
\end{eqnarray}

As an alternative, we can consider the more flexible model\footnote{The idea to relax the non-negativity constraint on the weights is not new. It first appeared in \citet{hsiao2012panel}, who compared their factor model approach to SC, and also in \citet{valero2015synthetic}, who used the cross-validated Lasso to estimate the weights, and in \citet{DI16},  who used cross-validated Elastic Net for estimation of weights. They do not establish the formal properties of these estimators. Here we emphasize another version of relaxing SC, model \eqref{ex:cLasso}, which leads to constrained Lasso \eqref{eq:cLasso}. Constrained Lasso demonstrates an excellent theoretical and practical performance:  it is tuning-free, performs very well empirically and in simulations, and we prove that it is consistent for dependent data without any sparsity conditions on the weights and that it satisfies the estimator stability condition required for validity under misspecification. We emphasize that this estimator generally differs from the cross-validated Lasso estimator.}
\begin{equation}\label{ex:cLasso} 
P^N_t=\mu + \sum_{j=2}^{J+1}w_jY^N_{jt}, ~~ \text{where}~~  \| w\|_1 \leq 1,
\end{equation}
maintaining the same identifying assumption (SC). The counterfactual is estimated as $\hat{P}^N_t=\hat\mu+\sum_{j=2}^{J+1}\hat{w}_jY^N_{jt}$ 
by the $\ell_1$-constrained least squares estimator, or constrained Lasso \citep[e.g.,][]{raskutti2011minimax}:
\begin{eqnarray}
(\hat\mu,\hat{w})=\arg\min_{(\mu,w)} \sum_{t=1}^{T}\left(Y^N_{1t}-\mu-\sum_{j=2}^{J+1}w_{j}Y^N_{jt}\right)^2 ~~\text{s.t.}~~ \| w\|_1\le 1.\label{eq:cLasso}
\end{eqnarray}

The advantage over other penalized regression methods discussed next is that constrained Lasso is essentially tuning-free, does not rely on any sparsity conditions, and is valid for dependent data under weak assumptions. Moreover, constrained Lasso encompasses both difference-in-differences and canonical SC as special cases (by setting $w=(1/J,\dots,1/J)'$ and $\mu=0,w\ge 0$, respectively) and, thus, provides a unifying approach for the regression-based estimation of $P^N_t$. 

Section  \ref{sub: low level SC} provides primitive conditions that guarantee that the SC and the constrained Lasso estimators are valid in our framework in settings with potentially many control units (large $J$). Finally, we note that it is straightforward to incorporate (transformations of) covariates $X_{jt}$ into the estimation problems \eqref{eq:sc_problem} and \eqref{eq:cLasso}.

\subsubsection{Penalized Regression Methods}
\label{ex:pen_reg}
Consider a linear model for $P_t^N$:
$
P^N_t=\mu+\sum_{j=2}^{J+1}w_jY^N_{jt}.
$
We maintain the identifying assumption (SC). The counterfactual is estimated by $\hat{P}^N_t=\hat\mu+\sum_{j=2}^{J+1}\hat{w}_jY^N_{jt}$,
where 
\begin{eqnarray}
(\hat\mu,\hat{w})=\arg\min_{(\mu,w)} \sum_{t=1}^{T}\left(Y^N_{1t}-\mu-\sum_{j=2}^{J+1}w_{j}Y^N_{jt}\right)^2 + \mathcal{P}(w)\label{eq:restr_reg},
\end{eqnarray}
and $\mathcal{P}(w)$ is a penalty function that penalizes deviations away from zero.  If it is desired
to penalize deviations away from other focal points $w^0$, for example, $w^0 =(1/J, \dots , 1/J)$ used in the difference-in-differences approach, we may always use instead: $\mathcal{P}(w) \leftarrow \mathcal{P}(w- w^0)$. Note that it is straightforward to incorporate covariates $X_{jt}$ into the estimation problem \eqref{eq:restr_reg}.  

Different variants of $\mathcal{P}(w)$ can be considered. Examples include: Lasso \citep{tibshirani96regression}, where $\mathcal{P}(w)=\lambda \|w\|_1$ and $\lambda$ is a tuning parameter; Elastic Net \citep{Zou2005}, where $\mathcal{P}(w)=\lambda \left( (1-\alpha)\|w\|_2^2+\alpha \|w\|_1 \right)$ and $\lambda$ and $\alpha$ are tuning parameters; Lava \citep{chernozhukov2017lava}, where $\mathcal{P}(w)=\inf_{a + b = w}\lambda \left((1-\alpha) \|a\|_2^2+\alpha \|b\|_1 \right)$ and $\lambda$ and $\alpha$ are tuning parameters.

In the context of CSC methods, Lasso was used by \citet{valero2015synthetic}, \citet{li2017estimation}, and \citet{carvalho2018arco}, while \citet{DI16} proposed to use Elastic Net. We will impose only weak requirements on the performance of the estimators (pointwise consistency and consistency in prediction norm), which implies that these estimators are valid in our framework under any set of sufficient conditions that exists in the literature. 

\subsubsection{Interactive Fixed Effects, Factor, and Matrix Completion Models}
\label{ex:interactive_fe}
Consider the following interactive FE model for treated and untreated units:
\begin{eqnarray}
\begin{array}{l}
Y^N_{jt}=\lambda_{j}'F_{t}+ X_{jt}'\beta + u_{jt},\quad \text{for}\quad 1\leq j \leq J+1~\text{and}~ 1\leq t\leq T,
\end{array}\label{eq:interactive_feM}
\end{eqnarray}
where $F_t$ are unobserved factors, $\lambda_j$ are unit-specific factor loadings, and $\beta$ is a vector of common coefficients. Model \eqref{eq:interactive_feM} nests the classical factor model when $\beta=0$ and also covers the traditional linear FE model, in which $\lambda_j'F_t =  \lambda_j+F_t$. Consider the following assumption.

\begin{itemize}  
\item[(FE)]  Assume that $u_{jt}$ is uncorrelated with $(X_{jt},F_{t},\lambda_j)$,
as well as other identification conditions in \citet{bai2009panel}. 
\end{itemize}

The model leads to the following proxy:
\begin{eqnarray}
P_t^N=\lambda_1'F_t + X_{1t}'\beta.
\label{eq:interactive_fe}
\end{eqnarray}
Counterfactual proxies are estimated by $\hat{P}_t^N=\hat\lambda_1'\hat{F}_t + X_{1t}'\hat \beta$, where $\hat{\lambda}_1$ and $\hat{F}_t$, and $\hat \beta$ are obtained using the alternating least squares method  applied to the model \eqref{eq:interactive_feM}; see, for example, \citet{bai2009panel} and \citet{hansen2019factor} for a version with high-dimensional covariates.

\citet{hsiao2012panel} appears to the be first work that proposed the use of factor models for predicting the (missing) counterfactual responses specifically in SC settings. \citet{gobillon2016regional} and \citet{xu2017generalized} employ \citet{bai2009panel}'s estimator in this setting, albeit provide no formally justified inference methods. Formal inference results for interactive FE and factor models in SC designs are developed in \citet{chan2016policy} and \citet{li2018inference} among others.\footnote{Factor models are widely used in macroeconomics for causal inference and prediction; see, for example, \citet{stock2016factor} and the references therein. In microeconometrics, factor models are used for estimation of treatment/structural effects; see, for example, \citet{hansen2019factor} who use interactive FE models to estimate the effect of gun prevalence on crime.}

Other recent applications to predicting counterfactual responses include \citet{amjad2018robust} and \citet{athey2018matrix} (using, respectively, singular value thresholding and the nuclear norm penalization).\footnote{Note that  \citet{athey2018matrix}'s analysis applies to a broader collection of problems with general missing data patterns, nesting SC and difference-in-difference problems as special cases.} Our method delivers a way to perform valid inference for policy effects using any of the factor model estimators used in these proposals applied to the complete data under the null.\footnote{Note that in our case the sharp null allows us to impute the missing counterfactual response and apply any of the factor estimators to estimate the factor model
for the entire data, which is then used for conformal inference. Hence our inference approach does 
not provide inference for the counterfactual prediction methods given in those papers. Indeed, there, the missing data entries are being predicted using factor models, whereas in our case the missing data entries are known under the null, and we use any form of low-rank approximation or interactive FE model to estimate the model for the entire data under the null hypothesis.} We shall be focusing on \citet{bai2009panel}'s alternating least squares estimator\footnote{We choose to focus on PCA/SVD and the alternating least squares estimator for the following reasons: (1) they are by far the most widely used in practice, (2) the alternating least squares estimator is computationally attractive and easily accommodates unbalanced data.} and on matrix completion via nuclear norm penalization when verifying our conditions.

\subsection{Models for Counterfactual Proxies via Time Series and Fused Models}
\label{subsec:ts_fused}
\subsubsection{Simple Time Series Models} 
If no control units are available, one can use time series models for the single unit exposed to the intervention. For example, consider the following autoregressive model:\footnote{We can also add a moving average component for the errors, but we do not do so for simplicity.}
\begin{equation}
\begin{array}{l}
Y_{1t}^N - \mu =  \rho (Y_{1{(t-1)}}^N -\mu ) + u_{t}\\
Y_{1t}^I  - \mu = \rho (Y_{1{(t-1)}}^N - \mu ) +  \theta_t + u_t  
\end{array}   \Bigg | \quad E( u_t) = 0, \quad \{u_t\} ~\text{iid},  \quad t=1,\dots,T. \label{eq:model_fused}
\end{equation}
In model \eqref{eq:model_fused}, the mean unbiased proxy is given by
$
P_t^N =  \mu + \rho (Y_{1{(t-1)}}^N- \mu).
$
Note that the policy effect here is transitory, namely it does not feed-forward itself on the future values of $Y_{1t}^I$ beyond the current values.\footnote{We leave the model with persistent feed-forward effects, 
$Y_{1t}^I = \rho (Y_{1(t-1)}^I) + \theta_t + u_t $, to future work. }  Under the null hypothesis, we can impute the unobserved counterfactual  as $Y_{1t}^N = Y_{1t} - \theta_t$  and estimate the model using traditional time series methods, and we can conduct inference by permuting the residuals.

The simplest form of the autoregressive model is the AR($K$) process, where the $\rho(\cdot)$ take the form: $\rho ( \cdot )  =  \sum_{k=0}^K \rho_k \mathrm{L}^k (\cdot ),$ where $\mathrm{L}$ is the lag operator.  There are many identifying conditions
for these models, see, for example, \citet{Hamilton1994} or \citet{brockwell2013time}. More generally, we can use a nonlinear function of lag operators, $ \rho (\cdot ) = m( \cdot, \mathrm{L}^1 (\cdot), \dots, \mathrm{L^k} (\cdot ))$, as, for example, when applying neural networks to time series data \citep[e.g.,][]{chen1999improved,chen2001semiparametric}, and we refer to the latter for identifying conditions.

\subsubsection{Fused Time-Series/Panel Models}

A simple and generic way to combine the insights from the panel data and time series models is as follows. Consider the system
of equations:
\begin{equation}\label{eq: fused model AR error}
\begin{array}{l}
Y_{1t}^N = C_t^N + \varepsilon_t \\
Y_{1t}^I = C_t^N + \theta_t + \varepsilon_t \\
\end{array}  \Bigg |  
\begin{array}{l}  
\varepsilon_t = \rho (\varepsilon_{t-1}) + u_t, ~  \{u_t\} ~\text{iid},~  E( u_t) = 0,     \\ 
\{u_t\}  \textrm{ is independent of } \{C_t^N\},
 \end{array}
  \Bigg | \quad t=1,\dots,T, 
  \end{equation}
where  $C_t^N$ is a panel model proxy for $Y_{1t}^N$, identified by one of the panel data methods.  Note that the model has the autoregressive formulation:
$
Y_{1t}^N = C_t^N  + \rho (Y_{1(t-1)}^N - C_{t-1}^N) + u_t,
$
thereby generalizing the previous model. 

Here the mean unbiased proxy for $Y_{1t}^N$ is given by $P_t^N = C_t^N + \rho (\varepsilon_{t-1})$.
$P_t^N$ is a better proxy than $C_t^N$ because it provides an additional noise reduction through prediction of the stochastic shock by its lag. The model combines any favorite panel model $C_t^N$ for counterfactuals with a time series model for the stochastic shock model in a nice way:  we can identify $C_t^N$ under the null by ignoring the time series structure, and  then  identify the time series structure of  the residuals $Y_{1t}^N - C_t^N$.  Estimation can proceed analogously. This approach will often improve the size accuracy of our inferential procedures.

\section{Theory}
\label{sec:theory}

When the data are iid (or exchangeable), our procedure is exactly valid in finite samples as shown in the Appendix. In this section, we establish the validity of our inference methods with time series data. Our results are non-asymptotic in nature and, hence, hold in {\it finite samples}. Finite sample bounds are provided for the size properties of our procedure; these bounds imply that our approach is exact as $T_0\rightarrow \infty$. In Section \ref{subsec:small_error}, we establish the validity of our procedure when the estimator of $P_t^N$ satisfies weak and easy-to-verify small error conditions (pointwise consistency and consistency in the prediction norm). This result accommodates non-stationary data and only requires stationarity and weak dependence of the stochastic shock process $\{u_t\}$. In Section \ref{subsec:stability}, we consider a setting that accommodates misspecification and inconsistent estimators. We show that if the data are stationary and weakly dependent, our procedure is valid, provided that the estimators are stable.  

\subsection{Approximate Validity under Estimator Consistency}
\label{subsec:small_error}

The main condition underlying the results in this section is the following assumption on the stochastic shock process.

\begin{assumption}[Regularity of the Stochastic Shock Process]
\label{ass:u} Assume that the density function of $S(u)$ exists and is bounded, and that the stochastic process $\{u_t\}_{t=1}^T$ satisfies one of the following conditions.
\begin{enumerate}  \setlength{\itemsep}{0pt} \setlength{\parskip}{0pt}
\item $\{u_t\}_{t=1}^T$ are iid,  or  \label{ass:u_iid}
\item $\{u_t\}_{t=1}^T$ are stationary, strongly mixing, with sum of mixing coefficient bounded by $M$. \label{ass:u_weak_dependent} 
\end{enumerate}
\end{assumption}

Assumption \ref{ass:u} allows the data to be non-stationary and exhibit general dependence patterns. Assumption \ref{ass:u}.\ref{ass:u_iid} of iid shocks is our first sufficient condition. Under this condition, we will be able to use iid permutations, giving us a precise estimate of the $p$-value. The iid assumption can be replaced by Assumption \ref{ass:u}.\ref{ass:u_weak_dependent}, which holds for many commonly encountered stochastic processes such as ARMA and GARCH. It can be easily replaced by an even weaker ergodicity condition, as can be inspected in the proofs. Under this assumption, we will have to rely on the moving block permutations. 

\begin{remark}[Heteroscedasticity] Assumption \ref{ass:u} does not rule out conditional heteroscedasticity in the stochastic shock process $\{u_t\}$. Unconditional heteroscedasticity is allowed in $\{Z_t\}$ but not in $\{u_t\} $. When we suspect unconditional heteroscedasticity in $\{u_t\}$, we can apply another filter or model to obtain ``standardized residuals'' from $ \{\hat{u}_t\}$. This will generally require another layer of modeling assumptions, leading to an overall procedure that reduces the data to ``fundamental'' shocks that are assumed to be stationary under the null.  \qed
\end{remark}

We also impose the following condition on the estimation error under the null hypothesis. Let $P^N=(P^N_1,\dots,P^N_T)'$ and $\hat P^N=(\hat{P}^N_1,\dots,\hat{P}^N_T)'$.

\begin{assumption}[Consistency of the Counterfactual Estimators under the Null]\label{ass:high_level_est_error} Let there be sequences of constants $\delta_T$ and $\gamma_T$ converging to zero. Assume that with probability $1- \gamma_T$,
 \begin{enumerate} \setlength{\itemsep}{0pt} \setlength{\parskip}{0pt}
\item the mean squared estimation error is small, $\| \hat P^N - P^N \|^2_{2}/T \leq\delta^2_{T}$;
\item for $T_{0}+1\leq t\leq T$, the pointwise errors are small, $|\hat P^N_t- P^N_t| \leq\delta_{T}$.
\end{enumerate}
\end{assumption}

Assumption \ref{ass:high_level_est_error} imposes weak and easy-to-verify conditions on the performance of the estimators $\hat P_t^N$ of the counterfactual
mean proxies $P_t^N$.  These conditions are readily implied by the existing results for many estimators discussed in Section \ref{sec:conformal_inference}. In Section \ref{sec:primitive_conditions}, we provide explicit primitive conditions and references to primitive conditions implying Assumption \ref{ass:high_level_est_error}.\footnote{While our general results in this section are non-asymptotic, some of the analysis in Section \ref{sec:primitive_conditions} will not be non-asymptotic in nature.}

\begin{thm}[Approximate Validity under Consistent Estimation]  
\label{thm:approximate_validity}
Assume that $T_*$ is fixed. Suppose that Assumptions \ref{ass:dgp} and \ref{ass:high_level_est_error} hold. Impose  Assumption \ref{ass:u}.\ref{ass:u_iid} if $\Pi=\Pi_{\text{all}}$; impose Assumption \ref{ass:u}.\ref{ass:u_weak_dependent} if $\Pi=\Pi_{\to}$. Assume $S(u)$ has a density function bounded by $D$ under the null. Then, under the null hypothesis, the p-value is approximately unbiased in size:
\[
|P\left(\hat{p}\leq\alpha\right)- \alpha| \leq C ( \tilde \delta_T + \delta_T + \sqrt{\delta_T} + \gamma_T),
\]
where $\tilde \delta_T = (T_*/T_0)^{1/4}(\log T)$ and the constant $C$ depends on $T_*$, $M$ and $D$,  but not on $T$.
\end{thm}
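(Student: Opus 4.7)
On the event of probability at least $1-\gamma_T$ guaranteed by Assumption \ref{ass:high_level_est_error}, the residuals under the null satisfy $\hat u_t = u_t - \epsilon_t$, where $\epsilon_t := \hat P_t^N - P_t^N$ obeys $\|\epsilon\|_2^2/T \le \delta_T^2$ and $|\epsilon_t|\le\delta_T$ for $t>T_0$. My approach is to introduce the oracle benchmark
\[
F^*(x) := \frac{1}{|\Pi|}\sum_{\pi\in\Pi}\mathbf{1}\{S(u_\pi)<x\},\qquad p^* := 1-F^*(S(u)),
\]
and to prove the size bound by establishing (i) approximate uniformity of $p^*$ and (ii) closeness of $\hat p$ to $p^*$, after which the $\gamma_T$ failure event contributes only additively.

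For (i) under Assumption \ref{ass:u}.\ref{ass:u_iid} with $\Pi=\Pi_{\text{all}}$, exchangeability of $(u_\pi)_{\pi\in\Pi}$ combined with the bounded density of $S(u)$ (which rules out ties almost surely) yields that $p^*$ is uniformly distributed on $\{k/|\Pi|:k=1,\dots,|\Pi|\}$, giving $|P(p^*\le\alpha)-\alpha|\le 1/|\Pi|$, which is negligible. Under Assumption \ref{ass:u}.\ref{ass:u_weak_dependent} with moving block permutations $\Pi_\to$, I would instead show that $F^*$ converges uniformly to the stationary CDF $F(x):=P(S(u)<x)$ at rate $\tilde\delta_T$: strong mixing with summable coefficients bounded by $M$ implies $\mathrm{Var}(F^*(x))\lesssim T_*/T$ for each fixed $x$, and a Bernstein-type deviation inequality for mixing sequences combined with a union bound over an $O(T)$-grid of thresholds (feasible because $F$ is $D$-Lipschitz by the density bound) upgrades this to $\sup_x|F^*(x)-F(x)|\lesssim\tilde\delta_T$. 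Since $F(S(u))$ is uniform on $[0,1]$, the bound $|P(p^*\le\alpha)-\alpha|\lesssim\tilde\delta_T$ follows.

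For (ii), I decompose
\[
\hat F(S(\hat u))-F^*(S(u)) = \bigl[\hat F(S(\hat u))-F^*(S(\hat u))\bigr] + \bigl[F^*(S(\hat u))-F^*(S(u))\bigr].
\]
The second bracket is handled by $|S(\hat u)-S(u)|\le S_q(\epsilon)\lesssim\delta_T$ (using the pointwise control on post-treatment $\epsilon_t$'s) together with the Lipschitz approximation $F^*\approx F$ from (i), contributing order $\delta_T+\tilde\delta_T$. For the first bracket, the $\ell_q$-triangle inequality gives $|S(\hat u_\pi)-S(u_\pi)|\le S_q(\epsilon_\pi)$, so for any $\tau>0$,
\[
|\hat F(x)-F^*(x)|\le\frac{|\{\pi:S_q(\epsilon_\pi)>\tau\}|}{|\Pi|}+\frac{|\{\pi:S(u_\pi)\in(x-\tau,x+\tau)\}|}{|\Pi|}.
\]
The first piece is $\lesssim\delta_T/\tau$ by Markov, using the combinatorial identity $\sum_{j=0}^{T-1}\sum_{t>T_0}|\epsilon_{\pi_j(t)}|^q=T_*\|\epsilon\|_q^q$ (and an analogous count for $\Pi_{\text{all}}$), which yields $|\Pi|^{-1}\sum_\pi S_q(\epsilon_\pi)\lesssim\delta_T$; the second is $\lesssim D\tau+\tilde\delta_T$ by (i). Choosing $\tau=\sqrt{\delta_T}$ produces the $\sqrt{\delta_T}$ term. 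Assembling (i) and (ii), converting CDF deviations into probability deviations via the boundedness of the density of $S(u)$, and absorbing constants depending on $T_*$, $M$, and $D$ gives the claimed bound.

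The main obstacle will be step (i) under moving block permutations: establishing the uniform rate $\tilde\delta_T=(T_*/T_0)^{1/4}\log T$ for $F^*\to F$ when adjacent blocks $u_{\pi_j},u_{\pi_{j+1}}$ share $T_*-1$ coordinates and are therefore highly correlated. A naive single-point variance bound gives rate $(T_*/T)^{1/2}$, and the exponent $1/4$ in $\tilde\delta_T$ reflects the trade-off between variance control at a single $x$, the $\log T$ price of a union bound over a Lipschitz-spaced $x$-grid, and the conversion from $L^2$ deviation bounds to high-probability deviations via Markov's inequality, which loses a square root.
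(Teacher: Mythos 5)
Your overall architecture matches the paper's quite closely (oracle randomization distribution, sup-norm comparison of it with the true CDF $F$, a Markov/counting argument over permutations with threshold $\tau=\sqrt{\delta_T}$ using the combinatorial identity for the permuted errors, anti-concentration via the bounded density of $S(u)$, and uniformity of $F(S(u))$), but there is a genuine gap in the step you yourself flag as the main obstacle. Under the theorem's hypothesis the mixing coefficients are only assumed \emph{summable} (bounded sum $M$), with no rate; a Bernstein-type deviation inequality for strongly mixing sequences is not available at that level of generality (such inequalities require geometric or at least polynomially fast mixing), and without exponential tails a union bound over an $O(T)$-grid of thresholds does not convert the pointwise variance bound ${\rm Var}(F^*(x))\lesssim T_*/T_0$ into $\sup_x|F^*(x)-F(x)|\lesssim \sqrt{T_*/T_0}\,\log T$. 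The paper instead applies an $L^2$ maximal inequality for empirical distribution functions of strongly mixing sequences (Proposition 7.1 of Rio, 2017) to the derived stationary sequence $s_t=S(u_{\pi_t})$, which needs only summability and yields $E\sup_x|\tilde F(x)-F(x)|\lesssim \sqrt{T_*/T_0}\,\log T_0$ directly; Markov's inequality with the choice $\delta_{1n}=(T_*/T_0)^{1/4}$ then produces the $(T_*/T_0)^{1/4}\log T$ term, exactly the $L^2$-to-probability loss you describe. So your target rate and its interpretation are right, but the tool you propose to reach it would fail under the stated assumptions; you need a maximal/empirical-process inequality for mixing data (or a chaining argument playing the same role), not Bernstein plus a grid. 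A second, smaller bookkeeping point: the moving-block statistics wrap around the end of the sample, so $\{s_t\}_{t=1}^{T}$ is not stationary; the paper peels off the last $T_*+1$ blocks, which contributes the $(T_*+1)/(T_0+T_*)$ term, and your argument should do the same.

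There is also an internal inconsistency in your i.i.d.-permutation branch. In step (i) you only establish exact uniformity of the oracle $p$-value $p^*$ via exchangeability, but step (ii) then invokes ``the Lipschitz approximation $F^*\approx F$ from (i)'' both to control $F^*(S(\hat u))-F^*(S(u))$ and to bound the local increment $|\{\pi: S(u_\pi)\in(x-\tau,x+\tau)\}|/|\Pi|$ by $D\tau+\tilde\delta_T$; that approximation was never proved in this branch. The fix is the paper's Lemma on i.i.d.\ permutations: write the average over $\Pi_{\rm all}$ as an average over permutations of the empirical CDFs built from disjoint blocks of length $T_*$, apply the Dvoretzky--Kiefer--Wolfowitz inequality to each, and conclude $E\sup_x|\tilde F(x)-F(x)|\lesssim \sqrt{T_*/T}$, after which your argument goes through. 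With these two repairs your proposal becomes essentially the paper's proof.
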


The above bound is non-asymptotic, allowing us to claim uniform validity with respect to a rich variety of data generating processes.  Using simulations and empirical examples, we verify that our tests have good power and generate meaningful empirical results. There are other considerations  that also affect  power. For example, the better the model for $P_t^N$, the less variance the stochastic shocks will have, subject to assumed invariance to the policy. The smaller the variance of the shocks, the more powerful the testing procedure will be.

\subsection{Approximate Validity under Estimator Stability}
\label{subsec:stability}

Misspecification is an important practical concern, and consistency of the estimators of the counterfactual mean proxies $P_t^N$ may be questionable in certain settings. The classical analysis of misspecification focuses on convergence to pseudo-true values \citep[e.g.,][]{white1996estimation}. If it is possible to show that the estimator of the counterfactual mean proxy, $\hat{P}_t^N$, is consistent for some pseudo-true value $P_t^{N\ast}$ and that $\left\{Y_{1t}^N-P_t^{N\ast}\right\}_{t=1}^T$ is stationary and weakly dependent, the theoretical results in Section \ref{subsec:small_error} imply the validity of our procedure. Pseudo-true consistency can often be verified for low-dimensional models, but consistency results under misspecification remain elusive in high-dimensional settings. Therefore, we consider a notion of approximate exchangeability, which only requires the estimator to be \emph{stable} instead of consistent for a pseudo-true value. This stability condition does not require $\hat{P}_t^N$ to be consistent for anything, nor does it rely on correct specification of the counterfactual mean proxies. In the Appendix, we illustrate the difference between consistency and stability based on the analytically tractable example of Ridge regression. 

The basic idea underlying the theoretical analysis here is as follows. If the estimators are non-random or independent of the data, then stationarity and weak dependence of the data would mean that $\hat{p}$ based on moving block permutations approximately has a uniform distribution under the null. This result follows from uniform laws of large numbers for dependent data. However, in practice, the estimators are computed using the data and are thus not independent of the data. Our key insight is that stable estimators are approximately independent of individual observations. 

We now formalize the notion of stability of an estimator. To emphasize the dependence of $S(\hat{u}) $ on the estimator, with a slight abuse of notation, we write $S(\Zb,\beta)=\phi(Z_{T_{0}+1},\dots,Z_{T_{0}+T_{*}};\beta)$. 
Let $\{\tilde{Z}_{t}\}_{t=1}^{T}$ be iid from the distribution
of $Z_{1}$ and independent of $\Zb$. For any $H\subset\{1,\dots,T\}$,
let $Z_{t,H}=Z_{t}\oneb\{t\notin H\}+\tilde{Z}_{t}\oneb\{t\in H\}$,
and $\Zb_{H}=\{Z_{t,H}\}_{t=1}^{T}$. Hence, $\Zb_{H}$ is a perturbed
version of $\Zb$ under $H$, i.e., $\Zb$ with elements in $H$ replaced
by $\{\tilde{Z}_{t}\}_{t\in H}$. 

By stability, we mean that the estimator computed using $\Zb$ is similar to that computed using  $\Zb_H$ for $H\in \mathbb{H}$. Let $R\in\mathbb{N}$ and define $m=\left\lfloor T_{0}/R\right\rfloor $. The class $\mathbb{H}=\{\tH_{1},\dots,\tH_{R}\}$ contains $R$ members with $|\tH_j|\leq 3m$ elements. The plan is to require stability under $R\asymp T_0/\log(T_0)$ (so $|\tH_j|\asymp \log(T_0)$). Since $\log(T_0)\ll T_0$, swapping out $O(\log(T_0))$  out of $T_0+T_*$ data points should not cause a large change in the estimator for reasonable estimators. 

We now give precise definitions of sets in $\mathbb{H}$. For $j\in\{1,\dots,R\}$, let $H_{j}=\{(j-1)m+1,\dots,jm\}$. Since the test statistic depends on $T_*$ data points after obtaining the estimator,  defining $\mathbb{H}$ to be $\{H_1,\dots,H_R \}$ is not enough for technical arguments; we need a ``wedge'' to ensure that these $T_*$ data points do not cause a problem. To do so, we enlarge $H_j$ as follows.  Let $k\in\mathbb{N}$
satisfy $T_{*}<k<m$. We let $\tH_{j}$ denote the $k$-enlargement
of $H_{j}$, i.e., $\tH_{j}=\{s: \min_{t\in H_{j}}|s-t|\leq k\}$.
Note that $\tH_{j}=\{(j-1)m+1-k,\dots,jm+k\}$ for
$2\leq j\leq R-1$, $\tH_{1}=\{1,\dots,m+k\}$ and $\tH_{R}=\{(R-1)m+1-k,\min\{Rm+k,T\}\}$.

\begin{assumption}[Estimator Stability]
\label{assu: stability}Let $\Pi= \Pi_{\to}$. There exist non-decreasing functions $\varrho_{T}(\cdot)$
such that 
$
P\left(\max_{\pi\in\Pi}\left|S\left(\Zb^{\pi},\hat{\beta}(\Zb)\right)-S\left(\Zb^{\pi},\hat{\beta}(\Zb_{H})\right)\right|\leq\varrho_{T}(|H|)\right)\geq1-\gamma_{1,T}
$ 
and \\
$
P\left(\max_{\pi\in\Pi}\left|S\left((\dot{\Zb})^{\pi},\hat{\beta}(\Zb)\right)-S\left((\dot{\Zb})^{\pi},\hat{\beta}(\Zb_{H})\right)\right|\leq\varrho_{T}(|H|)\right)\geq1-\gamma_{1,T}
$
for any $H\in\{\tH_{1},\dots,\tH_{R}\}$, where $\dot{\Zb}\overset{d}{=}\Zb$
and $\dot{\Zb}$ is independent of $(\Zb,\{\tilde{Z}_{t}\}_{t=1}^{T})$. 
\end{assumption}

Assumption \ref{assu: stability} specifies the estimator stability condition. It strengthens the perturb-one sensitivity of \citet[][Assumption A.3]{lei2017distributionfree}. When the model is misspecified, Assumption \ref{assu: stability} holds whenever the estimator $\hat{\beta}(\Zb)$ is consistent to a pseudo-true parameter value. However, it is more general in that the estimator $\hat{\beta}(\Zb)$ need not converge to any non-random quantity as long as it is stable under perturbations in a few observations. This feature is crucial in our setting as it allows us to accommodate high-dimensional CSC methods for many of which consistency results under misspecification are not available. Primitive sufficient conditions for Assumption \ref{assu: stability} are provided in the Appendix.

Let $\Psi(x;\beta)=P(\phi(Z_{T_{0}+1},\dots,Z_{T_{0}+T_{*}};\beta)\leq x)$. Our strategy is to show that, under the null hypothesis,
$\hF(\phi(Z_{T_{0}+1},\dots,Z_{T_{0}+T_{*}};\hbeta(\Zb)))$ is approximately uniform on $(0,1)$. We exploit the stability condition in Assumption \ref{assu: stability} and show that $\hF(\phi(Z_{T_{0}+1},\dots,Z_{T_{0}+T_{*}};\hbeta(\Zb)))$ can be approximated by $\Psi\left(\phi(\bZ_{T_{0}+1},\dots,\bZ_{T_{0}+T_{*}};\hbeta(\Zb_{\tH_{R}}));\hbeta(\Zb_{\tH_{R}})\right) $, which has the uniform distribution on (0,1). Here $(\bZ_{T_{0}+1},\dots,\bZ_{T_{0}+T_{*}})$ has the same distribution as $(Z_{T_{0}+1},\dots,Z_{T_{0}+T_{*}})$ and is independent of $\Zb_{\tH_{R}}$. This essentially confirms the above intuition that for stable estimators, $\hbeta(\Zb)$ is almost independent of the last few observations $(Z_{T_{0}+1},\dots,Z_{T_{0}+T_{*}})$.

We impose the following regularity conditions on the data.

\begin{assumption}[Regularity of the Data]
\label{assu: regularity resid}The data under the null, $\{Z_{t}\}_{t=1}^{T}$, are stationary
and $\beta$-mixing with coefficient $\betamix(\cdot)$ satisfying
$\betamix(i)\leq D_{1}\exp(-D_{2}i^{D_{3}})$ for some constants $D_{1},D_{2},D_{3}>0$.
 For $1\leq j\leq R$, there exist  sequences $\xi_{T}>0$ and $\gamma_{2,T}=o(1)$ such that
$P\left(\sup_{x\in\RR}\left|\partial\Psi\left(x;\hbeta(\Zb_{\tH_{j}})\right)/\partial x\right|\leq\xi_{T}\right)\geq1-\gamma_{2,T}$.
\end{assumption}

Stationarity and $\beta$-mixing are commonly
imposed conditions on time series data.  For a large class of Markov chains,  GARCH and various
stochastic volatility models, $D_{3}=1$ \citep[cf.][]{carrasco2002mixing}.
Let $(\dot{Z}_{T_{0}+1},\ldots,\dot{Z}_{T_{0}+T_{*}})$ be an independent copy of  $ (Z_{T_{0}+1},\ldots,Z_{T_{0}+T_{*}})$ and also independent of $(\Zb,\{\tilde{Z}\}_{t=1}^{T})$.    The bounded derivative of $\Psi\left(x;\hbeta(\Zb_{\tH_{j}})\right)$ condition says that the density of $\phi(\dot{Z}_t,\ldots,\dot{Z}_{t+T_{*}-1};\hbeta(\Zb_{\tH_{j}}))$ conditional on $\hbeta(\Zb_{\tH_{j}})$ is bounded by $\xi_{T}$ with high probability. 
The bounded density condition states that the distribution of the residual does not collapse into a degenerate one or one with point mass. In many cases, $\xi_{T}=O(1)$ for continuous distributions. For example, if $(Y_t,X_t)$ is jointly Gaussian and the variance of $Y_t$ given $X_t$ is bounded below by a constant, then for any $w$ satisfying the SC restrictions, the density of $Y_t-X_t'w$ is bounded by a constant that does not depend on $w$.

The following result states the approximate validity of our testing
procedure. 
\begin{thm}[Approximate Validity under Estimator Stability] 
\label{thm: approx exchange} Let $\Pi=\Pi_{\to}$. Suppose that Assumptions \ref{assu: stability}
and \ref{assu: regularity resid} hold.  Then, under the null hypothesis, there exists a constant
$C_{1}>0$ depending only on $D_{1}$, $D_{2}$ and $D_{3}$ such
that for any $R$ with $k<\left\lfloor T_{0}/R\right\rfloor$ and $R<T_{0}/2$,
\begin{multline*}
\left|P\left(\hat{p}\leq\alpha\right)-\alpha\right|\leq C_{1}\sqrt{\xi_{T}\varrho_{T}(T_{0}/R+2k)}+C_{1}\left(T_{0}^{-1}R[\log(T_{0}/R)]^{1/D_{3}}\right)^{1/4}\\
+C_{1}\exp\left(-(k-T_{*}+1)^{1/D_{3}}\right)+C_{1}\sqrt{\gamma_{1,T}}+C_{1}\sqrt{\gamma_{2,T}}.
\end{multline*}
\end{thm}

In the theoretical arguments, we actually show a stronger result. The above bound holds for $E|P(\hat{p}\leq\alpha\mid \hbeta(\Zb_{\tH_{R}}))-\alpha|$. Since the stability condition states that $\hbeta(\Zb_{\tH_{R}})\approx\hbeta(\Zb)$, this means that $\hat{p} $ conditional on $\hbeta(\Zb)$ almost has a uniform distribution on $(0,1)$; with iid or exchangeable data, $\hat{p} $ conditional on $\hbeta(\Zb)$  has an exact uniform distribution. Therefore, we can view Theorem \ref{thm: approx exchange} as a result for approximate exchangeability. 

Due to the exponential decay of $\betamix(\cdot)$, the bound in Theorem
\ref{thm: approx exchange} tends to zero if  we  choose
$k$ to be a slowly growing sequence and $T_{0}/R$ to be of the same order. For example, we can choose $k$ and $R$ such that $k\asymp T_{0}/R \asymp \log T_0$. Since $|\tH_{j}|=\left\lfloor T_{0}/R\right\rfloor +2k $,
Assumption \ref{assu: regularity resid} only requires that the changes
to $S(\Zb^{\pi},\hat{\beta}(\Zb))$ are small if we replace only  $\log T_{0}$
observations in computing $\hat{\beta}(\Zb)$. Under finite dependence, it suffices to choose $k$ and $T_0/R$ to be large enough constants.  Note that $R$ is only needed in the theoretical arguments;  we do not need to choose $R$ when implementing the proposed procedure.

The theoretical analysis in this section suggests that allowing for both unrestricted patterns of non-stationarity and misspecification is not possible in general. To obtain valid inferences with non-stationary data, one has to either rely on correct specification and consistency or impose assumptions on the particular structure of the non-stationarity, which allow for pre-processing the data to make them stationary.

\section{Sufficient Conditions for Consistent Estimation}
\label{sec:primitive_conditions}

In this section, we revisit the representative models of counterfactual proxies introduced in Section \ref{sec:conformal_inference}. Primitive conditions are provided to guarantee that the estimation of the counterfactual mean proxies is accurate enough for the asymptotic validity of the proposed procedure. In particular, these conditions can be used to verify Assumption \ref{ass:high_level_est_error}. The regularity conditions (e.g., bounded moments, weak serial dependence) for different models are stated in the Appendix and are commonly imposed in the literature for these models.  The counterfactual mean proxies $P_t^N$ are estimated based on the imputed data under the null, $\Zb(\theta^0)$, and we write $Y_{1t}^N$ instead of $Y^I_{1t}-\theta_t^0$ to alleviate the exposition. All the results in this section assume that $T_0\rightarrow \infty$ and $J\rightarrow \infty$ (if $J$ is present in the model).

\subsection{Difference-in-Differences} \label{sub: low level DiD}
In Section \ref{ex:did}, we have seen that the counterfactual mean proxies implied by the canonical difference-in-differences model are:
$
P_t^N=\mu+J^{-1}\sum_{j=2}^{J+1}Y^N_{jt}.
$
We consider the following estimator:
$
\hat{P}^N_t=\hat{\mu}+\frac{1}{J}\sum_{j=2}^{J+1}Y_{jt},$ where $ \hat{\mu}=\frac{1}{T}\sum_{t=1}^{T}\left(Y^N_{1t}-\frac{1}{J}\sum_{j=2}^{J+1}Y^N_{jt}\right)=\mu+ \frac{1}{T}\sum_{t=1}^{T} u_t.
$
Since $\hat{P}_t^N-P_t^N=\hat \mu -\mu$, Assumption \ref{ass:high_level_est_error} holds for the simple difference-in-differences model provided that $ T^{-1}\sum_{t=1}^{T} u_t = o_P(1)$, which is true under very weak conditions.

\subsection{Synthetic Control and Constrained Lasso} \label{sub: low level SC}

Several models in Section \ref{sec:conformal_inference} (including SC and constrained Lasso) imply a structure in which  the counterfactual proxy is a linear function of observed outcomes of untreated units. 

To provide a unified framework for these models, we use $Y$ to denote a generic vector of outcomes and $X$ to denote the design matrix throughout this section. For example, in Section \ref{sec:conformal_inference}, we set $Y=Y^N_1$ and  $X=(Y^N_2,\ldots,Y^N_{J+1}) $, where $Y^N_j=(Y^N_{j1},\ldots,Y^N_{jT})' \in \mathbb{R}^T$ for $1\leq j\leq J+1 $. These models can be written as 
\begin{equation}\label{eq: linear model sec 5}
Y=Xw+u,
\end{equation}
where $u=(u_1,\dots,u_{T})'  \in \mathbb{R}^T$. Identification is achieved by requiring that $X$ and $u$ be uncorrelated (cf.\ Condition (SC)). 

Under the framework in \eqref{eq: linear model sec 5}, different models correspond to different specifications for the weight vector $w$. For the SC model in Section \ref{ex:sc}, $w $ is an unknown vector whose elements are nonnegative and sum up to one. More generally, one can simply restrict $w$ to be any vector with bounded $\ell_1 $-norm. This is the constrained Lasso estimator.

Since $P_t^N $ is the $t$-th element of the vector $Xw$, the natural estimator is $\hat{P}_t^N $ being the $t$-th element of $X\hat{w} $, where $\hat{w}$ is an estimator for $w$. The estimation of $w$ depends on the specification. Let $\mathcal{W} $ be the parameter space for $w$.  We consider the following version of the original SC estimator
\begin{equation}
\hat{w} =  \arg\min_w   \ \|Y-Xw\|_{2}\quad  \text{ s.t. }  w  \in \mathcal{W} = \{ v \geq0,  \| v\|_1 =1\} \label{eq:sc_estimator_compact}.
\end{equation}
The constrained Lasso estimator is 
\begin{equation}
\hat{w}= \arg\min_w \|Y-Xw\|_{2}\quad   \text{ s.t. }  w  \in \mathcal{W} =  \{ v:   \|v\|_{1}\leq K \}, \label{eq:classo_estimator_compact}
\end{equation}
where $K$ is bounded and $K>0$. In light of the estimator \eqref{eq:sc_estimator_compact}, a natural choice is $K=1$. 

In general, we choose the parameter space $\mathcal{W} $ to be an arbitrary subset of an $\ell_1 $-ball with bounded radius. The following result gives very mild conditions under which the constrained least squares estimators are consistent and satisfy Assumption \ref{ass:high_level_est_error}.\footnote{To simplify the exposition, we do not include an  intercept in Lemma \ref{lem: constrained LS}. Similar arguments could be used to prove an analogous result with an unconstrained intercept.} 

\begin{lem}[Constrained Least Squares Estimators]
\label{lem: constrained LS} Consider 
$\hat{w}=\arg\min_{v}  \ \|Y-Xv\|_{2}$ s.t. $v\in\mathcal{W}$,
where $\mathcal{W}$ is a subset of $\{v:\|v\|_{1}\leq K\}$
and $K$ is bounded.  Assume $w \in \mathcal{W}$, the data are $\beta$-mixing
with exponential speed, and other assumptions listed
at the beginning of the proof, including the identification condition (SC), then the estimator enjoys
the performance bounds  stated in the proof, in particular:
$
\frac{1}{T}\sum_{t=1}^{T} (\hat{P}_{t}^{N}-P_{t}^{N} )^{2}=o_{P}(1)$ and $
\hat{P}_{t}^{N}-P_{t}^{N}=o_{P}(1)$, for any $ T_{0}+1\leq t\leq T. 
$
\end{lem}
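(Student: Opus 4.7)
The plan is to run the standard oracle-inequality argument for constrained least squares, and then convert prediction-norm consistency into the two conclusions of the lemma using the $\beta$-mixing and moment hypotheses. First, since $\hat{w}$ minimizes the residual sum of squares over $\mathcal{W}$, and both $\hat{w}$ and $w$ lie in $\mathcal{W}\subset\{v:\|v\|_1\le K\}$, the basic inequality $\|Y-X\hat{w}\|_2^2\le\|Y-Xw\|_2^2$ combined with $Y=Xw+u$ gives
\[
\frac{1}{T}\|X(\hat{w}-w)\|_2^2 \;\le\; \frac{2}{T}u'X(\hat{w}-w) \;\le\; 2\Big\|\frac{X'u}{T}\Big\|_{\infty}\|\hat{w}-w\|_1 \;\le\; 4K\Big\|\frac{X'u}{T}\Big\|_{\infty}.
\]
A Bernstein/Nagaev-type maximal inequality for $\beta$-mixing sequences with exponential mixing rate, together with moment bounds on $X_{tj}u_t$, then yields $\|X'u/T\|_{\infty}=O_P(\sqrt{\log J/T})$; under $\log J=o(T)$ this delivers the in-sample prediction-norm consistency $T^{-1}\|X(\hat{w}-w)\|_2^2=o_P(1)$, which is the first conclusion of the lemma.

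For the pointwise statement $\hat{P}_t^N-P_t^N=X_t'(\hat{w}-w)=o_P(1)$ at fixed $t\in\{T_0+1,\dots,T\}$, I would move from the empirical to the population covariance. Writing $\delta=\hat{w}-w$, $\hat\Sigma=X'X/T$ and $\Sigma=E[X_tX_t']$, a H\"older-type bound on the entrywise sup-norm yields
\[
\delta'\Sigma\delta \;\le\; \delta'\hat\Sigma\delta + \|\hat\Sigma-\Sigma\|_{\infty}\|\delta\|_1^2 \;\le\; o_P(1)+4K^2\|\hat\Sigma-\Sigma\|_{\infty}.
\]
Another entrywise maximal inequality under $\beta$-mixing gives $\|\hat\Sigma-\Sigma\|_{\infty}=o_P(1)$, so $\delta'\Sigma\delta=o_P(1)$. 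By stationarity, $E[X_tX_t']=\Sigma$ for every $t$, and once $X_t$ is essentially decoupled from $\delta$ one obtains $E\bigl[(X_t'\delta)^2\mid\delta\bigr]\approx\delta'\Sigma\delta=o_P(1)$, from which Markov's inequality delivers $|X_t'\delta|=o_P(1)$.

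The main obstacle is the decoupling step in the preceding display, because $\hat{w}$ is built from the full sample and therefore depends on $X_t$. I would handle this in one of two standard ways: (i) a $\beta$-mixing coupling argument that replaces the block containing the observation $(X_t,Y_t)$ by an independent copy and bounds the resulting substitution error by the (exponentially small) mixing coefficient; or (ii) a leave-one-observation-out stability argument, comparing $\hat{w}$ with the constrained minimizer $\hat{w}^{(-t)}$ obtained after dropping observation $t$, where the perturbation is controlled by the KKT conditions together with the fact that $\mathcal{W}$ has bounded $\ell_1$-radius. Either route, combined with the concentration bounds for $\|X'u/T\|_\infty$ and $\|\hat\Sigma-\Sigma\|_\infty$ and with the moment bounds that furnish uniform integrability, closes the argument; the finite-sample bounds referenced in the lemma statement are the explicit versions of these two concentration inequalities substituted into the displays above.
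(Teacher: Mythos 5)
Your first half follows the same route as the paper: the basic inequality $\|X\Delta\|_2^2\le 2u'X\Delta\le 4K\|X'u\|_\infty$ plus a maximal inequality for $\beta$-mixing data. The paper, however, does not get $\|X'u/T\|_\infty=O_P(\sqrt{\log J/T})$ from a Bernstein/Nagaev bound: with only third moments on $X_{jt}u_t$ it uses a self-normalized moderate-deviation inequality (after Berbee blocking) together with an explicit high-probability bound $\max_j\sum_t X_{jt}^2u_t^2\lesssim T$, which yields the rate $[\log(T\vee J)]^{(1+\tau)/(2\tau)}T^{-1/2}$ and the growth condition $\log J=o(T^{4\tau/(3\tau+4)})$ rather than $\log J=o(T)$. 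This is a quantitative rather than structural difference, and your prediction-norm conclusion is essentially the paper's part (1).

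The genuine gap is in the pointwise claim. The paper does not derive $|X_t'(\hat w-w)|=o_P(1)$ from the prediction-norm bound; it \emph{assumes} it, via the listed condition that with high probability $(X_t'\delta)^2\le \ell_T\|X\delta\|_2^2/T$ uniformly over feasible $\delta$, with $\ell_T B_T\to 0$, after which part (2) is immediate. You instead try to prove the pointwise statement by passing to $\delta'\Sigma\delta=o_P(1)$ and then ``decoupling'' $X_t$ from $\delta=\hat w-w$. Neither of your two suggested devices closes this step under the stated hypotheses. Berbee's coupling only gives you an independent copy of a block that is separated from the retained data by a gap of order the mixing range; swapping out the single observation (or block) containing $t$ changes the estimator, so you must compare $\hat w$ with a leave-block-out estimator, i.e.\ you need an estimator-stability property. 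Such stability does not follow from the $\ell_1$ constraint alone: the paper itself proves perturbation stability for constrained Lasso only under additional conditions, notably sparsity of the solutions and sparse-eigenvalue bounds (its separate stability lemma), none of which are available here. The leave-one-out route fails for the same reason: dropping observation $t$ gives closeness of the two fits in prediction norm at best, and converting that into a bound on $X_t'(\hat w-\hat w^{(-t)})$ is exactly the pointwise-to-$L_2$ conversion you are trying to establish, while the H\"older bound only gives $\|\hat w-\hat w^{(-t)}\|_1\le 2K$, which is not small. So as written the second conclusion is not proved; you would either need to add a domination condition of the paper's form (6)--(7), or add the sparsity/stability assumptions needed to make the decoupling rigorous.
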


Lemma \ref{lem: constrained LS} provides several  features that are important for counterfactual inference in our setup. First, we allow $J$ to be large relative to $T$. To be precise, we only require $\log J=o(T^c) $, where $c>0$ is a constant depending only on the $\beta$-mixing coefficients; see the Appendix for details. This is particularly relevant for settings in which the number of (potential) control units and the number of time periods have a similar order of magnitude as in our empirical application in Section \ref{sec:application}. Second, Lemma \ref{lem: constrained LS} does not rely on any sparsity assumptions on $w$, allowing for dense vectors. Third, compared to typical high-dimensional estimators (e.g., Lasso or Dantzig selector), our estimator does rely on tuning parameters that can be difficult to choose in times series settings. Finally, Lemma \ref{lem: constrained LS} provides new theoretical consistency results for the canonical SC estimator in settings with time series data and potentially very many control units.

\subsection{Models with Factor Structures} \label{sub: low level factor models}
The models for counterfactual proxies introduced in Section \ref{ex:interactive_fe} have factor structures. We provide estimation results for pure factor models (without regressors), factor models with regressors (interactive FE models), and matrix completion models. In this subsection, following standard notation, we let  $N=J+1$.
\subsubsection{Pure Factor Models}

Recall from Section \ref{ex:interactive_fe} the standard factor model 
$
Y_{jt}^{N}=\lambda_{j}'F_{t}+u_{jt},
$
where $F=(F_1,\ldots,F_{T})'\in \mathbb{R}^{T\times k} $ and $\Lambda=(\lambda_{1},\ldots,\lambda_{N})'\in \mathbb{R}^{N\times k}$ represent the $k$-dimensional unobserved factors and their loadings, respectively. The counterfactual proxy for $Y_{1t}^N $ is $P_t^N=\lambda_{1}'F_{t} $. We identify $P_t^N $ by imposing the condition that the idiosyncratic terms  and the factor structure are uncorrelated (cf.\ Condition (FE)). 

We use the standard principal component analysis (PCA) for estimating $P_t^N $.\footnote{Note that PCA amounts to singular value decomposition, which can be computed using polynomial time algorithms, \citep[e.g., ][Lecture 31]{trefethen1997numerical}. } Let  $Y^{N}\in\mathbb{R}^{T\times N}$
be the matrix whose $(t,j)$ entry is $Y_{jt}^{N}$. We compute $\hat{F}=(\hat{F}_1,\ldots,\hat{F}_T)'\in\mathbb{R}^{T\times k}$
to be  the matrix containing the eigenvectors corresponding to the
largest $k$ eigenvalues of $Y^{N}(Y^{N})'$ with $\hat{F}'\hat{F}/T=I_{k}$.
Let $\hat{\lambda}_{j}'$
denote the $j$-th row of  $\hat{\Lambda}=(Y^{N})'\hat{F}/T$. Let $\hat{F}_{t}'$ denote the $t$-th row of $\hat{F}$.  Our estimate for $P_{t}^N $ is  $\hat{P}_t^N= \hat{\lambda}_{1}'\hat{F}_t $. The following lemma guarantees the validity of this estimator in our context under mild regularity conditions.

\begin{lem}[Pure Factor Model]\label{thm: low level pure factor} 
 Assume standard regularity conditions
given in \citet{bai2003inferential}, including the identification condition (FE). Consider the factor model and the principal component estimator. Then, for any $1\leq t\leq T$, as $N \to \infty$ and $T \to \infty$, we have $
\hat{P}^N_{t}-P^N_{t}=O_{P}(1/\sqrt{N} + 1/\sqrt{T} )
$
and $\frac{1}{T} \sum_{t=1}^T (\hat{P}^N_{t}-P^N_{t})^2=O_{P}(1/ N + 1/T)$.
\end{lem}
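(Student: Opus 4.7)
The plan is to reduce the result to the estimation rates for the factors and loadings obtained in \citet{bai2003inferential} and then use the rotation-invariance of the common component $\lambda_1' F_t$. The key observation is that although $\hat{\lambda}_1$ and $\hat{F}_t$ only estimate $\lambda_1$ and $F_t$ up to an invertible $k\times k$ rotation matrix $H$, the product $\hat{\lambda}_1'\hat{F}_t$ is a consistent estimator of $P_t^N=\lambda_1' F_t$ because $\lambda_1' F_t=(H^{-1}\lambda_1)'(H' F_t)$. The assumptions from \citet{bai2003inferential} (condition FE, bounded moments of $F_t$, $\lambda_j$, $u_{jt}$, weak serial and cross-sectional dependence in $u_{jt}$, identification of the number of factors $k$) guarantee that $H$ is well-defined, invertible with probability tending to one, and that the PCA estimator admits sharp convergence rates.

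The first step is to invoke the standard rates from \citet{bai2003inferential}, which yield, under the stated conditions, $T^{-1}\sum_{t=1}^T\|\hat{F}_t-H'F_t\|^2=O_P(1/N+1/T)$, $\|\hat{F}_t-H'F_t\|=O_P(1/\sqrt{N}+1/\sqrt{T})$ for each fixed $t$, and $\|\hat{\lambda}_1-H^{-1}\lambda_1\|=O_P(1/\sqrt{N}+1/\sqrt{T})$. Moreover, the normalization $\hat{F}'\hat{F}/T=I_k$ implies $T^{-1}\sum_{t=1}^T\|\hat{F}_t\|^2=k$, and $\|\lambda_1\|$, $\|H^{-1}\|$ are $O_P(1)$.

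The second step is the algebraic decomposition
\[
\hat{P}_t^N-P_t^N=\hat{\lambda}_1'\hat{F}_t-\lambda_1'F_t=(\hat{\lambda}_1-H^{-1}\lambda_1)'\hat{F}_t+(H^{-1}\lambda_1)'(\hat{F}_t-H'F_t).
\]
For the pointwise bound I apply Cauchy--Schwarz to each summand, using $\|\hat{F}_t\|=O_P(1)$ (since $\|\hat{F}_t-H'F_t\|=o_P(1)$ and $\|H'F_t\|=O_P(1)$), to obtain $|\hat{P}_t^N-P_t^N|=O_P(1/\sqrt{N}+1/\sqrt{T})$. For the average bound I square the decomposition and apply the inequality $(a+b)^2\leq 2a^2+2b^2$ followed by Cauchy--Schwarz, giving
\[
\frac{1}{T}\sum_{t=1}^T(\hat{P}_t^N-P_t^N)^2\leq 2\|\hat{\lambda}_1-H^{-1}\lambda_1\|^2\cdot\frac{1}{T}\sum_{t=1}^T\|\hat{F}_t\|^2+2\|H^{-1}\lambda_1\|^2\cdot\frac{1}{T}\sum_{t=1}^T\|\hat{F}_t-H'F_t\|^2.
\]
The first summand is $O_P(1/N+1/T)\cdot O_P(1)$ and the second is $O_P(1)\cdot O_P(1/N+1/T)$, giving the claimed $O_P(1/N+1/T)$ rate.

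The main obstacle is not the algebra but bookkeeping: carefully specifying and verifying the identification and moment conditions from \citet{bai2003inferential} (which are imported under the umbrella ``standard regularity conditions'') so that the $H$-rotation framework applies and the two rate statements used above are available off the shelf. Once those rates are in place, the combination is essentially Cauchy--Schwarz. A secondary issue worth flagging is that the number of factors $k$ is assumed known here; otherwise one would need to first consistently estimate $k$ (e.g., via \citet{bai2002determining}) without degrading the rates, which is standard and can be absorbed into the ``standard regularity conditions.''
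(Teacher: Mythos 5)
Your proposal is correct and follows essentially the same route as the paper: the identical decomposition $\hat{\lambda}_1'\hat{F}_t-\lambda_1'F_t=(\hat{F}_t-H'F_t)'H^{-1}\lambda_1+\hat{F}_t'(\hat{\lambda}_1-H^{-1}\lambda_1)$, the rates from Lemma A.1 and Theorem 2 of \citet{bai2003inferential} together with $\hat{F}'\hat{F}/T=I_k$ and $\|H^{-1}\lambda_1\|=O_P(1)$, and the same $(a+b)^2\leq 2a^2+2b^2$ plus Cauchy--Schwarz bound for the averaged error. The only cosmetic difference is that the paper obtains the pointwise bound by citing Theorem 3 of \citet{bai2003inferential} directly, whereas you re-derive it from the componentwise rates, which is equivalent.
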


The only requirement on the sample size is that both $N$ and $T$ need to be large. Similar to Theorem 3 of  \citet{bai2003inferential}, we do not restrict the relationship between $N$ and $T$. This is flexible enough for a wide range of applications in practice as   the number of units is allowed to be much larger than, much smaller than, or similar to the number of time periods.

\subsubsection{Factor plus Regression Model: Interactive FE Model}

Now we study the general form of panel models with interactive FEs. Following Section \ref{ex:interactive_fe}, these models take the form 
$
Y_{jt}^{N}=\lambda_{j}'F_{t}+X_{jt}'\beta+u_{jt},
$
where $X_{jt}\in\mathbb{R}^{k_{x}}$ are observed covariates and $F=(F_1,\ldots,F_{T})'\in \mathbb{R}^{T\times k} $ and $\Lambda=(\lambda_{1},\ldots,\lambda_{N})'\in \mathbb{R}^{N\times k}$ represent the $k$-dimensional unobserved factors and their loadings, respectively. The counterfactual proxy for $Y_{1t}^N $ is $P_t^N=\lambda_{1}'F_{t}+X_{1t}'\beta $. In this model, we identify the counterfactual proxy through the condition that the idiosyncratic terms are  independent of the factor structure and the observed covariates (cf.\ Condition (FE)). 

 The two most popular estimators are the common correlated effects (CCE) estimator by \citet{pesaran2006estimation} and the iterative least squares estimator by \citet{bai2009panel}. We focus on the iterative least squares approach, but analogous results can be established for CCE estimators. 
 The notations for $F_t $,  $\lambda_{j} $, $\hat{F}_t $ and $\hat{\lambda}_{j} $ are the same as before. We compute 
 $$
(\hat{F},\hat{\Lambda},\hat{\beta})=\underset{F,\Lambda,\beta}{\arg \min } \sum_{t=1}^T \sum_{j=1}^{N} (Y^N_{jt}-X_{jt}'\beta-F_t'\lambda_{j} )^2\quad  \text{ s.t. }\quad  F'F/T=I_k \quad \Lambda'\Lambda = \text{Diagonal}_k.
 $$

The estimate for $P_t^N $ is $\hat{P}_t^N=\hat{\lambda}_{1}'\hat{F}_t+X_{1t}'\hat{\beta} $. The following result states the validity of applying this estimator in conjunction with our inference method.

\begin{lem}[Interactive FE Model]
\label{thm: low level interactive FE} Assume the standard conditions
in \citet{bai2009panel}, including the identification condition (FE). 
Then, for any $1\leq t\leq T$, 
$
\hat P^N_{t}-P^N_{t}=O_{P}(1/\sqrt{T} + 1/\sqrt{N})$ and $ \frac{1}{T}\sum_{t=1}^{T}(\hat{P}^N_{t}-P^N_{t})^{2}=O_{P}(1/T+ 1/N).
$
\end{lem}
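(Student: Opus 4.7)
The plan is to directly leverage Bai's (2009) results for the iterative least squares estimator $(\hat F,\hat\Lambda,\hat\beta)$ of the interactive fixed effects model and combine them with a short algebraic decomposition of the prediction error. The starting point is the identity
\[
\hat P^N_{t}-P^N_{t}=(\hat\lambda_1'\hat F_t-\lambda_1'F_t)+X_{1t}'(\hat\beta-\beta).
\]
Because the factors and loadings are identified only up to a nonsingular rotation, I will introduce the usual rotation matrix $H=H_{NT}$ (depending only on population moments of $F$ and $\Lambda$) such that Bai's asymptotic theory describes $\hat F_t$ as an estimator of $H'F_t$ and $\hat\lambda_j$ as an estimator of $H^{-1}\lambda_j$. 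Adding and subtracting yields the decomposition
\[
\hat\lambda_1'\hat F_t-\lambda_1'F_t=(\hat\lambda_1-H^{-1}\lambda_1)'\hat F_t+\lambda_1'(H^{-1})'(\hat F_t-H'F_t),
\]
which, together with the regression term, is the object we need to bound.

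Next, I would invoke the rates established in Bai (2009): under the standard regularity conditions, $\|\hat\beta-\beta\|=O_P((NT)^{-1/2})$, so that $X_{1t}'(\hat\beta-\beta)=O_P((NT)^{-1/2})$ pointwise (using the moment bound on $X_{1t}$). For the factor/loading part, Bai (2003, 2009) shows $\|\hat F_t-H'F_t\|=O_P(N^{-1/2}+T^{-1})$ for each fixed $t$ and $\|\hat\lambda_j-H^{-1}\lambda_j\|=O_P(T^{-1/2}+N^{-1})$ for each fixed $j$, while $\hat F_t$ and $H^{-1}\lambda_1$ are stochastically bounded. Substituting these into the decomposition and using Cauchy--Schwarz together with the boundedness of $\lambda_1$ and $\hat F_t$ yields the pointwise rate $\hat P^N_t-P^N_t=O_P(T^{-1/2}+N^{-1/2})$, which is the first claim.

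For the second claim (mean-square rate), I would rely on the averaged consistency results in Bai (2003, Theorem 1) and Bai (2009): $T^{-1}\sum_{t=1}^T\|\hat F_t-H'F_t\|^2=O_P(N^{-1}+T^{-1})$, and an analogous averaged bound can be invoked or derived by the same route for the single fixed loading $\hat\lambda_1-H^{-1}\lambda_1$, whose squared norm is simply $O_P(T^{-1}+N^{-2})$. Using $(a+b+c)^2\le 3(a^2+b^2+c^2)$ on the three pieces of the decomposition, together with the moment bound $T^{-1}\sum_{t=1}^T\|X_{1t}\|^2=O_P(1)$ and $\|\hat\beta-\beta\|^2=O_P((NT)^{-1})$, gives
\[
\frac{1}{T}\sum_{t=1}^T(\hat P^N_t-P^N_t)^2=O_P\!\left(\frac{1}{N}+\frac{1}{T}\right),
\]
as claimed.

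\textbf{Main obstacle.} The only non-mechanical step is handling the rotation matrix $H$ carefully so that the decomposition really separates ``loading error'' from ``factor error'' at the relevant $(j,t)=(1,t)$, and confirming that Bai's (2009) pointwise-in-$t$ and pointwise-in-$j$ consistency results apply for the specific index $j=1$ (which is permitted because the assumptions are imposed uniformly over $j$ and $t$). Beyond that, the result is essentially a bookkeeping exercise on top of Bai's theorems, and no new probabilistic arguments are required.
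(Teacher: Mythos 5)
Your overall plan (decompose into loading error, factor error, and $\hat\beta$ error after rotating by $H$, then plug in rates) is the natural one, but it rests on inputs that are not actually quotable for this model, and supplying them is precisely the non-trivial content of the lemma. The pointwise rates you invoke, $\|\hat F_t-H'F_t\|=O_P(N^{-1/2}+T^{-1})$ and $\|\hat\lambda_1-H^{-1}\lambda_1\|=O_P(T^{-1/2}+N^{-1})$, are results of \citet{bai2003inferential} for the \emph{pure} factor model; \citet{bai2009panel} states no such theorems for the interactive fixed effects model with regressors --- its theorems concern $\hat\beta$, and its appendix provides only averaged and product-type bounds (e.g.\ $T^{-1}\Delta_F'\Delta_F$, $\Delta_F'F/T$, $\hat F'u_i/T$), all of which carry extra terms in $\|\hat\beta-\beta\|$. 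So the step ``substitute these into the decomposition'' cannot be completed by citation; you would have to re-derive loading and factor rates in the presence of the estimated regression component, which is essentially a proof of the lemma itself.

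The paper's proof is organized to avoid exactly this. It never uses a rate for $\hat\lambda_1-H^{-1}\lambda_1$, nor a vanishing pointwise rate for $\hat F_t-H'F_t$ (only $\|\hat F_t\|_2=O_P(1)$ and $\|\Delta_{F,t}\|_2=O_P(1)$). Instead it exploits the closed form $\hat\lambda_1=T^{-1}\hat F'(Y_1^N-X_1\hat\beta)=T^{-1}\hat F'(u_1+F\lambda_1-X_1\Delta_\beta)$, which turns $F_t'\lambda_1-\hat F_t'\hat\lambda_1-X_{1t}'\Delta_\beta$ into four combined terms, $F_t'(I_k-H\hat F'F/T)\lambda_1$, $T^{-1}\Delta_{F,t}'\hat F'F\lambda_1$, $T^{-1}\hat F_t'\hat F'(u_1-X_1\Delta_\beta)$, and $X_{1t}'\Delta_\beta$, each of which is controlled directly by \citet{bai2009panel}'s Lemmas A.3, A.4, A.7, Proposition A.1 and Theorem 1 (with $N\asymp T$ giving $\|\Delta_\beta\|_2=O_P(T^{-1})$). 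If you want to keep your decomposition, you would need to add an argument of comparable length establishing the loading and factor rates for the model with regressors (for instance by expanding $\hat\lambda_1$ exactly as above), at which point the two proofs essentially coincide; as written, the proposal assumes the hard part.
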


Under the conditions in Theorem 3 of \citet{bai2009panel},  $N$ is of the same order as $T$ so that rate is really $T^{-1/2}$; 
however, the stated bound should hold more generally.

\subsubsection{Matrix Completion via Nuclear Norm Regularization}

Suppose that 
\begin{equation}
Y_{jt}^{N}=M_{jt}+u_{jt},\quad{\rm for}\ 1\leq j\leq J+1\ {\rm and}\ 1\leq t\leq T, \label{eq: matrix completion model}
\end{equation}
where $M_{jt}$ is the $(j,t)$-element of an unknown matrix $M\in\mathbb{R}^{(J+1)\times T}$ satisfying $\|M\|_{*}\leq K$, where $\|\cdot\|_{*}$ denotes the nuclear norm (the sum of singular values). We observe $Y_{jt}^N$ for $(j,t)\in \{1,\dots,T\}\times \{1,\dots,J+1\} \backslash \{(1,t):T_0+1\leq t\leq T\} $. The identifying condition is that $E(u\mid M)=0$ and that conditional on $M$, $\{u_{j}\}_{j=1}^{J+1}$ is independent across $j$, where $u_{j}=(u_{j1},\dots,u_{jT})'\in\mathbb{R}^{T}$.  The counterfactual proxy is $P_{t}^N=M_{1t} $ for $1\leq t\leq T$. 

The main challenge is to recover the entire matrix $M$ despite the missing entries $\{Y_{1t}^N:T_0+1\leq t\leq T\} $. The literature on matrix completion considers the model  \eqref{eq: matrix completion model} under the assumption of missingness at random and exploits the assumption that the rank of $M$ is low.\footnote{See, for example, \citet{candes2009exact}, \citet{recht2010guaranteed}, \citet{candes2011tight}, \citet{koltchinskii2011nuclear}, \citet{negahban2011estimation}, \citet{rohde2011estimation}, and \citet{chatterjee2015matrix}.} Recently, \citet{athey2018matrix} introduce this method to study treatment effects in panel data models and point out  the unobserved counterfactuals correspond to  entries that are missing in a very special pattern, rather than at random. Assuming the usual low rank condition on $M$, they employ the nuclear norm penalized estimator and provide bounds on the estimation error in the typical setup of causal panel data models. 

We take a different approach here since our main goal is hypothesis testing instead of estimation. The key observation is that under the null hypothesis, there are no missing entries in the data. By imposing the null hypothesis, we  replace the missing entries with the hypothesized values and obtain a dataset that contains $\{Y_{jt}^N:1\leq j \leq J+1,\ 1\leq t \leq T \}$. The estimator for $M$ we examine here is closely related to existing nuclear norm regularized estimators and is defined as
\begin{align}
\hat{M}= & \underset{A\in\mathbb{R}^{N\times T}}{\arg\min}\sum_{t=1}^{T}\sum_{j=1}^{N}(Y_{jt}^{N}-A_{jt})^{2}\quad  {\rm s.t. }  \ \ \|A\|_{*}\leq K, \label{eq: constrained nuclear est}
\end{align}
where $K>0$ is the bound on the nuclear norm of the true matrix. In principle, it can be a sequence that tends to infinity. When $M$ represents a factor structure with strong factors, $K$ can be shown to grow at the rate $\sqrt{NT} $. Clear guidance on how to choose $K$ is still unavailable, but following \cite{athey2018matrix} one can use  cross-validation.\footnote{The properties of cross-validation remain unknown in these settings.}  Alternatively one can use a pilot thresholded SVD estimator to get a sense of what $K$ is, and use a somewhat larger value of $K$. The following result guarantees the validity of this estimator in our context under mild regularity conditions.

\begin{lem}
\label{lem: constrained nuclear}Consider  the estimator $\hat{M}$
defined in (\ref{eq: constrained nuclear est}). Assume that $\|M\|_* \leq K $. Let the conditions listed at the beginning of the proof hold. Then, for any $T_{0}+1\leq t\leq T$,
$
\hat{P}^N_{t}-P^N_{t}=o_{P}(1)$ and $ \frac{1}{T}\sum_{t=K+1}^{T}\left(\hat{P}^N_{t}-P^N_{t}\right)^{2}=o_{P}(1).
$
\end{lem}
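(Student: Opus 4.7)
The plan is to follow the standard ``basic inequality + dual norm'' template for constrained nuclear norm estimators, exploiting that the null hypothesis allows us to treat the entire $N\times T$ matrix as observed, so that $\hat M$ is simply a least squares projection of the fully imputed data $Y^N$ onto the nuclear norm ball of radius $K$.

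First, I would derive the basic inequality. Since $\|M\|_*\leq K$, both $M$ and $\hat M$ are feasible for the optimization problem \eqref{eq: constrained nuclear est}, and optimality of $\hat M$ gives $\|Y^N-\hat M\|_F^2\leq \|Y^N-M\|_F^2$. Substituting $Y^N=M+u$ and rearranging yields the quadratic bound
\[
\|\hat M-M\|_F^2 \;\leq\; 2\,\langle u,\,\hat M-M\rangle,
\]
where $\langle\cdot,\cdot\rangle$ denotes the entrywise (Frobenius) inner product. Second, I would invoke the duality between the nuclear norm and the operator norm $\|\cdot\|_{\rm op}$ to obtain $\langle u,\hat M-M\rangle\leq \|u\|_{\rm op}\|\hat M-M\|_*$, and then use the triangle inequality together with feasibility to get $\|\hat M-M\|_*\leq \|\hat M\|_*+\|M\|_*\leq 2K$. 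Combining gives
\[
\|\hat M-M\|_F^2\;\leq\;4K\,\|u\|_{\rm op}.
\]

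Third, I would control $\|u\|_{\rm op}$ using random matrix concentration. Under the identification condition $E(u\mid M)=0$ together with conditional independence of the rows $\{u_j\}_{j=1}^N$ and the moment/tail assumptions listed at the start of the proof, standard results (Seginer's inequality or a matrix Bernstein argument) deliver $\|u\|_{\rm op}=O_P(\sqrt{N}+\sqrt{T})$. Plugging back in yields $\|\hat M-M\|_F^2=O_P\!\bigl(K(\sqrt{N}+\sqrt{T})\bigr)$. For the averaged statement I would then use the trivial bound $\sum_{t=1}^T(\hat M_{1t}-M_{1t})^2\leq \|\hat M-M\|_F^2$, so that
\[
\frac{1}{T}\sum_{t}(\hat P^N_t-P^N_t)^2 \;=\; O_P\!\left(\frac{K(\sqrt{N}+\sqrt{T})}{T}\right) \;=\; o_P(1),
\]
provided the regularity conditions ensure $K(\sqrt{N}+\sqrt{T})=o(T)$ (e.g., $K\asymp\sqrt{NT}$ arising from a strong-factor structure, with $N\lesssim T$).

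The main obstacle is the pointwise claim $\hat P^N_t-P^N_t=o_P(1)$ for a fixed post-treatment $t$, since a Frobenius norm bound averages error over all $N\cdot T$ entries and does not automatically rule out large error at a particular entry. My plan is first to try the crude bound $|\hat M_{1t}-M_{1t}|^2\leq \|\hat M-M\|_F^2$, which suffices when $K(\sqrt{N}+\sqrt{T})=o(1)$; in the more general regime I would exploit the low-rank/factor structure implicit in the stated regularity conditions by working on the tangent cone at $M$, where restricted strong convexity gives $\|\hat M-M\|_*\leq C\sqrt{r}\|\hat M-M\|_F$ with $r$ a bound on the effective rank. This sharpens the Frobenius rate to $O_P(r(\sqrt{N}+\sqrt{T}))$, and combined with an incoherence-type row bound (or the elementary argument that the first row of $\hat M-M$ lies in a low-dimensional subspace spanned by the leading singular vectors, whose entries are $O(1/\sqrt{T})$ in magnitude) yields the required $o_P(1)$ entrywise rate on row $1$. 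The rest of the proof is bookkeeping of which regularity conditions from the matrix completion literature are being invoked to ensure the noise concentration and the row-level structure hold simultaneously.
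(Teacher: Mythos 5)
Your first half coincides with the paper's argument: feasibility of $M$ gives the basic inequality, trace duality plus $\|\hat{M}-M\|_*\leq 2K$ gives $\|\hat{M}-M\|_F^2\leq 4K\|u\|$, and a spectral-norm bound on the noise finishes the aggregate step. One caveat even here: the paper's noise bound is proved (via a result for matrices with independent rows, explicitly without sub-Gaussianity) only at the rate $O_P\bigl(\sqrt{N\vee(N^{1/\kappa_1}T\log N)}\bigr)$ under conditional $2\kappa_1$-moment assumptions; the $O_P(\sqrt{N}+\sqrt{T})$ rate you invoke via Seginer or matrix Bernstein needs stronger tail conditions than the ones listed at the beginning of the paper's proof.

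The genuine gap is in how you pass from the Frobenius bound to the two claims of the lemma. The paper normalizes by $NT$, obtaining $(NT)^{-1}\|\hat{M}-M\|_F^2=O_P\bigl((NT)^{-1}K\sqrt{N\vee(N^{1/\kappa_1}T\log N)}\bigr)=o_P(1)$, and then invokes an explicitly assumed error-spreading condition (part of the ``conditions listed at the beginning of the proof''): with probability $1-o(1)$, $T^{-1}\sum_{t}(\hat{M}_{1t}-M_{1t})^2\leq \ell_T\,(NT)^{-1}\|\hat{M}-M\|_F^2$ and $(\hat{M}_{1t}-M_{1t})^2\leq \ell_T\,(NT)^{-1}\|\hat{M}-M\|_F^2$ for post-treatment $t$, with $\ell_T(NT)^{-1}K\sqrt{N\vee(N^{1/\kappa_1}T\log N)}=o(1)$. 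Your substitute, the crude bound $\sum_t(\hat{M}_{1t}-M_{1t})^2\leq\|\hat{M}-M\|_F^2$, throws away a factor of $N$ and forces $K(\sqrt{N}+\sqrt{T})=o(T)$, which fails in exactly the leading regime $K\asymp\sqrt{NT}$ that the lemma is meant to cover (it would require $N=o(\sqrt{T})$). Your pointwise fallback is either vacuous in that regime (the condition $K(\sqrt{N}+\sqrt{T})=o(1)$) or an unproven sketch: entrywise control of $\hat{M}-M$ at a fixed entry is a substantially harder problem (two-to-infinity-norm or leave-one-out analysis), the constrained estimator's error need not lie in a low-rank tangent cone, and no incoherence or restricted-strong-convexity assumptions appear among the paper's stated conditions. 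So the missing ingredient is precisely the $\ell_T$ row-transfer assumption the paper imposes; without it, or a proof of something equivalent, your argument does not deliver either display of the lemma under the intended conditions.
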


The result is notable because no sub-Gaussian assumptions are required.  The estimator in (\ref{eq: constrained nuclear est}) does not explicitly require a low-rank condition on $M$. Instead, we impose a growth restriction on $K$. When $M$ is generated by a strong factor structure and the null hypothesis contains full information on the missing entries, we can choose $K\asymp \sqrt{NT}$ and our consistency result holds as long as $N,T\rightarrow \infty $ and   $E(|u_{jt}|^{2+c}\mid M) $ is uniformly bounded for some $c>0$. In the case of weak factors, we can choose  $K \ll \sqrt{NT} $ and obtain consistency.

\subsection{Time Series and Fused Models}
As pointed out in Section \ref{subsec:ts_fused}, time series models can be used to model counterfactual proxies with or without control units. We now discuss  low-level conditions under which fitting these models yields estimates good enough for the purpose of our conformal inference approach.

\subsubsection{Autoregressive  Models}
The linear autoregressive model with $K$ lags can be written as
$
Y_{1t}^{N}=\rho_{0}+\sum_{j=1}^{K}\rho_{j}Y_{1t-j}^{N}+u_{t},
$
where $\{u_{t}\}_{t=1}^{T}$ is an iid sequence with $E(u_{t})=0$.\footnote{Here
the model seems different, but  Section \ref{subsec:ts_fused}'s model implies this one with  $\rho_0  = \mu (1 - \sum_{j=1}^{K}\rho_{j})$.} The counterfactual proxy for $Y_{1t}^N $ is $P_t^N= \rho_{0}+\sum_{j=1}^{K}\rho_{j}Y_{1t-j}^{N}$. We write $P_t^N$ as  $P_t^N = y_t'\rho$, where $y_{t}=(1,Y_{1t-1}^{N},Y_{1t-2}^{N},\dots,Y_{1t-K}^{N})'\in\mathbb{R}^{K+1}$ 
and  $\rho=(\rho_0,\dots,\rho_K)'\in \mathbb{R}^{K+1}$. The coefficient vector $\rho$ can be estimated using least squares: $\hat{\rho}=\left(\sum_{t=K+1}^{T}y_{t}y_{t}'\right)^{-1}\left(\sum_{t=K+1}^{T}y_{t}Y_{1t}^{N}\right)$. The estimator for $P_t^N $ is $\hat P_t^N = y_t '\hat \rho$. 

\begin{lem}[Linear AR Model]
\label{lem: low level AR}Suppose that $\{u_{t}\}_{t=1}^{T}$ is an iid sequence with $E(u_{1})=0$
and $E(u_{1}^{4})$ uniformly bounded and the roots of   $1-\sum_{j=1}^{K}\rho_{j}L^{j}=0$
are uniformly bounded away from the unit circle. Then, for any $T_{0}+1\leq t\leq T$,
$
\hat{P}^N_{t}-P^N_{t}=o_{P}(1)$ and $\frac{1}{T}\sum_{t=K+1}^{T}(\hat{P}^N_{t}-P^N_{t})^{2}=o_{P}(1).
$
\end{lem}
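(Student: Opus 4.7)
My approach is to exploit that the AR estimation error is linear in the innovations and reduces to a standard least squares consistency argument for stationary ergodic sequences. Writing $y_t=(1,Y_{1,t-1}^N,\dots,Y_{1,t-K}^N)'$ and $P_t^N=y_t'\rho$, the prediction error is
\[
\hat P_t^N - P_t^N = y_t'(\hat\rho-\rho), \quad \hat\rho - \rho = \left(\frac{1}{T-K}\sum_{t=K+1}^{T} y_t y_t'\right)^{-1}\frac{1}{T-K}\sum_{t=K+1}^{T} y_t u_t,
\]
so the whole task reduces to controlling these two sample averages and the norms of $y_t$.

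First I would invoke the root condition to conclude that the AR$(K)$ recursion admits a stationary causal solution $Y_{1t}^N = \sum_{j\ge 0} \psi_j u_{t-j}$ with $\sum_j |\psi_j|<\infty$, and consequently $\{Y_{1t}^N\}$ (hence $\{y_t\}$) is strictly stationary and ergodic with $E\|y_t\|_2^4 < \infty$ (using the assumed $E u_1^4<\infty$). By the ergodic theorem,
\[
\frac{1}{T-K}\sum_{t=K+1}^{T} y_t y_t' \xrightarrow{p} \Gamma := E[y_t y_t'],
\]
and a standard argument (based on the Yule--Walker equations together with causality) shows $\Gamma$ is positive definite, so $\Gamma$ is invertible with high probability for $T$ large.

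Second, I would observe that $y_t$ is measurable with respect to $\mathcal{F}_{t-1}=\sigma(u_{s}: s\le t-1)$ while $u_t$ is independent of $\mathcal{F}_{t-1}$ with mean zero, so $\{y_t u_t\}$ is a stationary ergodic martingale difference sequence with finite second moments. A martingale LLN (or again the ergodic theorem applied to the $L^2$ sequence $y_t u_t$) yields
\[
\frac{1}{T-K}\sum_{t=K+1}^{T} y_t u_t \xrightarrow{p} 0,
\]
so Slutsky gives $\hat\rho - \rho = o_P(1)$.

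Finally, the two conclusions fall out. For the pointwise statement at a single $t$ with $T_0+1\le t\le T$, stationarity plus finite moments give $\|y_t\|_2 = O_P(1)$, hence $|\hat P_t^N - P_t^N|\le \|y_t\|_2\|\hat\rho-\rho\|_2 = o_P(1)$. For the in-sample average,
\[
\frac{1}{T}\sum_{t=K+1}^{T}(\hat P_t^N - P_t^N)^2 = (\hat\rho-\rho)'\left(\frac{1}{T}\sum_{t=K+1}^{T} y_t y_t'\right)(\hat\rho-\rho)\le \|\hat\rho-\rho\|_2^2 \cdot O_P(1)=o_P(1),
\]
using the already established convergence of the sample Gram matrix to $\Gamma$. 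The main obstacle, such as it is, is verifying that $\Gamma$ is positive definite, which is where the root condition enters beyond merely delivering stationarity; once that is in hand the rest is a routine application of ergodic and martingale limit theorems to i.i.d.--driven linear processes.
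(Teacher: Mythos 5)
Your proposal is correct and follows essentially the same route as the paper: both use the decomposition $\hat P_t^N - P_t^N = y_t'(\hat\rho-\rho)$, establish $\hat\rho-\rho=o_P(1)$ and $T^{-1}\sum_{t=K+1}^T y_t y_t' = E(y_ty_t')+o_P(1)$, and then obtain the pointwise bound from $\|y_t\|_2=O_P(1)$ and the averaged bound from the quadratic form $(\hat\rho-\rho)'\bigl(T^{-1}\sum_t y_ty_t'\bigr)(\hat\rho-\rho)$. The only difference is that the paper outsources the consistency of $\hat\rho$ and the Gram-matrix convergence to the textbook analysis in Hamilton (1994, pp.~215--216), whereas you supply the same ingredients directly via the causal MA$(\infty)$ representation, the ergodic theorem, a martingale-difference LLN, and positive definiteness of $E(y_ty_t')$.
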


As mentioned in Section \ref{subsec:ts_fused}, we can also apply  nonlinear autoregressive models
$
Y_{1t}^N=\rho(Y_{1t-1}^N,Y_{1t-2}^N,\ldots,Y_{1t-K}^N)+u_{t},
$
where $\rho $ is a nonlinear function, in which case the counterfactual proxy is $P_t^N= \rho(Y_{1t-1}^N,Y_{1t-2}^N,\ldots,Y_{1t-K}^N)$.

Let $\hat{\rho} $ be an estimator for $\rho$ and $\hat{P}_{t}^N= \hat{\rho}(Y_{1t-1}^N,Y_{1t-2}^N,\ldots,Y_{1t-K}^N) $. This estimator can be parametric, semiparametric, or fully nonparametric and is only required to be consistent.

\begin{lem}[Nonlinear AR Model]
\label{lem: low level nonliear AR}  Suppose  that (1) $\|\hat{\rho}-\rho\|= O_P(r_T)$ with $r_T =o(1) $ for some appropriate norm $\|\cdot\| $ and  $\max_{K+1 \leq t \leq T }   |\hat{\rho}(Y_{1t-1}^N,Y_{1t-2}^N,\ldots,Y_{1t-K}^N) -\rho(Y_{1t-1}^N,Y_{1t-2}^N,\ldots,Y_{1t-K}^N)| \leq \ell_T \| \hat \rho - \rho \|$  for some $\ell_T r_T =o(1) $. Then, for any $T_{0}+1\leq t\leq T$,
$
\hat{P}^N_{t}-P^N_{t}=o_{P}(1)$ and $\frac{1}{T} \sum_{t=K+1}^{T}(\hat{P}^N_{t}-P^N_{t})^{2} =o_{P}(1).
$
\end{lem}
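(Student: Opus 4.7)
The plan is to observe that both conclusions are essentially immediate consequences of the two stated assumptions, so the proof reduces to chaining them together. First I would rewrite the quantity of interest using the definitions: for any $t \geq K+1$,
\[
\hat{P}_t^N - P_t^N = \hat{\rho}(Y_{1,t-1}^N,\dots,Y_{1,t-K}^N) - \rho(Y_{1,t-1}^N,\dots,Y_{1,t-K}^N),
\]
so the problem is entirely about controlling the estimation error of $\hat{\rho}$ evaluated at the observed lag vectors.

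For the pointwise statement, I would apply assumption (2) directly: for any fixed $t$ with $T_0+1 \leq t \leq T$,
\[
|\hat{P}_t^N - P_t^N| \leq \max_{K+1 \leq s \leq T}\bigl|\hat{\rho}(Y_{1,s-1}^N,\dots,Y_{1,s-K}^N) - \rho(Y_{1,s-1}^N,\dots,Y_{1,s-K}^N)\bigr| \leq \ell_T \|\hat{\rho}-\rho\|.
\]
Combining this with assumption (1), which gives $\|\hat{\rho}-\rho\| = O_P(r_T)$, yields $|\hat{P}_t^N - P_t^N| = O_P(\ell_T r_T) = o_P(1)$, since $\ell_T r_T = o(1)$.

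For the average-squared-error statement, I would bound every summand by the same maximum:
\[
\frac{1}{T}\sum_{t=K+1}^T (\hat{P}_t^N - P_t^N)^2 \leq \frac{T-K}{T}\, \ell_T^2\,\|\hat{\rho}-\rho\|^2 \leq \ell_T^2\,\|\hat{\rho}-\rho\|^2 = O_P(\ell_T^2 r_T^2) = o_P(1).
\]
This gives the prediction-norm conclusion and completes the argument.

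The statement is deliberately a high-level wrapper: the two assumptions are precisely what is needed, and no probabilistic or time-series machinery enters the proof itself. The ``hard part'' is not in this proof but in verifying assumptions (1) and (2) for concrete choices of $\rho$ (parametric, sieve-based, neural-net, kernel, etc.), where the appropriate norm $\|\cdot\|$ and the Lipschitz factor $\ell_T$ depend on the smoothness class and the tail behavior of $\{Y_{1t}^N\}$. That verification is case-specific and is left to the relevant literature referenced earlier in the paper; here the lemma simply isolates the two ingredients a user must supply in order to plug a nonlinear autoregressive estimator into our conformal inference framework.
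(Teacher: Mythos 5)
Your proof is correct and follows essentially the same route as the paper's: apply the uniform bound in assumption (2) to control both the pointwise and mean-squared errors by $\ell_T\|\hat\rho-\rho\|$ (and its square), then invoke $\ell_T r_T = o(1)$. In fact your treatment of the averaged term is slightly more careful than the paper's, which contains a small typo ($\ell_T^2\|\hat\rho-\rho\|$ rather than $\ell_T^2\|\hat\rho-\rho\|^2$) that your version fixes.
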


The primitive regularity conditions and the definitions of  the neural network estimators possessing these properties can be found, for example, in \citet{chen1999improved} and \cite{chen2001semiparametric}.

\subsubsection{Fused Panel/Time Series Models with AR Errors}

Here we provide generic conditions for the fused panel/time series models described in Section \ref{subsec:ts_fused}.  
In particular, AR models can be used to filter the estimated residuals and obtain
near iid errors. In Equation (\ref{eq: fused model AR error}) of Section \ref{subsec:ts_fused}, we introduce an autoregressive structure in the error terms: 
$
Y_{1t}^N=C_t^N+\varepsilon_t$ and $\varepsilon_t=\rho(\varepsilon_{t-1})+u_t,
$
where $C_t^N $ can be specified as a panel data model discussed before. Due to the autoregressive structure in $\varepsilon_{t}$, the counterfactual proxy is $P_t^N=C_t^N+\rho(\varepsilon_{t-1}) $. 

We estimate $P_t^N $ via a two-stage procedure. In the first stage, we estimate $C_t^N $ using the techniques we considered before and obtain say $\hat{C}_t^N $. In the second stage, we estimate $\rho(\varepsilon_{t-1}) $ by fitting an autoregressive model to the estimated residuals $\{\hat{\varepsilon}_t\}_{t=1}^T $, where $\hat{\varepsilon}_t=Y_{1t}^N-\hat{C}_t^N $. For simplicity, we consider a linear model in the second stage estimation. Analogous results can be obtained for more general models. To be specific, assume that  $\varepsilon_{t}=x_{t}'\rho+u_{t}$, where $x_{t}=(\varepsilon_{t-1},\varepsilon_{t-2},\dots,\varepsilon_{t-K})'\in\mathbb{R}^{K}$
and $\rho=(\rho_{1},\rho_{2},\dots,\rho_{K})'\in\mathbb{R}^{K}$. 

 Given $\{\hat{\varepsilon}_t \}_{t=1}^T $ from the first-stage estimation, we define $\hat{x}_{t}=(\hat{\varepsilon}_{t-1},\hat{\varepsilon}_{t-2},\dots ,\hat{\varepsilon}_{t-K})'\in\mathbb{R}^{K}$
and $\hat{\rho}=\left(\sum_{t=K+1}^{T}\hat{x}_{t}\hat{x}_{t}'\right)^{-1}\left(\sum_{t=K+1}^{T}\hat{x}_{t}\hat{\varepsilon}_{t}\right)$. To compute the $p$-value, we use  $\{\hat{u}_t\}_{t=K+1}^T $ with  $\hat{u}_t=\hat{\varepsilon}_t-\hat{x}_t'\hat{\rho} $ in the permutation. By the following result, this procedure is valid under very mild conditions for the first-stage estimation.

\begin{lem}[AR Errors]
\label{lem: pre-whitening u hat}Suppose that $\{u_{t}\}_{t=1}^{T}$
is an iid sequence with $E(u_{t})=0$ and $E(u_{1}^{4})$ uniformly
bounded and the roots of $1-\sum_{j=1}^{K}\rho_{j}L^{j}=0$ are uniformly
bounded away from the unit circle. We assume that 
(1) $\sum_{t=1}^{T}(\hat{C}^N_{t}-C^N_{t})^{2}=o_{P}(T)$, 
and (2) $\hat{C}^N_{t}-C^N_{t}=o_{P}(1)$ for $T_{0}-K+1\leq t\leq T$. Then, for any $T_{0}+1\leq t\leq T$,
$
\hat{P}^N_{t}-P^N_{t}=o_{P}(1)$ and $\sum_{t=K+1}^{T}\left(\hat{P}^N_{t}-P^N_{t}\right)^{2}=o_{P}(T)
$
\end{lem}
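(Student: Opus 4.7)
I would proceed via the exact regression identity for the second-stage estimator. Write
\[
\hat\varepsilon_t = \varepsilon_t - (\hat C_t^N - C_t^N) \quad \text{and}\quad \hat x_t - x_t = -\bigl(\hat C_{t-1}^N - C_{t-1}^N,\dots,\hat C_{t-K}^N - C_{t-K}^N\bigr)',
\]
so that $\hat\varepsilon_t - \hat x_t'\rho = u_t - (\hat x_t - x_t)'\rho - (\hat C_t^N - C_t^N)$. Hence
\[
\hat\rho - \rho = \hat Q^{-1}\!\left[ \frac{1}{T}\sum_{t=K+1}^T \hat x_t u_t \;-\; \frac{1}{T}\sum_{t=K+1}^T \hat x_t (\hat x_t - x_t)'\rho \;-\; \frac{1}{T}\sum_{t=K+1}^T \hat x_t (\hat C_t^N - C_t^N)\right],
\]
where $\hat Q = T^{-1}\sum \hat x_t \hat x_t'$. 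The target decomposition for the final object is
\[
\hat P_t^N - P_t^N = (\hat C_t^N - C_t^N) + (\hat x_t - x_t)'\hat\rho + x_t'(\hat\rho-\rho),
\]
so the whole argument reduces to (a) $\hat\rho - \rho = o_P(1)$, (b) a Cauchy--Schwarz bookkeeping to aggregate the three pieces.

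The key workhorse bound is
\[
\frac{1}{T}\sum_{t=K+1}^T \|\hat x_t - x_t\|^2 \;\le\; \frac{K}{T}\sum_{t=1}^T (\hat C_t^N - C_t^N)^2 \;=\; o_P(1),
\]
by assumption (1). I would first establish that $\hat Q \to_P \Sigma_x := E[x_1 x_1']$, which is positive definite because the roots of $1-\sum_j \rho_j L^j$ lie outside the unit circle, so $\{\varepsilon_t\}$ is a stationary AR process with nondegenerate autocovariance (standard Toeplitz argument). Decomposing $\hat Q = T^{-1}\sum x_t x_t' + (\text{cross terms}) + T^{-1}\sum (\hat x_t - x_t)(\hat x_t - x_t)'$, the leading term converges to $\Sigma_x$ by the ergodic theorem applied to the causal AR process (whose moments are finite since $E u_1^4 < \infty$), and the remaining terms are $o_P(1)$ by Cauchy--Schwarz combined with the workhorse bound. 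For the numerator in the expression for $\hat\rho - \rho$: $T^{-1}\sum x_t u_t$ is a stationary martingale-difference average (since $x_t$ is measurable with respect to the past of $\{u_s\}$ and $E u_t = 0$), so it is $o_P(1)$ by the ergodic/LLN; then $T^{-1}\sum (\hat x_t - x_t) u_t$ and the other two correction terms are all $o_P(1)$ by Cauchy--Schwarz using the workhorse bound together with $T^{-1}\sum u_t^2 = O_P(1)$ and $T^{-1}\sum \|x_t\|^2 = O_P(1)$. Invertibility of $\hat Q$ with high probability then gives $\hat\rho - \rho = o_P(1)$.

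With $\hat\rho - \rho = o_P(1)$ in hand, the pointwise claim follows immediately: for any fixed $t \ge T_0+1$, assumption (2) gives $\hat C_t^N - C_t^N = o_P(1)$ and $\hat C_{t-j}^N - C_{t-j}^N = o_P(1)$ for $1\le j\le K$ (so $(\hat x_t - x_t)'\hat\rho = o_P(1)$ since $\hat\rho = O_P(1)$), while $x_t = O_P(1)$ from stationarity gives $x_t'(\hat\rho - \rho) = o_P(1)$. For the averaged claim, use $(a+b+c)^2 \le 3(a^2+b^2+c^2)$:
\[
\sum_{t=K+1}^T (\hat P_t^N - P_t^N)^2 \;\le\; 3\sum (\hat C_t^N - C_t^N)^2 + 3\|\hat\rho\|^2 \sum \|\hat x_t - x_t\|^2 + 3\|\hat\rho - \rho\|^2 \sum \|x_t\|^2,
\]
and each of the three summands is $o_P(T)$ by assumption (1), the workhorse bound, and the established rate for $\hat\rho - \rho$ combined with $\sum \|x_t\|^2 = O_P(T)$.

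The main obstacle is the first step: showing $\hat\rho - \rho = o_P(1)$ despite feeding \emph{estimated} residuals $\hat\varepsilon_t$ into the AR regression. In standard AR theory one exploits orthogonality between $x_t$ (a function of lagged $u_s$) and the current $u_t$; here that orthogonality is broken by the first-stage contamination $\hat C_t^N - C_t^N$. The mild condition (1) is precisely what is needed to neutralize this contamination via Cauchy--Schwarz, because it delivers $T^{-1}\sum \|\hat x_t - x_t\|^2 = o_P(1)$, which is the single quantitative fact that powers every correction term. Once this is in place, everything else is routine ergodic-theorem and linear-algebra bookkeeping.
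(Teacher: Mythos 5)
Your proposal is correct and follows essentially the same route as the paper's proof: the same exact regression identity for $\hat\rho-\rho$ (the paper writes it via $\tilde u_t=u_t+\delta_t-\Delta_t'\rho$), the same workhorse bound $\sum_{t}\|\hat x_t-x_t\|^2\le K\sum_t(\hat C_t^N-C_t^N)^2=o_P(T)$, the same LLN/martingale-difference and eigenvalue arguments for the AR design, and the same final decomposition and Cauchy--Schwarz bookkeeping for the pointwise and averaged claims.
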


Note that the conditions in Lemma \ref{lem: pre-whitening u hat} for the autoregressive part are the same as in Lemma \ref{lem: low level AR}. Consistency of $\hat{C}_t^N $ can be verified using existing results, for example, those in Sections \ref{sub: low level DiD}--\ref{sub: low level factor models}.

\section{Empirical Application}
\label{sec:application}

We revisit the analysis in \citet{cunningham2018decriminalizing} who study the impact of decriminalizing indoor prostitution. 
They consider the case of Rhode Island, where a judge unanticipatedly decriminalized indoor sex work in July 2003 such that, until the recriminalization in November 2009, Rhode Island had decriminalized indoor and prohibited street prostitution. 

We focus on the effect of legalizing indoor prostitution on female gonorrhea incidence. Our outcome of interest is log female gonorrhea incidence per 100,000. We use the data on gonorrhea cases from the Center for Disease Control (CDC)'s Gonorrhea Surveillance Program previously analyzed by \citet{cunningham2018decriminalizing}; see their Section 3 for a detailed description and descriptive statistics. The female gonorrhea series date back to 1985 such that $T_0=19$ and $T_\ast=6$. Figure \ref{fig:raw_data} displays the raw data for Rhode Island and the rest of the U.S. states.

\begin{center}
[Figure \ref{fig:raw_data} around here.]
\end{center}

 We apply three different CSC methods: difference-in-differences, canonical SC, and constrained Lasso with $K=1$. Recall that constrained Lasso  nests both difference-in-differences and SC. Following \citet{cunningham2018decriminalizing}, the set of potential control units includes all other U.S. states and the District of Columbia ($J=50$). We choose $S_1$ as our test statistic and report $p$-values computed based on moving block and iid permutations.\footnote{To keep computation tractable, we randomly sample 10,000 iid permutations with replacement.} All computations were performed in \texttt{R} \citep{R2020}.

Before turning to the main results, we use the placebo tests proposed in the Appendix to assess the plausibility of the underlying assumptions. Specifically, based on the pre-treatment data, we test $H_0: \theta_{2003-\tau+1}=\dots=\theta_{2003}=0$ for $\tau\in \{1,2,3\}$. Rejections of this null undermine the credibility of the assumptions underlying our procedure and the inferences on policy effects in the post-treatment period. Table \ref{tab:placebo_specification_tests} presents the results. Figure \ref{fig:placebo_graphical} complements the formal tests with plots of the residuals from fitting the three models to the pre-treatment data. The placebo tests and the residual plots provide evidence in favor of the credibility of our inference method in conjunction with SC and, especially, constrained Lasso, but suggest that the difference-in-differences results need to be interpreted with caution.

\begin{center}
[Table \ref{tab:placebo_specification_tests} around here.]
\end{center}

\begin{center}
[Figure \ref{fig:placebo_graphical} around here.]
\end{center}

Table \ref{tab:zero_effect} reports $p$-values from testing the null hypothesis of a zero effect:
\begin{equation}
H_0: \theta_{2004}=\theta_{2005}=\dots =\theta_{2009}=0 \label{eq:zero_effect}.
\end{equation}
The null hypothesis \eqref{eq:zero_effect} is rejected at the 10\% level based on both permutation schemes and all three methods. 

\begin{center}
[Table \ref{tab:zero_effect} around here.]
\end{center}

Figure \ref{fig:ci_gonorrhoea} displays pointwise 90\% confidence intervals. The results are similar for all three methods. While the effect was not or only marginally significant during the first three years, legalizing indoor prostitution significantly decreased the incidence of female gonorrhea thereafter, corroborating the findings by \citet{cunningham2018decriminalizing}.

\begin{center}
[Figure \ref{fig:ci_gonorrhoea} around here.]
\end{center}

To investigate the robustness of our results, we perform a leave-one-out robustness check \citep[e.g.,][]{abadie2015comparative} to assess whether our findings are driven by a single control state. We iteratively exclude from the control group one of the states for which either the SC or constrained Lasso weights estimated based on the pre-treatment data are non-zero and compute the $p$-values for testing hypothesis \eqref{eq:zero_effect}. Figure \ref{fig:robustness} displays the distribution of the resulting $p$-values. Overall, our results are robust and not driven by a single control state: except for one specification, all results are significant at the 10\%-level.

\begin{center}
[Figure \ref{fig:robustness} around here.]
\end{center}

\if1\blind
{
\section*{Acknowledgements}
We are grateful to Guido Imbens, Jacopo Diquigiovanni, Bruno Ferman,  the Co-Editor (Matias Cattaneo), anonymous referees, and many seminar and conference participants for valuable comments. We would like to thank Scott Cunningham and Manisha Shah for sharing the data for the empirical application. W\"uthrich is also affiliated with CESifo and the Ifo Institute. Victor Chernozhukov gratefully acknowledges funding by the National Science Foundation. All errors are our own.
 
} \fi

\if0\blind
{

} \fi

\spacingset{1.0}
\setlength{\bibsep}{2pt}
\bibliographystyle{apalike}
\bibliography{SC_biblio}

\section*{Figures Main Text}

\begin{figure}[H]
\caption{Small Sample Size Properties (Nominal Level: 10\%)}
\begin{center}
\includegraphics[width=0.6\textwidth,trim={0 1cm 0 2.5cm}]{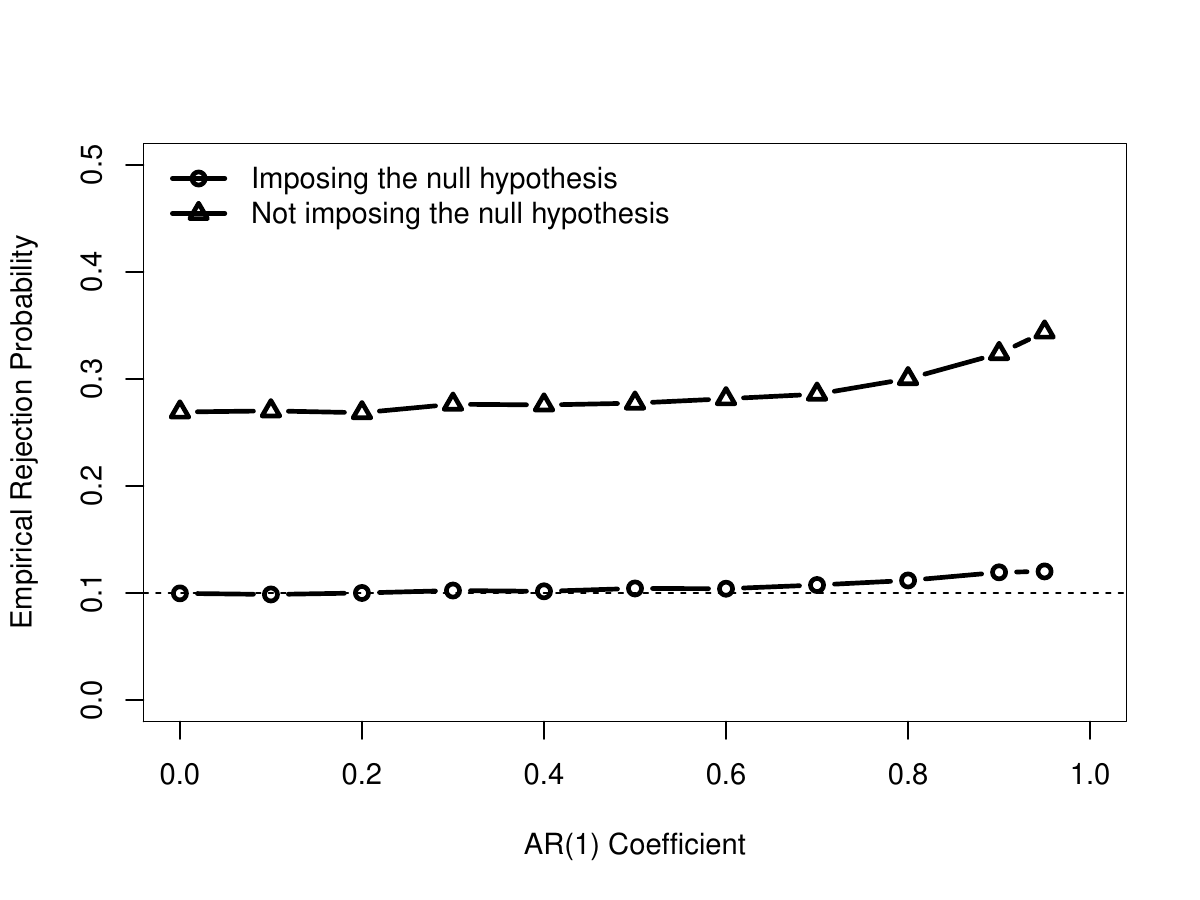}
\end{center}
    {\footnotesize \textit{Notes:} Empirical rejection probability from testing $H_0:\theta_{T_0+1}=0$. The data are generated as $Y^N_{1t}=\sum_{j=2}^{J+1}w_jY_{jt}^N+u_t$, where $Y_{jt}^N\sim N(0,1)$ is iid across $(j,t)$, $\{u_t\}$ is a Gaussian AR(1) process, $(w_2,\dots,w_{J+1})'=(1/3,1/3,1/3,0,\dots,0)'$, $T_0=19$, and $J=50$. The weights are estimated using the canonical SC method (cf. Section \ref{ex:sc}).}

\label{fig:importance_H0}
\end{figure}

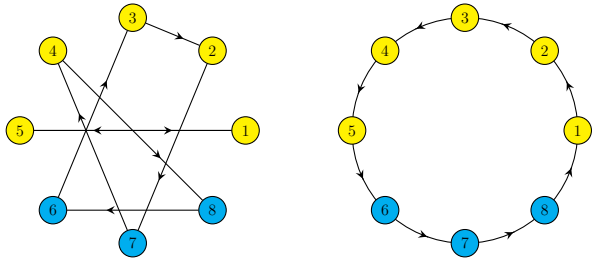
\begin{figure}[H]
    \caption{ Graphical Illustration Permutations}
    \label{fig:illustration_permutation}
\begin{center}
   \begin{tikzpicture}[scale=0.65, transform shape]
      \foreach \pt/\r/\ang in {1/3/0,2/3/45,3/3/90,4/3/135,5/3/180} {
         \node[circle,draw,fill=white] (\pt) at (\ang:\r){\pt};
      }
       \foreach \pt/\r/\ang in {6/3/225,7/3/270, 8/3/315} {
         \node[circle,draw,fill=gray] (\pt) at (\ang:\r){\pt};
      }
%     \foreach \x/\y in {1/5, 2/7, 3/2, 4/8, 5/1, 6/3, 7/4, 8/6} {
%         \draw[->-] (\x) -- (\y);
%      }
          \foreach \x/\y in {1/3, 2/6, 3/4, 4/7, 5/8, 6/5, 7/2, 8/1} {
         \draw[->-] (\x) -- (\y);
      } 
      
    \end{tikzpicture}    \hspace{.3in}  \begin{tikzpicture}[scale=0.65, transform shape]
      \draw (0,0) circle [radius=3];
       \foreach \pt/\r/\ang in {1/3/0,2/3/45,3/3/90,4/3/135,5/3/180} {
         \node[circle,draw,fill=white] (\pt) at (\ang:\r){\pt};
      }
       \foreach \pt/\r/\ang in {6/3/225,7/3/270, 8/3/315} {
         \node[circle,draw,fill=gray] (\pt) at (\ang:\r){\pt};
      }
      \foreach \ang in {25,70,115,160, 205, 250, 295, 340} {% draw remaining edges
      %     \draw[->-] (\ang-1:1) -- (\ang+1:1);
           \draw[->-] (\ang-1:3) -- (\ang+1:3);
      }
    \end{tikzpicture}
    \hspace{.3in}
    \end{center}
    {\footnotesize \textit{Notes:} The left figure gives an example of an iid permutation of $\{1,2,3,4,5,6,7,8\}$. The right figure gives an example of a moving block permutation of $\{1,2,3,4,5,6,7,8\}$. $T_0=5$, $T_\ast=3$. Pre-treatment periods are white; post-treatment periods are gray.} 
  \end{figure}

\begin{figure}[H]
\caption{Raw Data}
\begin{center}
\includegraphics[width=0.65\textwidth,trim={0 1cm 0 1cm}]{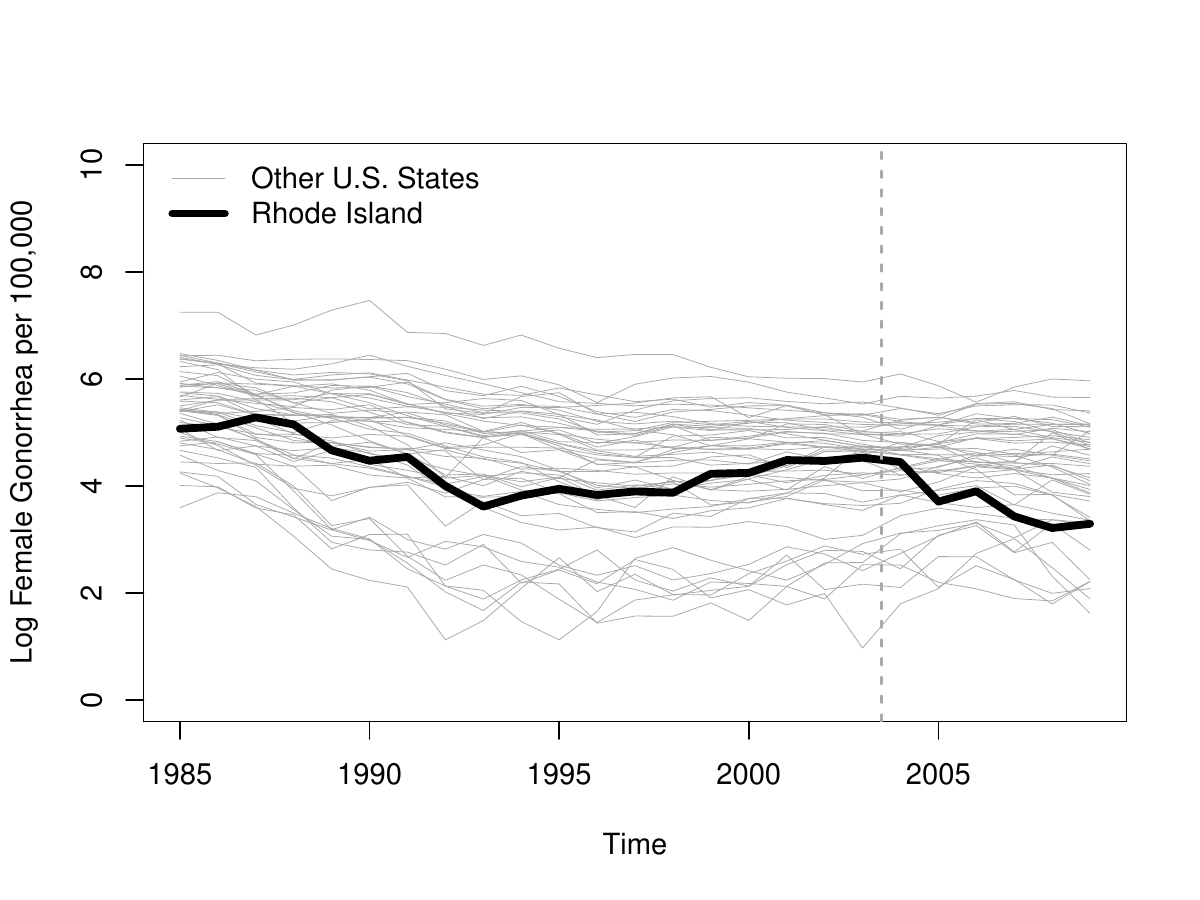}
\end{center}
\label{fig:raw_data}
    {\footnotesize \textit{Notes:} Data are from \citet{cunningham2018decriminalizing}. The figure shows the raw state-level data on log female gonorrhea cases per 100,000.} 

\end{figure}

\begin{figure}[H]
\caption{Graphical Placebo Checks}

\begin{center}

\includegraphics[width=0.325\textwidth,trim={0 1cm 0 1cm}]{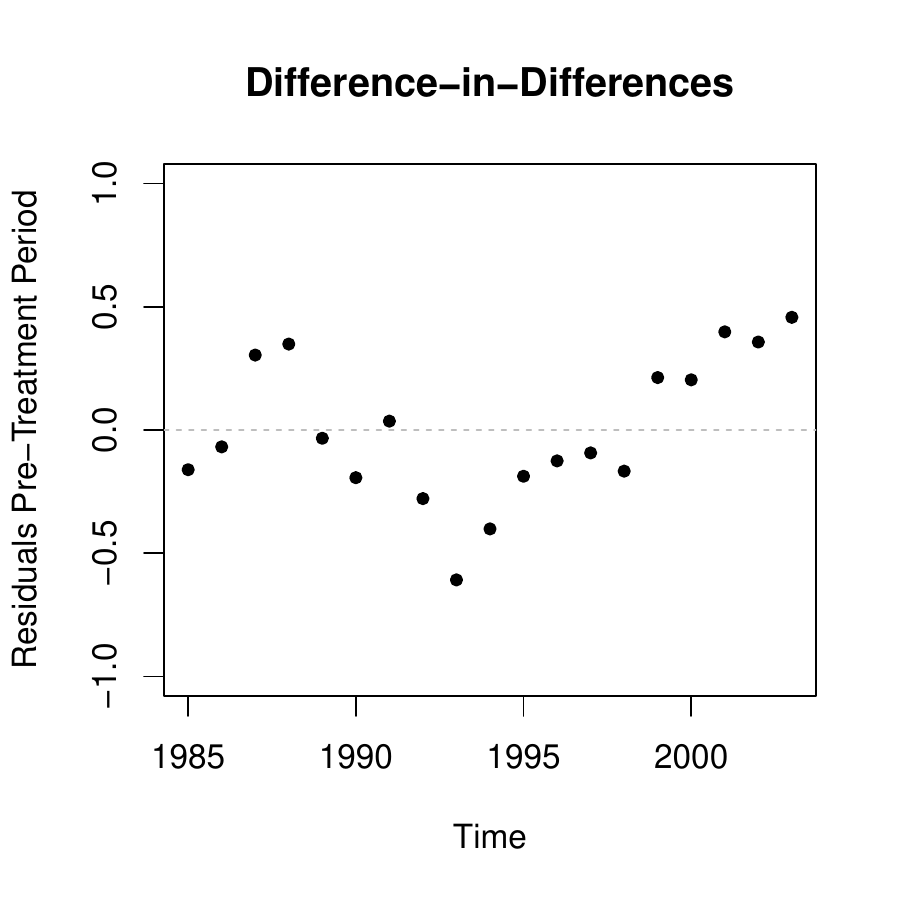}
\includegraphics[width=0.325\textwidth,trim={0 1cm 0 1cm}]{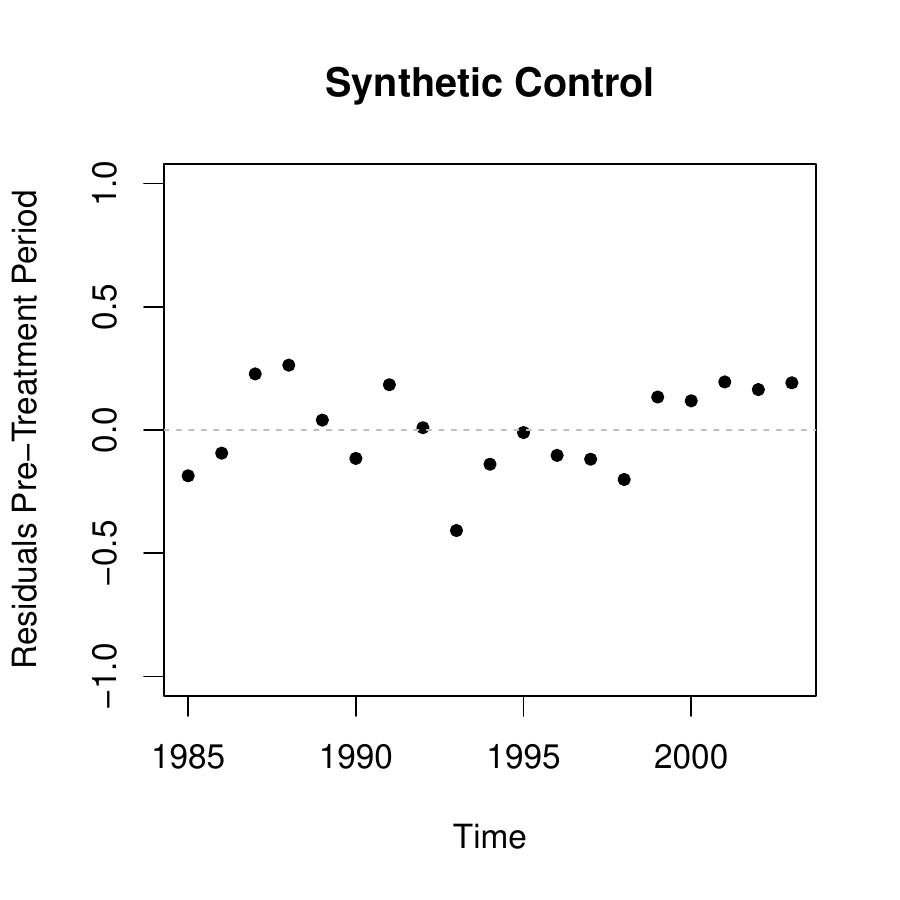}
\includegraphics[width=0.325\textwidth,trim={0 1cm 0 1cm}]{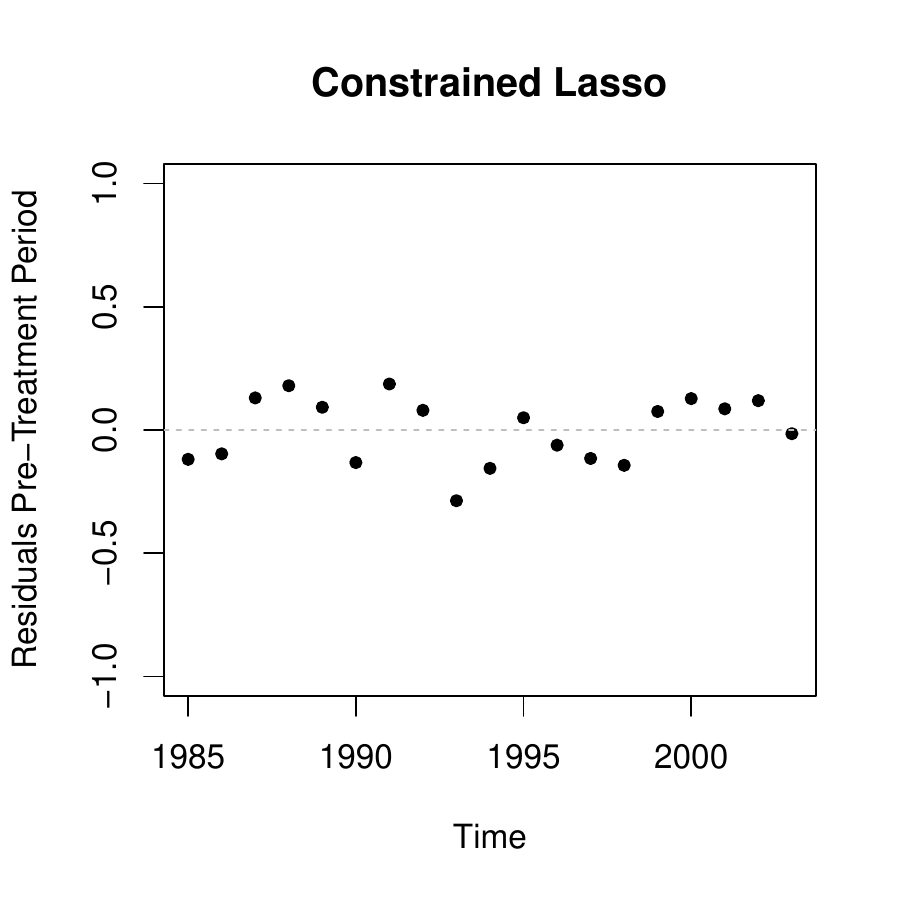}

\end{center}
\label{fig:placebo_graphical}
    {\footnotesize \textit{Notes:} Data are from \citet{cunningham2018decriminalizing}. The figure plots the pre-treatment residuals estimated using difference-in-differences, SC, and constrained Lasso.} 

\end{figure}

\begin{figure}[H]
\caption{Pointwise Confidence Intervals}

\begin{center}

\includegraphics[width=0.325\textwidth,trim={0 1cm 0 1cm}]{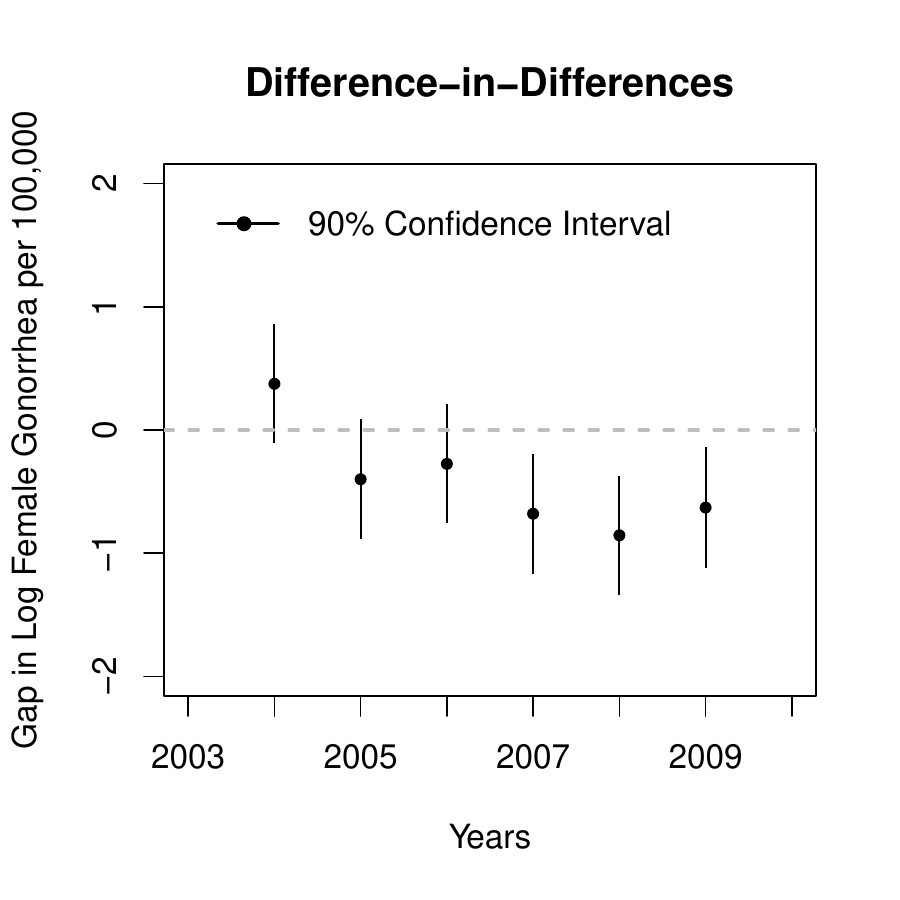}
\includegraphics[width=0.325\textwidth,trim={0 1cm 0 1cm}]{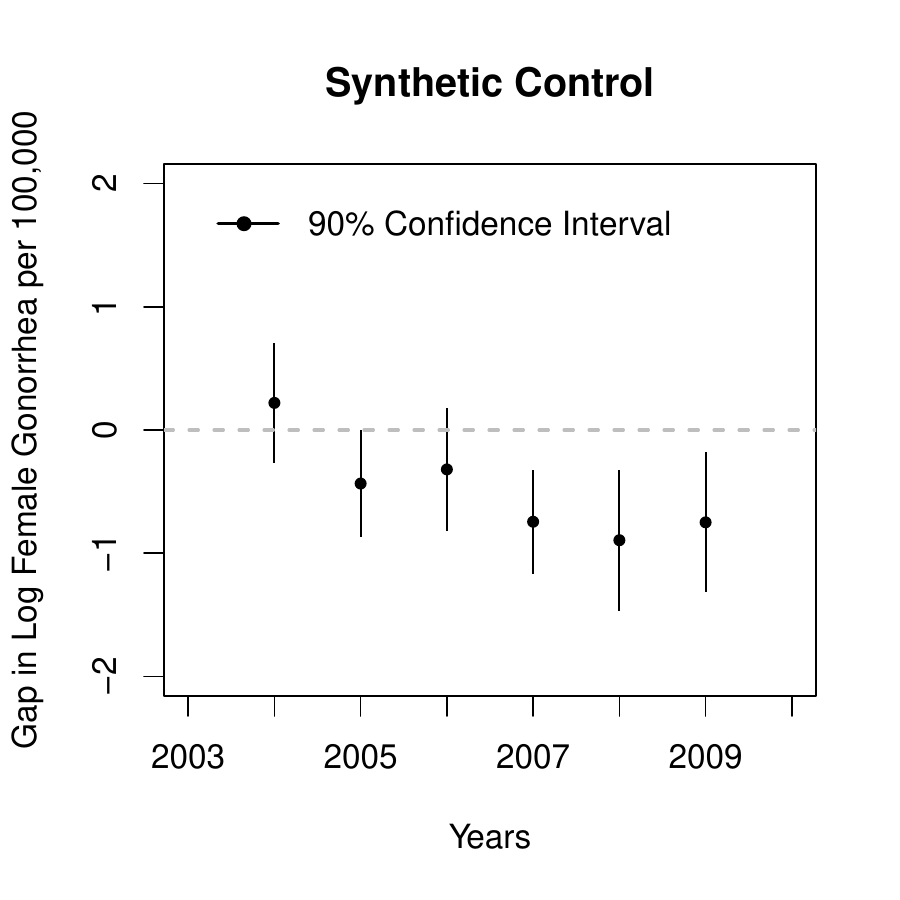}
\includegraphics[width=0.325\textwidth,trim={0 1cm 0 1cm}]{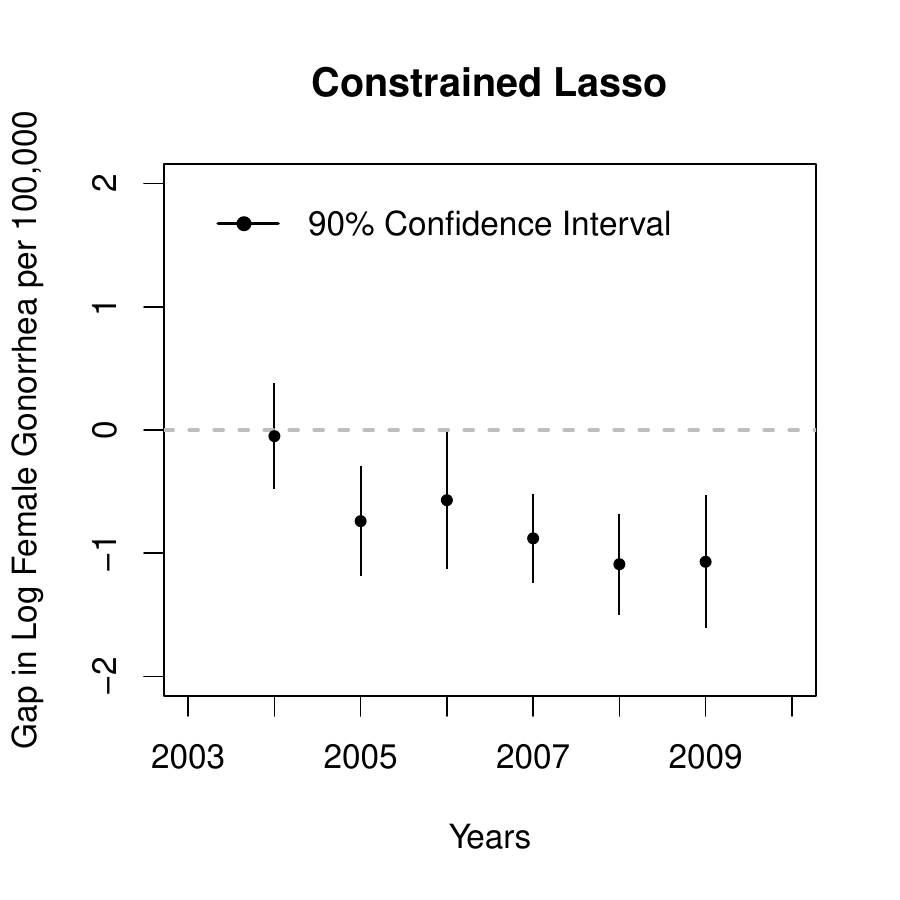}

\end{center}
\label{fig:ci_gonorrhoea}

  {\footnotesize \textit{Notes:} Data are from \citet{cunningham2018decriminalizing}. The figure plots pointwise 90\% confidence intervals computed using Algorithm \ref{algo:pointwise_ci}.} 

\end{figure}

\begin{figure}[H]
\caption{Leave-one-out Robustness Checks}

\begin{center}
\includegraphics[width=0.45\textwidth,trim={0 1cm 0 1cm}]{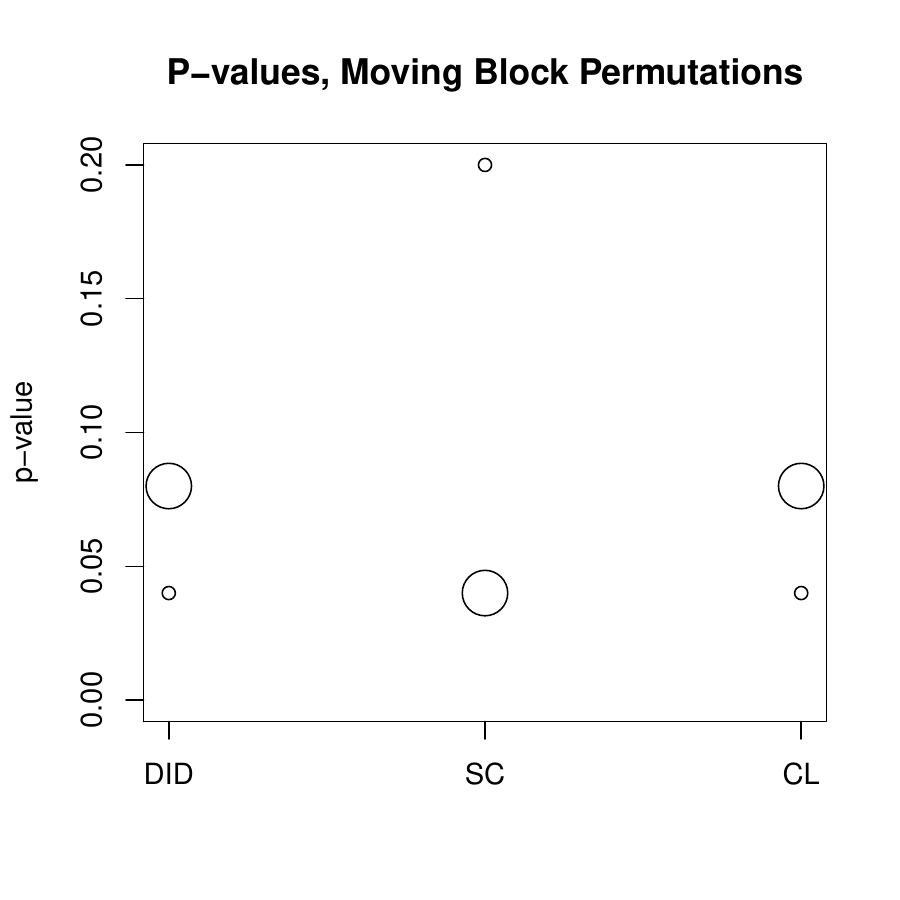}
\includegraphics[width=0.45\textwidth,trim={0 1cm 0 1cm}]{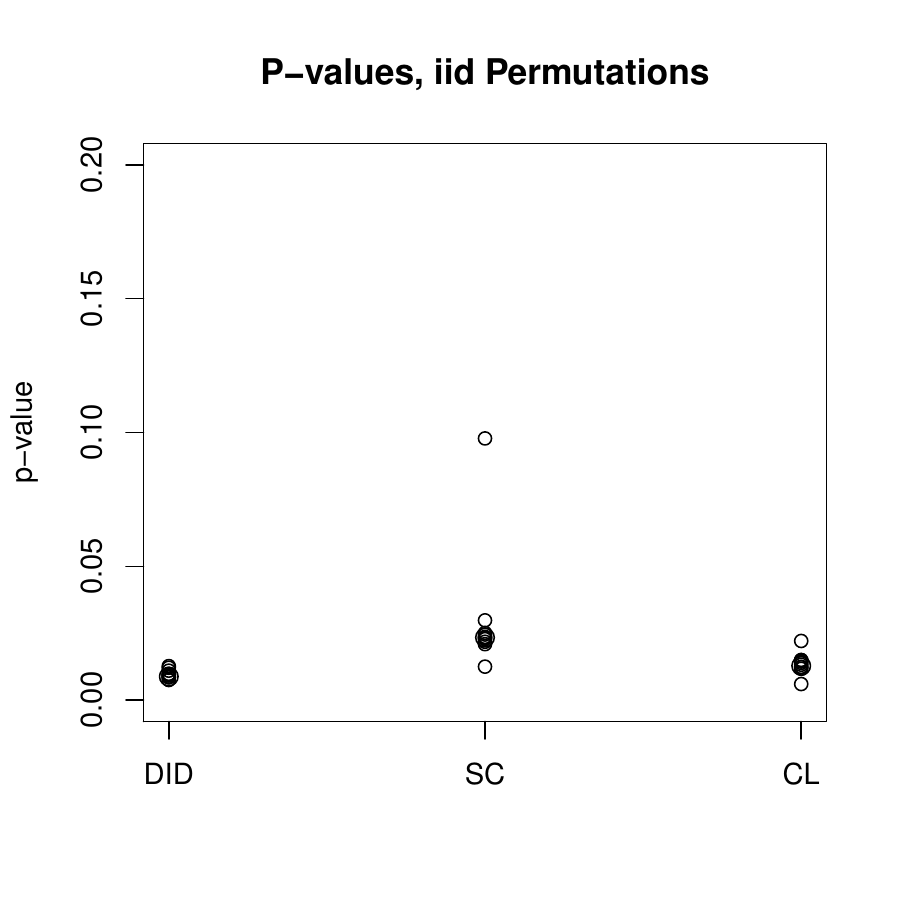}
\end{center}
    {\footnotesize \textit{Notes:} Data are from \citet{cunningham2018decriminalizing}. This figure shows the distribution of $p$-values from testing null hypothesis \eqref{eq:zero_effect}, leaving-out one of the control states with non-zero weight at the time. The size of the circles is proportional to the number of $p$-values. DID: difference-in-differences; SC: synthetic control; CL: constrained Lasso.} 

\label{fig:robustness}
\end{figure}

\section*{Tables Main Text}

\begin{table}[H]
\begin{center}
\caption{Placebo Specification Tests}
\label{tab:placebo_specification_tests}
\footnotesize
\begin{tabular}{lcccccc}
\\
\toprule
\midrule
 &\multicolumn{3}{c}{Moving Block Permutations} & \multicolumn{3}{c}{iid Permutations} \\
\cmidrule(l{5pt}r{5pt}){2-4} \cmidrule(l{5pt}r{5pt}){5-7}  \ 
$\tau$ &Diff-in-Diffs&Synth. Control&Constr. Lasso&Diff-in-Diffs&Synth. Control&Constr. Lasso  \\
\midrule
1 & 0.11 & 0.32 & 1.00 & 0.11 & 0.31 & 1.00 \\ 
  2 & 0.16 & 0.32 & 0.89 & 0.06 & 0.31 & 0.93 \\ 
  3 & 0.11 & 0.26 & 1.00 & 0.03 & 0.25 & 0.95 \\ 
  \midrule
\bottomrule
\end{tabular}
\end{center}
    {\footnotesize \textit{Notes:} Data are from \citet{cunningham2018decriminalizing}. Table shows $p$-values from testing $H_0: \theta_{2003-\tau+1}=\dots=\theta_{2003}=0$ for $\tau\in \{1,2,3\}$ based on the pre-treatment data.} 

\normalsize
\end{table}

\begin{table}[H]
\begin{center}
\caption{Zero Effect Null Hypothesis}
\label{tab:zero_effect}

\footnotesize

\begin{tabular}{cccccc}
\\
\toprule
\midrule

 \multicolumn{3}{c}{Moving Block Permutations} & \multicolumn{3}{c}{iid Permutations} \\
\cmidrule(l{5pt}r{5pt}){1-3} \cmidrule(l{5pt}r{5pt}){4-6}  \ 
Diff-in-Diffs&Synth. Control&Constr. Lasso&Diff-in-Diffs&Synth. Control&Constr. Lasso  \\
\midrule
 0.08 & 0.04 & 0.08 & 0.01 & 0.03 & 0.01 \\ 
  \midrule
\bottomrule

\end{tabular}
\end{center}
    {\footnotesize \textit{Notes:} Data are from \citet{cunningham2018decriminalizing}. Table shows $p$-values from testing $H_0: \theta_{2004}=\theta_{2005}=\dots =\theta_{2009}=0$.} 
\normalsize
\end{table}

\newpage

\appendix

\begin{center}
  {\bf Online Supplemental Appendix to ``An Exact and Robust Conformal Inference Method for Counterfactual and Synthetic Controls''}
\end{center}
\bigskip

  \begin{flushleft}
Victor Chernozhukov, Massachusetts Institute of Technology. Email: vchern@mit.edu\\
Kaspar W\"uthrich, University of California San Diego. Email: kwuthrich@ucsd.edu\\
 Yinchu Zhu, Brandeis University. Email: yinchuzhu@brandeis.edu\\
 \end{flushleft}

%\spacingset{1.5} % DON'T change the spacing!

\numberwithin{equation}{section}
\numberwithin{figure}{section}
\numberwithin{table}{section}

\numberwithin{thm}{section}
\numberwithin{lem}{section}

\startcontents[sections]
\printcontents[sections]{l}{1}{\setcounter{tocdepth}{1}}

\onehalfspacing

\section{Extensions}
\label{sec:extensions}

\subsection{Testing Hypotheses about Average Effects over Time}
\label{subsec:average_effects}

In addition to testing sharp null hypotheses, researchers are often also interested in testing hypotheses about average effects over time, $\bar\theta=T_\ast^{-1}\sum_{t=T_0+1}^{T}\theta_t$:
\begin{eqnarray}
H_0: \bar\theta=\bar\theta^0\label{eq:H0_average}
\end{eqnarray}

Hypothesis \eqref{eq:H0_average} can be tested by collapsing the data into averages of non-overlapping blocks of $T_\ast$ observations over the time dimension. To simplify the exposition, we assume that $T/T_\ast$ is an integer. Note that Assumption \ref{ass:dgp} implies the following model for the average potential outcomes $\bar{Y}_{1r}^N=T_\ast^{-1}\sum_{t=r}^{r+T_\ast-1}Y^N_{t}$ and $\bar{Y}_{1r}^I=T_\ast^{-1}\sum_{t=r}^{r+T_\ast-1}Y^I_{t}$: 
\begin{equation*}
\begin{array}{l}
\bar{Y}_{1r}^N = \bar{P}^N_r+\bar{u}_r\\
\bar{Y}_{1r}^I = \bar{P}^N_r+ \bar{\theta}_r +\bar{u}_r \\
\end{array}   \Bigg | \quad E( \bar{u}_r) = 0,\quad r=1,T_\ast+1,2T_\ast+1,\dots,T_0-T_\ast+1,T_0+1,\label{eq:dgp_average}\\
\end{equation*}
where $\bar{P}_{r}^N=T_\ast^{-1}\sum_{t=r}^{r+T_\ast-1}P^N_{t}$ and $\bar{u}_{r}=T_\ast^{-1}\sum_{t=r}^{r+T_\ast-1}u_{t}$. Define the aggregated (collapsed) data under the null as $\bar{\Zb}=(\bar{Z}_1,\dots,\bar{Z}_{T_0+1})'$, where
\[
\bar{Z}_r=\begin{cases}
\left(\bar{Y}_{1r}^{N},\bar{Y}_{2r}^N,\dots,\bar{Y}^N_{J+1r},\bar{X}'_{1r},\dots,\bar{X}'_{J+1r}\right)', & r<T_{0}+1\\
\left(\bar{Y}_{1r}^{I}-\bar{\theta}^{o},\bar{Y}_{2r}^N,\dots,\bar{Y}^N_{J+1r},\bar{X}'_{1r},\dots,\bar{X}'_{J+1r}\right)', & r=T_0+1
\end{cases}
\]
and $\bar{X}_{jr}=T_\ast^{-1}\sum_{t=r}^{r+T_\ast-1}X_{jt}$ for $j=1,\dots,J+1$. Note that testing hypothesis \eqref{eq:H0_average} is equivalent to testing a hypothesis concerning a per-period effect based on the aggregated data $\bar{\Zb}$. Specifically, we estimate the average proxy $\hat{\bar{P}}_r^N$ based on the aggregated data $\bar{\Zb}$ and obtain the residuals $\hat{\bar{u}}=(\hat{\bar{u}}_1,\hat{\bar{u}}_{T_\ast+1},\dots, \hat{\bar{u}}_{T_0+1})'$, 
where $\hat{\bar{u}}_r=\bar{Y}_{1r}^{N}- \hat{\bar{P}}^N_r$. The test statistic is $S\left( \hat{\bar{u}}\right)$, and $p$-values can be obtained based on permutations of $\hat{\bar{u}}$ as described in Section \ref{subsec:hypotheses}. 

The key assumption underlying this procedure is that the average mean proxy $\bar{P}_r^N$ can be identified and consistently estimated based on the aggregated data $\bar{\Zb}$. This is the case for SC and the other regression-based estimators discussed in Sections \ref{ex:sc}--\ref{ex:pen_reg}, provided that $E(\bar{u}_r\bar{Y}_{jr}^N)=0$ for $2\le j\le J+1$ and that the sufficient conditions for consistent estimation in Section \ref{sub: low level SC} hold for the aggregate data. By contrast, identification and estimation of $\bar{P}_r^N$ may not be possible for nonlinear and dynamic models. Under consistent estimation of $\hat{\bar{P}}^N_r$, the formal properties of the test follow from the results in Section \ref{subsec:small_error} since stationarity and weak dependence of $\{u_t\}$ imply stationarity and weak dependence of $\{\bar{u}_r\}$. Alternatively, if $\hat{\bar{P}}^N_r$ can be shown to be stable and the aggregate data are stationary and weakly dependent, the properties of the test follow from the results in Section \ref{subsec:stability}. Finally, we emphasize that the effective sample size is $T/T_\ast$ instead of $T$, such that $T$ needs to be substantially larger than $T_\ast$.

\subsection{Multiple Treated Units}\label{ssec:muiltiple_treated_units}
Our method can be extended to accommodate multiple treated units by collapsing the data into averages across the treated units. Consider a setup with $L$ treated units, indexed by $j=1,\dots,L$, and $J$ control units, indexed by $j=L+1,\dots,J+L$. Suppose that Assumption \ref{ass:dgp} holds for all treated units:
\begin{equation*}
\begin{array}{l}
Y_{jt}^N = P_{jt}^N + u_{jt}\\
Y_{jt}^I = P_{jt}^N + \theta_{jt} + u_{jt}  \\
\end{array}   \Bigg | \quad E( u_{jt}) = 0,  \quad t=1,\dots,T , \quad j=1,\dots,L. \\
\end{equation*}
Under this assumption, hypotheses about the unit-specific policy effects $\{\theta_{jt}\}$ can be tested by separately applying the proposed inference procedure to each treated unit. In addition, one is often also interested in conducting inferences about the average treatment effects on the treated units, $\{\bar\theta_t\}$, where $\bar{\theta}_t=L^{-1}\sum_{j=1}^L\theta_{jt}$.

Specifically, consider the following null hypothesis:
\begin{equation}
H_0: \left(\bar{\theta}_{T_0+1},\dots, \bar{\theta}_{T} \right)=\left(\bar{\theta}_{T_0+1}^0,\dots, \bar{\theta}_{T}^0 \right). \label{eq:null_multiple_treated_units}
\end{equation}
To test hypothesis \eqref{eq:null_multiple_treated_units}, note that if Assumption \ref{ass:dgp} holds for all treated units, we have the following model for the average potential outcomes $\bar{Y}^N_{t}=L^{-1}\sum_{j=1}^LY_{jt}^N$ and $\bar{Y}^I_{t}=L^{-1}\sum_{j=1}^LY_{jt}^I$:
\begin{equation*}
\begin{array}{l}
\bar{Y}_{t}^N = \bar{P}_{t}^N + \bar{u}_{t}\\
\bar{Y}_{t}^I = \bar{P}_{t}^N + \bar\theta_{t} + \bar{u}_{t}  \\
\end{array}   \Bigg | \quad E( \bar{u}_{t}) = 0,  \quad t=1,\dots,T, \\
\end{equation*}
where $\bar{P}_{t}^N=L^{-1}\sum_{j=1}^LP_{jt}^N$ and $\bar{u}_t=L^{-1}\sum_{j=1}^Lu_{jt}$. Define the data under the null as $\bar{\Zb}=(\bar{Z}_1,\dots,\bar{Z}_{T})'$, where
\[
\bar{Z}_{t}=\begin{cases}
\left(\bar{Y}^N_{t},Y^N_{L+1t},\dots,Y^N_{J+Lt},\bar{X}'_{t},X'_{L+1t},\dots,X'_{J+Lt}\right)', & t\le T_0,\\
\left(\bar{Y}_{t}^I-\bar\theta_t^{o},Y^N_{L+1t},\dots,Y^N_{J+Lt},\bar{X}'_{t},X'_{L+1t},\dots,X'_{J+Lt}\right)', & t>T_0,
\end{cases}
\]
and $\bar{X}_t=L^{-1}\sum_{j=1}^LX_{jt}$. To test hypothesis \eqref{eq:null_multiple_treated_units}, we compute the estimated average proxy $\hat{\bar{P}}_t^N$ based on the aggregated data $\bar{\Zb}$ and obtain the residuals $\hat{\bar{u}}=(\hat{\bar{u}}_1,\dots, \hat{\bar{u}}_{T})'$, where $\hat{\bar{u}}_t=\bar{Y}_{t}^{N}- \hat{\bar{P}}^N_t$ for  $t=1,\dots,T$. The test statistic is $S\left( \hat{\bar{u}}\right)$, and $p$-values can be obtained based on permutations of $\hat{\bar{u}}$ as described in Section \ref{subsec:hypotheses}.
 The formal properties of this test follow from the results in Section \ref{sec:theory}. 
 
\subsection{Placebo Tests}
\label{subsec:specification_tests}

Here we propose easy-to-implement placebo tests for assessing the credibility of inferences based on our method. We recommend applying these placebo tests when using our inference procedures.

Following \citet{abadie2015comparative}, the idea is to consider a placebo intervention before the actual intervention took place. For a given $\tau\geq 1$, we use our method to test the null hypothesis 
\begin{equation}
H_0: \theta_{T_0-\tau+1}=\dots=\theta_{T_0}=0 \label{eq:H0_placebo}
\end{equation}
based in the pre-treatment data $\Zb=\left( Z_1,\dots,Z_{T_0}\right)'$. Using an appropriate CSC method, we compute the counterfactual mean proxies $\hat{P}_t^N$ based on $\Zb$ and obtain the residuals 
\[
\hat{u}=\left(\hat{u}_1,\dots,\hat{u}_{T_0} \right)', \quad \hat{u}_t=Y_{1t}^N-\hat{P}^N_t, \quad t=1,\dots, T_0.
\]
We then apply the proposed inference method, treating $\{1,\dots,T_0-\tau\}$ as the pre-treatment period and $\{T_0-\tau+1,\dots,T_0\}$ as the post-treatment period. The theoretical properties of such placebo tests follow directly from the results in Section \ref{sec:theory}. As illustrated in our application, it may be useful to complement the formal testing results with plots of the pre-treatment residuals $\left(\hat{u}_1,\dots,\hat{u}_{T_0} \right)$.

A rejection of the null hypothesis \eqref{eq:H0_placebo} undermines the credibility of the assumptions underlying our procedure and the inferences on the policy effects in the post-treatment period. While non-rejections provide evidence in favor of our method, it is important to emphasize that such non-rejections do not ``prove'' that our method is valid. Moreover, by construction, the placebo tests cannot be used to assess some of the key assumptions underlying our approach, such as the invariance of the distribution of $\{u_t\}$ under the intervention.

Given our inference method's genericness, the placebo tests discussed here can be used to assess and compare the credibility of different CSC methods. For example, in our empirical application, the placebo tests provide evidence in favor of SC and constrained Lasso, but suggest that the difference-in-differences results need to be interpreted with caution.

 \section{Interpretation as a Structural Breaks Test}
 \label{app: interpretation structural break}
A key assumption underlying our method is the invariance of the distribution of $\{u_t\}$ under the intervention (Assumption \ref{ass:dgp}). In settings where this assumption fails, following the literature on end-of-sample structural breaks tests \citep[e.g.,][]{andrews2003end}, our procedure can be used as a test of the null hypothesis that the policy has no impact whatsoever against the alternative hypothesis that $\theta\ne 0$ and or the policy affects the distribution of $\{u_t\}$. 

Consider the following testing problem:
\begin{equation}
H_0:\begin{cases}
  Y_{1t}^N=P_t^N+u_t, ~~E(u_t)=0,~~t=1,\dots,T, ~~ \theta=0, ~~\text{and}\\
  \{u_t\}_{t=0}^T \text{ is stationary and weakly dependent}.
 \end{cases}\label{eq:H0_structural_break}
\end{equation}
against
\begin{equation}
H_1:\begin{cases}
  Y_{1t}^N=P_t^N+u_t,~~t=1,\dots,T,~~\theta\ne 0~~\text{and or the distribution of}\\
 \{u_t\}_{t=T_0+1}^{T} \text{ differs from that of } \{u_t\}_{t=s}^{s+T_\ast-1} \text{ for } s=1,\dots, T_0-T_\ast+1.
 \end{cases} \label{eq:H1_structural_break}
\end{equation}
Hypothesis \eqref{eq:H0_structural_break} can be tested by applying our procedure to the data under the null where $Y_{1t}^N=Y^I_{1t}=Y_{1t}$ for all $t=1,\dots,T$. The theoretical size properties of this test follow directly from the results in Section \ref{sec:theory}. This test has power against location shifts induced by $\theta \ne 0$ as well as changes in the distribution of $\{u_t\}$ that increase the quantiles of $S(u)$ (e.g., scale shifts);  see \citet[][Section 2.5]{andrews2003end} for a related discussion.

\section{Prediction Sets for Random Policy Effects}
\label{app: random treatment effects}

In the main text, we assume that the policy effect sequence $\{\theta_t\}$ is fixed (Assumption \ref{ass:dgp}). Here we show that our procedure generates valid prediction sets for $\theta_t$ when $\theta_t$ is assumed to be random, as, for example, in \citet{cattaneo2021prediction}.\footnote{Following the literature on conformal prediction \citep[e.g.,][]{lei2014distribution,lei2017distributionfree}, we use the terminology ``prediction set'' instead of  ``confidence set'' because the resulting sets are not conventional confidence sets; see \citet{cattaneo2021prediction} for a discussion of the differences between confidence intervals and prediction intervals for SC.} Our analysis here is in the same spirit as the classical conformal prediction literature, which aims at constructing prediction intervals for future values of a (random) target quantity of interest. Specifically, we will show that when $\theta_t$ is random, Algorithm \ref{algo:pointwise_ci} provides unconditionally valid $(1-\alpha)$-prediction sets with non-asymptotic performance guarantees. 

To state the result, define  $\Zb^{*}=(Z_1^*,\dots,Z_{T}^*)'$ with $ Z_t^*=\left(Y^N_{1t},Y^N_{2t},\dots,Y^N_{J+1t},X'_{1t},\dots,X'_{J+1t}\right)'$ for $1\leq t\leq T_0+T_*$. Notice that $\Zb^*$ contains the true counterfactuals.

\begin{thm}[Prediction Sets]
\label{thm:approximate_validity CS} 
Assume that $T_*$ is fixed. Suppose that $Y_{1t}^{N}=P_{t}^N+u_t$, $1\le t\le T$, where $\{u_t\}$ is a centered and stationary stochastic process, and that the policy effects $\theta_t:=Y_{1t}^I-Y_{1t}^N$ are random. Suppose that Assumption \ref{ass:high_level_est_error} holds for $\hat{P}_N$ computed using $\Zb^*$. Impose  Assumption \ref{ass:u}.\ref{ass:u_iid} if $\Pi=\Pi_{\text{all}}$.  Impose Assumption \ref{ass:u}.\ref{ass:u_weak_dependent},
if $\Pi=\Pi_{\to}$. Assume the statistic $S(u)$ has a density function bounded by $D$. Then
\[
|P\left(\theta_{t}\in \mathcal{C}_{1-\alpha}(t)\right)- (1-\alpha) | \leq C ( \tilde \delta_T + \delta_T + \sqrt{\delta_T} + \gamma_T),
\]
where $ \mathcal{C}_{1-\alpha}(t)$ is defined in Algorithm \ref{algo:pointwise_ci} and $\tilde \delta_T = (T_*/T_0)^{1/4}(\log T)$. The constant $C$ depends on $T_*$, $M$ and $D$,  but not on $T$.
\end{thm}

Let us briefly discuss the interpretation of our model when  $\theta_t$ is regarded as random. Suppose that the potential outcomes are $Y_{1t}^I=P^I_t+u_t^I$ and $Y_{1t}^N=P^N_t+u_t^N$. The counterfactual mean proxies and the prediction errors under the policy, $\{P_t^I\}$ and $\{u_t^I\}$, may differ from the counterfactual mean proxies and the errors in the absence of the policy, $\{P_t^N\}$ and $\{u_t^N\}$. Algorithm \ref{algo:pointwise_ci} provides prediction intervals for 
$$\theta_t:=Y_{1t}^{I}-Y_{1t}^{N}=\Delta_t^P+\Delta_t^u,$$
where $\Delta_t^P:=P^I_t-P_t^N$ and $\Delta_t^u:=u^I_t-u_t^N$. Here  $\theta_t$ captures both the effects of the policy on the mean proxy, $\Delta_t^P$, and the effect on distribution of the error, $\Delta_t^u$ (e.g., a scale shift).  Therefore, we are only assuming that in the absence of the policy, $\{u_t^N\}$ is a stationary process. All the changes to the mean, the variance, or other features of the distribution due to the policy are captured by the policy effects $\{\theta_t\}$.

In sum, with random policy effects, our procedure simply provides a prediction set for the policy effects defined as the difference between the two potential outcomes.

\section{Model-free Exact Validity under Exchangeability}
\label{app: exchangeability}
Every permutation procedure that is approximately valid in time series settings should have good properties in ``ideal'' settings where the data are iid or exchangeable. The following theorem, which is based on standard arguments, shows that under exchangeability of the data, our conformal inference approach achieves exact finite sample size control. This result is model-free in the sense that we do not need to use a correct or consistent estimator for the counterfactual mean proxy. As a result, our procedure controls size under arbitrary forms of misspecification and is fully robust against overfitting.

\begin{thm}[Exact Validity]
\label{thm:finite_sample} Let $\Pi$ be $\Pi_{\to}$ or $\Pi_{\rm{all}}$. Suppose that the data $\{Z_t\}_{t=1}^T$ are iid or exchangeable with respect to $\Pi$ under the null hypothesis and that $\hat u_t =g (Z_t, \hat \beta)$, where the estimator $\hat \beta = \hat \beta (\{Z_t\}_{t=1}^T)$ is invariant with respect to any permutation of the data. Then, under the null hypothesis, $\{\hat{u}_t\}_{t=1}^T$ is an exchangeable sequence and the permutation $p$-value is unbiased in level:
\[
P\left(\hat{p}\le \alpha\right)\le\alpha.
\]
Moreover, if $\left \{S(\hat{u}_\pi) \right \}_{\pi\in\Pi}$ has a continuous distribution,
\[
\alpha-\frac{1}{|\Pi|}\le P\left(\hat{p}\le \alpha\right).
\]
\end{thm}

Theorem \ref{thm:finite_sample} requires that the estimators are invariant under permutations of the data under the null hypothesis. Invariance holds for regression-based estimators such as SC, constrained Lasso, or penalized regression, provided that the null hypothesis is imposed for estimation but may fail for dynamic models such as linear and non-linear autoregressive models.

An inspection of the proof of Theorem \ref{thm:finite_sample} shows that our procedure achieves finite sample size control whenever the residuals $\{\hat{u}_t\}_{t=1}^T$ are exchangeable. We demonstrate that exchangeability of  $\{\hat{u}_t\}_{t=1}^T$ is implied if the data $\{Z_t\}_{t=1}^T$ are iid or exchangeable. Exchangeability $\{\hat{u}_t\}_{t=1}^T$ may hold even if the data are not exchangeable.  For example, in the difference-in-difference model, the outcome data can have an arbitrary common trend eliminated by differencing, making it possible for the residuals to be iid or exchangeable with non-iid data.

The finite sample validity under exchangeability is crucial for the robustness of our proposal. While high-dimensional CSC approaches may overfit in small samples, Theorem \ref{thm:finite_sample} shows that the proposed method does not suffer from overfitting or misspecification. The fundamental reason is that we exploit symmetry rather than to completely rely on the consistency of the estimator.  Imposing the null hypothesis when estimating the model leads to invariance of the estimator under permutations, exchangeable residuals, and finite sample validity. As a result, the proposed procedure is more robust than alternative approaches that estimate the proxies based on the pre-treatment data without imposing the null hypothesis.

Even in time series settings where exchangeability fails, imposing the null hypothesis for estimation is crucial for achieving a good performance in typical CSC applications where $T_0$ is rather small \citep[e.g.,][]{abadie2003economic,abadie10sc,abadie2015comparative,DI16,cunningham2018decriminalizing}. Figure \ref{fig:imposing_H0} augments Figure \ref{fig:importance_H0} with results for $T=100$  $(T_0=99,T_\ast=1)$. In the empirically relevant case where $T_0=19$, estimating $P_t^N$ under the null yields an excellent performance. Irrespective of the degree of persistence, size accuracy is substantially better than when $P_t^N$ is estimated based on pre-treatment data only. Even when $T_0=99$, which is much larger than the $T_0$ in many CSC applications, imposing the null yields notable performance improvements. In fact, the size accuracy is better for $T_0=19$ when $P_t^N$ is estimated under the null than for $T_0=99$ when $P_t^N$ is estimated based on the pre-treatment data only.

\section{Sufficient Conditions for Estimator Stability}
\label{sec: suff condition perturbation stability}

In this section, we provide sufficient conditions for the estimator stability Assumption \ref{assu: stability}. We first present generic sufficient conditions for low-dimensional models. For high-dimensional models, the theoretical analysis is more difficult and a case-by-case analysis is needed. We are not aware of any theoretical work that establishes Assumption \ref{assu: stability} for any high-dimensional model. Here we verify the stability condition for constrained Lasso; stability of Ridge regression is verified in Appendix \ref{app:ridge}.
% our preferred high-dimensional method: constrained Lasso.

\subsection{Generic Sufficient Condition for Low-dimensional Models}\label{sec: pertubation stability low dim}

Consider $\hat{\beta}(\Zb)=\arg\min_{\beta\in\Bcal}\hat{L}(\Zb;\beta)$,
where $\hat{L}(\Zb;\beta)$ is a loss function and $\Bcal\subset\mathbb{R}^{p}$
for a fixed $p$. Let $\Hcal$ be a set of subsets of $\{1,\dots,T\}$. Notice that Assumption \ref{assu: stability} only requires $\Hcal$ to be a singleton, but in this subsection and the next, we allow $\Hcal$ to be a class of subsets.

\begin{lem}
	\label{lem: low-dimensional stability}Suppose that the following
	conditions hold:
\begin{enumerate} \setlength{\itemsep}{0pt} \setlength{\parskip}{0pt}
\item $\sup_{\beta\in\Bcal}|\hat{L}(\Zb;\beta)-L(\beta)|=o_{P}(1)$
	for some non-random $L(\cdot)$. 
\item $\max_{H\in\Hcal}\sup_{\beta\in\Bcal}|\hat{L}(\Zb_{H};\beta)-L(\beta)|=o_{P}(1)$.
\item  $L(\cdot)$ is continuous at $\beta_{*}$, $\min_{\beta}L(\beta)$
	has a unique minimum at $\beta_{*}$ and $\Bcal$ is compact.
\end{enumerate}	
	Then $\max_{H\in\Hcal}\|\hbeta(\Zb)-\hbeta(\Zb_{H})\|_{2}=o_{P}(1)$. 
\end{lem}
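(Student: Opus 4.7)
My plan is to apply the classical argmin consistency argument for M-estimators (in the spirit of Theorem 5.7 of van der Vaart, 1998) twice---once to $\hbeta(\Zb)$ and once to $\hbeta(\Zb_H)$ uniformly over $H\in\Hcal$---and then conclude via the triangle inequality $\max_{H\in\Hcal}\|\hbeta(\Zb)-\hbeta(\Zb_H)\|_{2}\leq\|\hbeta(\Zb)-\beta_*\|_{2}+\max_{H\in\Hcal}\|\hbeta(\Zb_H)-\beta_*\|_{2}$. The two basic ingredients are well-separation of the minimum $\beta_*$ (obtained from compactness, uniqueness, and the regularity of $L$) and the uniform convergence hypotheses (1) and (2).

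First I would extract the well-separation statement: for every $\varepsilon>0$ there is $\delta(\varepsilon)>0$ with $\inf_{\beta\in\Bcal:\,\|\beta-\beta_*\|_{2}\geq\varepsilon}L(\beta)\geq L(\beta_*)+\delta(\varepsilon)$. By compactness of $\Bcal$, if this failed one could extract a convergent subsequence of near-minimizers whose limit would contradict uniqueness of $\beta_*$. Next, set $U_T=\sup_{\beta\in\Bcal}|\hat{L}(\Zb;\beta)-L(\beta)|$ and $V_T=\max_{H\in\Hcal}\sup_{\beta\in\Bcal}|\hat{L}(\Zb_H;\beta)-L(\beta)|$; by (1)--(2) both are $o_{P}(1)$. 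Using $\hat{L}(\Zb_H;\hbeta(\Zb_H))\leq\hat{L}(\Zb_H;\beta_*)$ together with the usual add/subtract trick, for each $H$,
\[
L(\hbeta(\Zb_H))-L(\beta_*)\leq[L(\hbeta(\Zb_H))-\hat{L}(\Zb_H;\hbeta(\Zb_H))]+[\hat{L}(\Zb_H;\beta_*)-L(\beta_*)]\leq 2V_T,
\]
and the analogous bound with $V_T$ replaced by $U_T$ holds for $\hbeta(\Zb)$. Combining with well-separation gives, for each $\varepsilon>0$,
\[
P\bigl(\max_{H\in\Hcal}\|\hbeta(\Zb_H)-\beta_*\|_{2}\geq\varepsilon\bigr)\leq P\bigl(2V_T\geq\delta(\varepsilon)\bigr)\to 0,
\]
and similarly $\|\hbeta(\Zb)-\beta_*\|_{2}=o_{P}(1)$, which yield the conclusion via the triangle inequality.

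The step I would flag as the only delicate one is the well-separation claim: the hypotheses as stated only require continuity of $L$ at $\beta_*$, whereas the subsequence argument needs enough lower semicontinuity on all of $\Bcal$ to rule out near-minimizing sequences bounded away from $\beta_*$. In every intended application of this lemma (least-squares type losses on a compact parameter set), $L$ is continuous everywhere on $\Bcal$ and the issue is vacuous; I would either note this implicit regularity explicitly or strengthen the hypothesis to continuity of $L$ on $\Bcal$ before proceeding. The uniformity in $H$ requires no extra work, since it is already built into hypothesis (2) via the outer $\max_{H\in\Hcal}$.
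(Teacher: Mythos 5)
Your proposal is correct and follows essentially the same route as the paper's proof: both establish a well-separation constant for $\beta_*$ via a compactness/uniqueness subsequence argument, use the add/subtract trick with the uniform convergence in (1)--(2) to show $\|\hbeta(\Zb)-\beta_*\|_{2}=o_P(1)$ and $\max_{H\in\Hcal}\|\hbeta(\Zb_H)-\beta_*\|_{2}=o_P(1)$, and conclude by the triangle inequality. The regularity caveat you flag is apt but not a divergence: the paper's own argument invokes continuity of $L$ at the subsequential limit point $\beta_{**}$, so it implicitly relies on the same continuity on $\Bcal$ that you propose making explicit.
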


In the literature of misspecified models, $\beta_{*}$ is usually
referred to as the pseudo-true value \citep[e.g.,][]{white1996estimation}.
In M-estimation with $\hat{L}(\Zb;\beta)=T^{-1}\sum_{t=1}^{T}l(Z_{t};\beta)$,
one can often show $\sup_{\beta}|\hat{L}(\Zb;\beta)-L(\beta)|=o_{P}(1)$
with $L(\beta)=El(Z_{1};\beta)$; in GMM models with $\hat{L}(\Zb;\beta)=\|T^{-1}\sum_{t=1}^{T}\psi(Z_{t};\beta)\|_{2}$,
one can often use $L(\beta)=\|E\psi(Z_{1};\beta)\|_{2}$. 

The proof of Lemma \ref{lem: low-dimensional stability} shows that
$\|\hbeta(\Zb)-\beta_{*}\|_{2}=o_{P}(1)$ and $\max_{H\in\Hcal}\|\hbeta(\Zb_{H})-\beta_{*}\|_{2}=o_{P}(1)$.
In other words, the stability of the estimator arises from the consistency
to the pseudo-true value $\beta_{*}$. Such consistency holds under
very weak conditions. We essentially only require a uniform law of
large numbers. This can be verified for many low-dimensional models
under weakly dependent data. The conclusion of Lemma  \ref{lem: low-dimensional stability} translates to  Assumption \ref{assu: stability} once we derive a bound on  $\max_{\pi\in\Pi}\sup_{\beta_1\neq \beta_2}|S(\Zb^\pi;\beta_1)-S(\Zb^\pi;\beta_2)|/\|\beta_1-\beta_2\|_2 $; this requires knowledge of  the model structure. 

For example, suppose that $\max_{\pi\in\Pi}\sup_{\beta_1\neq \beta_2}|S(\Zb^\pi;\beta_1)-S(\Zb^\pi;\beta_2)|/\|\beta_1-\beta_2\|_2 =O_P(1) $.\footnote{We only need $S(\Zb;\beta)$ to be Lipschitz with respect to $\beta$. In the simple example of $S(\Zb;\beta)=|Y_{T_0+1}-X_{T_0+1}'\beta|$, this only requires  $\|X_{T_0+1}\|_2$ to be bounded.} Then $\max_{\pi\in\Pi}|S(\Zb^\pi;\hbeta(\Zb))-S(\Zb^\pi;\hbeta(\Zb_{H}))|=o_P(1)$. Here is how we apply Theorem \ref{thm: approx exchange} to obtain the asymptotic size control. Fix an arbitrary $\delta>0$. We can simply choose the constant function $\varrho_T(x)=\delta$ for Assumption \ref{assu: stability}. Since $\max_{\pi\in\Pi}|S(\Zb^\pi;\hbeta(\Zb))-S(\Zb^\pi;\hbeta(\Zb_{H}))|=o_P(1)$,  Assumption \ref{assu: stability} holds for some $\gamma_{1,T}=o(1)$ (due to the definition of convergence in probability).
Assume that $\Psi(x,\beta)$ has bounded derivative with respect to $x$. Then we can choose $\xi_T=K_1$ for a large constant $K_1$ and $\gamma_{2,T}=0$. As a result, Theorem \ref{thm: approx exchange} has 
\begin{align*}
& \left|P\left(\hat{p}\leq\alpha\right)-\alpha\right|\\
&\leq C_{1}\sqrt{\xi_{T}\varrho_{T}(T_{0}/R+2k)}+C_{1}\left(T_{0}^{-1}R[\log(T_{0}/R)]^{1/D_{3}}\right)^{1/4}+C_{1}\exp\left(-(k-T_{*}+1)^{1/D_{3}}\right)\\
&\qquad+C_{1}\sqrt{\gamma_{1,T}}+C_{1}\sqrt{\gamma_{2,T}}\\
& =C_{1}\sqrt{K_1 \delta}+C_{1}\left(T_{0}^{-1}R[\log(T_{0}/R)]^{1/D_{3}}\right)^{1/4}+C_{1}\exp\left(-(k-T_{*}+1)^{1/D_{3}}\right)+C_{1}\sqrt{o(1)}.
\end{align*}
Since $R\asymp T_0/\log(T_0)$, we have $T_{0}^{-1}R[\log(T_{0}/R)]^{1/D_{3}}=o(1)$. Choosing $k\asymp \log(T_0)$ and assuming that $T_*$ is fixed, we obtain $\exp\left(-(k-T_{*}+1)^{1/D_{3}}\right)=o(1)$. Therefore, the above display implies 
$$
\left|P\left(\hat{p}\leq\alpha\right)-\alpha\right|=C_{1}\sqrt{K_1 \delta}+o(1).
$$
Since $\delta>0$ is arbitrary, we have $\left|P\left(\hat{p}\leq\alpha\right)-\alpha\right|=o_P(1)$.

\subsection{Constrained Lasso}

Here we propose sufficient conditions for estimator stability for constrained Lasso. In contrast to Sections \ref{ex:sc} and \ref{sub: low level SC}, we do not impose correct specification but study the behavior of the constrained Lasso estimator under potential misspecification. To make this explicit, we use $\beta$ instead of $w$ to denote the coefficient vector in this subsection.  Here, it is possible that $EX_t(Y_t-X_t'\beta)\neq 0 $ for any $\beta \in \Wcal$. In practice, this arises when  the relationship between $X_t$ and $Y_t$ is non-linear or when the constraint set $\Wcal $ is too small. For example, the true parameter could be non-sparse with exploding $\ell_1$-norm (e.g., $\beta=(1,\dots,1)'/\sqrt{J} $).

We first introduce some additional notation. Define $Y_{t}=Y_{1t}^N$ and $X_t=(Y^N_{2t},\ldots,Y^N_{J+1t})'$ and let $\{(\tilde{Y}_{t},\tilde{X}_{t})\}_{t=1}^{T}$ be iid from the
distribution of $(Y_{1},X_{1})$ and independent of the data $\{(Y_{t},X_{t})\}_{t=1}^{T}$. 
%With $Z_t=(Y_t,X_t)$ and $\tilde{Z}_t=(\tilde{Y}_t,\tilde{X}_t) $, we recall the notation of $\Zb$ and $\Zb_{H}$ for $H\subset\{1,...,T\}$. 
The constrained Lasso objective functions based on the data under the original data and after switching out observations with $t\in H$ are given by
\[
\hQ(\beta)=\frac{1}{T}\sum_{t=1}^{T}(Y_{t}-X_{t}'\beta)^{2}\quad \text{and} \quad \hQ_{H}(\beta)=T^{-1}\sum_{t=1}^{T}(Y_{t,H}-X_{t,H}'\beta)^{2},
\]
where $(Y_{t,H},X_{t,H})=(Y_{t},X_{t})$ for $t\notin H$ and $(Y_{t,H},X_{t,H})=(\tilde{Y}_{t},\tilde{X}_{t})$
for $t\in H$. The corresponding constrained Lasso estimators are
\[
\hbeta(\Zb)=\arg\min_{\beta\in\Wcal}\hQ(\beta)\quad \text{and}\quad \hbeta(\Zb_{H})=\arg\min_{\beta\in\Wcal}\hQ_{H}(\beta),
\]
where $\Wcal\subseteq\{v\in\RR^{J}: \|v\|_{1}\leq K\}$ and $K>0$ is a constant. Furthermore, we define $\hSigma=T^{-1}\sum_{t=1}^{T}X_{t}X_{t}'$ and  $\hmu=T^{-1}\sum_{t=1}^{T}X_{t}Y_{t}$. Similarly, for $H\subset\{1,\dots,T\}$, let $\hSigma_{H}=T^{-1}\sum_{t=1}^{T}X_{t,H}X_{t,H}'$ and
$\hmu_H=T^{-1}\sum_{t=1}^{T}X_{t,H}Y_{t,H}$. Finally, let $\Hcal$ be a set of subsets of $\{1,\dots,T\}$.

\begin{lem}
	\label{lem: classo stability}Suppose that the following conditions
	hold:
\begin{enumerate}	\setlength{\itemsep}{0pt} \setlength{\parskip}{0pt}
	
	\item with probability at least $1-\gamma_{1,T}$, $\|\hSigma_{H}-\hSigma\|_{\infty}\leq c_{T}$
	and $\|\hmu_{H}-\hmu\|_{\infty}\leq c_{T}$ for all $H\in\Hcal$.
	
	\item with probability at least $1-\gamma_{2,T}$, $\min_{\|v\|_{0}\leq s}v'\hSigma v/\|v\|_{2}^{2}\geq\kappa_{1}$.
	
	\item with probability at least $1-\gamma_{3,T}$, $\max_{H\in\Hcal}\|\hbeta(\Zb_{H})\|_{0}\leq s/2$
	and $\|\hbeta(\Zb)\|_{0}\leq s/2$. 
	\item $P(\max_{1\leq t\leq T}\|X_{t}\|_{\infty}\leq\kappa_{2})=1$.
\end{enumerate}	
	Let $\heps_{t}=Y_{t}-X_{t}'\hbeta(\Zb)$ and $\heps_{t,H}=Y_{t}-X_{t}'\hbeta(\Zb_{H})$. Then we have that 
	\[
	P\left(\max_{H\in\Hcal}\max_{1\leq t\leq T}\left|\heps_{t}-\heps_{t,H}\right|\leq2\kappa_{2}\sqrt{\kappa_{1}sc_{T}K(2K+1)}\right)\geq1-\gamma_{1,T}-\gamma_{2,T}-\gamma_{3,T}.
	\]
\end{lem}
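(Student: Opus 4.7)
The goal is to bound the residual difference $|\heps_t - \heps_{t,H}| = |X_t'(\hbeta(\Zb_H) - \hbeta(\Zb))|$ by first controlling $\|\hbeta(\Zb_H) - \hbeta(\Zb)\|_2$ and then converting to the pointwise bound via sparsity of the difference and the sup-norm bound on $X_t$. All arguments take place on the intersection of the high-probability events in conditions~1--3 (condition~4 holds almost surely), which by a union bound has probability at least $1 - \gamma_{1,T} - \gamma_{2,T} - \gamma_{3,T}$. The stated bound in the conclusion will follow uniformly in $H \in \Hcal$ and in $t$.

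First I would apply the two-sided basic inequality for M-estimation. Writing $\hQ(\beta) = \beta'\hSigma\beta - 2\hmu'\beta + \mathrm{const}$ and similarly for $\hQ_H$, the difference $D_H(\beta) := \hQ(\beta) - \hQ_H(\beta)$ is quadratic in $\beta$ with coefficients driven by $\hSigma - \hSigma_H$ and $\hmu - \hmu_H$. Adding the two optimality inequalities $\hQ(\hbeta(\Zb)) \le \hQ(\hbeta(\Zb_H))$ and $\hQ_H(\hbeta(\Zb_H)) \le \hQ_H(\hbeta(\Zb))$ yields
\[
0 \le \hQ(\hbeta(\Zb_H)) - \hQ(\hbeta(\Zb)) \le D_H(\hbeta(\Zb_H)) - D_H(\hbeta(\Zb)).
\]
Bounding the right-hand side via $|v'Av| \le \|v\|_1^2 \|A\|_\infty$ and $|v'b| \le \|v\|_1 \|b\|_\infty$, together with $\|\hbeta(\Zb)\|_1, \|\hbeta(\Zb_H)\|_1 \le K$ and condition~1, produces $\hQ(\hbeta(\Zb_H)) - \hQ(\hbeta(\Zb)) \lesssim K(K+1) c_T$.

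Next I would extract quadratic curvature at $\hbeta(\Zb)$. Since $\Wcal$ is convex and $\hQ$ is quadratic, the first-order optimality condition $\langle 2(\hSigma\hbeta(\Zb) - \hmu),\,\beta - \hbeta(\Zb)\rangle \ge 0$ for all $\beta \in \Wcal$, combined with the exact expansion
\[
\hQ(\hbeta(\Zb_H)) - \hQ(\hbeta(\Zb)) = \langle 2(\hSigma\hbeta(\Zb) - \hmu),\,\hbeta(\Zb_H) - \hbeta(\Zb)\rangle + (\hbeta(\Zb_H) - \hbeta(\Zb))'\hSigma(\hbeta(\Zb_H) - \hbeta(\Zb)),
\]
yields the lower bound $(\hbeta(\Zb_H) - \hbeta(\Zb))'\hSigma(\hbeta(\Zb_H) - \hbeta(\Zb)) \le \hQ(\hbeta(\Zb_H)) - \hQ(\hbeta(\Zb))$. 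Since condition~3 forces both estimators to be $(s/2)$-sparse, their difference is $s$-sparse, so the restricted eigenvalue bound in condition~2 delivers $\|\hbeta(\Zb_H) - \hbeta(\Zb)\|_2^2 \lesssim K(K+1) c_T / \kappa_1$. Plugging this into H\"older's inequality and the sparsity-based bound $\|v\|_1 \le \sqrt{\|v\|_0}\,\|v\|_2$, I obtain
\[
|\heps_t - \heps_{t,H}| \le \|X_t\|_\infty \|\hbeta(\Zb_H) - \hbeta(\Zb)\|_1 \le \kappa_2 \sqrt{s}\, \|\hbeta(\Zb_H) - \hbeta(\Zb)\|_2 \lesssim \kappa_2 \sqrt{s K(K+1) c_T / \kappa_1},
\]
which is of the form stated in the lemma (up to reconciling the precise numerical constants).

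The main obstacle is legitimately invoking the restricted eigenvalue bound on the difference $\hbeta(\Zb_H) - \hbeta(\Zb)$. In a standard unconstrained Lasso analysis, this step requires a delicate cone argument based on KKT conditions to control the support of the difference; here the clean sparsity assumption~3 reduces it to a direct application of condition~2, which is the decisive structural simplification. Everything else is essentially careful algebraic bookkeeping of perturbation errors in $\hSigma$ and $\hmu$, with the $\ell_1$-norm bound $K$ on the feasible set absorbing the sup-norm perturbation $c_T$. The uniformity over $H \in \Hcal$ that appears in the final statement is already built into the probability guarantees of conditions~1 and~3.
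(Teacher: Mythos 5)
Your proposal is correct and follows essentially the same route as the paper's proof: both combine the two optimality inequalities for $\hQ$ and $\hQ_H$, bound the resulting perturbation term by H\"older's inequality using $\|\hSigma_H-\hSigma\|_\infty$, $\|\hmu_H-\hmu\|_\infty$ and the $\ell_1$-radius $K$, extract the curvature term $\Delta'\hSigma\Delta$ from the quadratic structure (your variational inequality $\nabla\hQ(\hbeta(\Zb))'(\beta-\hbeta(\Zb))\geq 0$ is exactly what the paper derives via its $\lambda$-segment argument, and both require the implicit convexity of $\Wcal$), and then conclude via the $s$-sparsity of $\Delta$, the restricted eigenvalue condition, and $\|X_t\|_\infty\leq\kappa_2$. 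The only differences are cosmetic bookkeeping of constants (your $\kappa_1$ correctly appears in the denominator, whereas the lemma's displayed constant places it inside the square root multiplicatively), so no substantive gap.
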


Lemma \ref{lem: classo stability} provides sufficient conditions for  perturbation stability. Inspecting the proof, we notice that the argument does not require the estimator to converge to anything. To our knowledge, this is the first result of this kind. In the conformal prediction literature, one-observation perturbation stability has been considered in Assumption A3 of \citet{lei2017distributionfree}, who only verify it assuming correct model specification and consistent variable selection. There is also a strand of literature in statistics that considers misspecified models in high dimensions and focuses on the pseudo-true value. For example, for linear models, the pseudo-true value represents the best linear projection and is often assumed to be sparse, making it possible to establish consistency of Lasso to this pseudo-true value \citep[e.g.,][]{buhlmann2015high}. We do not make these assumptions. Lemma \ref{lem: classo stability}  allows the model to be misspecified and the pseudo-true value may or may not be consistently estimated by constrained Lasso.

Lemma \ref{lem: classo stability} says that when the solution of
constrained Lasso is sparse, the stability of $\hSigma$ and $\hmu$
guarantees the stability of the estimator. When $|H|\asymp\log T_{0}$
and the observed variables are bounded, we can choose $c_{T}\asymp T_{0}^{-1}\log(T_{0})$.
The sparse eigenvalue condition can typically be verified whenever
$s\leq cT$, where $c>0$ is a constant that depends on the eigenvalues
of $E\hSigma$. Thus, Lemma \ref{lem: classo stability} would guarantee
that when $\sup_{H\in\Hcal}|H|\lesssim\log T_{0}$, we have 
\[
\max_{H\in\Hcal}\max_{1\leq t\leq T}\left|\heps_{t}-\heps_{t,H}\right|=O_{P}(\sqrt{sT_{0}^{-1}\log T_{0}}).
\]

Therefore, whenever the solutions $\hbeta(\Zb)$ and $\hbeta(\Zb_{H})$ are
sparse enough with $s=o(T_{0}/\log(T_{0}))$, we can expect stability
of the estimated residuals. One implication is that since $\|\hbeta(\Zb)\|_{0}$
and $\|\hbeta(\Zb_{H})\|_{0}$ are clearly bounded above by $J$, the stability
should easily hold for $J\ll T_{0}/\log(T_{0})$.

We now give an explicit formula for $\varrho_T(\cdot)$ in Assumption \ref{assu: stability}. Suppose that $S(\Zb,\beta)=|T_*^{-1}\sum_{t=T_0+1}^{T_0+T_*}(Y_t-X_t'\beta)|$. Then the above display implies that $\max_H |S(\Zb;\hbeta(\Zb))-S(\Zb;\hbeta(\Zb_H))|=O_{P}(\sqrt{sT_{0}^{-1}\log T_{0}})$. As in Appendix \ref{sec: pertubation stability low dim}, we can use Theorem \ref{thm: approx exchange} to obtain the asymptotic size control. 
Assume $s=o(T_{0}/\log(T_{0}))$. Let $q_T=(sT_0^{-1}\log(T_0))^{-1/4}$. Notice that $q_T\rightarrow \infty$. Since  $q_T\rightarrow \infty$ and $|H|\lesssim \log(T_0)$, the above result of  $\max_H |S(\Zb;\hbeta(\Zb))-S(\Zb;\hbeta(\Zb_H))|=O_{P}(\sqrt{sT_{0}^{-1}\log T_{0}})$ implies that there exists $\gamma_{1,T}=o(1)$ such that  Assumption \ref{assu: stability} holds with  $\varrho_T(x)=q_T\sqrt{sT_0^{-1}x}$. Then by the same argument as in Appendix \ref{sec: pertubation stability low dim}, Theorem \ref{thm: approx exchange} implies
$$
 \left|P\left(\hat{p}\leq\alpha\right)-\alpha\right|
\leq C_{1}\sqrt{K_1 \varrho_{T}(T_{0}/R+2k)}+o(1),
$$
where $K_1$ is the same constant as in Appendix \ref{sec: pertubation stability low dim}. Since $k\leq T_0/R$, $R\asymp T_0/\log(T_0)$ and $q_T \rightarrow \infty$, we have
$$
\varrho_{T}(T_{0}/R+2k)\leq \varrho_{T}(3T_{0}/R) = \sqrt{3} q_T \sqrt{sT_0^{-1}(T_0/R)} \lesssim q_T \sqrt{sT_0^{-1}\log(T_0)} = 1/q_T \rightarrow 0.
$$

\begin{remark}
Stability does not imply that the constrained Lasso residuals, $\heps_{t}$, are close to $\varepsilon_{*,t}=Y_t-X_t'\beta_{*} $, where $\beta_{*}=\arg \min_{\beta\in \Wcal}E(Y_t-X_t'\beta)^2 $ is a pseudo-true value. In \cite{chernozhukov2019ttest}, we show that $\|\hbeta-\beta_* \|_2=O_P((T_0^{-1}\log J)^{1/4}) $. However, this is far from enough to conclude that $|X_t'(\hbeta-\beta_*)|=o_P(1) $ due to the high-dimensionality of $X_t $. The usual Cauchy-Schwarz bound $\|X_t\|_2 \|\hbeta-\beta_* \|_2$ would not converge to zero; the H\"older bound $\|X_t\|_\infty \|\hbeta-\beta_* \|_1$ does not suffices either since  $\|\hbeta-\beta_* \|_1 $ does not converge to zero. [When $\Wcal$ is a bounded $\ell_1$-ball, it is in fact impossible to achieve  $\|\hbeta-\beta_* \|_1=o_P(1) $ \citep[][]{ye2010rate}.] Even if  $\beta_*$ is assumed to be sparse, one would still require $\|\beta_*\|_0=o(\sqrt{T_0}/\log J)$. 
In contrast, our stability condition discussed above only requires the much weaker condition of $\|\beta_*\|_0=o(T_0/\log T_0)$. Therefore, stability condition could be satisfied even if the estimated residual does not converge to a pseudo-true target. \qed
\end{remark}

\section{Consistency and Estimator Stability}
\label{app:ridge}
In this section, we illustrate the difference between stability and consistency using a simple and analytically tractable example: Ridge regression. We shall show that under correct specification, Ridge may not be consistent, while still satisfying estimator stability. 

To keep theoretical analysis tractable, we work under stylized conditions. Given data $Z=(Y,X)$, we define the ridge estimator
$$
\hbeta_\lambda(Z)=(X'X+\lambda I_T)^{-1}X'Y,
$$
where $\lambda$ is a tuning parameter. Then we can compute the residuals $\hu(Z;\lambda)=Y-X \hbeta_{\lambda}(Z)$.

\begin{lem}
	\label{lem: ridge stability}Suppose that $Y=X\beta+u$, where  $X=(X_1,\dots,X_T)'\in\RR^{T\times J}$ and $E(u\mid X)=0$. Assume that $\|\beta\|_2$ is bounded away from zero and infinity, $E(uu'\mid X)=\sigma^2 I_T$ for  a constant $\sigma>0$, $J\lesssim  T^{\kappa_{0}}$ with $\kappa_{0}\in(0,2/3)$ and for some constants $\kappa_1,\kappa_2>0$, 
		\[
	P\left( \kappa_1 T \leq \lambda_{\min}(X'X)\leq  \lambda_{\max}(X'X)\leq \kappa_2 T \right)\geq1-o(1) 
	\]
	and 
		\[
	P\left(  \max_{1\leq t\leq T}\|X_t\| \leq \kappa_3 \sqrt{J}\quad {\rm and} \quad \|X'u\|_2\leq \kappa_4 \sqrt{JT} \right)\geq1-o(1). 
	\]
	Let $Z_{H}=(\tilde{Y},\tilde{X})$ be the perturbed data with $|H|\asymp \log T$. For $\tilde{u}=\tilde{Y}-\tilde{X}\beta$, assume that
		\[
	P\left( \|\tilde{X}'\tilde{u}-X'u\|_2\leq \kappa_5 \sqrt{J|H|}  \right)\geq1-o(1) 
	\]
	and 
		\[
	P\left(  \|\tilde{X}'\tilde{X}-X'X\|\leq \kappa_6 (|H|+J) \right)\geq1-o(1),
	\]
	where $\kappa_5,\kappa_6>0$ are constants. 
	  	If $\lambda\asymp T$, then $E \|X (\hbeta_{\lambda}(Z)-\beta)\|_2^2\gtrsim T $ and $\|\hu(Z;\lambda)-\hu(Z_{H};\lambda) \|_{\infty}=o_P(1) $\footnote{Notice that consistency is enough to derive asymptotic size control from Theorem \ref{thm: approx exchange}; see the discussion at the end of Appendix \ref{sec: pertubation stability low dim} for an explicit formula for quantities in Assumption \ref{assu: stability}. }. 
\end{lem}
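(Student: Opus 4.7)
The plan is to exploit the closed-form ridge solution $\hat\beta_\lambda(Z) = A^{-1}b$ with $A := X'X + \lambda I$ and $b := X'Y$ (and similarly $\tilde A := \tilde X'\tilde X + \lambda I$, $\tilde b := \tilde X'\tilde Y$ for the perturbed data), and to compute everything by direct matrix algebra. The central observation is that the scaling $\lambda \asymp T$ combined with $\lambda_{\min}(X'X) \asymp \lambda_{\max}(X'X) \asymp T$ gives $\|A^{-1}\| \asymp 1/T$ with high probability. This simultaneously creates a non-vanishing regularization bias (driving the inconsistency claim) and damps perturbations of the data (driving the stability claim); the whole proof simply quantifies this trade-off.

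For the first claim, write $\hat\beta_\lambda(Z) - \beta = -\lambda A^{-1}\beta + A^{-1}X'u$, so that conditional on $X$, using $E(u\mid X) = 0$ and $E(uu'\mid X) = \sigma^2 I_T$,
\[
E\bigl[\|X(\hat\beta_\lambda(Z)-\beta)\|_2^2 \mid X\bigr] = \lambda^2 \beta' A^{-1}(X'X)A^{-1}\beta + \sigma^2\,\mathrm{tr}\bigl((A^{-1}X'X)^2\bigr).
\]
On the high-probability event where $\lambda_{\min}(X'X), \lambda_{\max}(X'X) \asymp T$, every eigenvalue of $A^{-1}(X'X)A^{-1}$ is of order $T/(T+\lambda)^2 \asymp 1/T$, so the squared-bias term is at least $c\lambda^2\|\beta\|_2^2/T \gtrsim T$, using that $\|\beta\|_2$ is bounded below. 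Since this event has probability $1-o(1)$, taking unconditional expectation gives $E\|X(\hat\beta_\lambda(Z)-\beta)\|_2^2 \gtrsim T$.

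For stability, apply the perturbation identity
\[
\hat\beta_\lambda(Z) - \hat\beta_\lambda(Z_H) = A^{-1}(b - \tilde b) + A^{-1}(\tilde A - A)\tilde A^{-1}\tilde b.
\]
The hypotheses give $\|A - \tilde A\| \lesssim |H| + J = O(T^{\kappa_0}) = o(T)$, so by Weyl's inequality $\lambda_{\min}(\tilde A) \gtrsim T$ and hence $\|A^{-1}\|, \|\tilde A^{-1}\| \lesssim 1/T$; the bounds on $\|X_t\|_2$, $\|X'u\|_2$, and $\|\beta\|_2$ immediately imply $\|b\|_2, \|\tilde b\|_2 \lesssim T$. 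Decomposing $\tilde b - b = (\tilde X'\tilde X - X'X)\beta + (\tilde X'\tilde u - X'u)$ and invoking the two perturbation hypotheses yields $\|\tilde b - b\|_2 \lesssim (|H| + J) + \sqrt{J|H|} \lesssim T^{\kappa_0}$; collecting pieces gives $\|\hat\beta_\lambda(Z) - \hat\beta_\lambda(Z_H)\|_2 \lesssim T^{\kappa_0 - 1}$. Finally, for each $t$,
\[
|\hat u_t(Z;\lambda) - \hat u_t(Z_H;\lambda)| = |X_t'(\hat\beta_\lambda(Z) - \hat\beta_\lambda(Z_H))| \leq \|X_t\|_2 \cdot \|\hat\beta_\lambda(Z) - \hat\beta_\lambda(Z_H)\|_2 \lesssim \sqrt{J}\, T^{\kappa_0 - 1} \lesssim T^{3\kappa_0/2 - 1},
\]
which is $o_P(1)$ exactly under $\kappa_0 < 2/3$. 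The main obstacle is this last chain: the $\sqrt{J}$ factor from $\|X_t\|_2$ multiplied by the $T^{\kappa_0 - 1}$ rate for the estimator perturbation leaves essentially no slack, so the threshold $2/3$ is pinned down by the argument rather than being an artifact.
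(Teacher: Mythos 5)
Your proposal is correct and follows essentially the same route as the paper's proof: the same ridge decomposition $\hat\beta_\lambda(Z)-\beta=-\lambda(X'X+\lambda I)^{-1}\beta+(X'X+\lambda I)^{-1}X'u$ with the conditional bias--variance computation (the bias term alone already gives $\gtrsim T$ when $\lambda\asymp T$), and the same resolvent-perturbation bounds for stability. The only cosmetic difference is bookkeeping: you bound $\|\hat\beta_\lambda(Z)-\hat\beta_\lambda(Z_H)\|_2\lesssim T^{\kappa_0-1}$ via the generic identity for $A^{-1}b-\tilde A^{-1}\tilde b$ and then apply Cauchy--Schwarz with $\|X_t\|_2\lesssim\sqrt{J}$, whereas the paper bounds $X_t'(\hat\beta-\tilde\beta)$ termwise through the model decomposition; both give the dominant rate $J^{3/2}/T\asymp T^{3\kappa_0/2-1}$ and pin down $\kappa_0<2/3$ identically.
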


Lemma \ref{lem: ridge stability} provides a robustness guarantee for the validity of the procedure. Under the ideal choice of the tuning parameter $\lambda$, we would expect consistency of $\hbeta_{\lambda}(Z)$ and hence validity of the procedure.   However, Lemma \ref{lem: ridge stability} states that even when the tuning parameter is badly chosen such that consistency fails, one might still expect the estimator to be stable under perturbations, which is sufficient for the validity of our inference procedure. This is important in practice since computing optimal tuning parameters is often difficult. 

Now we give very simple sufficient (but probably far from necessary) conditions for the assumptions of Lemma \ref{lem: ridge stability}. Suppose that the data is independent across $t$ and rows of $X$ are sub-Gaussian. Then standard results in random matrix theory can be used to verify these assumptions. For example, $X'X/T-E(X'X/T)$ has eigenvalues tending to zero (e.g., Theorem 5.39 and Remark 5.40 in \citet{vershynin2010introduction}). Hence, as long as $E(X'X/T)$ has bounded eigenvalues,  there exist constants $\kappa_1,\kappa_2$ that satisfy the condition in Lemma \ref{lem: ridge stability}. Let us further assume that entries of $X$ are also bounded and entries of $u$ are sub-Gaussian and independent of $X$. Then $\kappa_3$  can be chosen to be a large enough constant. 

The existence of $\kappa_4,\kappa_5$ is a consequence of the Hanson-Wright inequality. We consider $\|X'u\|_2$ conditional on $X$. By Theorem 6.3.2 in \citet{vershynin2018high} (proved using the Hanson-Wright inequality) applied to the conditional probability given $X$ and by the definition of sub-Gaussian variables, we have that with probability one, for any $z>0$,
$$
P \left(|\|X'u\|_2-\sigma \|X\|_F|>z \|X\| \mid X \right)\leq C_1 \exp \left(-C_2 z^2 \right),
$$
where $C_1,C_2>0$ are constants depending only on the sub-Gaussian norm of $u$. Choosing $z=\sqrt{\log(T)}$, we obtain
$$
P \left(\|X'u\|_2\leq \sigma \|X\|_F+ \sqrt{\log(T)} \|X\|  \right)\leq o(1).
$$
Notice that $E\|X\|_F^2=\sum_{j,t}EX_{j,t}^2=O(JT)$ and $\|X\|_F^2-E\|X\|_F^2=\sum_{t}(\sum_{j}(X_{j,t}^2-EX_{j,t}^2))$. By the boundedness of $X_{j,t}$, $\sum_{j}(X_{j,t}^2-EX_{j,t}^2)=O(J)$ has mean zero and variance  bounded by $O(J^2)$. It follows by the central limit theorem that  $\|X\|_F^2-E\|X\|_F^2=\sum_{t}(\sum_{j}(X_{j,t}^2-EX_{j,t}^2))=\sqrt{T}J$. Since $E\|X\|_F^2=O(JT)$, there exists a constant $C_3>0$ such that $P(\|X\|_F^2<C_3 JT)\geq 1-o(1)$. Again by the random matrix theory (e.g., Theorem 5.39 and Remark 5.40 in \citet{vershynin2010introduction}), $\|X\|=O_P(\sqrt{J+T})$. We have that with probability approaching one, $\|X'u\|_2 \leq \sigma \|X\|_F +\sqrt{\log(T)}\|X\|\leq \sqrt{C_3 JT}\sigma+O(\sqrt{(J+T)\log(T)})$. Assume $J\gg \log(T)$. Since $(J+T)\log(T) \ll JT$, there exists $\kappa_4$ satisfying $P(\|X'u\|_2\leq \kappa_4 \sqrt{JT})\geq 1-o(1)$. The same argument holds for $\kappa_5$; we just repeat the same argument with $T$ replaced by $|H|$ once we realize $\tilde{X}'\tilde{u}-X'u=\sum_{t\in H} \tilde{X}_t \tilde{u}_t- \sum_{t\in H} X_t u_t $, where $X_t$ denotes the $t$-th row of $X$. (Analogous notations apply to $\tilde{X}_t,u_t$, etc.)

To find $\kappa_6$, we observe that $\tilde{X}'\tilde{X}-X'X=\sum_{t\in H} \tilde{X}_t \tilde{X}_t' -\sum_{t\in H} X_t X_t' $. By the random matrix theory (e.g., Theorem 5.39 and Remark 5.40 in \citet{vershynin2010introduction}), $\|\sum_{t\in H} (X_t X_t' -EX_t X_t')\|\leq C_4 \sqrt{J |H|}$ with probability approaching one, where $C_4>0$ is a constant.  Again assuming that $EX_t X_t' $ has eigenvalues bounded by a constant $C_5>0$, we have that $\|\sum_{t\in H} X_t X_t' \| \leq C_5 |H|+C_4 \sqrt{J |H|}$ with probability approaching one. Recall the elementary inequality $\sqrt{J|H|}\leq (J+|H|)/2$. Therefore, we can simply set $\kappa_6=4(C_4+C_5)$ and obtain that with probability approaching one, $\|\sum_{t\in H} X_t X_t' \| \leq (|H|+J)\kappa_6/2$. Similarly, we can get a bound for $\|\sum_{t\in H} \tilde{X}_t \tilde{X}_t' \|$.

The above analysis is  based on simple sub-Gaussian or boundedness assumptions on $(X,u)$. However, the purpose is to show that even in the simple case, Ridge regression can exhibit stability without being consistent (due to Lemma \ref{lem: ridge stability}). We expect similar results to hold in more general or complicated data-generating processes. We leave this extension for future research.

\section{Simulation Study}
\label{sec:simulations}
This section presents simulation evidence on the finite sample properties of our inference procedures. We consider the three CSC methods used in the empirical application in Section \ref{sec:application}: difference-in-differences, canonical SC, and constrained Lasso with $K=1$.

We consider different data generating processes (DGPs) for the treated unit all of which specify the treated outcome as a weighted combination of the control outcomes:
\[
Y_{1t}=\begin{cases}
\sum_{j=2}^{J+1}w_jY^N_{jt}+u_{t} & {\rm if}\ t\le T_0, \\
\theta_t+\sum_{j=2}^{J+1}w_jY^N_{jt}+u_{t} & {\rm if} \ t >T_0,
\end{cases}
\]
where $u_{t}=\rho_{u}u_{t-1}+v_t$, $v_t\overset{iid}\sim N(0,1-\rho_u^2)$. Similar to \citet{hahn2017synthetic}, the control outcomes are generated using a factor model:
\[
Y^N_{jt}=\lambda_{1j}+F_{1t}+\lambda_{2j}F_{2t}+\epsilon_{jt},
\]
where $\lambda_{1j}=(j-1)/J$, $\lambda_{2j}=(j-1)/J$, $F_{1t}\overset{iid}\sim N(0,1)$, and $\epsilon_{jt}=\rho_\epsilon\epsilon_{jt-1}+\xi_{jt}$, $\xi_{jt}\overset{iid}\sim N(0,1-\rho_\epsilon^2)$. In the simulations, we vary $\rho_u$, $\rho_\epsilon$, $T_0$, $J$, and $F_{2t}$. The DGPs differ with respect to the specification of the weights $w$.
\begin{table}[ht]
\centering
%\caption{DGPs}
\begin{tabular}{| l | l | l |}
\hline
&\textbf{Weight Specification}&\textbf{Correctly Specified Model(s)}\\
\hline
DGP1 &$w=\left(\frac{1}{J},\dots,\frac{1}{J}\right)'$&Difference-in-differences, SC, constrained Lasso\\
DGP2&$w=\left(\frac{1}{3},\frac{1}{3},\frac{1}{3},0,\dots,0\right)'$&SC, constrained Lasso\\
DGP3&$w=-1\cdot \left(\frac{1}{J},\dots,\frac{1}{J}\right)'$&constrained Lasso\\
DGP4&$w=\left(1,-1,0,\dots,0\right)'$& --\\
\hline
\end{tabular}
\normalsize
\end{table}

We set $T_\ast=1$ and consider the problem of testing the null hypothesis of a zero effect:
\[
H_0: \theta_T=0.
\]
The $p$-values are computed using the set of moving block permutations $\Pi_{\rightarrow}$. The nominal level is $\alpha=0.1$. 

We first analyze the performance of our procedure with stationary data, letting $F_{2t}\overset{iid}\sim N(0,1)$. Table \ref{tab:size_iid} presents simulation evidence on the size properties of our method when the data are iid ($\rho_u=\rho_\epsilon=0$), which implies exchangeability of the residuals (cf.\ Theorem \ref{thm:finite_sample}). As expected, our procedure achieves exact size control, irrespective of whether or not the model for $P_t^N$ is correctly specified.  To study the finite sample performance with dependent data, we set $\rho_u=\rho_\epsilon=0.6$. Table \ref{tab:size_wd} shows that our method exhibits close-to-correct size under correct specification and under misspecification, confirming the theoretical results on the robustness of our procedure under estimator stability and stationarity.

To investigate the performance with non-stationary data, we use trending factors: $F_{2t}\sim N(t,1)$. Tables \ref{tab:size_iid_nonstat} and \ref{tab:size_wd_nonstat} show that under correct specification, our method exhibits excellent size properties. However, unlike in stationary settings, misspecification can cause size distortions. This finding is expected given the theoretical results and discussions in Section \ref{subsec:stability}.

Figure \ref{fig:pcs} displays power curves for a setting where $T_0=19$ and $J=50$ as in our empirical application, $\rho_u=\rho_\epsilon=0.6$, and $F_{2t}\overset{iid}\sim N(0,1)$. Under correct specification, our method exhibits excellent small sample power properties and comes close to achieving the oracle power bound based on the true marginal distribution of $u_t$. Moreover, we find that imposing additional constraints when estimating $P_t^N$ (e.g., using SC instead of the more general constrained Lasso) does not improve power when these additional restrictions are correct but can cause power losses when they are not.

\section{Proofs}

\subsection*{Additional Notation}
We introduce some additional notations that will be used in the proofs.
For $a,b\in\mathbb{R}$, $a\vee b=\max\{a,b\} $.  For two positive sequences $a_{n}, b_{n} $ (indexed by $n$), we use  $a_{n} \ll b_{n} $ to denote $a_n=o(b_n) $.  We use $\Phi(\cdot)$ to denote the cumulative distribution function of the standard normal distribution. Unless stated otherwise, $\|\cdot\|$ denotes the Euclidean norm for vectors or the spectral norm for matrices. We use $\overset{d}{=} $ to denote equal in distribution.

\subsection{Proof of Theorem \ref{thm:approximate_validity}}
The proof proceeds by verifying the high-level conditions in the following lemma. Let $n=|\Pi|$ so that $n=T!$ if $\Pi=\Pi_{\text{all}}$ and $n=T$ if $\Pi=\Pi_{\to}$.
\begin{lem}[Approximate Validity under High-Level Conditions\footnote{In \citet[][]{chernozhukov2018exact}, we use a version of Lemma \ref{lem: approximate validity}, which relies on permuting the data instead of permuting the residuals, to derive performance guarantees for prediction intervals obtained using classical conformal prediction methods with weakly dependent data. The proof of Lemma \ref{lem: approximate validity} (presented in Section \ref{sec:proof_lemma_approximate_validity} to make the exposition self-contained) follows from the same arguments as the proof of Theorem 2 in \citet{chernozhukov2018exact} modified to the problem of permuting residuals.}]\label{lem: approximate validity}
Let $\{\delta_{1n}, \delta_{2n}, \gamma_{1n}, \gamma_{2n}\}$ be sequences of numbers converging to zero. Assume the following conditions.
\begin{itemize} 
\item [(E)] With
probability $1-\gamma_{1n}$:  the randomization distribution 
$$\tilde{F}(x):=\frac{1}{n}\sum_{\pi\in\Pi}\mathbf{1}\{S(u_\pi)< x\},$$
is \textit{approximately ergodic} for $F(x)=P\left(S(u)< x\right)$, namely 
$$
\sup_{x\in\mathbb{R}}\left|\tilde{F}(x)-F\left(x\right)\right|\leq\delta_{1n},
$$
\end{itemize}
\begin{itemize} 
\item [(A)] With
probability $1-\gamma_{2n}$, estimation errors are small:
\begin{enumerate}\setlength{\itemsep}{0pt} \setlength{\parskip}{0pt}
\item the mean squared error is small, $n^{-1}\sum_{\pi\in\Pi}\left[S(\hat{u}_\pi)-S(u_\pi)\right]^{2}\leq\delta_{2n}^{2};$ 
\item the pointwise error at $\pi=\mathrm{Identity}$ is small, $|S(\hat{u})-S(u)|\leq\delta_{2n}$; 
\item The pdf of $S(u)$ is bounded above by a constant $D$. 
\end{enumerate}
\end{itemize}
Suppose in addition that the null hypothesis is true. Then, the approximate p-value obeys for any $\alpha\in(0,1)$ 
\[
\left|P\left(\hat{p}\leq\alpha\right)- \alpha \right|  \leq 4\delta_{1n}+ 4 \delta_{2n}+ 2 D(\delta_{2n}+ 2\sqrt{\delta_{2n}}) +\gamma_{1n}+\gamma_{2n}.
\]

\end{lem}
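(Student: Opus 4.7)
Write $\hat p = 1 - \hat F(S(\hat u))$. The strategy is to show that, with high probability, the random cdf $\hat F$ uniformly approximates $F$, that $S(\hat u)$ is close to $S(u)$, and then to exploit the fact that $F(S(u)) \sim \mathrm{Unif}(0,1)$ under the null (which follows from $S(u)$ having a density, forcing $F$ to be continuous).

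\emph{Step 1: from $\hat F$ to $\tilde F$.} For any $\epsilon>0$ and any $x$, split the sum defining $\hat F$ according to whether $|S(\hat u_\pi)-S(u_\pi)|\leq\epsilon$:
\[
\tilde F(x-\epsilon)-\tfrac{1}{n}\bigl|\{\pi:|S(\hat u_\pi)-S(u_\pi)|>\epsilon\}\bigr|\leq \hat F(x)\leq \tilde F(x+\epsilon)+\tfrac{1}{n}\bigl|\{\pi:|S(\hat u_\pi)-S(u_\pi)|>\epsilon\}\bigr|.
\]
By Markov's inequality and condition (A)(1), the exceptional fraction is at most $\delta_{2n}^{2}/\epsilon^{2}$ on the event of probability $\geq 1-\gamma_{2n}$. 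Combining with (E) and the Lipschitz property $F(x+\epsilon)-F(x-\epsilon)\leq 2D\epsilon$ that follows from (A)(3), I get
\[
\sup_{x}\bigl|\hat F(x)-F(x)\bigr|\;\leq\;2\delta_{1n}+2D\epsilon+\delta_{2n}^{2}/\epsilon^{2}
\]
with probability $\geq 1-\gamma_{1n}-\gamma_{2n}$. Choosing $\epsilon=\sqrt{\delta_{2n}}$ converts this to $\sup_{x}|\hat F(x)-F(x)|\leq 2\delta_{1n}+2D\sqrt{\delta_{2n}}+\delta_{2n}$.

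\emph{Step 2: from $S(\hat u)$ to $S(u)$.} On the event from (A)(2) we have $|S(\hat u)-S(u)|\leq\delta_{2n}$, and on the event of Step 1, combined with the bounded-density property $F(x+\delta_{2n})-F(x-\delta_{2n})\leq 2D\delta_{2n}$, monotonicity of $\hat F$ yields
\[
\bigl|\hat F(S(\hat u))-F(S(u))\bigr|\;\leq\;2\delta_{1n}+2D\sqrt{\delta_{2n}}+\delta_{2n}+2D\delta_{2n}
\]
with probability at least $1-\gamma_{1n}-\gamma_{2n}$. Call this bound $\eta_{n}$.

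\emph{Step 3: conclude.} Because the density of $S(u)$ exists, $F$ is continuous, so $F(S(u))$ is uniformly distributed on $[0,1]$ under the null. Hence
\[
P(\hat p\leq\alpha)=P\bigl(\hat F(S(\hat u))\geq 1-\alpha\bigr)\leq P\bigl(F(S(u))\geq 1-\alpha-\eta_{n}\bigr)+\gamma_{1n}+\gamma_{2n}=\alpha+\eta_{n}+\gamma_{1n}+\gamma_{2n},
\]
and symmetrically $P(\hat p\leq\alpha)\geq\alpha-\eta_{n}-\gamma_{1n}-\gamma_{2n}$. Tracking constants (and allowing some slack to match the stated expression) delivers the claimed bound $6\delta_{1n}+4\delta_{2n}+2D(\delta_{2n}+2\sqrt{\delta_{2n}})+\gamma_{1n}+\gamma_{2n}$.

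\emph{Main obstacle.} The delicate step is Step 1: turning an $L^{2}$ approximation between the ``perturbed'' and ``ideal'' test statistics across all $n$ permutations into a uniform-in-$x$ approximation of the empirical cdfs $\hat F$ and $\tilde F$. The Markov/bracketing trick with parameter $\epsilon$ handles this, but we need $F$ to have a bounded density in order to control the bracket $\tilde F(x+\epsilon)-\tilde F(x-\epsilon)$; this is precisely where condition (A)(3) is used. The optimal rate $\sqrt{\delta_{2n}}$ (rather than $\delta_{2n}$) arises from balancing $\epsilon$ against $\delta_{2n}^{2}/\epsilon^{2}$, and explains the $\sqrt{\delta_{2n}}$ term appearing in the final bound.
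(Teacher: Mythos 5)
Your proposal is correct and follows essentially the same route as the paper's proof: a Chebyshev truncation of the permutations at threshold $\sqrt{\delta_{2n}}$, combined with the ergodicity condition (E) and the Lipschitz property of $F$ from the bounded density, to get a uniform bound on $|\hat F - F|$ and hence on $|\hat F(S(\hat u)) - F(S(u))|$, followed by the exact uniformity of $F(S(u))$ under the null. The only differences are cosmetic (your two-sided bracketing of $\hat F$ between $\tilde F(x\pm\epsilon)$ versus the paper's pointwise indicator comparison, and your one-sided tail inequalities versus the paper's small-ball bound $P(|F(S(u))-1+\alpha|\le\cdot)$), and your constants are in fact slightly tighter than the stated bound.
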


With this result at hand, the proof of the theorem is a consequence of following four lemmas, which
verify the approximate ergodicity conditions (E) and 
conditions on the estimation error (A) of Lemma \ref{lem: approximate validity}. 
Putting the bounds together and optimizing the error yields the result of the theorem.

The following lemma verifies approximate ergodicity (E)  (which  allows for large $T_{*}$) for the case of moving block permutations. 

\begin{lem}[Approximate Ergodicity under Moving Block Permutations]
\label{lem: primitive cond for ergodicity}Let $\Pi$ be the moving block
permutations. Suppose that $\{u_{t}\}_{t=1}^{T}$ is stationary and
strong mixing. Assume the following conditions:
(1) $\sum_{k=1}^{\infty}\alpha_{mixing}(k)$ is bounded by a constant
$M$,  (2) $T_{0}\geq T_{*}+2$, and (3) $S(u)$ has bounded pdf. Then there exists a constant $M'>0$ depending only on $M$ such that
for any $\delta_{1n}>0$, 
\[
P\left(\sup_{x\in\mathbb{R}}\left|\tilde{F}(x)-F(x)\right|\leq\delta_{1n}\right)\geq1-\gamma_{T},
\]
where $\gamma_{T}=\left(M'\sqrt{\frac{T_{*}}{T_{0}}}\log T_{0}+\frac{T_{*}+1}{T_{0}+T_{*}}\right)/\delta_{1n}$. 
\end{lem}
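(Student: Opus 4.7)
The plan is to bound $\mathbb{E}\|\tilde F - F\|_\infty$ and then invoke Markov's inequality to obtain the stated high-probability bound on $\gamma_T$. The natural decomposition partitions the $T$ moving-block permutations into $T_0+1$ \emph{non-wrapping} ones (for $j=0$ and $j=T_*,T_*+1,\ldots,T-1$, each corresponding to a consecutive length-$T_*$ window inside $\{1,\ldots,T\}$) and $T_*-1$ \emph{wrapping} ones (for $j=1,\ldots,T_*-1$). Writing $\tilde F = \tfrac{T_0+1}{T}\tilde F_1 + \tfrac{T_*-1}{T}\tilde F_2$ according to this split, the wrapping piece contributes at most $(T_*-1)/T \le (T_*+1)/(T_0+T_*)$ to $\|\tilde F-F\|_\infty$ since CDFs take values in $[0,1]$; this reproduces the second term in $\gamma_T$. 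It remains to bound $\mathbb{E}\|\tilde F_1-F\|_\infty$, where $\tilde F_1(x)=(T_0+1)^{-1}\sum_{k=1}^{T_0+1}\mathbf{1}\{S_k<x\}$ and $S_k = S(u_k,\ldots,u_{k+T_*-1})$; stationarity gives $S_k\sim F$.

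For a fixed $x$ I would estimate $\mathrm{Var}(\tilde F_1(x))$ by splitting the covariance sum according to block overlap. Each $k$ has at most $2T_*$ neighbors $k'$ with $|k-k'|<T_*$, contributing at most $1$ each to $|\mathrm{Cov}(\mathbf{1}\{S_k<x\},\mathbf{1}\{S_{k'}<x\})|$. For $|k-k'|\ge T_*$ the blocks are disjoint and separated by $|k-k'|-T_*+1$ time steps of the strongly mixing process $\{u_t\}$, so $|\mathrm{Cov}|\le 4\alpha_{\mathrm{mixing}}(|k-k'|-T_*+1)$; summing against the hypothesis $\sum_k\alpha_{\mathrm{mixing}}(k)\le M$ controls this tail. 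The result is $\mathrm{Var}(\tilde F_1(x))\le C_{M,T_*}\,T_*/T_0$ uniformly in $x$, with a constant depending only on $M$ and $T_*$.

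To promote the pointwise variance bound to a uniform-in-$x$ bound, I use the bounded-density assumption on $S(u)$: choose quantile grid points $x_1<\cdots<x_K$ with $F(x_{k+1})-F(x_k)\le 1/K$. Monotonicity of $F$ and $\tilde F_1$ yields $\|\tilde F_1-F\|_\infty\le\max_{k\le K}|\tilde F_1(x_k)-F(x_k)|+1/K$. For the discrete maximum I invoke a Bernstein-type exponential inequality for bounded strongly mixing sums (for instance Merlev\`ede--Peligrad--Rio, or Rosenthal's moment inequality combined with an optimized Markov step taking $p\asymp\log K$), which together with the previous variance estimate gives $\mathbb{P}(|\tilde F_1(x_k)-F(x_k)|>s)\le a_1\exp(-a_2\,s^2\,T_0/T_*)$ in the Gaussian regime; a union bound over the $K$ grid points and tail integration yield $\mathbb{E}\max_k|\tilde F_1(x_k)-F(x_k)| \lesssim \sqrt{T_*/T_0}\,\sqrt{\log K}$. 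Taking $K\asymp T_0$ balances $1/K$ against this concentration term and produces $\mathbb{E}\|\tilde F_1-F\|_\infty \lesssim \sqrt{T_*/T_0}\,\log T_0$, after which Markov's inequality on $\|\tilde F - F\|_\infty$ finishes the proof. The main obstacle is the exponential concentration step: because consecutive non-wrapping blocks share $T_*-1$ coordinates, the naive Hoeffding variance $1/T_0$ is wrong, and I must plug in the refined factor $T_*/T_0$, which is precisely what the separation of overlapping and disjoint pairs in the variance calculation delivers.
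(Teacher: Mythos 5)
Your decomposition of the moving-block permutations into non-wrapping and wrapping shifts, the crude bound $(T_*-1)/T\le (T_*+1)/(T_0+T_*)$ for the wrapped part, and the pointwise variance computation (overlapping pairs contribute $O(T_*)$ per index, separated pairs are controlled by the covariance inequality $|\mathrm{Cov}|\le 4\alpha_{\mathrm{mixing}}(\cdot)$ and the summability hypothesis) are all sound and mirror the paper's structure, which likewise works with the stationary sequence $s_t=S(u_t,\dots,u_{t+T_*-1})$ and discards the wrapped blocks at cost $(T_*+1)/(T_0+T_*)$. The genuine gap is in the step that upgrades the pointwise bound to a uniform one. The lemma assumes only that $\sum_k\alpha_{\mathrm{mixing}}(k)\le M$, i.e.\ summable (possibly polynomially decaying) mixing coefficients. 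The Bernstein-type inequality of Merlev\`ede--Peligrad--Rio that you invoke requires geometric mixing, $\alpha(n)\le\exp(-cn^{\gamma})$, and without it the sub-Gaussian tail $a_1\exp(-a_2 s^2 T_0/T_*)$ is simply not available; with, say, $\alpha(k)\asymp k^{-2}$ the normalized sums have only polynomial tails. The alternative Rosenthal route fares no better: Rosenthal-type moment inequalities for strongly mixing, bounded summands carry constants governed by quantities like $\sum_i (i+1)^{p-2}\alpha(i)$, which diverge for $p\ge 3$ under mere summability, so taking $p\asymp\log K\asymp\log T_0$ is not justified. If one retreats to what the hypotheses do give---the variance bound plus a Chebyshev union bound over the grid---the optimization over $K$ yields a rate of order $(T_*/T_0)^{1/3}$, strictly worse than the claimed $\sqrt{T_*/T_0}\,\log T_0$.

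The missing ingredient is a uniform-in-$x$ maximal inequality for empirical distribution functions of strongly mixing sequences that needs only summable coefficients. This is exactly how the paper closes the argument: it applies Proposition 7.1 of Rio (2017) to the stationary sequence $\{s_t\}_{t=1}^{T_0}$, whose mixing coefficients satisfy $\tilde\alpha(t)\le\min\{2,\alpha_{\mathrm{mixing}}(\max\{t-T_*,0\})\}$, giving
\[
E\Bigl(\sup_{x}\bigl|\check F(x)-F(x)\bigr|^2\Bigr)\le \frac{1+4\bigl(2(T_*+1)+M\bigr)}{T_0}\Bigl(3+\frac{\log T_0}{2\log 2}\Bigr)^2 ,
\]
after which Liapunov's and Markov's inequalities deliver the stated $\gamma_T$. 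That proposition is proved by an $L^2$ dyadic chaining that only uses second-moment (covariance) information---essentially your variance estimate applied scale by scale---rather than exponential concentration, which is why it survives under summable mixing. So your plan can be repaired by replacing the grid-plus-Bernstein step with such an $L^2$ maximal inequality (or by strengthening the hypothesis to geometric mixing, which the lemma does not assume); as written, the concentration step fails under the stated conditions.
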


The following lemma verifies approximate ergodicity (E)  (which  allows for large $T_{*}$) for the case of iid permutations.

\begin{lem}[Approximate Ergodicity under iid Permutations]
\label{lem: primitive cond for ergodicity under iid}Let $\Pi$ be
the set of all permutations. Suppose that $\{u_{t}\}_{t=1}^{T}$ is
iid. Assume that $S(u)$ only depends on the last $T_{*}$ entries
of $u$. If $T_{0}\geq T_{*}+2$, then
\[
P\left(\sup_{x\in\mathbb{R}}\left|\tilde{F}(x)-F(x)\right|\leq\delta_{1n}\right)\geq1-\gamma_{T},
\]
where $\gamma_{T}=\sqrt{\pi/(2\left\lfloor T/T_{*}\right\rfloor) }/\delta_{1n}$. 
\end{lem}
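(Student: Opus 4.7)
The plan is to exploit the i.i.d.\ symmetry by writing the permutation distribution $\tilde F$ as an expectation (over an independent uniformly random permutation $\sigma$) of a block-based empirical CDF, and then to bound the sup-deviation using Jensen's inequality together with the Dvoretzky--Kiefer--Wolfowitz (DKW) inequality.

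First I would set up the block decomposition. Let $m = \lfloor T/T_*\rfloor$, and for a permutation $\sigma$ of $\{1,\dots,T\}$ and $k\in\{1,\dots,m\}$ define the ordered block $G_k(\sigma) = (\sigma((k-1)T_*+1),\dots,\sigma(kT_*))$. Using the hypothesis that $S(u)$ depends only on the last $T_*$ coordinates of $u$, write $S(u)=\phi(u_{T_0+1},\dots,u_T)$, set $S(u_{G_k(\sigma)}) = \phi(u_{\sigma((k-1)T_*+1)},\dots,u_{\sigma(kT_*)})$, and let $F^{(\sigma)}(x) = m^{-1}\sum_{k=1}^m \mathbf{1}\{S(u_{G_k(\sigma)}) < x\}$. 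A counting argument---each ordered $T_*$-tuple $\tau$ of distinct indices in $\{1,\dots,T\}$ arises as $(\pi(T_0+1),\dots,\pi(T))$ for exactly $T_0!$ permutations $\pi \in \Pi$, and as $G_k(\sigma)$ for exactly $m \cdot T_0!$ pairs $(\sigma,k)$---then yields the key identity
\[
\tilde F(x) \;=\; \frac{T_0!}{T!}\sum_{\tau}\mathbf{1}\{\phi(u_\tau)<x\} \;=\; \frac{1}{T!}\sum_{\sigma}F^{(\sigma)}(x) \;=\; E_\sigma\bigl[F^{(\sigma)}(x)\bigr],
\]
where $\sigma$ is uniformly distributed on the symmetric group $S_T$ and is independent of the data.

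Second I would observe that for each fixed $\sigma$ the blocks $G_1(\sigma),\dots,G_m(\sigma)$ use disjoint indices, so by the i.i.d.\ assumption the scalars $\{S(u_{G_k(\sigma)})\}_{k=1}^m$ are i.i.d.\ with marginal $F$. Hence, conditional on $\sigma$, $F^{(\sigma)}$ is the empirical CDF of an i.i.d.\ sample of size $m$ from $F$. Applying the triangle inequality (equivalently Jensen) to the identity from step one gives
\[
\sup_x\bigl|\tilde F(x) - F(x)\bigr| \;\le\; E_\sigma\sup_x\bigl|F^{(\sigma)}(x) - F(x)\bigr|,
\]
and, taking expectation over the data and using Fubini,
\[
E\sup_x\bigl|\tilde F(x) - F(x)\bigr| \;\le\; E\sup_x\bigl|F^{(\sigma)}(x) - F(x)\bigr| \;\le\; \sqrt{\pi/(2m)},
\]
where the final step uses the Massart form of DKW, $P(\sup_x |F^{(\sigma)} - F| > t) \le 2 e^{-2mt^2}$, integrated against $dt$. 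A single application of Markov's inequality then gives $P\bigl(\sup_x |\tilde F - F| > \delta_{1n}\bigr) \le \sqrt{\pi/(2m)}/\delta_{1n}$, which is exactly the stated $\gamma_T$.

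The main obstacle is the combinatorial bookkeeping behind the identity $\tilde F = E_\sigma F^{(\sigma)}$: one must verify that every ordered $T_*$-tuple of distinct indices receives the same total weight in $\tilde F$ as in $T!^{-1}\sum_\sigma F^{(\sigma)}$, despite the leftover $T - mT_*$ positions playing an unsymmetric role in the block decomposition and despite $\phi$ not being symmetric in its arguments. Once that identity is in place, the remaining steps---i.i.d.\ distribution of the blocks for fixed $\sigma$, triangle/Jensen, DKW, Markov---are all routine.
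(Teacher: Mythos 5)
Your proof is correct and follows essentially the same route as the paper: the tuple-counting identity $\tilde F = E_\sigma F^{(\sigma)}$ is exactly the paper's observation that $\sum_{\pi\in\Pi} Q(u_{\pi(b_i)};x)$ does not depend on the block index $i$, after which both arguments proceed identically via disjoint blocks giving i.i.d.\ draws, Jensen, the integrated DKW bound $\sqrt{\pi/(2\lfloor T/T_*\rfloor)}$, and Markov.
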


The following lemma  verifies the condition on the estimation error (A) for moving block permutations.

\begin{lem}[Bounds on Estimation Errors under Moving Block Permutations]
\label{lem: lipschitz S}  Consider
moving block permutations $\Pi$. Let $T_{*}$ be fixed. Suppose that for
some constant $Q>0$, $|S(u)-S(v)|\leq Q\|D_{T_{*}}(u-v)\|_{2}$
for any $u,v\in\mathbb{R}^{T}$ and $D_{T_{*}}:={\rm Blockdiag}(0_{T_{*}},I_{T_{*}})$.
Then Condition (A) (1)-(2) is satisfied if there exist sequences $\gamma_{T},\delta_{2n}=o(1)$
such that with probability at least $1-\gamma_{T}$, $$ \|\hat{P}^N-P^N\|_{2}/\sqrt{T}\leq\delta_{2n}
\text{ and } |\hat{P}^N_{t}-P_{t}|\leq\delta_{2n}  \text{  for  } T_{0}+1\leq t\leq T.$$ 

\end{lem}

The following lemma  verifies the condition on the estimation error (A) for iid permutations.

\begin{lem}[Bounds on Estimation Errors under iid Permutations]
\label{lem: lipschitz S all perm} Consider
the set of all permutations $\Pi$. Let $T_{*}$ be fixed. Suppose
that for some constant $Q>0$, $|S(u)-S(v)|\leq Q\|D_{T_{*}}(u-v)\|_{2}$
for any $u,v\in\mathbb{R}^{T}$ and $D_{T_{*}}:={\rm Blockdiag}(0,I_{T_{*}})$. Then Condition (A) (1)-(2)
is satisfied if there exist sequences $\gamma_{T},\delta_{2n}=o(1)$
such that with probability at least $1-\gamma_{T}$, $$ \|\hat{P}^N-P^N\|_{2}/\sqrt{T}\leq\delta_{2n}
\text{ and } |\hat{P}^N_{t}-P_{t}|\leq\delta_{2n}  \text{  for  } T_{0}+1\leq t\leq T.$$ 
\end{lem}

Now we conclude the proof of Theorem \ref{thm:approximate_validity}.

For the moving block permutations, let $\delta_{1n}=(T_*/T_0)^{1/4} $. Then we apply Lemma \ref{lem: approximate validity} together with Lemmas  \ref{lem: primitive cond for ergodicity} and \ref{lem: lipschitz S}, obtaining
\begin{align*}
\left|P\left(\hat{p}\leq\alpha\right)- \alpha \right| &  \leq  4\delta_{1n}+ 4 \delta_{2n}+ 2 D(\delta_{2n}+ 2\sqrt{\delta_{2n}}) +\gamma_{1n}+\gamma_{2n} \\
& \leq 4\delta_{1n}+ 4 \delta_{2n}+ 2 D(\delta_{2n}+ 2\sqrt{\delta_{2n}}) + \left(M'\sqrt{\frac{T_{*}}{T_{0}}}\log T_{0}+\frac{T_{*}+1}{T_{0}+T_{*}}\right)/\delta_{1n}+ \gamma_{2n} \\
& \leq 4(T_*/T_0)^{1/4} + 4 \delta_{2n}+ 2 D(\delta_{2n}+ 2\sqrt{\delta_{2n}}) \\
& \qquad + \left(M'\sqrt{\frac{T_{*}}{T_{0}}}\log T_{0}+\frac{T_{*}+1}{T_{0}+T_{*}}\right)(T_*/T_0)^{-1/4} + \gamma_{2n}.
\end{align*}
The final result for moving block permutations follows by straight-forward computations and the observations that $ \delta_{2n} =O(\sqrt{\delta_{2n}})  $ (due to $ \delta_{2n}=o(1) $).

For iid permutations, we also use $\delta_{1n}=(T_*/T_0)^{1/4} $. Then we apply Lemma \ref{lem: approximate validity}  together with Lemmas \ref{lem: primitive cond for ergodicity under iid} and \ref{lem: lipschitz S all perm}, 
 obtaining
\begin{align*}
\left|P\left(\hat{p}\leq\alpha\right)- \alpha \right| &  \leq   4\delta_{1n}+ 4 \delta_{2n}+ 2 D(\delta_{2n}+ 2\sqrt{\delta_{2n}}) +\gamma_{1n}+\gamma_{2n} \\
& \leq  4\delta_{1n}+ 4 \delta_{2n}+ 2 D(\delta_{2n}+ 2\sqrt{\delta_{2n}}) + \sqrt{2\pi/\left\lfloor T/T_{*}\right\rfloor }/\delta_{1n}+ \gamma_{2n} \\
& \leq  4(T_*/T_0)^{1/4}  + 4 \delta_{2n}+ 2 D(\delta_{2n}+ 2\sqrt{\delta_{2n}}) +\sqrt{2\pi/\left\lfloor T/T_{*}\right\rfloor } (T_*/T_0)^{-1/4} + \gamma_{2n}\\
& \lesssim (T_*/T_0)^{1/4} +\delta_{2n}+ \sqrt{\delta_{2n}}+ \gamma_{2n}.
\end{align*}
This completes the proof for iid permutations.

{}

\subsubsection{Proof of Lemma \ref{lem: approximate validity}}
\label{sec:proof_lemma_approximate_validity}
The proof proceeds in two steps.\footnote{The proof follows from the same arguments as the proof of Theorem 2 in \citet{chernozhukov2018exact} modified to the problem of permuting residuals and is presented for completeness.} 

\textbf{Step 1:} We bound the difference between the $p$-value and the oracle $p$-value, $\hat{F}(S(\hat{u}))-F(S(u))$.

Let $\mathcal{M}$ be the event that the conditions (A) and (E) hold. By assumption, 
\begin{equation}
P\left(\mathcal{M}\right)\geq1-\gamma_{1n}-\gamma_{2n}.\label{eq: thm high level eq 0.5}
\end{equation}

Notice that on the event $\mathcal{M}$, 
\begin{align}
\left|\hat{F}(S(\hat{u}))-F(S(u))\right| & \leq\left|\hat{F}(S(\hat{u}))-F(S(\hat{u}))\right|+\left|F(S(\hat{u}))-F(S(u))\right|\nonumber \\
 & \overset{\text{ (i)}}{\leq}\sup_{x\in\mathbb{R}}\left|\hat{F}(x)-F(x)\right|+D\left|S(\hat{u})-S(u)\right|\nonumber \\
 & \leq\sup_{x\in\mathbb{R}}\left|\hat{F}(x)-\tilde{F}(x)\right|+\sup_{x\in\mathbb{R}}\left|\tilde{F}(x)-F(x)\right|+D\left|S(\hat{u})-S(u)\right|\nonumber \\
 & \leq\sup_{x\in\mathbb{R}}\left|\hat{F}(x)-\tilde{F}(x)\right|+\delta_{1n}+D\left|S(\hat{u})-S(u)\right|\nonumber \\
 & \leq\sup_{x\in\mathbb{R}}\left|\hat{F}(x)-\tilde{F}(x)\right|+\delta_{1n}+D\delta_{2n},\label{eq: thm high level eq 1}
\end{align}
where (i) holds by the fact that the bounded pdf of $S(u)$ implies
Lipschitz property for $F$. 

Let $A=\left\{ \pi\in\Pi:\ |S(\hat{u}_\pi)-S(u_\pi)|\geq\sqrt{\delta_{2n}}\right\} $.
Observe that on the event $\mathcal{M}$, by Chebyshev inequality
\[
|A|\delta_{2n}\leq\sum_{\pi\in\Pi}\left(S(\hat{u}_\pi)-S(u_\pi)\right)^{2}\leq n\delta_{2n}^{2}
\]
and thus $|A|/n\leq\delta_{2n}$. Also observe that on the event $\mathcal{M}$,
for any $x\in\mathbb{R}$, 
\begin{align}
 & \left|\hat{F}(x)-\tilde{F}(x)\right| \nonumber \\
% & =\left|\frac{1}{n}\sum_{\pi\in\Pi}\left(\mathbf{1}\left\{ S(\hat{u}_\pi)\leq x\right\} -\mathbf{1}\left\{ S(u_\pi)\leq x\right\} \right)\right|\\
 & \leq\frac{1}{n}\sum_{\pi\in A}\left|\mathbf{1}\left\{ S(\hat{u}_\pi)< x\right\} -\mathbf{1}\left\{ S(u_\pi)< x \right\} \right|+\frac{1}{n}\sum_{\pi\in(\Pi\backslash A)}\left|\mathbf{1}\left\{ S(\hat{u}_\pi)< x\right\} -\mathbf{1}\left\{ S(u_\pi)< x\right\} \right|\nonumber \\
 & \overset{\mathrm{(i)}}{\leq}\frac{|A|}{n}+\frac{1}{n}\sum_{\pi\in(\Pi\backslash A)}\mathbf{1}\left\{ \left|S(u_\pi)-x\right|\leq\sqrt{\delta_{2n}}\right\}  \leq\frac{|A|}{n}+\frac{1}{n}\sum_{\pi\in\Pi}\mathbf{1}\left\{ \left|S(u_\pi)-x\right|\leq\sqrt{\delta_{2n}}\right\} \nonumber \\
 & \leq\frac{|A|}{n}+P\left(\left|S(u)-x\right|\leq\sqrt{\delta_{2n}}\right)\nonumber \\
 &\qquad +\sup_{z\in\mathbb{R}}\left|\frac{1}{n}\sum_{\pi\in\Pi}\mathbf{1}\left\{ \left|S(u_\pi)-z\right|\leq\sqrt{\delta_{2n}}\right\} -P\left(\left|S(u)-z\right|\leq\sqrt{\delta_{2n}}\right)\right| \nonumber \\
 & =\frac{|A|}{n}+P\left(\left|S(u)-x\right|\leq\sqrt{\delta_{2n}}\right) \nonumber \\
 & \qquad+\sup_{x\in\mathbb{R}}\left|\left[\tilde{F}\left(z+\sqrt{\delta_{2n}}\right)-\tilde{F}\left(z-\sqrt{\delta_{2n}}\right)\right]-\left[F\left(z+\sqrt{\delta_{2n}}\right)-F\left(z-\sqrt{\delta_{2n}}\right)\right]\right| \nonumber \\
 & \leq\frac{|A|}{n}+P\left(\left|S(u)-x\right|\leq\sqrt{\delta_{2n}}\right) +2\sup_{z\in\mathbb{R}}\left|\tilde{F}(z)-F\left(z\right)\right|\nonumber \\
 & \overset{\mathrm{(ii)}}{\leq}\frac{|A|}{n}+2D\sqrt{\delta_{2n}}+2\delta_{1n} \overset{\mathrm{(iii)}}{\leq }\delta_{1n}+ 2 \delta_{2n}+ 2 D\sqrt{\delta_{2n}}, \label{eq: bound ecdf error}
\end{align}
where (i) follows by 
the elementary inequality of $|\mathbf{1}\{S(\hat{u}_\pi)< x\}-\mathbf{1}\{S(u_\pi)<x\}|\leq\mathbf{1}\{|S(u_\pi)-x|\leq|S(\hat{u}_\pi)-S(u_\pi)|\}$,
(ii) follows by the bounded pdf of $S(u)$ and (iii) follows by $|A|/n\leq\delta_{2n}$.
Since the above display holds for each $x\in\mathbb{R}$, it follows
that on the event $\mathcal{M}$, 
\begin{equation}
\sup_{x\in\mathbb{R}}\left|\hat{F}(x)-\tilde{F}(x)\right|\leq
\delta_{1n}+ 2 \delta_{2n}+ 2D\sqrt{\delta_{2n}}.\label{eq: thm high level eq 2}
\end{equation}

We combine (\ref{eq: thm high level eq 1}) and (\ref{eq: thm high level eq 2})
and obtain that on the event $\mathcal{M}$, 
\begin{equation}
\left|\hat{F}(S(\hat{u}))-F(S(u))\right|\leq
2\delta_{1n}+ 2 \delta_{2n}+ D(\delta_{2n}+2 \sqrt{\delta_{2n}}).\label{eq: thm high level eq 3}
\end{equation}

\textbf{Step 2:} Here we derive the desired result. Notice that 
\begin{align*}
& \left|P\left(1-\hat{F}(S(\hat{u}))\leq\alpha\right)-\alpha\right|\\
 &  =\left|E\left(\mathbf{1}\left\{ 1-\hat{F}(S(\hat{u}))\leq\alpha\right\} -\mathbf{1}\left\{ 1-F(S(u))\leq\alpha\right\} \right)\right|\\
 & \leq E\left|\mathbf{1}\left\{ 1-\hat{F}(S(\hat{u}))\leq\alpha\right\} -\mathbf{1}\left\{ 1-F(S(u))\leq\alpha\right\} \right|\\
 & \overset{\mathrm{(i)}}{\leq}P\left(\left|F(S(u))-1+\alpha\right|\leq\left|\hat{F}(S(\hat{u}))-F(S(u))\right|\right)\\
 & \leq P\left(\left|F(S(u))-1+\alpha\right|\leq\left|\hat{F}(S(\hat{u}))-F(S(u))\right|\ \text{ and}\ \mathcal{M}\right)+P(\mathcal{M}^{c})\\
 & \overset{\mathrm{(ii)}}{\leq}P\left(\left|F(S(u))-1+\alpha\right|\leq
  2\delta_{1n}+ 2 \delta_{2n}+ D(\delta_{2n}+ 2\sqrt{\delta_{2n}})\right)+P\left(\mathcal{M}^{c}\right)\\
 & \overset{\mathrm{(iii)}}{\leq}
4\delta_{1n}+ 4 \delta_{2n}+ 2 D(\delta_{2n}+ 2\sqrt{\delta_{2n}}) +\gamma_{1n}+\gamma_{2n},
\end{align*}
where (i) follows by the elementary inequality $|\mathbf{1}\{1-\hat{F}(S(\hat{u}))\leq\alpha\}-\mathbf{1}\{1-F(S(u))\leq\alpha\}|\leq\mathbf{1}\{|F(S(u))-1+\alpha|\leq|\hat{F}(S(\hat{u}))-F(S(u))|\}$,
(ii) follows by (\ref{eq: thm high level eq 3}), (iii) follows
by the fact that $F(S(u))$ has the uniform distribution on $(0,1)$
and hence has pdf equal to 1, and  by (\ref{eq: thm high level eq 0.5}).
The proof is complete.

\subsubsection{Proof of Lemma \ref{lem: primitive cond for ergodicity}}
We define 
\[
s_{t}=\begin{cases}
(\sum_{s=t}^{t+T_{*}-1}|u_{s}|^q)^{1/q} & {\rm if}\ 1\leq t\leq T_{0}\\
(\sum_{s=t}^{T}|u_{s}|^q+\sum_{s=1}^{t-T_{0}-1}|u_{s}|^q)^{1/q} & {\rm otherwise}.
\end{cases}
\]

It is straight-forward to verify that 
\[
\{S(u_\pi):\ \pi\in\Pi\}=\left\{ s_{t}:\ 1\leq t\leq T\right\} .
\]

Let $\tilde{\alpha}_{\mathrm{mixing}}$ be the strong-{mixing} coefficient for
$\{s_{t}\}_{t=1}^{T_{0}}$. Notice that $\{s_{t}\}_{t=1}^{T_{0}}$
is stationary (although $\{s_{t}\}_{t=1}^{T}$ is clearly not). Let $\check{F}(x)=T_{0}^{-1}\sum_{t=1}^{T_{0}}\mathbf{1}\{s_{t}\leq x\}$.
The bounded pdf of $S(u)$ implies the continuity of $F(\cdot)$.
It follows, by Proposition 7.1 of \citet{rio2017asymptotic}, that
\begin{equation}
E\left(\sup_{x\in\mathbb{R}}\left|\check{F}(x)-F(x)\right|^{2}\right)\leq\frac{1}{T_{0}}\left(1+4\sum_{k=0}^{T_{0}-1}\tilde{\alpha}_{\mathrm{mixing}}(t)\right)\left(3+\frac{\log T_{0}}{2\log2}\right)^{2}.\label{eq: ergodicity eq 1}
\end{equation}

Notice that $\tilde{\alpha}_{\mathrm{mixing}}(t)\leq2$ and that 
$
\tilde{\alpha}_{\mathrm{mixing}}(t)\leq\alpha_{\mathrm{mixing}}\left(\max\{t-T_{*},0\}\right)
$
so that
\begin{eqnarray*}
\sum_{k=0}^{T_{0}-1}\tilde{\alpha}_{\mathrm{mixing}}(t)& = & \sum_{k=0}^{T_{*}}\tilde{\alpha}_{\mathrm{mixing}}(t)+\sum_{k=T_{*}+1}^{T_{0}-1}\tilde{\alpha}_{\mathrm{mixing}}(t)\leq2(T_{*}+1)+\sum_{k=1}^{T_{0}-T_{*}-1}\alpha_{\mathrm{mixing}}(k)\\
& \leq & 2(T_{*}+1)+\sum_{k=1}^{\infty}\alpha_{\mathrm{mixing}}(k).
\end{eqnarray*}
Since $\sum_{k=1}^{\infty}\alpha_{\mathrm{mixing}}(k)$ is bounded by $M$, it follows
by (\ref{eq: ergodicity eq 1}) that 
\[
E\left(\sup_{x\in\mathbb{R}}\left|\check{F}(x)-F(x)\right|^{2}\right)\leq B_T: = \frac{ 1+ 4 (2 (T_{*} +1) + M) }{T_{0}}\left(3+\frac{\log T_{0}}{2\log2}\right)^{2}.\]
By Liapunov's inequality, 
\[
E\left(\sup_{x\in\mathbb{R}}\left|\check{F}(x)-F(x)\right|\right)\leq\sqrt{E\left(\sup_{x\in\mathbb{R}}\left|\check{F}(x)-F(x)\right|^{2}\right)}\leq \sqrt{B_T}.
\]

Since $(T_{0}+T_{*})\tilde{F}(x)-T_{0}\check{F}(x)=\sum_{t=T_{0}+1}^{T_{0}+T_{*}}\mathbf{1}\{s_{t}\leq x\}$,
it follows that 
\begin{align*}
\sup_{x\in\mathbb{R}}\left|\tilde{F}(x)-\check{F}(x)\right| & =\sup_{x\in\mathbb{R}}\left|\left(\frac{T_{0}}{T_{0}+T_{*}}\check{F}(x)+\frac{1}{T_{0}+T_{*}}\sum_{t=T_{0}+1}^{T_{0}+T_{*}}\mathbf{1}\{s_{t}\leq x\}\right)-\check{F}(x)\right|\\
 & =\sup_{x\in\mathbb{R}}\left|\frac{1}{T_{0}+T_{*}}\check{F}(x)+\frac{1}{T_{0}+T_{*}}\sum_{t=T_{0}+1}^{T_{0}+T_{*}}\mathbf{1}\{s_{t}\leq x\}\right| \overset{}{\leq}\frac{T_{*}+1}{T_{0}+T_{*}},
\end{align*}
where the last inequality follows by $\sup_{x\in\mathbb{R}}|\check{F}(x)|\leq1$
and the boundedness of the indicator function. Combining the above
two displays, we obtain that
\[
E\left(\sup_{x\in\mathbb{R}}\left|\tilde{F}(x)-F(x)\right|\right)\leq \sqrt{B_T} +\frac{T_{*}+1}{T_{0}+T_{*}}.
\]
The desired result follows by Markov's inequality.
%\[
%P\left(\sup_{x\in\mathbb{R}}\left|\tilde{F}(x)-F(x)\right|>\delta_{n}\right)\leq\frac{E\left(\sup_{x\in\mathbb{R}}\left|%\tilde{F}(x)-F(x)\right|\right)}{\delta_{n}}\leq \delta_n^{-1}
%\left( \sqrt{B_T} +\frac{T_{*}+1}{T_{0}+T_{*}} \right)
%\]

%\yinchu{In order to get the optimal rate, we just need to get rid of $\log T_0 $ in (\ref{eq: ergodicity eq 1}). In other words, we need to get a better maximal inequality than Prop 7.1 of Rio (2017). It seems that we can just prove a better one by adjusting the arguments in the proof of Prop 7.1 and 7.2 in Rio (2017). Alternatively, we can apply Theorem 2 in (Doukhan, Massart and Rio 1995, Invariance principles for absolutely regular empirical processes) to the class of indicator functions. Or similarly, we can apply Proposition 6.2 of (Dedecker, Jérôme, and Sana Louhichi. "Maximal inequalities and empirical central limit theorems." Empirical process techniques for dependent data (2002): 137-159.). All of these options require us to a separate lemma on the maximal inequality. Do you guys know some off-the-shelf results that we can just cite?    }

\subsubsection{Proof of Lemma \ref{lem: primitive cond for ergodicity under iid}}

The proof follows by an argument given by \citet{romano2012uniform} for subsampling.
We give a complete argument for our setting here for clarity and completeness.

Recall that $\Pi$ is the set of all bijections $\pi$ on $\{1,...,T\}$.
Let $k_{T}=\left\lfloor T/T_{*}\right\rfloor $. Define the blocks of indices $$b_{i}=(T-iT_{*}+1,T-iT_{*}+2,...,T-iT_{*}+T_{*})\in\mathbb{R}^{T_{*}},  \quad \quad   i = 1,...., k_{T} $$
%The notation  $\pi(b_i)$ is ill-defined/ambigious, since $\pi$ maps $[T]$ To $[T]$.

Since $S(u)$ only depends on $u_{b_{1}}$,  the last $T_{*}$ entries of $u$,
we can define $$Q(x;u_{b_{1}})=\mathbf{1}\{S(u)\leq x\}-F(x).$$
Therefore, 
\[
\tilde{F}(x)-F(x)=\frac{1}{|\Pi|}\sum_{\pi\in\Pi}Q(u_{\pi(b_{1})};x).
\]
Define  $\pi(b_i):=\pi_{\mid b_i} (b_i)$ to mean the restriction of the permutation map $\pi:
\{1, \dots, T\} \to \{1 , \dots, T\}$ to the domain $b_i$.

Notice that for $1\leq i\leq k_{T}$, the value of $\sum_{\pi\in\Pi}Q(u_{\pi(b_{i})};x)$
does not depend on $i$. It follows that 
\begin{align*}
\tilde{F}(x)-F(x)=\frac{1}{|\Pi|}\sum_{\pi\in\Pi}Q(u_{\pi(b_{1})};x) & =\frac{1}{k_{T}}\sum_{i=1}^{k_{T}}\left(\frac{1}{|\Pi|}\sum_{\pi\in\Pi}Q(u_{\pi(b_{i})};x)\right)\\
 & =\frac{1}{|\Pi|}\sum_{\pi\in\Pi}\left[\frac{1}{k_{T}}\sum_{i=1}^{k_{T}}Q(u_{\pi(b_{i})};x)\right].
\end{align*}

Hence by Jensen's inequality
\begin{align*}
E\left(\sup_{x\in\mathbb{R}}\left|\tilde{F}(x)-F(x)\right|\right) 
 & \leq\frac{1}{|\Pi|}\sum_{\pi\in\Pi}E\left(\sup_{x\in\mathbb{R}}\left|\frac{1}{k_{T}}\sum_{i=1}^{k_{T}}Q(u_{\pi(b_{i})};x)\right|\right).
\end{align*}

To compute the above expectation,
we observe that for any $\pi\in\Pi$,
\begin{align*}
E\left(\sup_{x\in\mathbb{R}}\left|\frac{1}{k_{T}}\sum_{i=1}^{k_{T}}Q(u_{\pi(b_{i})};x)\right|\right) & =\int_{0}^{1}P\left(\sup_{x\in\mathbb{R}}\left|\frac{1}{k_{T}}\sum_{i=1}^{k_{T}}Q(u_{\pi(b_{i})};x)\right|>z\right)dz\\
 & \leq \int_{0}^{1}2\exp\left(-2k_{T}z^{2}\right)dz  < \int_{0}^{\infty}2\exp\left(-2k_{T}z^{2}\right)dz = \sqrt{\pi/(2k_{T})},
\end{align*}
where the first inequality follows by the Dvoretsky-Kiefer-Wolfwitz inequality (e.g.,
Theorem 11.6 in \citet{kosorok2007introduction}) and the fact that
for any $\pi\in\Pi$, $\{Q(u_{\pi(b_{i})};x)\}_{i=1}^{k_{T}}$ is a
sequence of iid random variables (since $\pi$ is a bijection and
$\{b_{i}\}_{i=1}^{k_{T}}$ are disjoint blocks of indices); the last equality follows
from the properties of the normal density. Therefore, the above two
display imply that 
\[
E\left(\sup_{x\in\mathbb{R}}\left|\tilde{F}(x)-F(x)\right|\right)\leq \sqrt{\pi/(2k_{T})}.
\]

The desired result follows by Markov's inequality.

\subsubsection{Proof of Lemma \ref{lem: lipschitz S}}
Due to the Lipschitz property of $S(\cdot)$, we have 
\begin{align*}
\sum_{\pi\in\Pi}\left[S(\hat{u}_{\pi})-S(u_{\pi})\right]^{2} & \leq Q\sum_{\pi\in\Pi}\left\Vert D_{T_{*}}(\hat{u}_{\pi}-u_{\pi})\right\Vert _{2}^{2}  =Q\sum_{\pi\in\Pi}\sum_{t=T_{0}+1}^{T_{0}+T_{*}}\left(\hat{u}_{\pi(t)}-u_{\pi(t)}\right)^{2}\\
 & =Q\sum_{t=T_{0}+1}^{T_{0}+T_{*}}\sum_{\pi\in\Pi}\left(\hat{u}_{\pi(t)}-u_{\pi(t)}\right)^{2} {=}QT_{*}\|\hat{u}-u\|_{2}^{2} = QT_* \| \hat P^N - P^N\|^2
\end{align*}
where the penultimate equality follows by the observation that for moving block permutation
$\Pi$, $$\sum_{\pi\in\Pi}\left(\hat{u}_{\pi(t)}-u_{\pi(t)}\right)^{2}=\|\hat{u}-u\|_{2}^{2}.$$
Hence condition (A) (1)  follows with a rescaled value of $\delta_n$. Condition
(A) (2) holds by the Lipschitz property of $S(\cdot)$: 
\[
|S(\hat{u})-S(u)|\leq Q\|D_{T_{*}}(\hat{u}-u)\|_{2}\leq Q\sqrt{\sum_{t=T_{0}+1}^{T_{0}+T_{*}}(\hat{u}_{t}-u_{t})^{2}} 
\]
Hence, Condition (A) (2) follows
since $\| \hat P^N_t - P^N_t\| = |\hat{u}_{t}-u_{t}|\leq\delta_{n}$ for $T_{0}+1\leq t\leq T$
with high probability. The proof is complete.

\subsubsection{Proof of Lemma \ref{lem: lipschitz S all perm}}
For $t,s\in\{1,...,T\}$, we define $A_{t,s}=\{\pi\in\Pi:\ \pi(t)=s\}$.
Recall that $\Pi$ is the set of all bijections on $\{1,...,T\}$.
Thus, $|A_{t,s}|=(T-1)!$. It follows that for any $t\in\{1,...,T\}$,
\begin{align}
\sum_{\pi\in\Pi}\left(\hat{u}_{\pi(t)}-u_{\pi(t)}\right)^{2} & =\sum_{s=1}^{T}\sum_{\pi\in A_{t,s}}\left(\hat{u}_{\pi(t)}-u_{\pi(t)}\right)^{2}\nonumber \\
 & =\sum_{s=1}^{T}\sum_{\pi\in A_{t,s}}\left(\hat{u}_{s}-u_{s}\right)^{2}=\sum_{s=1}^{T}|A_{t,s}|\left(\hat{u}_{s}-u_{s}\right)^{2}=(T-1)!\times\|\hat{u}-u\|_{2}^{2}.\label{eq: lipschitz iid eq 1}
\end{align}

Due to the Lipschitz property of $S(\cdot)$, we have that
\begin{eqnarray*}
 && \frac{1}{|\Pi|}\sum_{\pi\in\Pi}\left[S(\hat{u}_{\pi})-S(u_{\pi})\right]^{2}
\leq  \frac{Q}{|\Pi|}\sum_{\pi\in\Pi}\left\Vert D_{T_{*}} (\hat{u}_{\pi}-u_{\pi})\right\Vert _{2}^{2}
  =\frac{Q}{|\Pi|}\sum_{\pi\in\Pi}\sum_{t=T_{0}+1}^{T_{0}+T_{*}}\left(\hat{u}_{\pi(t)}-u_{\pi(t)}\right)^{2}\\
 && = \frac{Q}{|\Pi|}\sum_{t=T_{0}+1}^{T_{0}+T_{*}}\sum_{\pi\in\Pi}\left(\hat{u}_{\pi(t)}-u_{\pi(t)}\right)^{2}
\overset{}{=}\frac{Q}{|\Pi|}T_{*}(T-1)!\times\|\hat{u}-u\|_{2}^{2}
\overset{}{=}QT^{-1}T_{*}\|\hat{u}-u\|_{2}^{2},
\end{eqnarray*}
where the penultimate equality follows by (\ref{eq: lipschitz iid eq 1}) and 
the last equality  follows by $|\Pi|=T!$. Thus, part 1 of Condition (A) follows since
$T_{*}$ is fixed.

To see part 2 of Condition (A), notice that the Lipschitz property
of $S(\cdot)$ implies 
\[
|S(\hat{u})-S(u)|\leq Q\|D_{T_{*}}(\hat{u}-u)\|_{2}\leq Q\sqrt{\sum_{t=T_{0}+1}^{T_{0}+T_{*}}(\hat{u}_{t}-u_{t})^{2}}.
\]
Hence, part 2 of Condition (A) follows since $|\hat{u}_{t}-u_{t}|\leq\delta_{n}$
for $T_{0}+1\leq t\leq T$ with high probability. The proof is complete.

\subsection{Proof of Theorem \ref{thm: approx exchange}}

We first state auxiliary results.
\begin{lem}
	\label{lem: easy bnd emp}Let $\{W_{t}\}_{t=1}^{T}$ be a stationary
	and $\beta$-mixing sequence with coefficient $\betamix(\cdot)$.
	Let $G(x)=P(W_{t}\leq x)$. 
	Then for any positive integer $1\leq m\leq T/2$, we
	have 
	\[
		E\left(\sup_{x\in\mathbb{R}}\left|T^{-1}\sum_{t=1}^{T}[\oneb\{W_{t}\leq x\}-G(x)]\right|\right)\leq  2\sqrt{T}\betamix(m)+ \sqrt{\pi m/(2T)}+(m-1)/T.
	\]
\end{lem}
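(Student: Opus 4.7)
The plan is to control the expected Kolmogorov--Smirnov deviation of the empirical distribution from $G$ via a block decomposition combined with Berbee's coupling lemma, reducing the problem to the classical Dvoretzky--Kiefer--Wolfowitz (DKW) inequality for an i.i.d.\ sample. The three terms of the target bound have transparent interpretations: $\sqrt{\pi m/(2T)}$ is the Massart--DKW contribution for an effective sample of size $T/m$; $2\sqrt{T}\betamix(m)$ accounts for the total cost of the coupling; and $(m-1)/T$ collects the boundary effect of the trailing indices that do not fit evenly into blocks.

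First, I partition $\{1,\dots,T\}$ into $m$ interlaced subsequences $S_j=\{j,\,j+m,\,j+2m,\dots\}$ for $j=1,\dots,m$, each of length $k:=\lfloor T/m\rfloor$ or $\lceil T/m\rceil$, so that consecutive indices within any $S_j$ are separated by exactly $m$ time steps. The full empirical distribution can be written as a convex combination of the per-subsequence empirical CDFs $\hat F_{S_j}$, modulo at most $m-1$ residual indices whose contribution to $\sup_x|\hat F_T(x)-G(x)|$ is at most $(m-1)/T$ uniformly in $x$; this gives the third term. By convexity of the sup-norm together with the stationarity of $\{W_t\}$, it then suffices to bound $E\sup_x|\hat F_{S_1}(x)-G(x)|$.

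Next, within $S_1$, I iterate Berbee's coupling lemma to construct an i.i.d.\ sequence $\{\widetilde W_i\}_{i=1}^k$ with marginal $G$ on a common probability space, such that the coupling-failure event $\mathcal{E}^c=\{\widetilde W_i\neq W_{1+(i-1)m}\text{ for some }i\}$ has probability at most $k\betamix(m)$. On the success event the subsampled empirical process coincides with that of the i.i.d.\ sequence, and the Massart--DKW bound yields
\[
E\sup_x\Bigl|k^{-1}\sum_{i=1}^k \oneb\{\widetilde W_i\le x\}-G(x)\Bigr|\leq \sqrt{\pi/(2k)}\leq \sqrt{\pi m/(2T)},
\]
obtained by integrating Massart's tail bound $P(\sup>t)\le 2e^{-2kt^2}$. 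On the failure event the sup-norm is bounded by $1$.

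Finally, to obtain the sharp coupling-cost factor $2\sqrt{T}\betamix(m)$, I would pass to an $L^2$ bound: combine $E\sup_x|\widetilde F_k-G|^2\le 1/k$ with $P(\mathcal{E}^c)\le k\betamix(m)$ (the sup being bounded by $1$ off $\mathcal{E}$), then apply Jensen's inequality and the elementary estimate $\sqrt{a+b}\le\sqrt a+\sqrt b$, finally using the assumption $1\le m\le T/2$ to absorb constants. The main obstacle is exactly this bookkeeping step: the naive $L^1$ union bound yields only the coarser $k\betamix(m)=(T/m)\betamix(m)$, so reaching the stated $2\sqrt{T}\betamix(m)$ requires the right interpolation between $L^1$ and $L^2$ controls and possibly a refinement via Bradley's coupling inequality or a Rio-style moment bound for the $\beta$-mixing empirical process. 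Once the coupling term is calibrated, the three contributions add directly to produce the stated bound.
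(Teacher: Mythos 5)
Your skeleton coincides with the paper's: both partition $\{1,\dots,T\}$ into $m$ interlaced stride-$m$ subsequences of common length $K=\lfloor T/m\rfloor$ (the $\le m-1$ leftover indices giving the $(m-1)/T$ term), apply Berbee's coupling within each subsequence to obtain i.i.d.\ copies, and invoke the Massart--DKW bound $E\sup_x|\cdot|\le\sqrt{\pi/(2K)}$ on the coupled classes, which after the $mK/T\le 1$ weighting yields $\sqrt{\pi m/(2T)}$. Two points in your write-up do not go through as stated, and one of them is the step you yourself leave open.

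The genuine gap is the coupling-cost term. The repair you sketch (an $L^{1}$--$L^{2}$ interpolation via Cauchy--Schwarz, or Bradley/Rio refinements) does not deliver the bound: with $P(\mathcal{E}^{c})\le K\betamix(m)$ and $\sup_x|\hat F_{S_1}-\widetilde F_K|\le 1$, Cauchy--Schwarz gives at best $\sqrt{(T/m)\betamix(m)}$, and $\sqrt{(T/m)\betamix(m)}\le 2\sqrt{T}\betamix(m)$ holds only when $\betamix(m)\ge 1/(4m)$, i.e.\ precisely when mixing is poor; for fast-mixing sequences the interpolation points the wrong way, and the plain $L^{1}$ bound $K\betamix(m)$ is likewise too large when $m\ll\sqrt{T}$. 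The missing idea is elementary: bound the discrepancy between coupled and uncoupled empirical processes by the \emph{fraction of mismatched indices}, $\sup_x|\hat F_{S_j}(x)-\widetilde F_{S_j}(x)|\le K^{-1}\sum_{i=1}^{K}\oneb\{\widetilde W_i\neq W_{j+(i-1)m}\}$, whose expectation is at most $\betamix(m)$ per class because each marginal mismatch probability is $\le\betamix(m)$ under Berbee's construction; no union bound over the whole class and no interpolation is needed, and after the $mK/T\le 1$ weighting the coupling cost is at most $\betamix(m)\le 2\sqrt{T}\betamix(m)$. (The paper reaches the same place by doing all the bookkeeping at the $\sqrt{mK}$-normalized scale, where a cost of order $T\betamix(m)$ is affordable and becomes $2\sqrt{T}\betamix(m)$ after dividing by $\sqrt{T}$ at the end.) Separately, your chain $\sqrt{\pi/(2k)}\le\sqrt{\pi m/(2T)}$ is backwards, since $k=\lfloor T/m\rfloor\le T/m$; this is fixed by retaining the weight $mk/T\le 1$ with which the class empirical CDFs enter the full average, exactly parallel to the residual-index accounting you already perform, at the price of at most a $\sqrt{2}$ factor otherwise.
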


The next lemma was derived by \citet[][Lemma 2.1]{berbee1987convergence}; see Theorem 16.2.1 in \citet{athreya2006measure} and Lemma 7.1 of \citet{chen2016self} for popular versions. We state the result to make the exposition more self-contained. The proof is omitted. %Following the suggestion of a referee, we state the result here without proof so that our argument is more self-contained and clear. 
\begin{lem}
	\label{lem: Berbee} Let $(W,R)$ be random vectors defined in the same probability space. Let  $P_{W,R}$ denote the probability distribution of $(W,R)$. Let $P_R$ and $P_W$ denote the probability distributions of $R$ and $W$, respectively. Let $\| \cdot\|_{TV}$ denote the total variation metric. Define the $\beta$-mixing coefficient $\beta(W,R)=\|P_{W,R}-P_W \otimes P_R\|_{TV}/2$. Then the probability space can be extended to construct random element $\tW$ such that (1) $\tW$ and $R$ are independent, (2) $\tW$ and $W$ have the same distribution and (3) $P(\tW \neq W)\leq \beta(W,R)$.
\end{lem}

% \begin{lem}
% 	\label{lem: basic comparison bound}Let $X$ and $Y$ be random variables. Suppose that the pdf of $X$ is bounded by $\kappa$. Then for any $c>0$, we have 
% 	\[
% \sup_{z\in\mathbb{R}}|P(X\leq z)-P(Y\leq z)|  \leq c\kappa+P(|X-Y|>c) .
% 	\]
% \end{lem}

\bigskip

Now we prove Theorem \ref{thm: approx exchange}. In this proof, universal constants refer to constants that depend
only on $D_{1},D_{2},D_{3}>0$. Define $\tF(x)=R^{-1}\sum_{j=1}^{R}\tF_{j}(x)$,
where
\[
\tF_{j}(x)=m^{-1}\sum_{t\in H_{j}}\oneb\left\{ \phi\left(Z_{t},...,Z_{t+T_{*}-1};\hat{\beta}(\Zb)\right)\leq x\right\} .
\]

We note that under  $\Pi=\Pi_{\to}$, $\hat{F}(x)$ can be written as
\begin{equation}
\hF(x)=T^{-1}\left(\sum_{t=1}^{T_{0}}\oneb\left\{ \phi\left(Z_{t},...,Z_{t+T_{*}-1};\hat{\beta}(\Zb)\right)\leq x\right\} +\sum_{t=T_{0}+1}^{T_{0}+T_{*}}\oneb\left\{ \phi\left(Z_{q(t)},...,Z_{q(t+T_{*}-1)};\hat{\beta}(\Zb)\right)\leq x\right\} \right),\label{eq: stability thm eq 4}
\end{equation}
where $q(t)=t\oneb\{t\leq T\}+(t-T)\oneb\{t> T\}$.

The rest of the proof proceeds in 4 steps. The first three steps bound
$\sup_{x\in\RR}|\hF(x)-\Psi\left(x;\hbeta(\Zb_{\tH_{R}})\right)|$, where we recall that $\Psi(x;\beta)=P(\phi\left(Z_{t},...,Z_{t+T_{*}-1};\beta\right)\leq x)$.
The fourth step derives the desired result. 

\textbf{Step 1:} bound $\sup_{x\in\mathbb{R}}\left|\tF_{j}(x)-\Psi\left(x;\hat{\beta}(\Zb_{\tH_{j}})\right)\right|$. 

Let $A_{j}=\bigcup_{t\in H_{j}}\{t,...,t+T_{*}-1\}$. Since $k>T_{*}$,
we have that $A_{j}\subset\tH_{j}$ and $\min_{t\in A_{j},\ s\in\tH_{j}^{c}}|t-s|\geq k-T_{*}+1$.
This means that $\{Z_{t}\}_{t\in\tH_{j}^{c}}$ and $\{Z_{t}\}_{t\in A_{j}}$
have a gap of at least $k-T_{*}+1$ time periods. By Lemma \ref{lem: Berbee} (applied with $W=\{Z_{t}\}_{t\in\tH_{j}^{c}}$ and $R=\{Z_{t}\}_{t\in A_{j}}$), there exist random elements
$\{\bZ_{t}\}_{t\in A_{j}}$ (on an enlarged probability space) such
that (1) $\{\bZ_{t}\}_{t\in A_{j}}$ is independent of $\{Z_{t}\}_{t\in\tH_{j}^{c}}$,
(2) $\{\bZ_{t}\}_{t\in A_{j}}\overset{d}{=}\{Z_{t}\}_{t\in A_{j}}$
and (3) $P(\{\bZ_{t}\}_{t\in A_{j}}\neq\{Z_{t}\}_{t\in A_{j}})\leq\betamix(k-T_{*}+1)$.
Since $\{\tilde{Z}_{t}\}_{t\in\tH_{j}}$ is independent of the data,
we can construct $\{\bZ_{t}\}_{t\in A_{j}}$ such that it is also
independent of $\Zb_{\tH_{j}}$. 

Define the event 
\begin{multline*}
\Mcal_{j}=\left\{ \{\bZ_{t}\}_{t\in A_{j}}=\{Z_{t}\}_{t\in A_{j}}\right\} \bigcap\left\{ \sup_{x\in\RR}\left|\partial\Psi\left(x;\hbeta(\Zb_{\tH_{j}})\right)/\partial x\right|\leq\xi_{T}\right\} \\
\bigcap\left\{ \max_{\pi\in\Pi}\left|S\left(\Zb^{\pi},\hat{\beta}(\Zb)\right)-S\left(\Zb^{\pi},\hat{\beta}(\Zb_{\tH_{j}})\right)\right|\leq\varrho_{T}(|\tH_{j}|)\right\} 
\end{multline*}
 as well as the functions 
\[
\begin{cases}
\cF_{j}(x)=m^{-1}\sum_{t\in H_{j}}\oneb\left\{ \phi\left(\bZ_{t},...,\bZ_{t+T_{*}-1};\hat{\beta}(\Zb_{\tH_{j}})\right)\leq x\right\} \\
\dot{F}_{j}(x)=m^{-1}\sum_{t\in H_{j}}\oneb\left\{ \phi\left(Z_{t},...,Z_{t+T_{*}-1};\hat{\beta}(\Zb_{\tH_{j}})\right)\leq x\right\} .
\end{cases}
\]

By the construction of $\{\bZ_{t}\}_{t\in A_{j}}$ and Assumptions
\ref{assu: stability} and \ref{assu: regularity resid}, $P(\Mcal_{j}^{c})\leq\betamix(k-T_{*}+1)+\gamma_{1,T}+\gamma_{2,T}$. 

Notice that conditional on $\hbeta(\Zb_{\tH_{j}})$, $\phi\left(\bZ_{t},...,\bZ_{t+T_{*}-1};\hat{\beta}(\Zb_{\tH_{j}})\right)$ is a stationary $\beta$-mixing across $t\in H_{j}$ with mixing coefficient $\tilde{\beta}_{{\rm mixing}}(i)\leq\betamix(i-T_{*}+1)$
for $i\geq T_{*}$. Moreover, since  $(\bZ_{t},...,\bZ_{t+T_{*}-1})$ is independent of $\Zb_{\tH_{j}}$, we have
\[
	\Psi\left(x;\hat{\beta}(\Zb_{\tH_{j}})\right)=P\left(\phi\left(\bZ_{t},...,\bZ_{t+T_{*}-1};\hat{\beta}(\Zb_{\tH_{j}})\right) \leq x \mid \hat{\beta}(\Zb_{\tH_{j}})  \right).
\]
  Hence, by Lemma \ref{lem: easy bnd emp}, we have
that for any $m_{1}\leq m/2,$
\[
E\left(\sup_{x\in\mathbb{R}}\left|\cF_{j}(x)-\Psi\left(x;\hat{\beta}(\Zb_{\tH_{j}})\right)\right|\right)\leq2m^{1/2}\betamix(m_{1}-T_{*}+1)+\sqrt{\pi m_{1}/(2m)}+(m_{1}-1)/m.
\]

We shall choose $m_{1}$ later. Observe that on the event $\Mcal_{j}$,
$\cF_{j}(\cdot)=\dot{F}_{j}(\cdot)$. Therefore, 
\begin{equation}
E(a_{j})\leq2m^{1/2}\betamix(m_{1}-T_{*}+1)+\sqrt{\pi m_{1}/(2m)}+(m_{1}-1)/m+2P(\Mcal_{j}^{c}),\label{eq: stability thm eq 5}
\end{equation}
where $a_{j}=\sup_{x\in\mathbb{R}}\left|\dot{F}_{j}(x)-\Psi\left(x;\hat{\beta}(\Zb_{\tH_{j}})\right)\right|$. 

Now we bound $\sup_{x\in\RR}|\dot{F}_{j}(x)-\tF_{j}(x)|$. Fix an
arbitrary $x\in\RR$. Observe that on the event $\Mcal_{j}$, 
\begin{align*}
& \left|\tF_{j}(x)-\dot{F}_{j}(x)\right|\\
 & =\left|m^{-1}\sum_{t\in H_{j}}\left(\oneb\left\{ \phi\left(Z_{t},...,Z_{t+T_{*}-1};\hat{\beta}(\Zb)\right)\leq x\right\} -\oneb\left\{ \phi\left(Z_{t},...,Z_{t+T_{*}-1};\hat{\beta}(\Zb_{\tH_{j}})\right)\leq x\right\} \right)\right|\\
 & \leq m^{-1}\sum_{t\in H_{j}}\left|\oneb\left\{ \phi\left(Z_{t},...,Z_{t+T_{*}-1};\hat{\beta}(\Zb)\right)\leq x\right\} -\oneb\left\{ \phi\left(Z_{t},...,Z_{t+T_{*}-1};\hat{\beta}(\Zb_{\tH_{j}})\right)\leq x\right\} \right|\\
 & \overset{\text{(i)}}{\leq}m^{-1}\sum_{t\in H_{j}}\oneb\left\{ \left|\phi\left(Z_{t},...,Z_{t+T_{*}-1};\hat{\beta}(\Zb_{\tH_{j}})\right)-x\right|\leq\varrho_{T}(|\tH_{j}|)\right\} \\
 & = m^{-1}\sum_{t\in H_{j}}\oneb\left\{ \phi\left(Z_{t},...,Z_{t+T_{*}-1};\hat{\beta}(\Zb_{\tH_{j}})\right)\leq x+\varrho_{T}(|\tH_{j}|)\right\} \\
 & \qquad -m^{-1}\sum_{t\in H_{j}}\oneb\left\{ \phi\left(Z_{t},...,Z_{t+T_{*}-1};\hat{\beta}(\Zb_{\tH_{j}})\right)<x -\varrho_{T}(|\tH_{j}|)\right\} \\
 & <\dot{F}_{j}\left(x+\varrho_{T}(|\tH_{j}|)\right)-\dot{F}_{j}\left(x-2\varrho_{T}(|\tH_{j}|)\right)\\
 & \le\Psi\left(x+\varrho_{T}(|\tH_{j}|);\hat{\beta}(\Zb_{\tH_{j}})\right)-\Psi\left(x-2\varrho_{T}(|\tH_{j}|);\hat{\beta}(\Zb_{\tH_{j}})\right)+2a_{j}\overset{\text{(ii)}}{\leq}3\xi_{T}\varrho_{T}(|\tH_{j}|)+2a_{j},
\end{align*}
where (i) follows by the elementary inequality $|\oneb\{x\leq z\}-\oneb\{y\leq z\}|\leq\oneb\{|y-z|\leq|x-y|\}$
for any $x,y,z\in\RR$ and (ii) follows by the definition of $\Mcal_{j}$.
Since the above bound holds for any $x\in\RR$ and $|\tH_{j}|\leq m+2k$,
we have that on the event $\Mcal_{j}$, 
\[
\sup_{x\in\RR}\left|\tF_{j}(x)-\dot{F}_{j}(x)\right|\leq3\xi_{T}\varrho_{T}(m+2k)+2a_{j}.
\]

By the definition of $a_{j}$, this means that on the event $\Mcal_{j}$,
\[
\sup_{x\in\mathbb{R}}\left|\tF_{j}(x)-\Psi\left(x;\hat{\beta}(\Zb_{\tH_{j}})\right)\right|\leq3\xi_{T}\varrho_{T}(m+2k)+3a_{j}.
\]

By (\ref{eq: stability thm eq 5}) and the fact that $\tF_{j}(\cdot)$
and $\Psi(\cdot,\cdot)$ take values in $[0,1]$, we have that for
a universal constant $C_{1}>0$, 

\begin{align}
 & E\left(\sup_{x\in\mathbb{R}}\left|\tF_{j}(x)-\Psi\left(x;\hat{\beta}(\Zb_{\tH_{j}})\right)\right|\right)\label{eq: stability thm eq 6}\\
 & \leq3\xi_{T}\varrho_{T}(m+2k)+3E(a_{j})+2P(\Mcal_{j}^{c})\nonumber \\
 & \leq3\xi_{T}\varrho_{T}(m+2k)+6m^{1/2}\betamix(m_{1}-T_{*}+1)+3\sqrt{\pi m_{1}/(2m)}+3(m_{1}-1)/m+8P(\Mcal_{j}^{c})\nonumber \\
 & \overset{\text{(i)}}{\leq}C_{1}\left(\xi_{T}\varrho_{T}(m+2k)+m^{1/2}\betamix(m_{1}-T_{*}+1)+\sqrt{m_{1}/m}+\betamix(k-T_{*}+1)+\gamma_{1,T}+\gamma_{2,T}\right),\nonumber 
\end{align}
where (i) follows by $P(\Mcal_{j}^{c})\leq\betamix(k-T_{*}+1)+\gamma_{1,T}+\gamma_{2,T}$
and $m_{1}/m\leq \sqrt{m_{1}/m}$. 

\textbf{Step 2:} bound $R^{-1}\sum_{j=1}^{R}\sup_{x\in\RR}\left|\Psi\left(x;\hat{\beta}(\Zb_{\tH_{j}})\right)-\Psi\left(x;\hat{\beta}(\Zb_{\tH_{R}})\right)\right|$. 

Let $\dot{\Zb}=\{\dot{Z}_{t}\}_{t=1}^{T}$ satisfy that $\dot{\Zb}\overset{d}{=}\Zb$
and $\dot{\Zb}$ is independent of $(\Zb,\{\tilde{Z}_{t}\}_{t=1}^{T})$.
Therefore, for any $1\leq j\leq R$, 
\begin{align*}
\Psi\left(x;\hat{\beta}(\Zb_{\tH_{j}})\right) & =P\left(\phi\left(\dot{Z}_{T_{0}+1},...,\dot{Z}_{T_{0}+T_{*}};\hbeta(\Zb_{\tH_{j}})\right)\leq x\mid\hbeta(\Zb_{\tH_{j}})\right)\\
 & \overset{\text{(i)}}{=}P\left(\phi\left(\dot{Z}_{T_{0}+1},...,\dot{Z}_{T_{0}+T_{*}};\hbeta(\Zb_{\tH_{j}})\right)\leq x\mid\Zb,\{\tilde{Z}_{t}\}_{t=1}^{T}\right),
\end{align*}
where (i) follows by the fact that $\dot{\Zb}$ and $(\Zb,\{\tilde{Z}_{t}\}_{t=1}^{T})$
are independent. (This is the identity that $E(f(X;g(Y))\mid g(Y))=E(f(X;g(Y))\mid Y)$
for any measurable functions $f$ and $g$ if $X$ and $Y$ are independent.
To see this, simply notice that the distribution of $X$ given $g(Y)$
and the distribution of $X$ given $Y$ are both equal to the unconditional
distribution of $X$.)

Define the event 
\[
\Qcal_{j}=\left\{ \sup_{x\in\RR}\left|\partial\Psi\left(x;\hbeta(\Zb_{\tH_{j}})\right)/\partial x\right|\leq\xi_{T}\right\} .
\]

Clearly, $P(\Qcal_{j})\geq1-\gamma_{2,T}$ by Assumption \ref{assu: regularity resid}.
Therefore, we have that on the event $\Qcal_{j}$, 
\begin{align*}
 & \left|\Psi\left(x;\hat{\beta}(\Zb_{\tH_{j}})\right)-\Psi\left(x;\hat{\beta}(\Zb_{\tH_{R}})\right)\right|\\
 & =\left|E\left(\oneb\left\{ \phi\left(\dot{Z}_{T_{0}+1},...,\dot{Z}_{T_{0}+T_{*}};\hbeta(\Zb_{\tH_{j}})\right)\leq x\right\} -\oneb\left\{ \phi\left(\dot{Z}_{T_{0}+1},...,\dot{Z}_{T_{0}+T_{*}};\hbeta(\Zb_{\tH_{R}})\right)\leq x\right\} \mid\Zb,\{\tilde{Z}_{t}\}_{t=1}^{T}\right)\right|\\
 & \leq E\left(\left|\oneb\left\{ \phi\left(\dot{Z}_{T_{0}+1},...,\dot{Z}_{T_{0}+T_{*}};\hbeta(\Zb_{\tH_{j}})\right)\leq x\right\} -\oneb\left\{ \phi\left(\dot{Z}_{T_{0}+1},...,\dot{Z}_{T_{0}+T_{*}};\hbeta(\Zb_{\tH_{R}})\right)\leq x\right\} \right|\mid\Zb,\{\tilde{Z}_{t}\}_{t=1}^{T}\right)\\
 & \overset{\text{(i)}}{\leq}E\Biggl[\oneb\biggl\{\left|\phi\left(\dot{Z}_{T_{0}+1},...,\dot{Z}_{T_{0}+T_{*}};\hbeta(\Zb_{\tH_{j}})\right)-x\right|\\
 & \qquad\qquad\qquad\leq\left|\phi\left(\dot{Z}_{T_{0}+1},...,\dot{Z}_{T_{0}+T_{*}};\hbeta(\Zb_{\tH_{R}})\right)-\phi\left(\dot{Z}_{T_{0}+1},...,\dot{Z}_{T_{0}+T_{*}};\hbeta(\Zb_{\tH_{j}})\right)\right|\biggr\}\mid\Zb,\{\tilde{Z}_{t}\}_{t=1}^{T}\Biggr]\\
 & =P\Biggl[\left|\phi\left(\dot{Z}_{T_{0}+1},...,\dot{Z}_{T_{0}+T_{*}};\hbeta(\Zb_{\tH_{j}})\right)-x\right|\\
 & \qquad\qquad\qquad\leq\left|\phi\left(\dot{Z}_{T_{0}+1},...,\dot{Z}_{T_{0}+T_{*}};\hbeta(\Zb_{\tH_{R}})\right)-\phi\left(\dot{Z}_{T_{0}+1},...,\dot{Z}_{T_{0}+T_{*}};\hbeta(\Zb_{\tH_{j}})\right)\right|\mid\Zb,\{\tilde{Z}_{t}\}_{t=1}^{T}\Biggr]\\
 & \leq P\left[\left|\phi\left(\dot{Z}_{T_{0}+1},...,\dot{Z}_{T_{0}+T_{*}};\hbeta(\Zb_{\tH_{j}})\right)-x\right|\leq2\varrho_{T}(|\tH_{j}|)\mid\Zb,\{\tilde{Z}_{t}\}_{t=1}^{T}\right]\\
 & \quad+P\left[\left|\phi\left(\dot{Z}_{T_{0}+1},...,\dot{Z}_{T_{0}+T_{*}};\hbeta(\Zb_{\tH_{R}})\right)-\phi\left(\dot{Z}_{T_{0}+1},...,\dot{Z}_{T_{0}+T_{*}};\hbeta(\Zb_{\tH_{j}})\right)\right|>2\varrho_{T}(|\tH_{j}|)\mid\Zb,\{\tilde{Z}_{t}\}_{t=1}^{T}\right]\\
 & =\Psi\left(x+2\varrho_{T}(|\tH_{j}|);\hbeta(\Zb_{\tH_{j}})\right)-\Psi\left(x-2\varrho_{T}(|\tH_{j}|);\hbeta(\Zb_{\tH_{j}})\right)\\
 & \quad+P\left[\left|\phi\left(\dot{Z}_{T_{0}+1},...,\dot{Z}_{T_{0}+T_{*}};\hbeta(\Zb_{\tH_{R}})\right)-\phi\left(\dot{Z}_{T_{0}+1},...,\dot{Z}_{T_{0}+T_{*}};\hbeta(\Zb_{\tH_{j}})\right)\right|>2\varrho_{T}(|\tH_{j}|)\mid\Zb,\{\tilde{Z}_{t}\}_{t=1}^{T}\right]\\
 & \overset{\text{(ii)}}{\leq}4\xi_{T}\varrho_{T}(m+2k)\\
 & \quad+P\left[\left|\phi\left(\dot{Z}_{T_{0}+1},...,\dot{Z}_{T_{0}+T_{*}};\hbeta(\Zb_{\tH_{R}})\right)-\phi\left(\dot{Z}_{T_{0}+1},...,\dot{Z}_{T_{0}+T_{*}};\hbeta(\Zb_{\tH_{j}})\right)\right|>2\varrho_{T}(|\tH_{j}|)\mid\Zb,\{\tilde{Z}_{t}\}_{t=1}^{T}\right],
\end{align*}
where (i) follows by the elementary inequality $|\oneb\{x\leq z\}-\oneb\{y\leq z\}|\leq\oneb\{|y-z|\leq|x-y|\}$
for any $x,y,z\in\RR$ and (ii) follows by $|\tH_{j}|\leq m+2k$ and
the definition of $\Qcal_{j}$. Since the above bound does not depend
on $x$, it holds uniformly in $x\in\RR$ on the event $\Qcal_{j}$.
Since $\Psi(\cdot,\cdot)$ is also bounded by one, we have that 
\begin{align*}
 & E\left(\sup_{x\in\RR}\left|\Psi\left(x;\hat{\beta}(\Zb_{\tH_{j}})\right)-\Psi\left(x;\hat{\beta}(\Zb_{\tH_{R}})\right)\right|\right)\\
 & \leq4\xi_{T}\varrho_{T}(m+2k)\\
 & \qquad+2P(\Qcal_{j}^{c})+P\left[\left|\phi\left(\dot{Z}_{T_{0}+1},...,\dot{Z}_{T_{0}+T_{*}};\hbeta(\Zb_{\tH_{R}})\right)-\phi\left(\dot{Z}_{T_{0}+1},...,\dot{Z}_{T_{0}+T_{*}};\hbeta(\Zb_{\tH_{j}})\right)\right|>2\varrho_{T}(|\tH_{j}|)\right]\\
 & \leq4\xi_{T}\varrho_{T}(m+2k)+2P(\Qcal_{j}^{c})\\
 & \qquad+P\left[\left|\phi\left(\dot{Z}_{T_{0}+1},...,\dot{Z}_{T_{0}+T_{*}};\hbeta(\Zb_{\tH_{R}})\right)-\phi\left(\dot{Z}_{T_{0}+1},...,\dot{Z}_{T_{0}+T_{*}};\hbeta(\Zb)\right)\right|>\varrho_{T}(|\tH_{j}|)\right]\\
 & \qquad+P\left[\left|\phi\left(\dot{Z}_{T_{0}+1},...,\dot{Z}_{T_{0}+T_{*}};\hbeta(\Zb_{\tH_{j}})\right)-\phi\left(\dot{Z}_{T_{0}+1},...,\dot{Z}_{T_{0}+T_{*}};\hbeta(\Zb)\right)\right|>\varrho_{T}(|\tH_{j}|)\right]\\
 & \overset{\text{(i)}}{\leq}4\xi_{T}\varrho_{T}(m+2k)+2P(\Qcal_{j}^{c})+2\gamma_{1,T}\overset{\text{(ii)}}{\leq}4\xi_{T}\varrho_{T}(m+2k)+2\gamma_{1,T}+2\gamma_{2,T},
\end{align*}
where (i) follows by Assumption \ref{assu: stability} and the fact
that $|\tH_{j}|=|\tH_{R}|$ and (ii) follows by $P(\Qcal_{j})\geq1-\gamma_{2,T}$.
Since the above bound holds for all $1\leq j\leq R$,  we have
\begin{equation}
E\left(R^{-1}\sum_{j=1}^{R}\sup_{x\in\RR}\left|\Psi\left(x;\hat{\beta}(\Zb_{\tH_{j}})\right)-\Psi\left(x;\hat{\beta}(\Zb_{\tH_{R}})\right)\right|\right)\leq4\xi_{T}\varrho_{T}(m+2k)+2\gamma_{1,T}+2\gamma_{2,T}.\label{eq: stability thm eq 7}
\end{equation}

\textbf{Step 3:} bound $\sup_{x\in\RR}\left|\hF(x)-\Psi\left(x;\hbeta(\Zb_{\tH_{R}})\right)\right|$.

By (\ref{eq: stability thm eq 4}), we notice that 
\begin{multline*}
\sup_{x\in\RR}\left|T\hF(x)-m\sum_{j=1}^{R}\tF_{j}(x)\right|=\sup_{x\in\RR}\Biggl|\sum_{t=T_{0}-mR+1}^{T_{0}}\oneb\left\{ \phi\left(Z_{t},...,Z_{t+T_{*}-1};\hat{\beta}(\Zb)\right)\leq x\right\} \\
+\sum_{t=T_{0}+1}^{T_{0}+T_{*}}\oneb\left\{ \phi\left(Z_{q(t)},...,Z_{q(t+T_{*}-1)};\hat{\beta}(\Zb)\right)\leq x\right\} \Biggr|\leq T_{*}+(T_{0}-mR)\leq T_{*}+R-1.
\end{multline*}

Moreover, by (\ref{eq: stability thm eq 6}) and (\ref{eq: stability thm eq 7}),
we have that 
\begin{multline*}
E\left(\sup_{x\in\RR}\Biggl|R^{-1}\sum_{j=1}^{R}\tF_{j}(x)-\Psi\left(x;\hat{\beta}(\Zb_{\tH_{R}})\right)\right)\Biggr|\\
\leq C_{2}\left(\xi_{T}\varrho_{T}(m+2k)+m^{1/2}\betamix(m_{1}-T_{*}+1)+\sqrt{m_{1}/m}+\betamix(k-T_{*}+1)+\gamma_{1,T}+\gamma_{2,T}\right)
\end{multline*}
for some universal constant $C_{2}>0$. 

The above two displays imply that
\begin{multline*}
\sup_{x\in\RR}\left|\frac{T}{mR}\hF(x)-\Psi\left(x;\hat{\beta}(\Zb_{\tH_{R}})\right)\right|\leq\frac{T_{*}+R-1}{mR}\\
+C_{2}\left(\xi_{T}\varrho_{T}(m+2k)+m^{1/2}\betamix(m_{1}-T_{*}+1)+\sqrt{m_{1}/m}+\betamix(k-T_{*}+1)+\gamma_{1,T}+\gamma_{2,T}\right).
\end{multline*}

Since $\hF(x)\in[0,1]$, we have that 
\[
\sup_{x\in\RR}|(1-T/(mR))\hF(x)|\leq\frac{T}{mR}-1\leq\frac{T-mR}{mR}\leq\frac{T_{*}+R-1}{mR}.
\]

Since $mR\geq T_{0}/2$ (due to $R<T_{0}/2$), we have $(T_{*}+R-1)/(mR)\leq2T_{*}T_{0}^{-1}+m^{-1}\lesssim\sqrt{m_{1}/m}$.
Hence, the above two displays imply that for some universal constant
$C_{3}>0$, 
\begin{multline}
E\left(\sup_{x\in\RR}\left|\hF(x)-\Psi\left(x;\hat{\beta}(\Zb_{\tH_{R}})\right)\right|\right)\\
\leq C_{3}\left(\xi_{T}\varrho_{T}(m+2k)+m^{1/2}\betamix(m_{1}-T_{*}+1)+\sqrt{m_{1}/m}+\betamix(k-T_{*}+1)+\gamma_{1,T}+\gamma_{2,T}\right).\label{eq: stability thm eq 9}
\end{multline}

\textbf{Step 4:} derive the desired result. 

Let $A_{R}$ be defined as in Step 1 with $j=R$. Following Step 1,
we can construct random elements $\{\bZ_{t}\}_{t\in A_{R}}$ (on an
enlarged probability space) such that (1) $\{\bZ_{t}\}_{t\in A_{R}}$
is independent of $\Zb_{\tH_{R}}$, (2) $\{\bZ_{t}\}_{t\in A_{R}}\overset{d}{=}\{Z_{t}\}_{t\in A_{R}}$
and (3) $P(\{\bZ_{t}\}_{t\in A_{R}}\neq\{Z_{t}\}_{t\in A_{R}})\leq\betamix(k-T_{*}+1)$.

Define $\bG(\Zb_{\tH_{R}})=\phi\left(\bZ_{T_{0}+1},...,\bZ_{T_{0}+T_{*}};\hbeta(\Zb_{\tH_{R}})\right)$.
Since $\{T_{0}+1,...,T_{0}+T_{*}\}\subset A_{R}$, we have that $(\bZ_{T_{0}+1},...,\bZ_{T_{0}+T_{*}})$
is independent of $\Zb_{\tH_{R}}$, which means that 
\[
P\left(\bG(\Zb_{\tH_{R}})\leq x\mid\hbeta(\Zb_{\tH_{R}})\right)=\Psi\left(x;\hbeta(\Zb_{\tH_{R}})\right)\qquad\forall x\in\RR.
\]

Therefore, 
\begin{equation}
\text{conditional on}\ \hbeta(\Zb_{\tH_{R}}),\ \Psi\left(\bG(\Zb_{\tH_{R}});\hbeta(\Zb_{\tH_{R}})\right)\ \text{has uniform distribution on}\ (0,1).\label{eq: stability thm eq 10}
\end{equation}

We also introduce the following notations to simplify the argument:\\
$\bG(\Zb)=\phi\left(\bZ_{T_{0}+1},...,\bZ_{T_{0}+T_{*}};\hbeta(\Zb)\right)$
and $G(\Zb)=\phi\left(Z_{T_{0}+1},...,Z_{T_{0}+T_{*}};\hbeta(\Zb)\right)$. 

For arbitrary $\alpha\in(0,1)$ and $c>0$, we observe that 
\begin{align*}
 & \left|P\left(\hF\left(G(\Zb)\right)<\alpha\mid\hbeta(\Zb_{\tH_{R}})\right)-\alpha\right|\\
 & \overset{\text{(i)}}{=}\left|E\left(\oneb\left\{ \hF(G(\Zb))<\alpha\right\} \mid\hbeta(\Zb_{\tH_{R}})\right)-E\left(\oneb\left\{ \Psi\left(\bG(\Zb_{\tH_{R}});\hbeta(\Zb_{\tH_{R}})\right)<\alpha\right\} \mid\hbeta(\Zb_{\tH_{R}})\right)\right|\\
 & \leq E\left(\left|\oneb\left\{ \hF(G(\Zb))<\alpha\right\} -\oneb\left\{ \Psi\left(\bG(\Zb_{\tH_{R}});\hbeta(\Zb_{\tH_{R}})\right)<\alpha\right\} \right|\mid\hbeta(\Zb_{\tH_{R}})\right)\\
 & \overset{\text{(ii)}}{\leq}E\left(\oneb\left\{ \left|\Psi\left(\bG(\Zb_{\tH_{R}});\hbeta(\Zb_{\tH_{R}})\right)-\alpha\right|\leq\left|\Psi\left(\bG(\Zb_{\tH_{R}});\hbeta(\Zb_{\tH_{R}})\right)-\hF(G(\Zb))\right|\right\} \mid\hbeta(\Zb_{\tH_{R}})\right)\\
 & =P\left(\left|\Psi\left(\bG(\Zb_{\tH_{R}});\hbeta(\Zb_{\tH_{R}})\right)-\alpha\right|\leq\left|\Psi\left(\bG(\Zb_{\tH_{R}});\hbeta(\Zb_{\tH_{R}})\right)-\hF(G(\Zb))\right|\mid\hbeta(\Zb_{\tH_{R}})\right)\\
 & \leq P\left(\left|\Psi\left(\bG(\Zb_{\tH_{R}});\hbeta(\Zb_{\tH_{R}})\right)-\alpha\right|\leq c\mid\hbeta(\Zb_{\tH_{R}})\right)\\
 &\qquad+P\left(\left|\Psi\left(\bG(\Zb_{\tH_{R}});\hbeta(\Zb_{\tH_{R}})\right)-\hF(G(\Zb))\right|>c\mid\hbeta(\Zb_{\tH_{R}})\right)\\
 & \overset{\text{(iii)}}{\leq}2c+P\left(\left|\Psi\left(\bG(\Zb_{\tH_{R}});\hbeta(\Zb_{\tH_{R}})\right)-\hF(G(\Zb))\right|>c\mid\hbeta(\Zb_{\tH_{R}})\right)
\end{align*}
where (i) follows by (\ref{eq: stability thm eq 10}), (ii) follows
by the elementary inequality $|\oneb\{x<z\}-\oneb\{y<z\}|\leq\oneb\{|y-z|\leq|x-y|\}$
for any $x,y,z\in\RR$ and (iii) follows by (\ref{eq: stability thm eq 10}).
Now we take expectation on both sides, obtaining 
\begin{align}
 & E\left|P\left(\hF\left(G(\Zb)\right)<\alpha\mid\hbeta(\Zb_{\tH_{R}})\right)-\alpha\right|\label{eq: stability thm eq 11}\\
 & \leq2c+P\left(\left|\Psi\left(\bG(\Zb_{\tH_{R}});\hbeta(\Zb_{\tH_{R}})\right)-\hF(G(\Zb))\right|>c\right)\nonumber \\
 & \leq2c+P\left(\left|\Psi\left(\bG(\Zb_{\tH_{R}});\hbeta(\Zb_{\tH_{R}})\right)-\hF(\bG(\Zb))\right|>c\right)+P\left(\{\bZ_{t}\}_{t\in A_{R}}\neq\{Z_{t}\}_{t\in A_{R}}\right)\nonumber \\
 & \leq2c+P\left(\left|\Psi\left(\bG(\Zb_{\tH_{R}});\hbeta(\Zb_{\tH_{R}})\right)-\hF(\bG(\Zb))\right|>c\right)+\betamix(k-T_{*}+1)\nonumber \\
 & \leq2c+c^{-1}E\left|\Psi\left(\bG(\Zb_{\tH_{R}});\hbeta(\Zb_{\tH_{R}})\right)-\hF(\bG(\Zb))\right|+\betamix(k-T_{*}+1)\nonumber 
\end{align}

Define the event 
\[
\Acal=\left\{ \sup_{x\in\RR}\left|\partial\Psi\left(x;\hbeta(\Zb_{\tH_{R}})\right)/\partial x\right|\leq\xi_{T}\right\} \bigcap\left\{ \left|\bG(\Zb)-\bG(\Zb_{\tH_{R}})\right| \leq \varrho_{T}(|\tH_{R}|) \right\} .
\]

By Assumptions \ref{assu: stability} and \ref{assu: regularity resid},
$P(\Acal^{c})\leq\gamma_{1,T}+\gamma_{2,T}$. Therefore, 
\begin{align*}
 & E\left|\Psi\left(\bG(\Zb);\hbeta(\Zb_{\tH_{R}})\right)-\Psi\left(\bG(\Zb_{\tH_{R}});\hbeta(\Zb_{\tH_{R}})\right)\right|\\
 & =E\left(\left|\Psi\left(\bG(\Zb);\hbeta(\Zb_{\tH_{R}})\right)-\Psi\left(\bG(\Zb_{\tH_{R}});\hbeta(\Zb_{\tH_{R}})\right)\right|\times\oneb_{\Acal}\right)\\
 & \qquad\qquad\qquad+E\left(\left|\Psi\left(\bG(\Zb);\hbeta(\Zb_{\tH_{R}})\right)-\Psi\left(\bG(\Zb_{\tH_{R}});\hbeta(\Zb_{\tH_{R}})\right)\right|\times\oneb_{\Acal^{c}}\right)\\
 & \overset{\text{(i)}}{\leq}E\left(\left|\Psi\left(\bG(\Zb);\hbeta(\Zb_{\tH_{R}})\right)-\Psi\left(\bG(\Zb_{\tH_{R}});\hbeta(\Zb_{\tH_{R}})\right)\right|\times\oneb_{\Acal}\right)+2P\left(\Acal^{c}\right)\\
 & \leq\xi_{T}\varrho_{T}(|\tH_{R}|)+2P\left(\Acal^{c}\right)\leq\xi_{T}\varrho_{T}(m+2k)+2\gamma_{1,T}+2\gamma_{2,T},
\end{align*}
where (i) follows by the fact that $\Psi(\cdot,\cdot)\in[0,1]$. Hence,
we have that 
\begin{align*}
 & E\left|\Psi\left(\bG(\Zb_{\tH_{R}});\hbeta(\Zb_{\tH_{R}})\right)-\hF(\bG(\Zb))\right|\\
 & \leq E\left|\Psi\left(\bG(\Zb);\hbeta(\Zb_{\tH_{R}})\right)-\hF(\bG(\Zb))\right|+E\left|\Psi\left(\bG(\Zb);\hbeta(\Zb_{\tH_{R}})\right)-\Psi\left(\bG(\Zb_{\tH_{R}});\hbeta(\Zb_{\tH_{R}})\right)\right|\\
 & \leq E\sup_{x\in\RR}\left|\Psi\left(x;\hbeta(\Zb_{\tH_{R}})\right)-\hF(x)\right|+\xi_{T}\varrho_{T}(m+2k)+2\gamma_{1,T}+2\gamma_{2,T}\\
 & \overset{\text{(i)}}{\leq}C_{4}\left(\xi_{T}\varrho_{T}(m+2k)+m^{1/2}\betamix(m_{1}-T_{*}+1)+\sqrt{m_{1}/m}+\betamix(k-T_{*}+1)+\gamma_{1,T}+\gamma_{2,T}\right)
\end{align*}
for a universal constant $C_{4}>0$, where (i) follows by (\ref{eq: stability thm eq 9}).

Now we combine (\ref{eq: stability thm eq 11}) and the above display.
We also choose 
\[
c\asymp\sqrt{\xi_{T}\varrho_{T}(m+2k)+m^{1/2}\betamix(m_{1}-T_{*}+1)+\sqrt{m_{1}/m}+\betamix(k-T_{*}+1)+\gamma_{1,T}+\gamma_{2,T}}.
\]

Then we can find a universal constant $C_{5}>0$ such that 
\begin{multline*}
E\left|P\left(\hF\left(G(\Zb)\right)<\alpha\mid\hbeta(\Zb_{\tH_{R}})\right)-\alpha\right|\\
\leq C_{5}\sqrt{\xi_{T}\varrho_{T}(m+2k)+m^{1/2}\betamix(m_{1}-T_{*}+1)+\sqrt{m_{1}/m}+\betamix(k-T_{*}+1)+\gamma_{1,T}+\gamma_{2,T}}.
\end{multline*}

Now we choose $m_{1}$ satisfying $m_{1}\asymp(\log m)^{1/D_{3}}$
and $m^{1/2}\betamix(m_{1}-T_{*}+1)\lesssim m^{-1}$. Hence, for some
universal constant $C_{6}>0$, 
\begin{multline*}
E\left|P\left(\hF\left(G(\Zb)\right)<\alpha\mid\hbeta(\Zb_{\tH_{R}})\right)-\alpha\right|\\
\leq C_{6}\sqrt{\xi_{T}\varrho_{T}(m+2k)}+C_{6}\left(m^{-1}(\log m)^{1/D_{3}}\right)^{1/4}+C_{6}\sqrt{\betamix(k-T_{*}+1)}+C_{6}\sqrt{\gamma_{1,T}}+C_{6}\sqrt{\gamma_{2,T}}.
\end{multline*}

Since $m\asymp T_{0}/R$, the desired result follows once we notice
that $\hat{p}\geq1-\alpha$ and $\hF\left(G(\Zb)\right)<\alpha$ are
the same event.

\subsubsection{Proof of Lemma \ref{lem: easy bnd emp}}

	Define $K=\left\lfloor T/m\right\rfloor $ and $\hF(x)=m^{-1/2}\sum_{r=1}^{m}\hF_{r}(x)$,
	where  $\hF_{r}(x)=K^{-1/2}\sum_{j=1}^{K}[\oneb\{W_{(j-1)m+r}\leq x\}-G(x)]$ for $1\leq r\leq m$.
	Let $\Delta(x)=\sum_{t=mK+1}^{T}[\oneb\{W_{t}\leq x\}-G(x)]$. Let $L_{T}(x)=T^{-1/2}\sum_{t=1}^{T}[\oneb\{W_{t}\leq x\}-G(x)]$. Notice
	that 
	\[
	\sqrt{T}L_{T}(x)=\sqrt{mK}\hF(x)+\Delta(x).
	\]
	
	Since $|\oneb\{W_{t}\leq x\}-G(x)|\leq1$, it follows that $\sup_{x\in\mathbb{R}}|\Delta(x)|\leq T-mK\leq m-1$
	and thus 
	\begin{equation}
	\sup_{x\in\mathbb{R}}\left|\sqrt{T}L_{T}(x)-\sqrt{mK}\hF(x)\right|\leq m-1.\label{eq: easy bnd emp eq 3}
	\end{equation}
	
	By Berbee's coupling (Lemma \ref{lem: Berbee}), we
	can enlarge the probability space and define random variables $\{\bar{W}_{t}\}_{t=1}^{mK}$
	such that (1) $\bar{W}_{t}\overset{d}{=}W_{t}$ for all $1\leq t\leq mT$,
	(2) $\bar{W}_{(j-1)m+r}$ is independent across $1\leq j\leq K$ for
	all $r$ and (3) $P(\bigcup_{t=1}^{mK}\{\bar{W}_{t}\neq W_{t}\})\leq mK\betamix(m)\leq T\betamix(m)$. 
	
	We now define $\bF(x)=m^{-1/2}\sum_{r=1}^{m}\bF_{r}(x)$, where $\bF_{r}(x)=K^{-1/2}\sum_{j=1}^{K}[\oneb\{\bar{W}_{(j-1)m+r}\leq x\}-G(x)]$. 
	
	Since $\{\bar{W}_{(j-1)m+r}\}_{j=1}^{K}$ is independent, it follows
	by Dvoretzky-Kiefer-Wolfowitz inequality that for any $z>0$, 
	\[
	P\left(\sup_{x\in\mathbb{R}}|\bF_{r}(x)|>z\right)\leq2\exp(-2z^{2}).
	\]
	    Therefore, we have that 
	\[
	E\left(\sup_{x\in\mathbb{R}}|\bF_{r}(x)|\right)=\int_{0}^{\infty}P\left(\sup_{x\in\mathbb{R}}|\bF_{r}(x)|>z\right) dz \leq 2 \int_{0}^{\infty} \exp(-2z^{2}) dz=\sqrt{\pi/2}.
	\]
	Hence, we have that 
	\[
	E\left(\sup_{x\in\mathbb{R}}|\bF(x)|\right) \leq m^{-1/2}\sum_{r=1}^{m} E\left(\sup_{x\in\mathbb{R}}|\bF_{r}(x)|\right) \leq \sqrt{\pi m/2}.
	\]

	Since $\bF(\cdot)=\hF(\cdot)$ with probability at least $1-T\betamix(m)$, we have that 
	\[
	E\left(\sup_{x\in\RR}|\bF(x)-\hF(x)|\right)\leq 2T\betamix(m).
	\]
	Therefore, $E\left(\sup_{x\in\mathbb{R}}|\hF(x)|\right)\leq  2T\betamix(m)+ \sqrt{\pi m/2}$.

	By (\ref{eq: easy bnd emp eq 3}) and $mK/T\leq1$, we have that 
	\[
	E\left(\sup_{x\in\mathbb{R}}|L_{T}(x)|\right)\leq  2T\betamix(m)+ \sqrt{\pi m/2}+(m-1)/\sqrt{T}.
	\]
	
	The proof is complete.

\subsection{Proof of Theorem \ref{thm:approximate_validity CS}}

Recall   $\Zb^{*}=(Z_1^*,\dots,Z_{T}^*)'$ with $ Z_t^*=\left(Y^N_{1t},Y^N_{2t},\dots,Y^N_{J+1t},X'_{1t},\dots,X'_{J+1t}\right)'$ for $1\leq t\leq T_0+T_*$. Let 
$$
\hat{p}_{\Zb^*}=1- \hat{F}\left( S(\hat{u}(\Zb^*))\right),
$$
where $\hat{F}\left( x;\Zb^*\right)=\frac{1}{|\Pi|}\sum_{\pi\in\Pi}\mathbf{1}\left\{ S\left(\hat{u}_\pi(\Zb^*) \right)<  x\right\}$ and $\hat{u}(\Zb^*)=Y^N-\hat{P}^N$ with $\hat{P}^N$ computed using $\Zb^*$.
By the proof of Theorem \ref{thm:approximate_validity}, we have
\[
|P\left(\hat{p}_{\Zb^*}\leq\alpha\right)- \alpha| \leq C ( \tilde \delta_T + \delta_T + \sqrt{\delta_T} + \gamma_T),
\]
where $\tilde \delta_T = (T_*/T_0)^{1/4}(\log T)$ and the constant $C$ depends on $T_*$, $M$ and $D$,  but not on $T$.
It follows that 
\[
|P\left(\hat{p}_{\Zb^*}> \alpha\right)- (1-\alpha) | \leq C ( \tilde \delta_T + \delta_T + \sqrt{\delta_T} + \gamma_T).
\]
The desired result follows by  observing that $\theta_{t}\in \mathcal{C}_{1-\alpha}(t)$ is the same event as $\hat{p}_{\Zb^*}> \alpha$; this is simply because $\Zb(\theta^0)=\Zb^*$, where $\Zb(\theta^0)$ is defined in Section \ref{subsec:hypotheses}.

\subsection{Proof of Theorem \ref{thm:finite_sample}}

Let $\{S^{(j)}(\hat u)\}_{j=1}^{n}$ denoted the non-decreasing rearrangement of $\{S(\hat u_\pi): \pi \in \Pi\}$, where $n= |\Pi|$, which we refer to as randomization quantiles. The $p$-value is 
$$
\hat p = \frac{1}{n} \sum_{\pi \in \Pi} \mathbf{1} ( S(\hat u_\pi) \geq S(\hat u)).
$$
Note that $$
\mathbf{1} (\hat p  \leq \alpha) = \mathbf{1}( S(\hat u ) > S^{(k)}( \hat u) ),
$$
where $k = k(\alpha) =  n - \lfloor n\alpha\rfloor = \lceil n(1-\alpha)\rceil$. 

The proof proceeds in three steps. First, we show that exchangeability of the data implies exchangeability of the residuals. Second, we show that exchangeability of the residuals implies that $P(\hat{p}\le \alpha)\le \alpha$. Third, we show that if there are no ties, $\alpha-1/n\le P(\hat{p}\le \alpha)$. The proof follows from standard arguments \citep[e.g.,][]{hoeffding1952large,romano1990behavior,chernozhukov2018exact,lei2017distributionfree}.

\textbf{Step 1:} By the  iid or exchangeability property of data, we have that 
$$
\underset{\{\hat{u}_{t}\}_{t=1}^{T}}{\underbrace{\{g(Z_{t},\hat{\beta}(\{Z_{t}\}_{t=1}^{T}))\}_{t=1}^{T}}}\overset{d}{=}\{g(Z_{\pi(t)},\hat{\beta}(\{Z_{\pi(t)}\}_{t=1}^{T})\}_{t=1}^{T}.
$$
Since $\hat{\beta}(\{Z_{\pi(t)}\}_{t=1}^{T}) $ does not depend on $\pi $, we have 
$$
\{g(Z_{\pi(t)},\hat{\beta}(\{Z_{\pi(t)}\}_{t=1}^{T})\}_{t=1}^{T}=\underset{\{\hat{u}_{\pi(t)}\}_{t=1}^{T}}{\underbrace{\{g(Z_{\pi(t)},\hat{\beta}(\{Z_{t}\}_{t=1}^{T})\}_{t=1}^{T}}}.
$$
Therefore, $\{\hat{u}_{\pi(t)}\}_{t=1}^{T} \overset{d}{=}\{\hat{u}_{t}\}_{t=1}^{T} $.

\textbf{Step 2:} Note that $\Pi_{\text{all}}$ and $\Pi_{\to}$ form groups in the sense that $\Pi \pi = \Pi$ for all $\pi \in \Pi$. Therefore, the randomization quantiles are invariant, 
$$ S^{(k(\alpha))}(\hat{u}_\pi) = S^{(k(\alpha))}(\hat{u}), \text{ for all } \pi \in \Pi.$$
Therefore,
\begin{eqnarray*}
\sum_{\pi \in \Pi} \mathbf{1}( S(\hat{u}_\pi ) > S^{(k(\alpha))}(\hat{u}_\pi) ) =\sum_{\pi \in \Pi} \mathbf{1}( S(\hat{u}_\pi ) > S^{(k(\alpha))}(\hat{u})) \leq \alpha n.
\end{eqnarray*}
Since $\mathbf{1}( S(\hat{u}) > S^{(k(\alpha))}(\hat{u}) )$ is equal in distribution to  $\mathbf{1}( S(\hat{u}_\pi) > S^{(k(\alpha))}(\hat{u}_\pi) )$  for any $\pi \in \Pi$ by exchangeability (Step 1), we have that
\begin{eqnarray*}
\alpha \geq E \sum_{\pi \in \Pi} \mathbf{1}( S(\hat{u}_\pi) > S^{(k(\alpha))}(\hat{u}_\pi) )/n = E  \mathbf{1}( S(\hat{u}) > S^{(k(\alpha))}(\hat{u}) ) = E \mathbf{1} (\hat p \leq \alpha).
\end{eqnarray*}

\textbf{Step 3:} By continuity of the distribution of $\left \{S(\hat{u}_\pi) \right \}_{\pi\in \Pi}$, there are no ties with probability one. 
Therefore,  
\begin{eqnarray*}
\sum_{\pi\in \Pi} \mathbf{1}( S(\hat{u}_\pi) \le S^{(k(\alpha))}(\hat{u}))= k(\alpha)\le n(1-\alpha)+1 \label{eq:upper_bound}
\end{eqnarray*}
Because
\begin{eqnarray*}
\sum_{\pi\in \Pi} \mathbf{1}( S(\hat{u}_\pi) \le  S^{(k(\alpha))}(\hat{u}))+\sum_{\pi\in \Pi} \mathbf{1}( S(\hat{u}_\pi) >  S^{(k(\alpha))}(\hat{u}))=n,\label{eq:sum_n}
\end{eqnarray*}
we have
\begin{eqnarray*}
\sum_{\pi\in \Pi} \mathbf{1}( S(\hat{u}_\pi) >  S^{(k(\alpha))}(\hat{u}))\ge n\alpha-1.
\end{eqnarray*}
The result now follows by similar arguments as in Step 2.

\subsection{Proof of Lemma \ref{lem: constrained LS}}

Let $X_{jt} $ denote the $(j,t)$ entry of the matrix $X\in\mathbb{R}^{T\times J} $. We assume the following conditions hold: (1) $E(u_{t}X_{jt})=0 $ for $1\leq j\leq J$. (2) there exist constants $c_{1},c_{2}>0$ such that  $ E|X_{jt}u_{t}|^{2}\geq c_{1} $ and  $E|X_{jt}u_{t}|^{3}\leq c_{2}$ for any $1\leq j\leq J$ and $1\leq t\leq T$;  (3) for each $1\leq j\leq J$, the sequence $\{X_{jt}u_{t}\}_{t=1}^{T}$
is $\beta$-mixing and the $\beta$-mixing coefficient satisfies that
$\beta(t)\leq a_{1}\exp(-a_{2}t^{\tau})$, where $a_{1},a_{2},\tau>0$
are constants. (4) there exists a constant $c_{3}>0$ such that $\max_{1\leq j\leq J}\sum_{t=1}^{T}X_{jt}^{2}u_{t}^{2}\leq c_{3}^{2}T$ with probability $1-o(1)$.  (5) $\log J=o(T^{4\tau/(3\tau+4)})$ and $w \in \mathcal{W}$. (6) There exists a sequence $\ell_{T}>0$ such that $(X_{t}'\delta)^{2}\leq \ell_T \|X \delta \|_{2}^{2}/T, \text{ for all } w+ \delta \in \mathcal{W}$ with probability $1-o(1)$
for $T_{0}+1\leq t\leq T$  and  (7) $\ell_T B_T \to 0$
for $B_T = M[\log(T\vee J)]^{(1+\tau)/(2\tau)}T^{-1/2}$.

Then we claim that under conditions (1)-(5)  listed above:
\begin{itemize}
\item[(1)] There exist a constant $M>0$ depending only on $K$ and the constants
listed above such that with probability $1-o(1)$
\[
\|X(\hat{w}-w)\|^2_{2}/ T \leq B_T = M[\log(T\vee J)]^{(1+\tau)/(2\tau)}T^{-1/2} 
\]
\item[(2)] Moreover, if (6) and (7) also hold, then
\[
\frac{1}{T}\sum_{t=1}^{T}\left(\hat{P}_{t}^{N}-P_{t}^{N}\right)^{2}=o_{P}(1) \text{ and }
\hat{P}_{t}^{N}-P_{t}^{N}=o_{P}(1), \text{ for any }  T_{0}+1\leq t\leq T. 
\]
\end{itemize}

The following result is useful in deriving the properties of the $\ell_{1}$-constrained
estimator. 
\begin{lem}
\label{lem: concentration lasso new} Suppose that (1)  $E(u_{t}X_{jt})=0 $ for $1\leq j\leq J$. (2) $\max_{1\leq j\leq J,1\leq t\leq T}E|X_{jt}u_{t}|^{3}\leq K_{1}$
for a constant $K_{1}>0$. (3) $\min_{1\leq j\leq J,1\leq t\leq T}E|X_{jt}u_{t}|^{2}\geq K_{2}$
for a constant $K_{2}>0$.  (4) For each $1\leq j\leq J$, $\{X_{jt}u_{t}\}_{t=1}^{T}$ is $\beta$-mixing
and the $\beta$-mixing coefficients satisfy $\beta(s)\leq D_{1}\exp\left(-D_{2}s^{\tau}\right)$
for some constants $D_{1},D_{2},\tau>0$.  Assume $\log J=o(T^{4\tau/(3\tau+4)})$. Then there exists a constant
$\kappa>0$ depending only on $K_{1},K_{2},D_{1},D_{2},\tau$ such
that with probability $1-o(1)$
\[
\max_{1\leq j\leq J}\left|\sum_{t=1}^{T}X_{jt}u_{t}\right|<\kappa[\log(T\vee J)]^{(1+\tau)/(2\tau)}\max_{1\leq j\leq J}\sqrt{\sum_{t=1}^{T}X_{jt}^{2}u_{t}^{2}}
\]
\end{lem}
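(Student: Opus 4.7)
The conclusion is a self-normalized moderate deviation bound, uniform over $j \le J$, for weighted sums of an exponentially $\beta$-mixing sequence. My plan is to split the task into (i) a uniform upper bound $\max_{j\le J}|S_j| \lesssim L\sqrt{T}$ on the numerator, where $S_j := \sum_{t=1}^T X_{jt}u_t$ and $L := [\log(T\vee J)]^{(1+\tau)/(2\tau)}$, and (ii) a lower bound $\max_{j\le J}V_j \gtrsim \sqrt{T}$ on the denominator, where $V_j^2 := \sum_t X_{jt}^2 u_t^2$. Dividing (i) by (ii) yields the claim with a redefined constant~$\kappa$.

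For step (i), the main tool is a Bernstein-type inequality for exponentially $\beta$-mixing sequences of Merlev\`ede--Peligrad--Rio type, which for a stationary, centered, bounded ($|Y_t|\le M$) sequence with $\beta(n) \le \exp(-D_2 n^\tau)$ yields
\[
P(|S_n| > x) \le \exp\!\left(-\frac{c\,x^{2}}{n v^{2}+xM(\log n)^{(1+\tau)/\tau}}\right).
\]
Because $Y_{jt} := X_{jt}u_t$ is assumed only to have a bounded third moment, I precede the Bernstein step with truncation at a carefully chosen level $M = M_{T,J}$: writing $Y_{jt} = Y_{jt}\mathbf{1}\{|Y_{jt}| \le M\} + Y_{jt}\mathbf{1}\{|Y_{jt}| > M\}$, Markov's inequality with $E|Y_{jt}|^3 \le K_1$ gives a per-$j$ tail $TK_1/M^3$ and a truncation bias of order $TK_1/M^2$ absorbed into the centered sum. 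Plugging $x = \kappa L\sqrt{T}$ into the Bernstein bound and demanding that the linear-in-$M$ term in the exponent be subdominant to the Gaussian term yields the constraint $M(\log T)^{(1+\tau)/\tau} \ll \sqrt{T}/L$; combined with the requirement $M^3 \gg T^2 J$ needed to make the truncation tail summable across $j$, this pair of constraints is exactly the quantitative content of the hypothesis $\log J = o(T^{4\tau/(3\tau+4)})$. A union bound over $j \le J$ then delivers $\max_j |S_j| \le \kappa L \sqrt{T}$ with probability $1-o(1)$.

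For step (ii), it suffices to bound a single $V_{j_0}$ from below, since $\max_j V_j \ge V_{j_0}$. Take $j_0 = 1$ and set $W_t := X_{1t}^2 u_t^2 \mathbf{1}\{|X_{1t}u_t| \le M_*\}$ with $M_* = T^{1/4-\eta}$. From $E|X_{1t}u_t|^2 \ge K_2$ and $E[X_{1t}^2 u_t^2\mathbf{1}\{|X_{1t}u_t|>M_*\}] \le K_1/M_*$, one obtains $E W_t \ge K_2/2$ for $T$ large; since $0 \le W_t \le M_*^2$, a covariance inequality of Davydov/Rio type under exponential $\beta$-mixing yields $\mathrm{Var}(\sum_t W_t) \lesssim T M_*^4 = o(T^2)$, so Chebyshev's inequality gives $\sum_t W_t \ge TK_2/4$ with probability $1-o(1)$, and $V_1^2 \ge \sum_t W_t$. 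Combining with step (i) produces the conclusion. The main technical obstacle is precisely the calibration of $M$ in step~(i): with only three moments available, $M$ is pinched from above by the Bernstein linear term and from below by the Markov tail estimate, and reconciling these two bounds, together with tracking the polylogarithmic factors inherent in the exponentially-$\beta$-mixing Bernstein inequality, is what produces both the power $(1+\tau)/(2\tau)$ of the logarithm in the conclusion and the exponent $4\tau/(3\tau+4)$ in the growth restriction on $J$.
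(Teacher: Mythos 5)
Your decomposition into a deterministic-scale bound $\max_j|S_j|\lesssim L\sqrt{T}$ (with $L=[\log(T\vee J)]^{(1+\tau)/(2\tau)}$) and a lower bound $\max_j V_j\gtrsim\sqrt{T}$ does not work under the lemma's assumptions, and the failure is in step (i). With only a third moment on $X_{jt}u_t$ and $\log J$ allowed to grow like a power of $T$, no bound of the form $\max_j|S_j|\le \kappa L\sqrt{T}$ holds with probability $1-o(1)$: take $X_{jt}u_t$ i.i.d.\ across $t$ and $j$ (so the mixing condition is trivially satisfied for every $\tau$) with a symmetric law whose tail is of order $y^{-3-\epsilon}$. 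Then $P\bigl(\max_{j,t}|X_{jt}u_t|\le C\sqrt{T}L\bigr)\approx\exp\bigl(-cJT(\sqrt{T}L)^{-3}\bigr)\to 0$ as soon as $J\gg T^{1/2}\,\mathrm{polylog}(T)$, and on the complementary event the column containing the huge summand has $|S_j|\gg L\sqrt{T}$. The lemma survives precisely because it is self-normalized: the offending summand inflates $V_j$ along with $S_j$. Your own calibration reveals the same obstruction: the Markov/truncation requirement forces $M\gtrsim (JT)^{1/3}$, while the linear term in the mixing Bernstein inequality forces $M\lesssim \sqrt{T}/\mathrm{polylog}(T)$; these are compatible only when $J\lesssim T^{1/2}$ up to logs, which is a polynomial restriction, not the sub-exponential condition $\log J=o(T^{4\tau/(3\tau+4)})$. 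So the claim that your two constraints ``are exactly the quantitative content'' of that hypothesis is incorrect. (Step (ii) is fine but cannot repair this; note also that in the paper the $\sqrt{T}$ scale for the denominator is imposed as a separate high-level condition in Lemma \ref{lem: constrained LS}, not derived from this lemma.)

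The paper's proof keeps the self-normalization intact throughout: it splits $\{1,\dots,T\}$ into $m\asymp[\log(JT)]^{1/\tau}$ interleaved subsequences, applies Berbee's coupling within each subsequence to obtain i.i.d.\ blocks at the cost of $k\beta(m)$ with $k=\lfloor T/m\rfloor$, invokes the self-normalized moderate deviation theorem (Theorem 7.4 of \citet{pena2008self}), which needs only third moments and is Gaussian-accurate for $x\le C_0k^{2/3}$, and then aggregates the $m$ blockwise self-normalized ratios via Cauchy--Schwarz before a union bound over $j$ with $x\asymp\sqrt{m\log(Jm^{3/2})}$. The exponent $(1+\tau)/(2\tau)$ comes from $\sqrt{m\log(JT)}$ with $m\asymp[\log(JT)]^{1/\tau}$, and the growth condition $\log J=o(T^{4\tau/(3\tau+4)})$ is exactly what keeps $x$ inside the moderate-deviation range $C_0k^{2/3}\sqrt{m}$ --- not a truncation-versus-Bernstein trade-off. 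If you want to retain a Bernstein-type route, you would need substantially stronger tail assumptions (e.g., sub-exponential $X_{jt}u_t$) or a polynomial bound on $J$, either of which changes the statement being proved.
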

\begin{proof}
Define $W_{j,t}=X_{jt}u_{t}$. Let $m=\left\lfloor [4D_{2}^{-1}\log(JT)]^{1/\tau}\right\rfloor $
and $k=\left\lfloor T/m\right\rfloor $. For simplicity, we assume
for now that $T/m$ is an integer. Define 
\[
H_{i}=\left\{ i,m+i,2m+i,...,(k-1)m+i\right\} \qquad\forall1\leq i\leq m.
\]

By Berbee's coupling (Lemma \ref{lem: Berbee}), there
exist a sequence of random variables $\{\tilde{W}_{j,t}\}_{t\in H_{i}}$ such that (1) $\{\tilde{W}_{j,t}\}_{t\in H_{i}}$
is independent across $t$, (2) $\tilde{W}_{j,t}$ has the same distribution
as $W_{j,t}$ for $t\in H_{i}$ and (3) $P\left(\bigcup_{t\in H_{i}}\{\tilde{W}_{j,t}\neq W_{j,t}\}\right)\leq k\beta(m)$. 

By assumption, $\max_{j,t}E|X_{jt}u_{t}|^{3}$ is uniformly bounded
above and $\min_{j,t}E|X_{jt}u_{t}|^{2}$ is uniformly bounded away
from zero. It follows, by Theorem 7.4 of \citet{pena2008self}, that
there exist constants $C_{0},C_{1}>0$ depending on $K_{1}$ and $K_{2}$
such that for any $0\leq x\leq C_{0}k^{1/6}$,
\[
P\left(\left|\frac{\sum_{t\in H_{i}}\tilde{W}_{j,t}}{\sqrt{\sum_{t\in H_{i}}\tilde{W}_{j,t}^{2}}}\right|>x\right)\leq C_{1}\left(1-\Phi(x)\right),
\]
where $\Phi(\cdot)$ is the cdf of $N(0,1)$. Therefore, for any $0\leq x\leq C_{0}k^{1/6}$, 
\begin{multline}
P\left(\left|\frac{\sum_{t\in H_{i}}W_{j,t}}{\sqrt{\sum_{t\in H_{i}}W_{j,t}^{2}}}\right|>x\right)\leq P\left(\left|\frac{\sum_{t\in H_{i}}\tilde{W}_{j,t}}{\sqrt{\sum_{t\in H_{i}}\tilde{W}_{j,t}^{2}}}\right|>x\right)+P\left(\bigcup_{t\in H_{i}}\{\tilde{W}_{j,t}\neq W_{j,t}\}\right)\\
\leq C_{1}\left(1-\Phi(x)\right)+k\beta(m).\label{eq: lasso concentration eq 1}
\end{multline}

The Cauchy-Schwarz inequality implies 
\begin{align*}
\left|\sum_{t=1}^{T}W_{j,t}\right|\leq\sum_{i=1}^{m}\left|\frac{\sum_{t\in H_{i}}W_{j,t}}{\sqrt{\sum_{t\in H_{i}}W_{j,t}^{2}}}\right|\sqrt{\sum_{t\in H_{i}}W_{j,t}^{2}} & \leq\sqrt{\sum_{i=1}^{m}\left(\frac{\sum_{t\in H_{i}}W_{j,t}}{\sqrt{\sum_{t\in H_{i}}W_{j,t}^{2}}}\right)^{2}}\times\sqrt{\sum_{i=1}^{m}\sum_{t\in H_{i}}W_{j,t}^{2}}\\
 & =\sqrt{\sum_{i=1}^{m}\left(\frac{\sum_{t\in H_{i}}W_{j,t}}{\sqrt{\sum_{t\in H_{i}}W_{j,t}^{2}}}\right)^{2}}\times\sqrt{\sum_{t=1}^{T}W_{j,t}^{2}}.
\end{align*}

Hence, 
\[
\left|\frac{\sum_{t=1}^{T}W_{j,t}}{\sqrt{\sum_{t=1}^{T}W_{j,t}^{2}}}\right|\leq\sqrt{\sum_{i=1}^{m}\left(\frac{\sum_{t\in H_{i}}W_{j,t}}{\sqrt{\sum_{t\in H_{i}}W_{j,t}^{2}}}\right)^{2}}.
\]

It follows that for any $0\leq x\leq C_{0}k^{1/6}\sqrt{m},$
\begin{align*}
& P\left(\left|\frac{\sum_{t=1}^{T}W_{j,t}}{\sqrt{\sum_{t=1}^{T}W_{j,t}^{2}}}\right|>x\right)  \leq P\left(\sqrt{\sum_{i=1}^{m}\left(\frac{\sum_{t\in H_{i}}W_{j,t}}{\sqrt{\sum_{t\in H_{i}}W_{j,t}^{2}}}\right)^{2}}>x\right)\\
 & =P\left(\sum_{i=1}^{m}\left(\frac{\sum_{t\in H_{i}}W_{j,t}}{\sqrt{\sum_{t\in H_{i}}W_{j,t}^{2}}}\right)^{2}>x^{2}\right)  \leq\sum_{i=1}^{m}P\left(\left|\frac{\sum_{t\in H_{i}}W_{j,t}}{\sqrt{\sum_{t\in H_{i}}W_{j,t}^{2}}}\right|>\frac{x}{\sqrt{m}}\right)\\
 & \overset{{\rm(i)}}{\leq}m\left[C_{1}\left(1-\Phi(x/\sqrt{m})\right)+k\beta(m)\right] \overset{{\rm (ii)}}{\leq}C_{1}m\sqrt{\frac{m}{2\pi}}x^{-1}\exp\left(-\frac{x^{2}}{2m}\right)+D_{1}km\exp\left(-D_{2}m^{\tau}\right)\\
 & <C_{1}m^{3/2}x^{-1}\exp\left(-\frac{x^{2}}{2m}\right)+D_{1}T\exp\left(-D_{2}m^{\tau}\right)
\end{align*}
where (i) follows by (\ref{eq: lasso concentration eq 1}) and (ii)
follows by the inequality $1-\Phi(a)\leq a^{-1}\phi(a)$ (with $\phi$
being the pdf of $N(0,1)$) and $\beta(m)\leq D_{1}\exp(-D_{2}m^{\tau})$. 

By the union bound, it follows that for any $0\leq x\leq C_{0}k^{1/6}\sqrt{m},$
\[
P\left(\max_{1\leq j\leq J}\left|\frac{\sum_{t=1}^{T}W_{j,t}}{\sqrt{\sum_{t=1}^{T}W_{j,t}^{2}}}\right|>x\right)\leq C_{1}Jm^{3/2}x^{-1}\exp\left(-\frac{x^{2}}{2m}\right)+D_{1}JT\exp\left(-D_{2}m^{\tau}\right).
\]

Now we choose $x=2\sqrt{m\log(Jm^{3/2})}$. Since $\log J=o(T^{4\tau/(3\tau+4)})$
and $k\asymp T/m$, it can be very easily verified that $x\ll C_{0}k^{1/6}\sqrt{m}$
and the two terms on the right-hand side of the above display tend
to zero. The desired result follows. 

If $T/k$ is not an integer, then we simply add one observation from
$\{W_{j,t}\}_{t=km+1}^{T}$ to each of $H_{i}$ for $1\leq i\leq m$.
The bound in (\ref{eq: lasso concentration eq 1}) holds with $C_{1}$
large enough. The proof is complete. 
\end{proof}

Now we are ready to prove Lemma \ref{lem: constrained LS}.

\begin{proof}[Proof of Lemma \ref{lem: constrained LS}]
Let $\Delta=\hat{w}-w$. Since $\|w\|_{1}\leq K$, we have $\|Y-X\hat{w}\|_{2}^{2}\leq\|Y-Xw\|_{2}^{2}$.
Notice that $Y-Xw=u$ and $Y-X\hat{w}=u-X\Delta$. Therefore, $\|u-X\Delta\|_{2}^{2}\leq\|u\|_{2}^{2}$,
which means $\|X\Delta\|_{2}^{2}\leq2u'X\Delta$. Now we observe that
\begin{equation}
\|X\Delta\|_{2}^{2}\leq2u'X\Delta\overset{{\rm (i)}}{\leq}2\|Xu\|_{\infty}\|\Delta\|_{1}\overset{{\rm (ii)}}{\leq}4K\|Xu\|_{\infty},\label{eq: classo eq 1}
\end{equation}
where (i) follows by H\"older's inequality and (ii) follows by
$\|\Delta\|_{1}\leq2K$ (since $\|\hat{w}\|_{1}\leq K$ and $\|w\|_{1}\leq K$).
By Lemma \ref{lem: concentration lasso new}, there exists a constant
$\kappa>0$ such that 
\[
P\left(\max_{1\leq j\leq J}\left|\sum_{t=1}^{T}X_{jt}u_{t}\right|>\kappa[\log(T\vee J)]^{(1+\tau)/(2\tau)}\max_{1\leq j\leq J}\sqrt{\sum_{t=1}^{T}X_{jt}^{2}u_{t}^{2}}\right)=o(1).
\]

Since $P\left(\max_{1\leq j\leq J}\sum_{t=1}^{T}X_{jt}^{2}u_{t}^{2}\leq c_{3}^{2}T\right)\rightarrow1$,
it follows that 
\begin{equation}
P\left(\max_{1\leq j\leq J}\left|\sum_{t=1}^{T}X_{jt}u_{t}\right|>\kappa c_{3}[\log(T\vee J)]^{(1+\tau)/(2\tau)}\sqrt{T}\right)=o(1).\label{eq: classo eq 2}
\end{equation}

Part (1) follows by combining (\ref{eq: classo eq 1}) and (\ref{eq: classo eq 2}).
Part (2) follows by part (1) and $\ell_{T}B_{T}=o(1)$. 
\end{proof}

\subsection{Proof of Lemma \ref{thm: low level pure factor}} We borrow results and notations from \citet{bai2003inferential}. Following standard notation, we use $i$ instead of $j$ to denote units.
Here are the regularity conditions  from \citet{bai2003inferential}.

Suppose that there exists a constant $D_{0}>0$ the following conditions
hold:\\
 (1) $\max_{1\leq t\leq T}E\|F_{t}\|_{2}^{4}\leq D_{0}$, $\max_{1\leq j\leq N}\|\lambda_{j}\|_{2}^{4}\leq D_{0}$,
$\max_{jt}E|u_{jt}|^{8}\leq D_{0}$ and $E(u_{jt})=0$.\\
  (2) $\max_{s}N^{-1}\sum_{t=1}^{T}|\sum_{i=1}^{N}E(u_{is}u_{it})|\leq D_{0}$ and $\max_{i}\sum_{j=1}^{N}\max_{1\leq t\leq T}|E(u_{it}u_{jt})|\leq D_{0}$.\\
(3) $(NT)^{-1}\sum_{s=1}^{T}\sum_{t=1}^{T}\sum_{i=1}^{N}\sum_{j=1}^{N}|E(u_{it}u_{js})|\leq D_{0}$
and $\max_{s,t}E|N^{-1/2}\sum_{i=1}^{N}[u_{is}u_{it}-E(u_{is}u_{it})]|^{4}\leq D_{0}$.\\
(4) $N^{-1}\sum_{i=1}^{N}E\| T^{-1/2}\sum_{t=1}^{T}F_{t}u_{it}\|_{2}^{2}\leq D_{0}$.\\
(5) $\max_{t}E \| (NT)^{-1/2}\sum_{s=1}^{T}\sum_{i=1}^{N}F_{s}[u_{is}u_{it}-E(u_{is}u_{it})]\|_{2}^{2}\leq D_{0}$.\\
(6) $E\| (NT)^{-1/2}\sum_{t=1}^{T}\sum_{i=1}^{N}F_{t}\lambda_{i}'u_{it}\|_{2}^{2}\leq D_{0}.$

Moreover, we assume the following conditions:
(7) for each $t$, $N^{-1/2}\sum_{i=1}^{N}\lambda_{i}u_{it}\rightarrow^{d}N(0,\Gamma_{t})$
as $N\rightarrow\infty$, where $\Gamma_{t}=\lim_{N\rightarrow\infty}N^{-1}\sum_{i=1}^{N}\sum_{j=1}^{N}\lambda_{i}\lambda_{j}'E(u_{it}u_{jt})$.
(8) for each $i$, $T^{-1/2}\sum_{t=1}^{T}F_{t}u_{it}\rightarrow^{d}N(0,\Phi_{i})$
as $T\rightarrow\infty$, where $\Phi_{i}=\lim_{T\rightarrow\infty}T^{-1}\sum_{s=1}^{T}\sum_{t=1}^{T}E(F_{t}F_{s}'u_{is}u_{it})$.
(9) $N^{-1}\sum_{i=1}^{N}\lambda_{i}\lambda_{i}'\rightarrow\Sigma_{\Lambda}$
and $T^{-1}\sum_{t=1}^{T}F_{t}F_{t}'=\Sigma_{F}+o_{P}(1)$ for some
$k\times k$ positive definite matrices $\Sigma_{\Lambda}$ and $\Sigma_{F}$
satisfying that $\Sigma_{\Lambda}\Sigma_{F}$ has distinct eigenvalues.

What follows below is the proof of the lemma.  We recall some notations used by \citet{bai2003inferential}. Let $F=(F_1,\ldots,F_T)'\in \RR^{T\times k}$ and $\Lambda=(\lambda_{1},\ldots,\lambda_{N})'\in\RR^{N\times k}$. Define
$H=(\Lambda'\Lambda/N)(F'\tilde{F}/T)V_{NT}^{-1}$, where $V_{NT}\in\mathbb{R}^{k\times k}$
is the diagonal matrix with the largest $k$ eigenvalues of $Y^{N}(Y^{N})'/(NT)$
on the diagonal and $\tilde{F}$ is the normalized $F$, namely $\tilde{F}'\tilde{F}/T=I_{k}$.

We start with the first equation in the proof of Theorem 3 in \citet{bai2003inferential}
(on page 166):
\begin{equation}
\hat{\lambda}_{1}'\hat{F}_{t}-\lambda_{1}'F_{t}=\left(\hat{F}_{t}-H'F_{t}\right)'H^{-1}\lambda_{1}+\hat{F}_{t}'(\hat{\lambda}_{1}-H^{-1}\lambda_{1}).\label{eq: pure factor eq 1}
\end{equation}

The rest of the proof proceeds in two steps. We first recall some
results from \citet{bai2003inferential} and then derive the desired
result.

\textbf{Step 1:} recall useful results from \citet{bai2003inferential}. By Lemma A.1 of \citet{bai2003inferential}, 
\begin{equation}
\sum_{t=1}^{T}\|\hat{F}_{t}-H'F_{t}\|_{2}^{2}=O_{P}(T/\delta_{NT}^{2}),\label{eq: pure factor eq 2}
\end{equation}
where $\delta_{NT}=\min\{\sqrt{N},\sqrt{T}\}$. By definition, $\hat{F}'\hat{F}/T=I_{k}$,
which means 
\begin{equation}
\sum_{t=1}^{T}\|\hat{F}_{t}\|_{2}^{2}=\sum_{t=1}^{T}{\rm trace}(\hat{F}_{t}\hat{F}_{t}')={\rm trace}(\hat{F}'\hat{F})=kT.\label{eq: pure factor eq 3}
\end{equation}

By Theorem 2 of \citet{bai2003inferential}, 
\begin{equation}
\hat{\lambda}_{1}=H^{-1}\lambda_{1}+O_{P}(\max\{T^{-1/2},N^{-1}\} ).\label{eq: pure factor eq 4}
\end{equation}

By the proof of part (i) in Theorem 2 of \citet{bai2003inferential},
$H$ converges in probability to a nonsingular matrix; see page 166
of \citet{bai2003inferential}. Hence, $\|H^{-1}\|=O_{P}(1)$. By
assumption, $\|\lambda_{1}\|_{2}=O(1)$. Hence, 
\begin{equation}
\|H^{-1}\lambda_{1}\|_{2}=O_{P}(1).\label{eq: pure factor eq 5}
\end{equation}

\textbf{Step 2:} prove the desired result.

Therefore, 
\begin{align*}
\sum_{t=1}^{T}\left(\hat{\lambda}_{1}'\hat{F}_{t}-\lambda_{1}'F_{t}\right)^{2} & \overset{{\rm (i)}}{\leq}2\sum_{t=1}^{T}\left[\left(\hat{F}_{t}-H'F_{t}\right)'H^{-1}\lambda_{1}\right]^{2}+2\sum_{t=1}^{T}\left[\hat{F}_{t}'(\hat{\lambda}_{1}-H^{-1}\lambda_{1})\right]^{2}\\
 & \leq2\sum_{t=1}^{T}\|\hat{F}_{t}-H'F_{t}\|_{2}^{2}\times\|H^{-1}\lambda_{1}\|_{2}^{2}+2\sum_{t=1}^{T}\|\hat{F}_{t}\|_{2}^{2}\times\|\hat{\lambda}_{1}-H^{-1}\lambda_{1}\|_{2}^{2}\\
 & \overset{{\rm (ii)}}{=}O_{P}(T/\delta_{NT}^{2})\times O_{P}(1)+2kT\times O_{P}(\max\{T^{-1},N^{-2}\} )\\
 & =O_{P}(T/\delta_{NT}^{2}),
\end{align*}
where (i) follows by (\ref{eq: pure factor eq 1}) and the elementary
inequality of $(a+b)^{2}\leq2a^{2}+2b^{2}$ for any $a,b\in\mathbb{R}$
and (ii) follows by (\ref{eq: pure factor eq 2}), (\ref{eq: pure factor eq 3}),
(\ref{eq: pure factor eq 4}) and (\ref{eq: pure factor eq 5}). Since
$n=|\Pi|=T$ for moving block permutation, we have 
\[
\frac{1}{n}\sum_{t=1}^{T}\left(\hat{\lambda}_{1}'\hat{F}_{t}-\lambda_{1}'F_{t}\right)^{2}=O_{P}\left(\frac{1}{\min\{N,T\}}\right).
\]
% This proves part (1) of condition (A).
Finally, notice that Theorem 3 of \citet{bai2003inferential} implies $\hat{\lambda}_{1}'\hat{F}_{t}-\lambda_{1}'F_{t}=O_{P}(1/\delta_{NT})$. 
The proof is complete.

\subsection{Proof of Lemma \ref{thm: low level interactive FE}} We recite
conditions from \citet{bai2009panel}. Following standard notation, we use $i$ instead of $j$ to denote units.

Suppose that there exists a constant $D_{1}>0$ the following conditions hold:\\
(1) $\max_{i,t}E\|X_{it}\|_{2}^{4}\leq D_{1}$, $\max_{t}E\|F_{t}\|_{2}^{4}\leq D_{1}$, $\max_{i}E\|\lambda_{i}\|_{2}^{4}\leq D_{1}$ and $\max_{i,t}E|u_{it}|^{8}\leq D_{1}$. \\
(2) $N^{-1}\sum_{i=1}^{N}\sum_{j=1}^{N}\max_{t,s}|E(u_{it}u_{js})|$ $\leq$ $D_{1}$and $T^{-1}\sum_{s=1}^{T}\sum_{t=1}^{T}\max_{i,j}|E(u_{it}u_{js})|\leq D_{1}$.\\
(3) $(NT)^{-1}\sum_{i=1}^{N}\sum_{j=1}^{N}\sum_{s=1}^{T}\sum_{t=1}^{T}|E(u_{it}u_{js})|\leq D_{1}$. \\
(4) $\max_{t,s}E\left|N^{-1/2}\sum_{i=1}^{N}[u_{is}u_{it}-E(u_{is}u_{it})]\right|^{4}\leq D_{1}$. \\
(5) $T^{-2}N^{-1}\sum_{t,s,q,v}\sum_{i,j}|cov(u_{it}u_{ts},u_{jq}u_{jv})|\leq D_{1}$\\
 (6) $T^{-1}N^{-2}\sum_{t,s}\sum_{i,j,k,q}|cov(u_{it}u_{jt},u_{ks}u_{qs})|\leq D_{1}$.\\
(7) the largest eigenvalue of $E(u_{i}u_{i}')$ is bounded by $D_{1}$,
where $u_{i}=(u_{i1},...,u_{iT})'\in\mathbb{R}^{T}$.

Moreover, the following conditions also hold: (8) $u=(u_{1},\ldots,u_{N})$ is independent of $(X,F,\Lambda)$. 
(9) $F'F/T=\Sigma_{F}+o_{P}(1)$ and $\Lambda'\Lambda/N=\Sigma_{\Lambda}+o_{P}(1)$
for some matrices $\Sigma_{F}$ and $\Sigma_{\Lambda}$. 
(10) $N/T$ is bounded away from zero and infinity.
(11) For $X_{i}=(X_{i1},...,X_{iT})'\in\mathbb{R}^{T\times k_{x}}$
and $M_{F}=I_{T}-F(F'F)^{-1}F'$, we have 
\[
\inf_{F:\ F'F/T=I_{k}}\frac{1}{NT}\sum_{i=1}^{N}X_{i}'M_{F}X_{i}-\frac{1}{T}\left[\frac{1}{N^{2}}\sum_{i=1}^{N}\sum_{j=1}^{N}X_{i}'M_{F}X_{j}\lambda_{i}'(\Lambda'\Lambda/N)^{-1}\lambda_{j}\right]>0.
\]

What follows below is the proof of the lemma.  We introduce some notations used in \citet{bai2009panel}. Let $H=(\Lambda'\Lambda/N)(F'\hat{F}/T)V_{NT}^{-1}$,
where $V_{NT}$ is the diagonal matrix that contains the $k$ largest
eigenvalues of $(NT)^{-1}\sum_{i=1}^{N}(Y_{i}^{N}-X_{i}\hat{\beta})(Y_{i}^{N}-X_{i}\hat{\beta})'$
with  $Y_{i}^{N}=(Y_{i1}^{N},Y_{i2}^{N},...,Y_{iT}^{N})'\in\mathbb{R}^{T}$.
Let $\delta_{NT}=\min\{\sqrt{N},\sqrt{T}\}$. The rest of the proof
proceeds in two steps. We first derive bounds for $\sum_{t=1}^{T}\left(\hat{u}_{1t}-u_{1t}\right)^{2}$
and then prove the pointwise result.

\textbf{Step 1:} derive bounds for $\sum_{t=1}^{T}\left(\hat{u}_{1t}-u_{1t}\right)^{2}$. 

Define $\Delta_{\beta}=\hat{\beta}-\beta$ and $\Delta_{F,t}=\hat{F}_{t}-H'F_{t}$.
Denote $\Delta_{F}=(\Delta_{F,1},...,\Delta_{F,T})'\in\mathbb{R}^{T\times k}$.
Notice that $\hat{F}-FH=\Delta_{F}$. As pointed out on page 1237
of \citet{bai2009panel}, 
\begin{equation}
\hat{\lambda}_{1}=T^{-1}\hat{F}'(Y_{1}^{N}-X_{1}\hat{\beta})=T^{-1}\hat{F}'(u_{1}+F\lambda_{1}-X_{1}\Delta_{\beta}).\label{eq: interactive FE eq 1}
\end{equation}
Notice
that
\begin{align}
 & \left|\hat{u}_{1t}-u_{1t}\right|^{2}\nonumber  =\left|F_{t}'\lambda_{1}-\hat{F}_{t}'\hat{\lambda}_{1}-X_{1t}'\Delta_{\beta}\right|^{2}\nonumber \\
 & \overset{{\rm (i)}}{=}\left|F_{t}'\lambda_{1}-T^{-1}(H'F_{t}+\Delta_{F,t})'\hat{F}'(u_{1}+F\lambda_{1}-X_{1}\Delta_{\beta})-X_{1,t}'\Delta_{\beta}\right|^{2}\nonumber \\
 & \leq\left[\left|F_{t}'\left(I_{k}-H\hat{F}'F/T\right)\lambda_{1}\right|+\left|T^{-1}\Delta_{F,t}'\hat{F}'F\lambda_{1}\right|+\left|T^{-1}\hat{F}_{t}'\hat{F}'(u_{1}-X_{1}\Delta_{\beta})\right|+\left|X_{1t}'\Delta_{\beta}\right|\right]^{2}\nonumber \\
 & \lesssim\left[F_{t}'\left(I_{k}-H\hat{F}'F/T\right)\lambda_{1}\right]^{2}+\left[T^{-1}\Delta_{F,t}'\hat{F}'F\lambda_{1}\right]^{2}+\left[T^{-1}\hat{F}_{t}'\hat{F}'(u_{1}-X_{1}\Delta_{\beta})\right]^{2}+\left[X_{1t}'\Delta_{\beta}\right]^{2},\label{eq: interactive FE eq 2}
\end{align}
where (i) follows by (\ref{eq: interactive FE eq 1}) and $\hat{F}_{t}=H'F_{t}+\Delta_{F,t}$.
Therefore, 
\begin{align*}
& \sum_{t=1}^{T}\left(\hat{u}_{1t}-u_{1t}\right)^{2} \lesssim\sum_{t=1}^{T}\left[F_{t}'\left(I_{k}-H\hat{F}'F/T\right)\lambda_{1}\right]^{2}+\sum_{t=1}^{T}\left[T^{-1}\Delta_{F,t}'\hat{F}'F\lambda_{1}\right]^{2}\\
 & \qquad+\sum_{t=1}^{T}\left[T^{-1}\hat{F}_{t}'\hat{F}'(u_{1}-X_{1}\Delta_{\beta})\right]^{2}+\sum_{t=1}^{T}\left[X_{1t}'\Delta_{\beta}\right]^{2}\\
 & \overset{{\rm (i)}}{=}\lambda_{1}'\left(I_{k}-H\hat{F}'F/T\right)'(F'F)\left(I_{k}-H\hat{F}'F/T\right)\lambda_{1}\\
 & \qquad+T^{-2}\left(\hat{F}'F\lambda_{1}\right)'\left(\Delta_{F}'\Delta_{F}\right)\left(\hat{F}'F\lambda_{1}\right)+T^{-1}\left\Vert \hat{F}'(u_{1}-X_{1}\Delta_{\beta})\right\Vert_{2}^{2}+\|X_{1}\Delta_{\beta}\|_{2}^{2}\\
 & \overset{{\rm (ii)}}{=}O_{P}\left(T\|\Delta_{\beta}\|_{2}^{2}+T\delta_{NT}^{-4}\right)+O_{P}\left(T\|\Delta_{\beta}\|_{2}^{2}+T\delta_{NT}^{-2}\right)+O_{P}\left(1+T\delta_{NT}^{-4}+T\|\Delta_{\beta}\|_{2}^{2}\right)+O_{P}(T\|\Delta_{\beta}\|_{2}^{2})\\
 & =O_{P}\left(1+T\|\Delta_{\beta}\|_{2}^{2}+T\delta_{NT}^{-2}\right),
\end{align*}
where (i) follows by $\sum_{t=1}^{T}\hat{F}_{t}\hat{F}_{t}'=\hat{F}'\hat{F}=TI_{k}$
and (ii) follows by Lemma \ref{lem: interactive FE tool}, together
with $\|F\|=O_{P}(\sqrt{T})$, $\lambda_{1}=O(1)$ and $\|\hat{F}\|=O_{P}(\sqrt{T})$.
Since $N\asymp T$, Theorem 1 of \citet{bai2009panel} implies $\|\Delta_{\beta}\|_{2}=O_{P}(1/\sqrt{NT})=O_{P}(T^{-1})$.
Therefore, the above display implies 
\[
\sum_{t=1}^{T}\left(\hat{u}_{1t}-u_{1t}\right)^{2}=O_{P}(1).
\]

\textbf{Step 2:} show the pointwise result.

By (\ref{eq: interactive FE eq 2}), we have 
\begin{align*}
\left|\hat{u}_{1t}-u_{1t}\right| & \leq\left|F_{t}'\left(I_{k}-H\hat{F}'F/T\right)\lambda_{1}\right|+\left|T^{-1}\Delta_{F,t}'\hat{F}'F\lambda_{1}\right|+\left|T^{-1}\hat{F}_{t}'\hat{F}'(u_{1}-X_{1}\Delta_{\beta})\right|+\left|X_{1t}'\Delta_{\beta}\right|\\
 & \overset{{\rm (i)}}{\leq}\|F_{t}\|_{2}\cdot\|\lambda_{1}\|_{2}\cdot O_{P}\left(\|\Delta_{\beta}\|_{2}+\delta_{NT}^{-2}\right)+O_{P}\left(T\|\Delta_{\beta}\|_{2}+T\delta_{NT}^{-2}\right)\cdot T^{-1}\|F\lambda_{1}\|_{2}\\
 & \qquad+T^{-1}\|\hat{F}_{t}\|_{2}\cdot O_{P}\left(\sqrt{T}+T\delta_{NT}^{-2}+T\|\Delta_{\beta}\|_{2}\right)+\|X_{1t}\|_{2}\cdot\|\Delta_{\beta}\|_{2} \overset{{\rm (ii)}}{\leq}O_{P}(T^{-1/2}),
\end{align*}
where (i) follows by $I_{k}-H\hat{F}'F/T=O_{P}(\|\Delta_{\beta}\|_{2}+\delta_{NT}^{-2})$, $\|\hat{F}\Delta_{F,t}\|=O_{P}(T\|\Delta_{\beta}\|_{2}+T\delta_{NT}^{-2})$, 
and $\|\hat{F}'(u_{1}-X_{1}\Delta_{\beta})\|_{2} =O_{P}(\sqrt{T}+T\delta_{NT}^{-2}+T\|\Delta_{\beta}\|_{2})$
(due to Lemma \ref{lem: interactive FE tool}), whereas (ii) follows
by $\|\hat{F}_{t}\|_{2}=O_{P}(1)$ (Lemma \ref{lem: interactive FE tool}),
$\|X_{1t}\|_{2}=O_{P}(1)$, $\|F_{t}\|_{2}=O_{P}(1)$, $\lambda_{1}=O(1)$,
$\|\Delta_{\beta}\|_{2}=O_{P}(T^{-1})$ and $\|F\lambda_{1}\|_{2}=O_{P}(\sqrt{T})$. 

\begin{lem}
\label{lem: interactive FE tool}Suppose that the assumption of Lemma
\ref{thm: low level interactive FE} holds. Let $\delta_{NT}$, $H$,
$\Delta_{F}$ and $u_{1}$ be defined as in the proof of Lemma \ref{thm: low level interactive FE}.
Then (1) $I_{k}-H\hat{F}'F/T=O_{P}(\|\Delta_{\beta}\|_{2}+\delta_{NT}^{-2})$;
(2) $\Delta_{F}'\Delta_{F}=O_{P}(T\|\Delta_{\beta}\|_{2}^{2}+T\delta_{NT}^{-2})$;
(3) $\left\Vert \hat{F}'(u_{1}-X_{1}\Delta_{\beta})\right\Vert =O_{P}\left(\sqrt{T}+T\delta_{NT}^{-2}+T\|\Delta_{\beta}\|_{2}\right)$;
(4) $\|X_{1}\Delta_{\beta}\|_{2}=O_{P}(\sqrt{T}\|\Delta_{\beta}\|_{2})$;
(5) $\|\hat{F}\Delta_{F,t}\|_{2}=O_{P}(T\|\Delta_{\beta}\|_{2}+T\delta_{NT}^{-2})$;
(6) $\|\hat{F}_{t}\|_{2}=O_{P}(1)$ for $1\leq t\le T$. 
\end{lem}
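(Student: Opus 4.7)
The overall strategy is to derive each of the six bounds by combining three ingredients: (a) the basic identity $\hat{F}=FH+\Delta_{F}$ together with the normalization $\hat{F}'\hat{F}/T=I_{k}$; (b) the moment conditions on $(X_{it},F_t,\lambda_i,u_{it})$ from \citet{bai2009panel}; and (c) the intermediate convergence rates established in Bai's Lemmas A.1--A.3 (in particular the block bound $T^{-1}\|\hat{F}-FH\|^{2}=O_{P}(\|\Delta_{\beta}\|_{2}^{2}+\delta_{NT}^{-2})$ and $H=O_P(1)$ with $H^{-1}=O_P(1)$). These results are proved in Bai's appendix under exactly the regularity assumptions listed at the start of the proof of Lemma \ref{thm: low level interactive FE}, so I will quote them freely rather than redo them.

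First I would dispose of the ``easy'' bounds. For (2), Bai's A.1-type expansion directly yields $\Delta_{F}'\Delta_{F}=\|\hat F-FH\|_F^2=O_P(T\|\Delta_\beta\|_2^2+T\delta_{NT}^{-2})$. For (4), $\|X_{1}\Delta_{\beta}\|_{2}\leq\|X_{1}\|\cdot\|\Delta_{\beta}\|_{2}$ and $\|X_{1}\|^{2}\leq\sum_{t=1}^{T}\|X_{1t}\|_{2}^{2}=O_{P}(T)$ by $\max_{i,t}E\|X_{it}\|_{2}^{4}\leq D_{1}$. Bound (1) comes from expanding $I_{k}=\hat{F}'\hat{F}/T=H'(F'F/T)H+H'F'\Delta_{F}/T+\Delta_{F}'FH/T+\Delta_{F}'\Delta_{F}/T$ and using $F'\Delta_{F}/T=O_{P}(\|\Delta_\beta\|_2+\delta_{NT}^{-2})$ (again from Bai's expansion for the cross term) to conclude $I_{k}-H'(F'F/T)H$ has the claimed rate; multiplying through by $H$ on one side and $H^{-1}$ on the other and using $\|H\|,\|H^{-1}\|=O_P(1)$ transfers the bound to $I_{k}-H\hat{F}'F/T$.

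Next I would handle (3) and (5). For (3), write $\hat{F}'(u_{1}-X_{1}\Delta_{\beta})=H'F'u_{1}+\Delta_{F}'u_{1}-\hat{F}'X_{1}\Delta_{\beta}$. The three terms are bounded by $O_P(\sqrt T)$ (from $E\|T^{-1/2}F'u_1\|_2^2=O(1)$ via assumption (4) on $N^{-1}\sum_iE\|T^{-1/2}\sum_t F_tu_{it}\|^2$ at $i=1$, plus $\|H\|=O_P(1)$), $\|\Delta_{F}\|\cdot\|u_{1}\|_{2}=O_P(\sqrt{T\|\Delta_\beta\|_2^2+T\delta_{NT}^{-2}})\cdot O_P(\sqrt T)$, and $\|\hat F\|\cdot\|X_1\|\cdot\|\Delta_\beta\|_2=O_P(T\|\Delta_\beta\|_2)$, respectively. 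Summing gives the stated rate after noting $\sqrt T\cdot\sqrt{T\|\Delta_\beta\|^2+T\delta_{NT}^{-2}}\lesssim T\|\Delta_\beta\|+T\delta_{NT}^{-2}$ (using $ab\le a^2+b^2$ where needed). For (5), $\|\hat{F}\Delta_{F,t}\|_{2}\le\|\hat{F}\|\cdot\|\Delta_{F,t}\|_{2}\leq\|\hat F\|\cdot\|\Delta_F\|=O_P(\sqrt T)\cdot O_P(\sqrt{T\|\Delta_\beta\|_2^2+T\delta_{NT}^{-2}})$, which simplifies to the claimed bound.

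The main obstacle will be (6), the \emph{pointwise} bound $\|\hat{F}_{t}\|_{2}=O_{P}(1)$. The identity $\hat{F}_{t}=H'F_{t}+\Delta_{F,t}$ combined with $H=O_{P}(1)$ and $\|F_{t}\|_{2}=O_{P}(1)$ handles the $H'F_{t}$ piece, but $\|\Delta_{F,t}\|_{2}$ is only controlled \emph{on average}, not pointwise, by (2). To get pointwise control I would instead exploit the eigenvector characterization of $\hat F$ directly: $\hat F_t$ is the $t$-th row of the matrix of eigenvectors of $(NT)^{-1}\sum_i (Y_i^N-X_i\hat\beta)(Y_i^N-X_i\hat\beta)'$ scaled so that $\hat F'\hat F/T=I_k$; the identity
\[
\hat{F}_{t}=V_{NT}^{-1}\frac{1}{NT}\sum_{i=1}^{N}\hat{F}'(Y_{i}^{N}-X_{i}\hat{\beta})(Y_{it}^{N}-X_{it}'\hat{\beta})
\]
(see Bai (2009), eq.\ (A.1)--(A.2)) expresses $\hat F_t$ as an average of cross-sectional inner products divided by $V_{NT}$, which converges to a positive-definite matrix. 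Substituting $Y_{it}^N-X_{it}'\hat\beta=\lambda_i'F_t+u_{it}-X_{it}'\Delta_\beta$ and using $N^{-1}\sum_i\hat{F}'(Y_i^N-X_i\hat\beta)\lambda_i'/T\to_P $ a bounded matrix together with $E\|N^{-1/2}\sum_i\lambda_iu_{it}\|^2=O(1)$ from assumption (7) yields $\|\hat F_t\|_2=O_P(1)$ uniformly in the sense stated. This is the step where I expect the bookkeeping to be delicate, since one must avoid any uniformity-in-$t$ claim that would require stronger moments than assumed.
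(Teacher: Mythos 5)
Your parts (1), (2) and (4) are fine and essentially coincide with the paper's argument (both rest on quoting Bai (2009)'s intermediate results—Proposition A.1, Lemmas A.3 and A.7—plus $\|X_{1}\|=O_{P}(\sqrt{T})$; deriving (1) from the normalization identity instead of citing Lemma A.7(i) directly is a harmless variant). The genuine gap is in (5), and the same flawed step infects your (3). In (5) you bound $\|\hat{F}\Delta_{F,t}\|_{2}\le\|\hat{F}\|\cdot\|\Delta_{F}\|=O_{P}(\sqrt{T})\cdot O_{P}\bigl(\sqrt{T}\|\Delta_{\beta}\|_{2}+\sqrt{T}\delta_{NT}^{-1}\bigr)$ and assert this "simplifies to the claimed bound''. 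It does not: it gives $O_{P}(T\|\Delta_{\beta}\|_{2}+T\delta_{NT}^{-1})$, and $T\delta_{NT}^{-1}$ is of strictly larger order than the claimed $T\delta_{NT}^{-2}$ (under the maintained $N\asymp T$ it is $O_{P}(\sqrt{T})$ versus $O_{P}(1)$). The loss is fatal downstream: in Step 2 of the proof of Lemma \ref{thm: low level interactive FE}, part (5) is multiplied by $T^{-1}\|F\lambda_{1}\|_{2}=O_{P}(T^{-1/2})$, so your rate only delivers $O_{P}(1)$ for the term $|T^{-1}\Delta_{F,t}'\hat{F}'F\lambda_{1}|$ and the pointwise result $\hat{P}^{N}_{t}-P^{N}_{t}=O_{P}(T^{-1/2})$ is lost. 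The paper avoids Cauchy--Schwarz here precisely because it is too lossy: it invokes Bai's Lemma A.3(ii), which controls the cross product $T^{-1}\hat{F}'(\hat{F}-FH)$ at rate $O_{P}(\|\Delta_{\beta}\|_{2})+O_{P}(\delta_{NT}^{-2})$, an averaging/cancellation bound that is a full factor $\delta_{NT}$ sharper than $T^{-1}\|\hat{F}\|\,\|\Delta_{F}\|$. The analogous step in your (3) uses the inequality $\sqrt{T}\sqrt{T\|\Delta_{\beta}\|_{2}^{2}+T\delta_{NT}^{-2}}\lesssim T\|\Delta_{\beta}\|_{2}+T\delta_{NT}^{-2}$, which is false ($ab\le a^{2}+b^{2}$ cannot turn $T\delta_{NT}^{-1}$ into $T\delta_{NT}^{-2}$); the paper instead bounds $(\hat{F}-FH)'u_{1}$ via Bai's Lemma A.4. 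Because the displayed bound in (3) contains a separate $\sqrt{T}$ term and $N\asymp T$, your weaker $T\delta_{NT}^{-1}\asymp\sqrt{T}$ happens not to contradict the statement there, but the justification as written does not prove the stated rate for general $N,T$.

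Two smaller points. On (6) you manufacture an obstacle that is not there: the claim is only $\|\hat{F}_{t}\|_{2}=O_{P}(1)$, so the crude bound $\|\Delta_{F,t}\|_{2}^{2}\le\mathrm{trace}(\Delta_{F}'\Delta_{F})=O_{P}(T\|\Delta_{\beta}\|_{2}+T\delta_{NT}^{-2})$, combined with $\|\Delta_{\beta}\|_{2}=O_{P}((NT)^{-1/2})$ (Bai's Theorem 1) and $N\asymp T$, already yields $\|\Delta_{F,t}\|_{2}=O_{P}(1)$ pointwise and hence $\|\hat{F}_{t}\|_{2}\le\|H'F_{t}\|_{2}+\|\Delta_{F,t}\|_{2}=O_{P}(1)$; this is the paper's one-line argument, and your detour through the eigenvector identity for $\hat{F}_{t}$, while probably workable, is unnecessary and left at sketch level. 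Finally, in (3) the bound $\|F'u_{1}\|_{2}=O_{P}(\sqrt{T})$ should be obtained from the independence of $u$ and $(X,F,\Lambda)$ plus the moment bounds, as the paper does; the averaged condition you cite controls $N^{-1}\sum_{i}E\|T^{-1/2}\sum_{t}F_{t}u_{it}\|_{2}^{2}$ and does not by itself control the single unit $i=1$.
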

\begin{proof}
\textbf{Proof of part (1).} Lemma A.7(i) in \citet{bai2009panel}
implies $HH'$ converges in probability to a nonsingular matrix. Hence,
\begin{equation}
H=O_{P}(1)\quad{\rm and}\quad H^{-1}=O_{P}(1).\label{eq: interactive FE eq 7}
\end{equation}

Notice that 
\begin{align}
I_{k}-H\hat{F}'F/T\overset{{\rm (i)}}{=}I_{k}-H(FH+\Delta_{F})'F/T & =I_{k}-(HH')(F'F/T)-H\Delta_{F}'F/T\nonumber \\
 & \overset{{\rm (ii)}}{=}O_{P}(\|\Delta_{\beta}\|_{2})+O_{P}(\delta_{NT}^{-2})-H\Delta_{F}'F/T\nonumber \\
 & \overset{{\rm (iii)}}{=}O_{P}(\|\Delta_{\beta}\|_{2})+O_{P}(\delta_{NT}^{-2}),\label{eq: interactive FE eq 8}
\end{align}
where (i) holds by $\hat{F}=FH+\Delta_{F}$, (ii) holds by $I_{k}-(HH')(F'F/T)=O_{P}(\|\Delta_{\beta}\|_{2})+O_{P}(\delta_{NT}^{-2})$
(due to Lemma A.7(i) in \citet{bai2009panel}) and (iii) holds
by (\ref{eq: interactive FE eq 7}) and $\Delta_{F}'F/T=O_{P}(\|\Delta_{\beta}\|_{2})+O_{P}(\delta_{NT}^{-2})$
(due to Lemma A.3(i) in \citet{bai2009panel}). This proves part (1).

\textbf{Proof of part (2).} Part (2) follows by Proposition A.1 of
\citet{bai2009panel}: 
\begin{equation}
T^{-1}\Delta_{F}'\Delta_{F}=O_{P}(\|\Delta_{\beta}\|_{2}^{2})+O_{P}(\delta_{NT}^{-2}).\label{eq: interactive FE eq 9}
\end{equation}

\textbf{Proof of part (3).} To see part (3), first observe that the
independence between $u_{1}$ and $F$ implies that 
\[
E(\|F'u_{1}\|^{2}\mid F)\leq\sum_{t=1}^{T}E(F_{t}'F_{t}u_{1t}^{2}\mid F)=\sum_{t=1}^{T}F_{t}'F_{t}E(u_{1t}^{2}).
\]
It follows that 
\[
E\left(\|F'u_{1}\|^{2}\right)\leq\sum_{t=1}^{T}E(F_{t}'F_{t})E(u_{1t}^{2})\overset{{\rm (i)}}{\lesssim}T\sum_{t=1}^{T}E(u_{1t}^{2})=O(T),
\]
where (i) holds by the uniform boundedness of $E(F_{t}'F_{t})$. This means that 
\begin{equation}
\|F'u_{1}\|_{2}=O_{P}(\sqrt{T}).\label{eq: interactive FE eq 10}
\end{equation}

Notice that 
\begin{align*}
\left\Vert \hat{F}'(u_{1}-X_{1}\Delta_{\beta})\right\Vert_{2}  & \leq\left\Vert H'F'u_{1}\right\Vert_{2} +\left\Vert \left(\hat{F}-FH\right)'u_{1}\right\Vert +\|\hat{F}\|\cdot\|X_{1}\|\cdot\|\Delta_{\beta}\|_{2}\\
 & \overset{{\rm (i)}}{=}\left\Vert H'F'u_{1}\right\Vert_{2} +\left(O_{P}(T^{1/2}\|\Delta_{\beta}\|_{2})+O_{P}(T\delta_{NT}^{-2})\right)+O_{P}(T\|\Delta_{\beta}\|_{2})\\
 & \overset{{\rm (ii)}}{=}O_{P}(\sqrt{T})+\left(O_{P}(T^{1/2}\|\Delta_{\beta}\|_{2})+O_{P}(T\delta_{NT}^{-2})\right)+O_{P}(T\|\Delta_{\beta}\|_{2}),
\end{align*}
where (i) follows by $\left(\hat{F}-FH\right)'u_{1}/T=O_{P}(T^{-1/2}\|\Delta_{\beta}\|_{2})+O_{P}(\delta_{NT}^{-2})$
(due to Lemma A.4 in \citet{bai2009panel}) and the fact that $\|\hat{F}\|=O(\sqrt{T})$
and $\|X_{1}\|=O_{P}(\sqrt{T})$ (see the beginning of Appendix A
in \citet{bai2009panel}), whereas (ii) follows by (\ref{eq: interactive FE eq 7})
and (\ref{eq: interactive FE eq 10}). We have proved part (3). 

\textbf{Proof of part (4).} We notice that $\|X_{1}\|=O_{P}(\sqrt{T})$;
see the beginning of Appendix A in \citet{bai2009panel}. Part (4)
follows by $\|X_{1}\Delta_{\beta}\|\leq\|X_{1}\|\cdot\|\Delta_{\beta}\|_{2}$.

\textbf{Proof of part (5).} Notice that 
\[
\|\hat{F}\Delta_{F,t}\|_{2}/T\leq\|\hat{F}\Delta_{F}\|/T\overset{{\rm (i)}}{\leq}O_{P}(\|\Delta_{\beta}\|_{2})+O_{P}(\delta_{NT}^{-2}),
\]
where (i) follows by Lemma A.3(ii) in \citet{bai2009panel}. We
have proved part (5). 

\textbf{Proof of part (6).} Notice that 
$$
T^{-1}\|\Delta_{F,t}\|_{2}^2\leq T^{-1}\Delta_{F}'\Delta_{F}  = T^{-1}\hat{F}'\Delta_{F}-T^{-1}H'F'\Delta_{F} \overset{{\rm (i)}}{=} O_{P}(\|\Delta_{\beta}\|_{2})+O_{P}(\delta_{NT}^{-2}),
$$
where (i) follows by Lemma A.3(i)-(ii) of \citet{bai2009panel}. By Theorem 1 of \citet{bai2009panel} and by the assumption of $N \asymp T $, we have that $ \|\Delta_{F,t}\|_{2}=O_P(1)$. By $\|\hat{F}_{t}\|_{2}\leq\|H'F_{t}\|_{2}+\|\Delta_{F,t}\|_{2}$, $F_{t}=O_{P}(1)$
and $H=O_{P}(1)$, we can see that $\|\hat{F}_{t}\|_{2}=O_{P}(1)$. The
proof is complete. 
\end{proof}

\subsection{Proof of Lemma \ref{lem: constrained nuclear}}

We start with the assumptions. Recall $N=J+1$.   Assume that (1) $\{u_{j}\}_{j=1}^{N}$
is independent across $j$ conditional on $M$, (2) $\max_{1\leq j\leq N}T^{-1}\sum_{t=1}^{T}E(|u_{jt}|^{2\kappa_{1}}\mid M)=O_{P}(1)$
for some constant $\kappa_{1}>1$, (3) $\|N^{-1}\sum_{j=1}^{N}E(u_{j}u_{j}'\mid M)\|=O_{P}(1)$ and  (4)  there exists a sequence $\ell_T >0$ such that\\
 $\ell_T (NT)^{-1}K\sqrt{N\vee(N^{1/\kappa_{1}}T\log N)}=o(1) $  and  with probability $1-o(1)$, $T^{-1}\sum_{t=1}^{T}(\hat{M}_{1t}-M_{1t})^{2} \leq   \ell_{T} (NT)^{-1}\sum_{t=1}^{T}\sum_{j=1}^{N}(\hat{M}_{jt}-M_{jt})^{2}$ and $(\hat{M}_{1t}-M_{1t})^{2} \leq   \ell_{T} (NT)^{-1}\sum_{t=1}^{T}\sum_{j=1}^{N}(\hat{M}_{jt}-M_{jt})^{2}$ for $T_0+1\leq t\leq T $.

We now prove Lemma \ref{lem: constrained nuclear}. Define
$\Delta=\hat{M}-M$. Let $Y\in\mathbb{R}^{T\times N}$ be the matrix
whose $(t,j)$ entry is $Y_{jt}^{N}$. For  $(j,t)$, define
the matrix $Q_{jt}\in\mathbb{R}^{N\times T}$ by $Q_{is}=\mathbf{1}\{(i,s)=(j,t)\}$,
i.e, a matrix of zeros except that the $(j,t)$ entry is one. Then
we can write the model as 
\begin{equation}
Y_{jt}^{N}={\rm trace}(Q_{jt}'M)+u_{jt}\qquad{\rm for}\quad 1\leq j\leq N,\ 1\leq t\leq T.\label{eq: constrained nuclear 1}
\end{equation}

Notice that the estimator $\hat{M}$ satisfies 
\[
\|\hat{M}\|_{*}\leq K
\]
and
\[
\sum_{t=1}^{T}\sum_{j=1}^{N}\left(Y_{jt}^{N}-{\rm trace}(Q_{jt}'\hat{M})\right)^{2}\leq \sum_{t=1}^{T}\sum_{j=1}^{N} \left(Y_{jt}^{N}-{\rm trace}(Q_{jt}'M)\right)^{2}.
\]

Plugging (\ref{eq: constrained nuclear 1}) into the above inequality
and rearranging terms, we obtain 
\begin{align}
\sum_{t=1}^{T}\sum_{j=1}^{N} \left({\rm trace}(Q_{jt}'\Delta)\right)^{2}\leq2\sum_{t=1}^{T}\sum_{j=1}^{N}u_{jt}{\rm trace}(Q_{jt}'\Delta) & =2{\rm trace}\left(\left[\sum_{t=1}^{T}\sum_{j=1}^{N}u_{jt}Q_{jt}\right]'\Delta\right)\nonumber \\
 & \overset{{\rm (i)}}{=}2{\rm trace}(u'\Delta)\nonumber \\
 & \overset{{\rm (ii)}}{\leq}2\|u\|\cdot\|\Delta\|_{*}\nonumber \\
 & \overset{{\rm (iii)}}{\leq}4K\|u\|,\label{eq: constrained nuclear 2}
\end{align}
where (i) follows by $\sum_{t=1}^{T}\sum_{j=1}^{N}u_{jt}Q_{jt}=u$, (ii) follows
by the trace duality property (see e.g., \citet{mccarthy1967c_},
\citet{rotfel1969singular} and \citet{rohde2011estimation}) and  (iii)
follows by the fact that $\|\hat{M}\|_{*}\leq K$ and $\|M\|_{*}\leq K$.

To bound $\|u\|$, we apply Lemma \ref{lem: maximal SV error matrix}. Note that the conditions of Lemma \ref{lem: maximal SV error matrix}
are satisfied by our assumption. Therefore, $E(\|u\|\mid M)=O_{P}\left(\sqrt{N\vee(N^{1/\kappa_{1}}T\log N)}\right)$.
This and (\ref{eq: constrained nuclear 2}) imply that 
\[
(NT)^{-1} \sum_{t=1}^{T}\sum_{j=1}^{N} \left({\rm trace}(Q_{jt}'\Delta)\right)^{2}=O_{P}\left((NT)^{-1}K\sqrt{N\vee(N^{1/\kappa_{1}}T\log N)}\right).
\]
The desired result follows by Assumption (4) listed at the beginning of the proof.

\begin{lem}
\label{lem: maximal SV error matrix}Suppose that the following conditions
hold:\\
(i) $\{u_{j}\}_{j=1}^{N}$ is independent across $j$ conditional
on $ M$. \\
(ii)  $\max_{1\leq j\leq N}T^{-1}\sum_{t=1}^{T}E(|u_{jt}|^{2\kappa_{1}}\mid M)=O_{P}(1)$
for some constant $\kappa_{1}>1$. \\
(iii)  $\|N^{-1}\sum_{j=1}^{N}E(u_{j}u_{j}'\mid M)\|=O_{P}(1)$. 

Then $E(\|u\|\mid M)=O_{P}\left(\sqrt{N\vee (N^{1/\kappa_{1}}T\log N)}\right).$
\end{lem}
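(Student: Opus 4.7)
The plan is to combine a moment bound for $\max_j \|u_j\|_2^2$ with a matrix Bernstein inequality for sums of independent PSD rank-one matrices applied to $uu' = \sum_j u_j u_j'$, conditional on $M$.

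First, I would control the random radius $R := \max_j \|u_j\|_2^2$. Writing $\|u_j\|_2^{2\kappa_1} = (\sum_t u_{jt}^2)^{\kappa_1}$ and applying Jensen's inequality to the inner sum gives $\|u_j\|_2^{2\kappa_1} \leq T^{\kappa_1-1}\sum_t u_{jt}^{2\kappa_1}$, whence condition (ii) yields $E[\|u_j\|_2^{2\kappa_1}\mid M] = O_P(T^{\kappa_1})$ uniformly in $j$. A crude union bound then gives $E[\max_j \|u_j\|_2^{2\kappa_1}\mid M] \leq \sum_j E[\|u_j\|_2^{2\kappa_1}\mid M] = O_P(NT^{\kappa_1})$, and one more application of Jensen produces
\[
E[R \mid M] = E[\max_j \|u_j\|_2^2 \mid M] \leq \bigl(E[\max_j \|u_j\|_2^{2\kappa_1}\mid M]\bigr)^{1/\kappa_1} = O_P(N^{1/\kappa_1} T).
\]

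Second, I would apply the expectation form of matrix Bernstein for PSD sums (Tropp's user-friendly bound) to the independent, PSD, rank-one summands $X_j := u_j u_j'$. Writing $\mu := \|\sum_j E[u_j u_j'\mid M]\| = O_P(N)$ by condition (iii), and noting that $\lambda_{\max}(X_j) = \|u_j\|_2^2 \leq R$, matrix Bernstein yields
\[
E\bigl[\|uu'\| \,\big|\, M\bigr] \;\leq\; \mu \;+\; C\, R\,\log N \;+\; C\sqrt{\mu\, R\,\log N},
\]
where the $\log N$ factor comes from the rank of the sum being at most $\min(N,T) \leq N$ (so the noncommutative Khintchine constant is $\sqrt{\log N}$). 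Plugging in $\mu = O_P(N)$ and $E[R\mid M] = O_P(N^{1/\kappa_1}T)$, the three terms contribute $O_P(N)$, $O_P(N^{1/\kappa_1}T\log N)$, and $\sqrt{N\cdot N^{1/\kappa_1}T\log N}\cdot O_P(1)$ respectively; by AM-GM the last is dominated by the maximum of the first two. Hence $E[\|uu'\|\mid M] = O_P(N \vee N^{1/\kappa_1}T\log N)$, and Jensen's inequality delivers
\[
E[\|u\|\mid M] \;\leq\; \sqrt{E[\|uu'\|\mid M]} \;=\; O_P\!\left(\sqrt{N \vee N^{1/\kappa_1}T\log N}\right),
\]
as claimed.

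The main obstacle is the rigorous application of matrix Bernstein when the per-summand bound $R$ is itself random (we have only a moment control, not a deterministic bound). This is handled either by running matrix Bernstein on the event $\{R \leq r\}$ for a deterministic level $r$ and then optimizing/integrating over $r$ using the moment tail of $R$, or by truncating entries at level $\tau$, bounding the truncated part with the deterministic-$R$ version of Bernstein and the tail part via a Markov/Frobenius estimate $\sum_j E[\|u_j \mathbf{1}\{\|u_j\|>\tau\}\|_2^2\mid M] \lesssim NT\tau^{-2(\kappa_1-1)}$; either route gives the same final rate after trivial bookkeeping. All remaining steps—passing between conditional and unconditional statements, interpreting the $O_P$ factors from conditions (ii)–(iii), and invoking Jensen—are routine.
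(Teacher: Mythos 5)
Your proposal is correct and takes essentially the same route as the paper: its Step 1 is exactly your Jensen-plus-union-bound control $E\left(\max_{1\leq j\leq N}\|u_j\|_2^2\mid M\right)=O_P(N^{1/\kappa_1}T)$, and its Step 2 simply invokes Theorem 5.48 together with Remark 5.49 of Vershynin (2010) for matrices with independent heavy-tailed rows, which is the same matrix-concentration estimate (a $\|\Sigma\|^{1/2}\sqrt{N}$ term plus a $\sqrt{m\log\min(N,T)}\leq\sqrt{m\log N}$ term) that you rederive from matrix Bernstein. The obstacle you flag—the per-summand bound being random—is precisely what Remark 5.49 dispenses with, since it requires only the expected maximum row norm rather than an almost-sure bound, so the truncation/conditioning workaround is unnecessary once that reference is used.
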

\begin{proof}
Recall the elementary inequality $|T^{-1}\sum_{t=1}^{T}a_{t}|\leq[T^{-1}\sum_{t=1}^{T}|a_{t}|^{\kappa}]^{1/\kappa}$
for any $\kappa>1$ (due to Liapunov's inequality). It follows that
$T^{-1}\sum_{t=1}^{T}u_{jt}^{2}\leq[T^{-1}\sum_{t=1}^{T}|u_{jt}|^{2\kappa_{1}}]^{1/\kappa_{1}}$,
which means 
\begin{equation}
\left(\sum_{t=1}^{T}u_{jt}^{2}\right)^{\kappa_{1}}\leq T^{\kappa_{1}-1}\sum_{t=1}^{T}|u_{jt}|^{2\kappa_{1}}.\label{eq: factor single u eq 1}
\end{equation}

Hence, 
\begin{align*}
E\left(\left[\max_{1\leq j\leq N}\sum_{t=1}^{T}u_{jt}^{2}\right]\mid M\right) & \overset{{\rm (i)}}{\leq}\left\{ E\left[\max_{1\leq j\leq N}\left(\sum_{t=1}^{T}u_{jt}^{2}\right)^{\kappa_{1}}\mid M\right]\right\} ^{1/\kappa_{1}}\\
 & \leq\left\{ E\left[\sum_{i=1}^{N}\left(\sum_{t=1}^{T}u_{jt}^{2}\right)^{\kappa_{1}}\mid M\right]\right\} ^{1/\kappa_{1}}\\
 & \overset{{\rm (ii)}}{\leq}\left\{ E\left[T^{\kappa_{1}-1}\sum_{j=1}^{N}\sum_{t=1}^{T}|u_{jt}|^{2\kappa_{1}}\mid M\right]\right\} ^{1/\kappa_{1}}\\
 & \leq\left\{ \left[NT^{\kappa_{1}}\max_{1\leq i\leq N}T^{-1}\sum_{t=1}^{T}E\left(|u_{jt}|^{2\kappa_{1}}\mid M\right)\right]\right\} ^{1/\kappa_{1}}\\
 & \overset{{\rm (iii)}}{=}\left\{ \left[NT^{\kappa_{1}}O_{P}(1)\right]\right\} ^{1/\kappa_{1}}=O_{P}(N^{1/\kappa_{1}}T),
\end{align*}
where (i) follows by Liapunov's inequality, (ii) follows by (\ref{eq: factor single u eq 1})
and (iii) follows by the assumption that $\max_{1\leq j\leq N}T^{-1}\sum_{t=1}^{T}E(|u_{jt}|^{2\kappa_{1}}\mid M)=O_{P}(1)$.
Therefore, it follows, by Theorem 5.48 and Remark 5.49 of \citet{vershynin2010introduction},
that 
\begin{align*}
E\left(\|u\|\mid M\right)\leq\sqrt{E\left(\|u\|^{2}\mid M\right)} & \leq\|E(u'u\mid M)/N\|^{1/2}\sqrt{N}\\
&\qquad+O\left(\sqrt{O(N^{1/\kappa_{1}}T)\log\min\left(O(N^{1/\kappa_{1}}T),N\right)}\right)\\
 & \overset{{\rm (i)}}{\leq}O_{P}\left(\sqrt{N}\right)+O\left(\sqrt{N^{1/\kappa_{1}}T\log N}\right),
\end{align*}
where (i) holds by the assumption of $\|E(u'u\mid M)/N\|=\|N^{-1}\sum_{j=1}^{N}E(u_{j}u_{j}'\mid M)\|=O_{P}(1)$.
The proof is complete. 
\end{proof}

\subsection{Proof of Lemma  \ref{lem: low level AR}}

% In this proof, we use $\|\cdot\| $ to denote the Euclidean norm of vectors or the spectral norm of matrices. 
By the analysis on page 215-216 of \citet{Hamilton1994} (leading
to Equation (8.2.29) therein), we have that $\hat{\rho}-\rho=o_{P}(1)$.
Hence, 
\begin{align*}
\sum_{t=K+1}^{T}\left(\hat{u}_{t}-u_{t}\right)^{2} & =\sum_{t=K+1}^{T}\left(y_{t}'(\rho-\hat{\rho})\right)^{2} =(\hat{\rho}-\rho)'\left(\sum_{t=K+1}^{T}y_{t}y_{t}'\right)(\hat{\rho}-\rho) \leq\|\hat{\rho}-\rho\|_{2}^{2}\times \left\Vert\sum_{t=K+1}^{T} y_{t}y_{t}'\right\Vert.
 % & =\|\hat{\rho}-\rho\|^{2}\times\sum_{t=K+1}^{T}\left(\sum_{j=1}^{K}(Y^{N}_{t-j})^{2}+1\right) <\|\hat{\rho}-\rho\|^{2}\times\left(K\sum_{t=1}^{T}(Y^{N}_{t})^{2}+T\right).
\end{align*}

The analysis on page 215 of \citet{Hamilton1994} (leading to Equation
(8.2.26) therein) implies that $$T^{-1}\sum_{t=K+1}^{T} y_{t}y_{t}'=E(y_{t}y_{t}')+o_{P}(1),$$
which means $\left\Vert\sum_{t=K+1}^{T} y_{t}y_{t}'\right\Vert=O_{P}(T)$. Since $\hat{\rho}-\rho=o_{P}(1)$,
the above display implies that 
\[
\sum_{t=K+1}^{T}\left(\hat{u}_{t}-u_{t}\right)^{2}=o_{P}(T).
\]

Since $\hat{u}_{t}-u_{t}=y_{t}'(\rho-\hat{\rho})$, the pointwise
consistency follows by $\hat{\rho}-\rho=o_{P}(1)$ and the fact that
$y_{t}=O_{P}(1)$ for $T_{0}+1\leq t\leq T$ (due to the stationarity
property of $u_{t}$).

\subsection{Proof of Lemma \ref{lem: low level nonliear AR}}

By assumption, $\max_{K+1 \leq t \leq T }|\hat{P}_{t}^N-P_{t}^N|\leq \ell_T \| \hat \rho - \rho \|  $. Therefore, 
$$
\frac{1}{T} \sum_{t=K+1}^{T}(\hat{P}^N_{t}-P^N_{t})^{2}\leq \ell_T^2 \| \hat \rho - \rho \|
$$
and
$$
\max_{T_0+1 \leq t \leq T }|\hat{P}^N_{t}-P^N_{t}|\leq \ell_T \| \hat \rho - \rho \| .
$$
Since $\ell_T \| \hat \rho - \rho \|=o_P(1)$, the desired result follows.

\subsection{Proof of Lemma \ref{lem: pre-whitening u hat}}
%In this proof, we use $\|\cdot\| $ to denote the Euclidean norm of vectors or the spectral norm of matrices. 

We first derive the following result that is useful in proving Lemma \ref{lem: pre-whitening u hat}.

\begin{lem}
\label{lem: time series tool}Recall $\varepsilon_{t}=x_{t}'\rho+u_{t}$, where   $\rho=(\rho_{1},\rho_{2},...,\rho_{K})'\in\mathbb{R}^{K}$
and $x_{t}=(\varepsilon_{t-1},\varepsilon_{t-2},...,\varepsilon_{t-K})'\in\mathbb{R}^{K}$.
Suppose that  the following hold: 
(1) $\{u_{t}\}_{t=1}^{T}$ is an iid sequence with $E(u_{1}^{4})$
uniformly bounded. 
(2) the roots of $1-\sum_{j=1}^{K}\rho_{j}L^{j}=0$ are uniformly
bounded away from the unit circle.

Then we have
(i) $(T-K)^{-1}\sum_{t=K+1}^{T}u_{t}^{2}=O_{P}(1)$;
(ii) $(T-K)^{-1}\sum_{t=K+1}^{T}x_{t}u_{t}=o_{P}(1)$;
(iii) $(T-K)^{-1}\sum_{t=K+1}^{T}\|x_{t}\|^{2}=O_{P}(1)$.
(iv) There exists a constant $\lambda_{0}>0$ such that the smallest
eigenvalue of $(T-K)^{-1}\sum_{t=K+1}^{T}x_{t}x_{t}'$ is bounded
below by $\lambda_{0}$ with probability approaching one. 
\end{lem}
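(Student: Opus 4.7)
The plan is to exploit the causal stationary representation $\varepsilon_t = \sum_{j=0}^\infty \psi_j u_{t-j}$ guaranteed by the root condition on $1 - \sum_{j=1}^K \rho_j L^j$, which yields geometrically decaying $\psi_j$ and thus a stationary, ergodic process $\{\varepsilon_t\}$ with bounded fourth moments (since $\sum |\psi_j| < \infty$ and $E u_1^4 < \infty$). All four parts will follow from combining this representation with either a law of large numbers for stationary ergodic sequences or a martingale difference argument.

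For (i), stationarity and $E u_1^2 < \infty$ give $(T-K)^{-1} \sum_{t=K+1}^T u_t^2 \to^p E u_1^2 = O(1)$ by the ergodic theorem (or simply Markov's inequality). For (iii), since $E\|x_t\|^2 = K \cdot E\varepsilon_t^2$ is finite and constant in $t$ by stationarity, the same ergodic argument yields $(T-K)^{-1}\sum_{t=K+1}^T \|x_t\|^2 \to^p K \cdot E\varepsilon_t^2 = O(1)$. For (ii), I will observe that $x_t$ is measurable with respect to $\mathcal{F}_{t-1} = \sigma(u_s : s \leq t-1)$ (via the causal MA representation), while $u_t$ is independent of $\mathcal{F}_{t-1}$; hence $\{x_t u_t, \mathcal{F}_t\}$ is a martingale difference sequence. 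Since the increments are uncorrelated, $E\left\|\sum_{t=K+1}^T x_t u_t\right\|^2 = \sum_{t=K+1}^T E\|x_t\|^2 E u_t^2 = O(T)$, so Chebyshev gives $(T-K)^{-1}\sum x_t u_t = O_P(T^{-1/2}) = o_P(1)$.

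The main obstacle is (iv), for which I will argue as follows. By stationarity and ergodicity, $(T-K)^{-1}\sum_{t=K+1}^T x_t x_t' \to^p \Gamma_K := E x_t x_t'$, where $\Gamma_K$ is the $K \times K$ Toeplitz autocovariance matrix of $\{\varepsilon_t\}$. It suffices to show $\lambda_{\min}(\Gamma_K) > 0$. Suppose for contradiction that $c'\Gamma_K c = 0$ for some nonzero $c \in \RR^K$; then $\sum_{j=1}^K c_j \varepsilon_{t-j} = 0$ almost surely, which contradicts the strict positivity of the spectral density $f_\varepsilon(\omega) = (\sigma_u^2/2\pi)|1 - \sum_{j=1}^K \rho_j e^{-ij\omega}|^{-2}$: since the roots of $1 - \sum \rho_j L^j$ lie outside (and bounded away from) the unit circle, the denominator is bounded above, so $f_\varepsilon(\omega)$ is bounded away from zero on $[-\pi,\pi]$. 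Standard results (e.g., Proposition 4.5.3 of \citet{brockwell2013time}) then give $\lambda_{\min}(\Gamma_K) \geq 2\pi \inf_\omega f_\varepsilon(\omega) =: \lambda_0 > 0$. Combining this with the convergence in probability and the continuity of $\lambda_{\min}(\cdot)$ yields the stated result. The continuous mapping step is immediate, so the only genuine work is the spectral density lower bound, which follows directly from the root condition.
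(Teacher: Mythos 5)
Your proposal is correct, and its overall architecture matches the paper's: an ergodic/LLN argument for (i) and (iii), a martingale-difference second-moment bound for (ii), and convergence of the sample second-moment matrix to $E(x_tx_t')$ plus positive definiteness of that population matrix for (iv). The differences are in the supporting ingredients. For (i)--(iii) the paper simply cites LLN results in \citet{white2014asymptotic} and the ergodicity analysis in \citet{Hamilton1994}, whereas you derive everything from the causal MA($\infty$) representation, which is equivalent; your Chebyshev bound in (ii) even gives the sharper rate $O_P(T^{-1/2})$ where the paper only records $o_P(1)$. For (iv) the paper invokes Proposition 5.1.1 of \citet{brockwell2013time} (nonsingularity of $\Gamma_K$ when $\gamma(0)>0$ and $\gamma(h)\to 0$), while you obtain the eigenvalue bound from the spectral-density inequality $\lambda_{\min}(\Gamma_K)\geq 2\pi\inf_\omega f_\varepsilon(\omega)$, with $f_\varepsilon$ bounded below because $|1-\sum_j\rho_je^{-ij\omega}|$ is bounded above. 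Your route has the advantage of producing an explicit, quantitative $\lambda_0$ that is uniform over coefficient vectors whose roots are uniformly bounded away from the unit circle, which matches the lemma's ``uniformly bounded away'' phrasing more tightly than the qualitative nonsingularity statement; the contradiction detour preceding it is redundant given that bound, but harmless. Note that both your argument and the paper's implicitly require $\sigma_u^2>0$ (otherwise (iv) fails), a hypothesis neither states explicitly.
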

\begin{proof}
\textbf{Proof of part (i)}. Part (i) follows by the law of large numbers;
see e.g., Theorem 3.1 of \citet{white2014asymptotic}. 

\textbf{Proof of part (ii)}. Let $\mathcal{F}_{t}$ be the $\sigma$-algebra
generated by $\{ u_{s}: s\leq t \} $. First notice that $\{x_{t}u_{t}\}_{t=K+1}^{T}$ is a
martingale difference sequence with respect to the filtration $\{\mathcal{F}_t\} $. Since $\varepsilon_{t}$ is a stationary
process, we have that $E\|x_{t}u_{t}\|^{2}=  \sum_{j=1}^{K}E(\varepsilon_{t-j}^{2}u_{t}^{2})= \sum_{j=1}^{K}E(\varepsilon_{t-j}^{2})E(u_{t}^{2})$
is uniformly bounded. Hence, part (ii) follows by Exercise
3.77 of \citet{white2014asymptotic}. 

\textbf{Proof of part (iii)}. To see part (iii), notice that $\|x_{t}\|^{2} = x_{t}'x_{t}=\sum_{j=1}^{K}\varepsilon_{t-j}^{2}$.
By the analysis on page 215 of \citet{Hamilton1994}, for each $1\leq j\leq K$,
$(T-K)^{-1}\sum_{t=K+1}^{T}\varepsilon_{t-j}^{2}=E(\varepsilon_{t-j}^{2})+o_{P}(1)$.
Thus, part (iii) follows by
\[
(T-K)^{-1}\sum_{t=K+1}^{T}\|x_{t}\|^{2}=(T-K)^{-1}\sum_{j=1}^{K}\sum_{t=K+1}^{T}\varepsilon_{t-j}^{2}=K\left(E(\varepsilon_{t}^{2})+o_{P}(1)\right).
\]

\textbf{Proof of part (iv)}. Similarly, the analysis on page 215 of
\citet{Hamilton1994} implies that $$(T-K)^{-1}\sum_{t=K+1}^{T}x_{t}x_{t}'=o_{P}(1)+Ex_{t}x_{t}'.$$
By Proposition 5.1.1 of  \citet{brockwell2013time}, $E(x_{t}x_{t}')$
has eigenvalues bounded away from zero. Part (iv) follows. 
\end{proof}

Now we are ready to prove Lemma \ref{lem: pre-whitening u hat}.

\begin{proof}[Proof of Lemma \ref{lem: pre-whitening u hat}]
 Define $\delta_{t}=\hat{\varepsilon}_{t}-\varepsilon_{t}$, $\Delta_{t}=\hat{x}_{t}-x_{t}$,
$\tilde{u}_{t}=u_{t}+\delta_{t}-\Delta_{t}'\rho$ and $a_{t}=\tilde{u}_{t}-u_{t}$.
Notice that 
\begin{equation}
\hat{\varepsilon}_{t}=\delta_{t}+\varepsilon_{t}=\delta_{t}+x_{t}'\rho+u_{t}=\delta_{t}+\left(\hat{x}_{t}-\Delta_{t}\right)'\rho+u_{t}=\hat{x}_{t}'\rho+\tilde{u}_{t}.\label{eq: pre-white eq 1}
\end{equation}

Therefore, 
\begin{align}
\hat{\rho}=\left(\sum_{t=K+1}^{T}\hat{x}_{t}\hat{x}_{t}'\right)^{-1}\left(\sum_{t=K+1}^{T}\hat{x}_{t}\hat{\varepsilon}_{t}\right) & =\left(\sum_{t=K+1}^{T}\hat{x}_{t}\hat{x}_{t}'\right)^{-1}\left(\sum_{t=K+1}^{T}\hat{x}_{t}(\hat{x}_{t}'\rho+\tilde{u}_{t})\right)\nonumber \\
 & =\rho+\left(\sum_{t=K+1}^{T}\hat{x}_{t}\hat{x}_{t}'\right)^{-1}\left(\sum_{t=K+1}^{T}\hat{x}_{t}\tilde{u}_{t}\right).\label{eq: pre-white eq 2}
\end{align}

The rest of the proof proceeds in three steps. First two steps show
that $(T-K)^{-1}\sum_{t=K+1}^{T}\hat{x}_{t}\hat{x}_{t}'$
is well-behaved and $(T-K)^{-1}\sum_{t=K+1}^{T}\hat{x}_{t}\tilde{u}_{t}=o_{P}(1)$.
This would imply $\hat{\rho}=\rho+o_{P}(1)$. In the third step, we
derive the final result. 

\textbf{Step 1:} show that $\left[(T-K)^{-1}\sum_{t=K+1}^{T}\hat{x}_{t}\hat{x}_{t}'\right]^{-1}=O_{P}(1)$. 

It is not hard to see that $\|\Delta_{t}\|^{2}=\sum_{s=t-1}^{t-K}\delta_{s}^{2}$.
Therefore, 
\begin{equation}
\sum_{t=K+1}^{T}\|\Delta_{t}\|^{2}=\sum_{t=K+1}^{T}\sum_{s=t-1}^{t-K}\delta_{s}^{2}\leq K\sum_{t=1}^{T}\delta_{t}^{2}\overset{{\rm (i)}}{=}o_{P}(T),\label{eq: pre-white eq 5}
\end{equation}
where (i) follows by the assumption of $T^{-1}\sum_{t=1}^{T}\delta_{t}^{2}=o_{P}(1)$.
Notice that 
\begin{align}
\left\Vert \sum_{t=K+1}^{T}\left(\hat{x}_{t}\hat{x}_{t}'-x_{t}x_{t}'\right)\right\Vert  & =\left\Vert \sum_{t=K+1}^{T}\left(x_{t}\Delta_{t}'+\Delta_{t}x_{t}'+\Delta_{t}\Delta_{t}'\right)\right\Vert \nonumber \\
 & \leq2\sum_{t=K+1}^{T}\|x_{t}\|\cdot\|\Delta_{t}\|+\sum_{t=K+1}^{T}\|\Delta_{t}\|^{2}\nonumber \\
 & \leq2\sqrt{\left(\sum_{t=K+1}^{T}\|x_{t}\|^{2}\right)\left(\sum_{t=K+1}^{T}\|\Delta_{t}\|^{2}\right)}+\sum_{t=K+1}^{T}\|\Delta_{t}\|^{2}\overset{{\rm (i)}}{=}o_{P}(T),\label{eq: pre-white eq 5.5}
\end{align}
where (i) follows by (\ref{eq: pre-white eq 5}) and Lemma \ref{lem: time series tool}.
Thus, 
\[
\left\Vert \frac{1}{T-K}\sum_{t=K+1}^{T}\left(\hat{x}_{t}\hat{x}_{t}'-x_{t}x_{t}'\right)\right\Vert =o_{P}(1).
\]

By Lemma \ref{lem: time series tool}, the smallest eigenvalue of
$(T-K)^{-1}\sum_{t=K+1}^{T}x_{t}x_{t}'$ is bounded below by a positive constant with probability approaching one. It follows that 
\begin{equation}
\left[(T-K)^{-1}\sum_{t=K+1}^{T}\hat{x}_{t}\hat{x}_{t}'\right]^{-1}=O_{P}(1).\label{eq: pre-white eq 2.5}
\end{equation}

\textbf{Step 2:} show that $(T-K)^{-1}\sum_{t=K+1}^{T}\hat{x}_{t}\tilde{u}_{t}=o_{P}(1)$. 

By Lemma \ref{lem: time series tool}, we have 
\begin{equation}
(T-K)^{-1}\sum_{t=K+1}^{T}x_{t}u_{t}=o_{P}(1).\label{eq: pre-white eq 3}
\end{equation}

Notice that 
\begin{align}
\left\Vert \frac{1}{T-K}\sum_{t=K+1}^{T}\left(\hat{x}_{t}\tilde{u}_{t}-x_{t}u_{t}\right)\right\Vert  & =\left\Vert \frac{1}{T-K}\sum_{t=K+1}^{T}\left(\Delta_{t}u_{t}+x_{t}a_{t}+\Delta_{t}a_{t}\right)\right\Vert \nonumber \\
 & \leq\frac{1}{T-K}\sum_{t=K+1}^{T}\left(\left\Vert \Delta_{t}u_{t}\right\Vert +\left\Vert x_{t}a_{t}\right\Vert +\left\Vert \Delta_{t}a_{t}\right\Vert \right)\nonumber \\
 & \leq\sqrt{\left(\frac{1}{T-K}\sum_{t=K+1}^{T}\|\Delta_{t}\|^{2}\right)\left(\frac{1}{T-K}\sum_{t=K+1}^{T}u_{t}^{2}\right)}\nonumber \\
 & \qquad+\sqrt{\left(\frac{1}{T-K}\sum_{t=K+1}^{T}\|x_{t}\|^{2}\right)\left(\frac{1}{T-K}\sum_{t=K+1}^{T}a_{t}^{2}\right)}\nonumber \\
 & \qquad+\sqrt{\left(\frac{1}{T-K}\sum_{t=K+1}^{T}\|\Delta_{t}\|^{2}\right)\left(\frac{1}{T-K}\sum_{t=K+1}^{T}a_{t}^{2}\right)}.\label{eq: pre-white eq 4}
\end{align}

We observe that
\begin{align}
\sum_{t=K+1}^{T}a_{t}^{2}=\sum_{t=K+1}^{T}\left(\delta_{t}-\Delta_{t}'\rho\right)^{2} & \leq2\sum_{t=K+1}^{T}\delta_{t}^{2}+2\sum_{t=K+1}^{T}(\Delta_{t}'\rho)^{2}\nonumber \\
 & \leq2\sum_{t=1}^{T}\delta_{t}^{2}+2\|\rho\|^{2}\sum_{t=K+1}^{T}\|\Delta_{t}\|^{2}\overset{{\rm (i)}}{=}O_{P}(T),\label{eq: pre-white eq 6}
\end{align}
where (i) follows by (\ref{eq: pre-white eq 5}) and the assumption
of $T^{-1}\sum_{t=1}^{T}\delta_{t}^{2}=o_{P}(1)$. Combining (\ref{eq: pre-white eq 4}) with (\ref{eq: pre-white eq 5}) and (\ref{eq: pre-white eq 6}),
we obtain 
\begin{multline}
\left\Vert \frac{1}{T-K}\sum_{t=K+1}^{T}\left(\hat{x}_{t}\tilde{u}_{t}-x_{t}u_{t}\right)\right\Vert \\
\leq\sqrt{o_{P}(1)\left(\frac{1}{T-K}\sum_{t=K+1}^{T}u_{t}^{2}\right)}+\sqrt{\left(\frac{1}{T-K}\sum_{t=K+1}^{T}\|x_{t}\|^{2}\right)o_{P}(1)}+\sqrt{o_{P}(1)\times o_{P}(1)}\overset{{\rm (i)}}{=}o_{P}(1),\label{eq: pre-white eq 7}
\end{multline}
 where (i) follows by Lemma \ref{lem: time series tool}. Now we
combine (\ref{eq: pre-white eq 3}) and (\ref{eq: pre-white eq 7}),
obtaining 
\begin{equation}
(T-K)^{-1}\sum_{t=K+1}^{T}\hat{x}_{t}\tilde{u}_{t}=o_{P}(1).\label{eq: pre-white eq 8}
\end{equation}

By (\ref{eq: pre-white eq 2}) together with (\ref{eq: pre-white eq 2.5})
and (\ref{eq: pre-white eq 8}), it follows that 
\begin{equation}
\hat{\rho}-\rho=o_{P}(1).\label{eq: pre-white eq 9}
\end{equation}

\textbf{Step 3:} show the desired result.

Recall that $\hat{u}_{t}=\hat{\varepsilon}_{t}-\hat{x}_{t}'\hat{\rho}$.
Hence, 
\begin{equation}
\hat{u}_{t}-u_{t}=\left(\hat{\varepsilon}_{t}-\hat{x}_{t}'\hat{\rho}\right)-u_{t}\overset{{\rm (i)}}{=}\left(\hat{x}_{t}'(\rho-\hat{\rho})+\tilde{u}_{t}\right)-u_{t}=\hat{x}_{t}'(\rho-\hat{\rho})+a_{t},\label{eq: pre-white eq 10}
\end{equation}
where (i) follows by (\ref{eq: pre-white eq 1}). Therefore, we
have 
\begin{align*}
\sum_{t=K+1}^{T}\left(\hat{u}_{t}-u_{t}\right)^{2} & =\sum_{t=K+1}^{T}\left(\hat{x}_{t}'(\rho-\hat{\rho})+a_{t}\right)^{2}\\
 & \leq2\sum_{t=K+1}^{T}\left(\hat{x}_{t}'(\hat{\rho}-\rho)\right)^{2}+2\sum_{t=K+1}^{T}a_{t}^{2}\\
 & \leq2\|\hat{\rho}-\rho\|^{2}\sum_{t=K+1}^{T}\|\hat{x}_{t}\|^{2}+2\sum_{t=K+1}^{T}a_{t}^{2}\\
 & =2\|\hat{\rho}-\rho\|^{2}\left(\sum_{t=K+1}^{T}{\rm trace}(x_{t}x_{t}')+\sum_{t=K+1}^{T}{\rm trace}(\hat{x}_{t}\hat{x}_{t}'-x_{t}x_{t}')\right)+2\sum_{t=K+1}^{T}a_{t}^{2}\\
 & \overset{{\rm (i)}}{\leq}o_{P}(1)\times\left(O_{P}(T)+o_{P}(T)\right)+o_{P}(T)=o_{P}(T),
\end{align*}
where (i) follows by (\ref{eq: pre-white eq 5.5}), (\ref{eq: pre-white eq 9}),
(\ref{eq: pre-white eq 6}) and Lemma \ref{lem: time series tool}. 

To see the pointwise result, we notice that by (\ref{eq: pre-white eq 10})
and (\ref{eq: pre-white eq 9}), it suffices to verify that $a_{t}=o_{P}(1)$
and $\hat{x}_{t}=O_{P}(1)$ for $T_{0}+1\leq t\leq T$. 

Since $\hat{x}_{t}-x_{t}=(\delta_{t-1},\delta_{t-2},...,\delta_{t-K})'$,
the assumption of pointwise convergence of $\hat{\varepsilon}_{t}$
(i.e., $\delta_{t}=o_{P}(1)$ for $T_{0}+1-K\leq t\leq T$) implies
that $\hat{x}_{t}-x_{t}=o_{P}(1)$ for $T_{0}+1\leq t\leq T$. Since
$x_{t}=O_{P}(1)$ due to the stationarity condition, we have $\hat{x}_{t}=O_{P}(1)$
for $T_{0}+1\leq t\leq T$. 

Since both $\delta_{t}$ and $\Delta_{t}$ are both $o_{P}(1)$ for
$T_{0}+1\leq t\leq T$, so is $a_{t}=\delta_{t}-\Delta_{t}'\rho$.
Hence, we have proved the pointwise result. The proof is complete. 
\end{proof}

\subsection{Proof of Lemma  \ref{lem: low-dimensional stability}}

	Fix an arbitrary $\eta>0$. Define $a_{\eta}=\inf_{\|\beta-\beta_{*}\|_{2}\geq\eta}(L(\beta)-L(\beta_{*}))/3$.
	By the compactness of $\Bcal$ and the uniqueness of the minimum of
	$L(\cdot)$, we have $a_{\eta}>0$. 
	
	(Otherwise, one can find a sequence $\{\beta_{k}\}_{k=1}^{\infty}$
	with $\|\beta_{k}-\beta_{*}\|_{2}\geq\eta$ for all $k\geq1$ with
	$L(\beta_{k})\rightarrow L(\beta_{*})$. By compactness of $\Bcal$
	implies that some subsequence of $\beta_{k}$ converges to a point
	$\beta_{**}\in\Bcal$. Clearly, $\|\beta_{**}-\beta_{*}\|_{2}\geq\eta$.
	The continuity of $L(\cdot)$ implies $L(\beta_{**})=L(\beta_{*})$.
	This contradicts the uniqueness of the minimum of $L(\cdot)$.)
	
	Define the event $$\Mcal=\left\{ \sup_{\beta}|\hat{L}(\Zb;\beta)-L(\beta)|\leq a_{\eta}\right\} \bigcap\left\{ \max_{H\in\Hcal}\sup_{\beta}|\hat{L}(\Zb_{H};\beta)-L(\beta)|\leq a_{\eta}\right\}. $$
	By the assumption, we know $P(\Mcal)=1-o(1)$. 
	
	Notice that on the event $\Mcal$, 
	\[
	L(\hbeta(\Zb))-L(\beta_{*})\leq a_{\eta}+\hat{L}(\Zb;\hbeta(\Zb))-L(\beta_{*})\leq2a_{\eta}+\hat{L}(\Zb;\hbeta(\Zb))-\hat{L}(\Zb;\beta_{*})\leq2a_{\eta}.
	\]
	
	It follows by the definition of $a_{\eta}$ that $\|\hbeta(\Zb)-\beta_{*}\|_{2}\leq\eta$
	on the event $\Mcal$. Thus, $P(\|\hbeta-\beta_{*}\|_{2}\leq\eta)\geq P(\Mcal)=1-o(1)$.
	Since $\eta>0$ is arbitrary, we have $\|\hbeta(\Zb)-\beta_{*}\|_{2}=o_{P}(1)$. 
	
	By the same analysis, we have that on the event $\Mcal$, $\|\hbeta(\Zb_{H})-\beta_{*}\|_{2}\leq\eta$
	for all $H\in\Hcal$. Thus, on the event $\Mcal$, $\max_{H\in\Hcal}\|\hbeta(\Zb_{H})-\beta_{*}\|_{2}\leq\eta$.
	We have that $\max_{H\in\Hcal}\|\hbeta(\Zb_{H})-\beta_{*}\|_{2}=o_{P}(1)$.
	The desired result follows. 

\subsection{Proof of Lemma  \ref{lem: classo stability}}
For notational simplicity, we write $\hbeta=\hbeta(\Zb) $ and  $\hbeta_{H}=\hbeta(\Zb_{H}) $. Define the event $\Mcal=\Mcal_{1}\bigcap\Mcal_{2}$, where $\Mcal_{1}=\{\min_{\|v\|_{0}\leq m}v'\hSigma v/\|v\|_{2}^{2}\geq\kappa_{0}\}\bigcap\{\|\hbeta\|_{0}\leq s/2\}$
	and 
	\[
	\Mcal_{2}=\bigcap_{H\in\Hcal}\left(\left\{ \|\hSigma_{H}-\hSigma\|_{\infty}\leq c_{T},\ \|\hmu_{H}-\hmu\|_{\infty}\leq c_{T}\right\} \bigcap\left\{ \max_{H\in\Hcal}\|\hbeta_{H}\|_{0}\leq s/2\right\} \right).
	\]
	
	By assumption, $P(\Mcal)\geq1-\gamma_{1,T}-\gamma_{2,T}-\gamma_{3,T}$.
	The rest of the argument are statements on the event $\Mcal$. 
	
	Fix $H\in\Hcal$, let $\Delta=\hbeta_{H}-\hbeta$. Define $\xi=\hmu-\hSigma\hbeta$
	and $\xi_{H}=\hmu_{H}-\hSigma_{H}\hbeta$. Since $\|\hbeta\|_{1}\leq K$,
	we have 
	\begin{equation}
	\|\xi_{H}-\xi\|_{\infty}\leq\|\hmu_{H}-\hmu\|_{\infty}+\|\hSigma_{H}-\hSigma\|_{\infty}\|\hbeta\|_{1}\leq c_{T}(1+K).\label{eq: classo stability eq 7}
	\end{equation}
	
	When $\Delta=0$, the result clearly holds. Now we consider the case
	with $\Delta\neq0$. 
	
	\textbf{Step 1:} show that on the event $\Mcal$, $0\leq\lambda^{2}\Delta'\hSigma\Delta-\lambda\Delta'\hmu\leq c_{T}K^{2}+2c_{T}K$
	for any $\lambda\in[0,1]$. 
	
	Recall that $\hQ(\beta)=\beta'\hSigma\beta-2\hmu'\beta+T^{-1}\sum_{t=1}^{T}Y_{t}^{2}$. Since the term $T^{-1}\sum_{t=1}^{T}Y_{t}^{2}$ does not affect the minimizer, we modify $\hQ$ by dropping this term. With a slight abuse of notation, we still use the symbol  $\hQ(\beta)=\beta'\hSigma\beta-2\hmu'\beta$ and  $\hQ_{H}(\beta)=\beta'\hSigma_{H}\beta-2\hmu_{H}'\beta$. Therefore,  
	\[
	\hQ(\beta)-\hQ_{H}(\beta)=\beta'(\hSigma-\hSigma_{H})\beta-2(\hmu-\hmu_{H})'\beta.
	\]
	
	Since $\sup_{\beta\in\Wcal}\|\beta\|_{1}\leq K$, we have that on
	the event $\Mcal$, 
	\[
	\sup_{\beta\in\Wcal}\left|\hQ(\beta)-\hQ_{H}(\beta)\right|\leq c_{T}K^{2}+2c_{T}K.
	\]
	
	Let $\bar{\beta}=\hbeta+\lambda\Delta$, where $\lambda\in[0,1]$.
	Then clearly, $\bar{\beta}=\lambda\hbeta_{H}+(1-\lambda)\hbeta$.
	By definition of $\hbeta$, we have that $\hQ(\hbeta)\leq\hQ(\bar{\beta})$,
	which means that 
	\begin{equation}
	\lambda^{2}\Delta'\hSigma\Delta-2\lambda\xi'\Delta\geq0.\label{eq: classo stability eq 8}
	\end{equation}
	
	Clearly, $\hQ_{H}(\hbeta_{H})\leq\hQ_{H}(\bar{\beta})$ for any $\lambda\in[0,1]$. Hence, $\hQ_{H}(\hbeta+\Delta)\leq\hQ_{H}(\hbeta+\lambda \Delta)$ for any $\lambda\in(0,1)$.	By $\hQ_{H}(\beta)=\beta'\hSigma_{H}\beta-2\hmu_{H}'\beta$, this simplifies to $(1+\lambda)\Delta' \hSigma_{H} \Delta \leq 2\xi_{H}'\Delta $. Since  $\lambda$ can be arbitrarily close to one, this means	$\Delta' \hSigma_{H} \Delta \leq \xi_{H}'\Delta $. It follows that for any $\lambda \in[0,1]$, we have
	\[
	\lambda^{2}\Delta'\hSigma_{H}\Delta\leq2\lambda\xi_{H}'\Delta.
	\]
	
	Notice that 
	\begin{multline*}
	0\leq2\lambda\xi_{H}'\Delta-\lambda^{2}\Delta'\hSigma_{H}\Delta=2\lambda\xi'\Delta-\lambda^{2}\Delta'\hSigma\Delta+2\lambda(\xi_{H}-\xi)'\Delta+\lambda^{2}\Delta'(\hSigma-\hSigma_{H})\Delta\\
	\leq2\lambda\xi'\Delta-\lambda^{2}\Delta'\hSigma\Delta+2\lambda\|\xi_{H}-\xi\|_{\infty}\|\Delta\|_{1}+\lambda^{2}\|\hSigma-\hSigma_{H}\|_{\infty}\|\Delta\|_{1}^{2}.
	\end{multline*}
	
	It follows, by (\ref{eq: classo stability eq 7}) and $\|\Delta\|_{1}\leq\|\hbeta_{H}\|_{1}+\|\hbeta\|_{1}\leq2K$,
	that
	\begin{equation}
	\lambda^{2}\Delta'\Sigma\Delta-2\lambda\xi'\Delta\leq2\lambda\|\xi_{H}-\xi\|_{\infty}\|\Delta\|_{1}+\lambda^{2}\|\hSigma-\hSigma_{H}\|_{\infty}\|\Delta\|_{1}^{2}\leq4c_{T}(1+K)K+4c_{T}K^{2}.\label{eq: classo stability eq 9}
	\end{equation}
	
	Since (\ref{eq: classo stability eq 8}) and (\ref{eq: classo stability eq 9})
	hold for any $\lambda\in[0,1]$, we have that 
	\begin{equation}
	0\leq\lambda^{2}\Delta'\hSigma\Delta-2\lambda\xi'\Delta\leq4c_{T}K(2K+1)\qquad\forall\lambda\in[0,1].\label{eq: classo stability eq 10}
	\end{equation}
	
	\textbf{Step 2:} show the desired result.
	
	Since $0\leq\lambda^{2}\Delta'\hSigma\Delta-2\lambda\xi'\Delta$ for
	any $\lambda\in(0,1)$, we have that $\xi'\Delta\leq\lambda\Delta'\hSigma\Delta/2$
	for any $\lambda\in(0,1)$. Thus, 
	\[
	\xi'\Delta\leq0.
	\]
	
	Hence, by the second inequality in (\ref{eq: classo stability eq 10}),
	for any $\lambda\in[0,1]$,
	\[
	\lambda^{2}\Delta'\hSigma\Delta\leq\lambda^{2}\Delta'\hSigma\Delta-2\lambda\xi'\Delta\leq4c_{T}K(2K+1).
	\]
	
	Now we take $\lambda=1$, which implies 
	\[
	\Delta'\hSigma\Delta\leq4c_{T}K(2K+1).
	\]
	
	Since $\|\Delta\|_{0}\leq\|\hbeta_{H}\|_{0}+\|\hbeta\|_{0}\leq s$
	and $\|\Delta\|_{1}\leq\sqrt{\|\Delta\|_{0}}\|\Delta\|_{2}$, it follows
	that 
	\[
	4c_{T}K(2K+1)\geq\Delta'\hSigma\Delta\geq\kappa_{1}\|\Delta\|_{2}^{2}\geq\kappa_{1}s^{-1}\|\Delta\|_{1}^{2}.
	\]
	
	Hence, $\|\Delta\|_{1}\leq2\sqrt{\kappa_{1}sc_{T}K(2K+1)}$ and 
	\[
	\left|(Y_{t}-X_{t}'\hbeta)-(Y_{t}-X_{t}'\hbeta_{H})\right|=|X_{t}'\Delta|\leq\|X_{t}\|_{\infty}\|\Delta\|_{1}\leq2\kappa_{2}\sqrt{\kappa_{1}sc_{T}K(2K+1)}.
	\]
	
	On the event $\Mcal$, the above bound holds for all $H\in\Hcal$.
	The desired result follows by $P(\Mcal)\geq1-\gamma_{1,T}-\gamma_{2,T}-\gamma_{3,T}$.

\subsection{Proof of Lemma \ref{lem: ridge stability}}
	For notational simplicity, we write $\hbeta$ instead of $\hbeta_{\lambda}$
	and $c_{T}=|H|$. Let $\hOmega=X'X$, $\tOmega=\tilde{X}'\tilde{X}$,
	$\hmu=X'u$ and $\tmu=\tilde{X}'\tilde{u}.$ We work on the event
	on which $\kappa_{1}T\leq\lambda_{\min}(\hOmega)\leq\lambda_{\max}(\hOmega)\leq\kappa_{2}T$,
	$\|\hmu-\tilde{\mu}\|_{2}\leq\kappa_{3}\sqrt{Jc_{T}}$ and $\|\hOmega-\tilde{\Omega}\|\leq\kappa_{4}(c_{T}+J)$.
	Let $q_{1}\geq q_{2}\geq\cdots\geq q_{J}$ denote the eigenvalues
	of $\hOmega$. 
	
	\textbf{Step 1:} show inconsistency.
	
	Notice that 
	\begin{align}
	\hbeta & =(\hOmega+\lambda I)^{-1}X'(X\beta+u)\nonumber \\
	& =(\hOmega+\lambda I)^{-1}\hOmega\beta+(\hOmega+\lambda I)^{-1}X'u\nonumber \\
	& =\beta+\left[(\hOmega+\lambda I)^{-1}\hOmega-I\right]\beta+(\hOmega+\lambda I)^{-1}X'u\nonumber \\
	& =\beta+(\hOmega+\lambda I)^{-1}\left[\hOmega-(\hOmega+\lambda I)\right]\beta+(\hOmega+\lambda I)^{-1}X'u\nonumber \\
	& =\beta-\lambda(\hOmega+\lambda I)^{-1}\beta+(\hOmega+\lambda I)^{-1}X'u.\label{eq: ridge 4}
	\end{align}
	
	Therefore, 
	
	\begin{align*}
	& E\left(\|X(\hbeta-\beta)\|_{2}^{2}\mid X\right)\\
	& =\|\lambda X(\hOmega+\lambda I)^{-1}\beta\|_{2}^{2}+E\left(\|X(\hOmega+\lambda I)^{-1}X'u\|_{2}^{2}\mid X\right)\\
	& =\lambda^{2}\beta'(\hOmega+\lambda I)^{-1}\hOmega(\hOmega+\lambda I)^{-1}\beta+E\left(u'X(\hOmega+\lambda I)^{-1}\hOmega(\hOmega+\lambda I)^{-1}X'u\mid X\right)\\
	& =\lambda^{2}\beta'(\hOmega+\lambda I)^{-1}\hOmega(\hOmega+\lambda I)^{-1}\beta+E\left(\trace\left[(\hOmega+\lambda I)^{-1}\hOmega(\hOmega+\lambda I)^{-1}X'uu'X\right]\mid X\right)\\
	& =\lambda^{2}\beta'(\hOmega+\lambda I)^{-1}\hOmega(\hOmega+\lambda I)^{-1}\beta+\trace\left[(\hOmega+\lambda I)^{-1}\hOmega(\hOmega+\lambda I)^{-1}\hOmega\right].
	\end{align*}
	
	Notice that 
	\[
	\trace\left[(\hOmega+\lambda I)^{-1}\hOmega(\hOmega+\lambda I)^{-1}\hOmega\right]=\sum_{i=1}^{J}q_{i}^{2}(q_{i}+\lambda)^{-2}\gtrsim\frac{T^{2}J}{(T+\lambda)^{2}}
	\]
	and 
	\[
	\lambda^{2}\beta'(\hOmega+\lambda I)^{-1}\hOmega(\hOmega+\lambda I)^{-1}\beta\geq\lambda^{2}\|\beta\|_{2}^{2}\min_{1\leq i\leq J}q_{i}(q_{i}+\lambda)^{-2}\gtrsim\frac{\lambda^{2}T}{(T+\lambda)^{2}}.
	\]
	
	Then we have 
	\[
	E\left[T^{-1}\|X(\hbeta-\beta)\|_{2}^{2}\mid X\right]\gtrsim\frac{TJ+\lambda^{2}}{(T+\lambda)^{2}}\gtrsim\frac{TJ}{(T+\lambda)^{2}}+\frac{\lambda^{2}}{(T+\lambda)^{2}}.
	\]
	
	Since $\lambda\asymp T$, we have $E(\|X(\hbeta-\beta)\|_{2}^{2})\gtrsim T$. 
	
	\textbf{Step 2:} show stability.
	
	Similar to (\ref{eq: ridge 4}), we notice that the perturbed estimator
	would be
	\[
	\tbeta=\beta-\lambda(\tOmega+\lambda I)^{-1}\beta+(\tOmega+\lambda I)^{-1}\tmu.
	\]
	
	Then we bound 
	\begin{align*}
	X_{t}'(\hbeta-\tbeta) & =\lambda X_{t}'\left[(\tOmega+\lambda I)^{-1}-(\hOmega+\lambda I)^{-1}\right]\beta+X_{t}'\left[(\hOmega+\lambda I)^{-1}\hmu-(\tOmega+\lambda I)^{-1}\tmu\right]\\
	& =\lambda X_{t}'\left[(\tOmega+\lambda I)^{-1}-(\hOmega+\lambda I)^{-1}\right]\beta+X_{t}'\left[(\hOmega+\lambda I)^{-1}(\hmu-\tmu)+\left((\hOmega+\lambda I)^{-1}-(\tOmega+\lambda I)^{-1}\right)\tmu\right]\\
	& =X_{t}'\left[(\tOmega+\lambda I)^{-1}-(\hOmega+\lambda I)^{-1}\right]\left(\lambda\beta-\tmu\right)+X_{t}'(\hOmega+\lambda I)^{-1}(\hmu-\tmu).
	\end{align*}
	Now we notice that 
	\begin{align*}
	\|(\tOmega+\lambda I)^{-1}-(\hOmega+\lambda I)^{-1}\| & =\left\Vert (\tOmega+\lambda I)^{-1}\left[(\hOmega+\lambda I)-(\tOmega+\lambda I)\right](\hOmega+\lambda I)^{-1}\right\Vert \\
	& =\|(\tOmega+\lambda I)^{-1}(\hOmega-\tOmega)(\hOmega+\lambda I)^{-1}\|=O_{P}\left(\frac{(J+c_{T})}{(T+\lambda)^{2}}\right)
	\end{align*}
	and $\|\lambda\beta-\tmu\|_{2}\leq\lambda+\|\tmu\|_{2}=O_{P}(\lambda+\sqrt{JT})$.
	This means that 
	\[
	\left|X_{t}'\left[(\tOmega+\lambda I)^{-1}-(\hOmega+\lambda I)^{-1}\right]\left(\lambda\beta-\tmu\right)\right|=O_{P}\left(\frac{(J+c_{T})}{(T+\lambda)^{2}}\left(\lambda+\sqrt{JT}\right)\sqrt{J}\right).
	\]
	
	Moreover, we also have
	\[
	\left|X_{t}'(\hOmega+\lambda I)^{-1}(\hmu-\tmu)\right|\leq O_{P}\left(\sqrt{J}\frac{1}{\mu+\lambda}\sqrt{Jc_{T}}\right)=O_{P}\left(\frac{J\sqrt{c_{T}}}{T+\lambda}\right).
	\]
	
	It follows that 
	\[
	\max_{1\leq t\leq T}|X_{t}'(\hbeta-\tbeta)|=O_{P}\left(\frac{(J+c_{T})}{(T+\lambda)^{2}}\left(\lambda+\sqrt{JT}\right)\sqrt{J}+\frac{J\sqrt{c_{T}}}{T+\lambda}\right).
	\]
	
	Since $J\ll T$ and $T\gtrsim\lambda\gg J^{3/2}$, we have $\max_{1\leq t\leq T}|X_{t}'(\hbeta-\tbeta)|=o_{P}(1)$.

\newpage
\section{Tables and Figures Appendix}

\begin{table}[ht]
\centering

\caption{Size Properties: $F_{2t}\overset{iid}\sim N(0,1)$, $\rho_\epsilon=\rho_u=0$}
\scriptsize
\begin{tabular}{lccccccccc}
\\
\toprule
\midrule

&\multicolumn{9}{c}{DGP1} \\
\cmidrule(l{5pt}r{5pt}){2-10}   \ 
&\multicolumn{3}{c}{Diff-in-Diffs}&\multicolumn{3}{c}{Synthetic Control}&\multicolumn{3}{c}{Constrained Lasso}\\
\cmidrule(l{5pt}r{5pt}){2-4}    \cmidrule(l{5pt}r{5pt}){5-7}    \cmidrule(l{5pt}r{5pt}){8-10}  \ 
$T_0$&$J=20$&$J=50$& $J=100$&$J=20$&$J=50$& $J=100$&$J=20$&$J=50$& $J=100$\\
\midrule

20 & 0.10 & 0.10 & 0.09 & 0.10 & 0.10 & 0.10 & 0.10 & 0.09 & 0.10 \\ 
  50 & 0.10 & 0.09 & 0.10 & 0.10 & 0.10 & 0.10 & 0.09 & 0.10 & 0.10 \\ 
  100 & 0.10 & 0.10 & 0.10 & 0.10 & 0.10 & 0.10 & 0.10 & 0.10 & 0.10 \\

\midrule

&\multicolumn{9}{c}{DGP2} \\
\cmidrule(l{5pt}r{5pt}){2-10}   \ 
&\multicolumn{3}{c}{Diff-in-Diffs}&\multicolumn{3}{c}{Synthetic Control}&\multicolumn{3}{c}{Constrained Lasso}\\
\cmidrule(l{5pt}r{5pt}){2-4}    \cmidrule(l{5pt}r{5pt}){5-7}    \cmidrule(l{5pt}r{5pt}){8-10}  \ 
$T_0$&$J=20$&$J=50$& $J=100$&$J=20$&$J=50$& $J=100$&$J=20$&$J=50$& $J=100$\\
\midrule

20 & 0.10 & 0.09 & 0.09 & 0.10 & 0.09 & 0.09 & 0.10 & 0.09 & 0.10 \\ 
  50 & 0.10 & 0.10 & 0.10 & 0.09 & 0.10 & 0.10 & 0.09 & 0.10 & 0.10 \\ 
  100 & 0.11 & 0.10 & 0.10 & 0.11 & 0.10 & 0.10 & 0.11 & 0.10 & 0.10 \\ 

\midrule

&\multicolumn{9}{c}{DGP3} \\
\cmidrule(l{5pt}r{5pt}){2-10}   \ 
&\multicolumn{3}{c}{Diff-in-Diffs}&\multicolumn{3}{c}{Synthetic Control}&\multicolumn{3}{c}{Constrained Lasso}\\
\cmidrule(l{5pt}r{5pt}){2-4}    \cmidrule(l{5pt}r{5pt}){5-7}    \cmidrule(l{5pt}r{5pt}){8-10}  \ 
$T_0$&$J=20$&$J=50$& $J=100$&$J=20$&$J=50$& $J=100$&$J=20$&$J=50$& $J=100$\\
\midrule

20 & 0.10 & 0.09 & 0.09 & 0.09 & 0.10 & 0.09 & 0.11 & 0.09 & 0.09 \\ 
  50 & 0.10 & 0.10 & 0.09 & 0.10 & 0.10 & 0.10 & 0.10 & 0.10 & 0.10 \\ 
  100 & 0.10 & 0.09 & 0.10 & 0.10 & 0.09 & 0.10 & 0.10 & 0.10 & 0.09 \\ 

\midrule

&\multicolumn{9}{c}{DGP4} \\
\cmidrule(l{5pt}r{5pt}){2-10}   \ 
&\multicolumn{3}{c}{Diff-in-Diffs}&\multicolumn{3}{c}{Synthetic Control}&\multicolumn{3}{c}{Constrained Lasso}\\
\cmidrule(l{5pt}r{5pt}){2-4}    \cmidrule(l{5pt}r{5pt}){5-7}    \cmidrule(l{5pt}r{5pt}){8-10}  \ 
$T_0$&$J=20$&$J=50$& $J=100$&$J=20$&$J=50$& $J=100$&$J=20$&$J=50$& $J=100$\\
\midrule

20 & 0.10 & 0.10 & 0.10 & 0.10 & 0.09 & 0.10 & 0.10 & 0.10 & 0.10 \\ 
  50 & 0.11 & 0.10 & 0.10 & 0.10 & 0.10 & 0.10 & 0.10 & 0.10 & 0.10 \\ 
  100 & 0.10 & 0.10 & 0.10 & 0.10 & 0.10 & 0.10 & 0.10 & 0.10 & 0.11 \\

\midrule
\bottomrule
\multicolumn{10}{p{12cm}}{\scriptsize{\it Notes:} Simulation design as described in the main text. Nominal level $\alpha= 0.1$. Based on simulations with $5000$ repetitions.}
\end{tabular}
\label{tab:size_iid}
\end{table}
\normalsize

\begin{table}[H]
\centering
\caption{Size Properties: $F_{2t}\overset{iid}\sim N(0,1)$, $\rho_\epsilon=\rho_u=0.6$}
\scriptsize
\begin{tabular}{lccccccccc}
\\
\toprule
\midrule

&\multicolumn{9}{c}{DGP1} \\
\cmidrule(l{5pt}r{5pt}){2-10}   \ 
&\multicolumn{3}{c}{Diff-in-Diffs}&\multicolumn{3}{c}{Synthetic Control}&\multicolumn{3}{c}{Constrained Lasso}\\
\cmidrule(l{5pt}r{5pt}){2-4}    \cmidrule(l{5pt}r{5pt}){5-7}    \cmidrule(l{5pt}r{5pt}){8-10}  \ 
$T_0$&$J=20$&$J=50$& $J=100$&$J=20$&$J=50$& $J=100$&$J=20$&$J=50$& $J=100$\\
\midrule

20 & 0.13 & 0.13 & 0.12 & 0.12 & 0.11 & 0.10 & 0.12 & 0.11 & 0.11 \\ 
  50 & 0.11 & 0.11 & 0.10 & 0.12 & 0.12 & 0.12 & 0.12 & 0.12 & 0.12 \\ 
  100 & 0.11 & 0.10 & 0.10 & 0.11 & 0.11 & 0.12 & 0.12 & 0.12 & 0.11 \\ 

\midrule

&\multicolumn{9}{c}{DGP2} \\
\cmidrule(l{5pt}r{5pt}){2-10}   \ 
&\multicolumn{3}{c}{Diff-in-Diffs}&\multicolumn{3}{c}{Synthetic Control}&\multicolumn{3}{c}{Constrained Lasso}\\
\cmidrule(l{5pt}r{5pt}){2-4}    \cmidrule(l{5pt}r{5pt}){5-7}    \cmidrule(l{5pt}r{5pt}){8-10}  \ 
$T_0$&$J=20$&$J=50$& $J=100$&$J=20$&$J=50$& $J=100$&$J=20$&$J=50$& $J=100$\\
\midrule

20 & 0.12 & 0.12 & 0.11 & 0.12 & 0.11 & 0.11 & 0.12 & 0.11 & 0.11 \\ 
  50 & 0.11 & 0.11 & 0.11 & 0.11 & 0.12 & 0.12 & 0.12 & 0.11 & 0.12 \\ 
  100 & 0.10 & 0.10 & 0.11 & 0.11 & 0.11 & 0.12 & 0.11 & 0.11 & 0.12 \\ 
\midrule

&\multicolumn{9}{c}{DGP3} \\
\cmidrule(l{5pt}r{5pt}){2-10}   \ 
&\multicolumn{3}{c}{Diff-in-Diffs}&\multicolumn{3}{c}{Synthetic Control}&\multicolumn{3}{c}{Constrained Lasso}\\
\cmidrule(l{5pt}r{5pt}){2-4}    \cmidrule(l{5pt}r{5pt}){5-7}    \cmidrule(l{5pt}r{5pt}){8-10}  \ 
$T_0$&$J=20$&$J=50$& $J=100$&$J=20$&$J=50$& $J=100$&$J=20$&$J=50$& $J=100$\\
\midrule

20 & 0.10 & 0.10 & 0.10 & 0.10 & 0.09 & 0.09 & 0.11 & 0.10 & 0.10 \\ 
  50 & 0.10 & 0.10 & 0.10 & 0.10 & 0.10 & 0.10 & 0.12 & 0.12 & 0.12 \\ 
  100 & 0.10 & 0.10 & 0.10 & 0.10 & 0.10 & 0.10 & 0.12 & 0.13 & 0.12 \\ 

\midrule

&\multicolumn{9}{c}{DGP4} \\
\cmidrule(l{5pt}r{5pt}){2-10}   \ 
&\multicolumn{3}{c}{Diff-in-Diffs}&\multicolumn{3}{c}{Synthetic Control}&\multicolumn{3}{c}{Constrained Lasso}\\
\cmidrule(l{5pt}r{5pt}){2-4}    \cmidrule(l{5pt}r{5pt}){5-7}    \cmidrule(l{5pt}r{5pt}){8-10}  \ 
$T_0$&$J=20$&$J=50$& $J=100$&$J=20$&$J=50$& $J=100$&$J=20$&$J=50$& $J=100$\\
\midrule

20 & 0.11 & 0.11 & 0.10 & 0.11 & 0.10 & 0.10 & 0.12 & 0.11 & 0.12 \\ 
  50 & 0.11 & 0.11 & 0.11 & 0.11 & 0.11 & 0.11 & 0.12 & 0.12 & 0.12 \\ 
  100 & 0.10 & 0.10 & 0.11 & 0.11 & 0.11 & 0.11 & 0.11 & 0.11 & 0.11 \\ 
  
\midrule
\bottomrule
\multicolumn{10}{p{12cm}}{\scriptsize{\it Notes:} Simulation design as described in the main text. Nominal level $\alpha= 0.1$. Based on simulations with $5000$ repetitions.}
\end{tabular}
\label{tab:size_wd}
\end{table}
\normalsize

\begin{table}[H]
\centering

\caption{Size Properties: $F_{2t}\sim N(t,1)$, $\rho_\epsilon=\rho_u=0$}
\scriptsize
\begin{tabular}{lccccccccc}
\\
\toprule
\midrule

&\multicolumn{9}{c}{DGP1} \\
\cmidrule(l{5pt}r{5pt}){2-10}   \ 
&\multicolumn{3}{c}{Diff-in-Diffs}&\multicolumn{3}{c}{Synthetic Control}&\multicolumn{3}{c}{Constrained Lasso}\\
\cmidrule(l{5pt}r{5pt}){2-4}    \cmidrule(l{5pt}r{5pt}){5-7}    \cmidrule(l{5pt}r{5pt}){8-10}  \ 
$T_0$&$J=20$&$J=50$& $J=100$&$J=20$&$J=50$& $J=100$&$J=20$&$J=50$& $J=100$\\
\midrule

20 & 0.10 & 0.10 & 0.10 & 0.08 & 0.09 & 0.09 & 0.07 & 0.08 & 0.09 \\ 
  50 & 0.11 & 0.09 & 0.10 & 0.09 & 0.09 & 0.09 & 0.09 & 0.08 & 0.09 \\ 
  100 & 0.10 & 0.10 & 0.10 & 0.10 & 0.09 & 0.09 & 0.10 & 0.10 & 0.09 \\ 

\midrule

&\multicolumn{9}{c}{DGP2} \\
\cmidrule(l{5pt}r{5pt}){2-10}   \ 
&\multicolumn{3}{c}{Diff-in-Diffs}&\multicolumn{3}{c}{Synthetic Control}&\multicolumn{3}{c}{Constrained Lasso}\\
\cmidrule(l{5pt}r{5pt}){2-4}    \cmidrule(l{5pt}r{5pt}){5-7}    \cmidrule(l{5pt}r{5pt}){8-10}  \ 
$T_0$&$J=20$&$J=50$& $J=100$&$J=20$&$J=50$& $J=100$&$J=20$&$J=50$& $J=100$\\
\midrule

20 & 0.42 & 0.43 & 0.43 & 0.09 & 0.10 & 0.09 & 0.08 & 0.08 & 0.08 \\ 
  50 & 0.71 & 0.75 & 0.76 & 0.10 & 0.10 & 0.10 & 0.10 & 0.09 & 0.09 \\ 
  100 & 0.94 & 0.96 & 0.96 & 0.10 & 0.10 & 0.09 & 0.10 & 0.09 & 0.09 \\ 

\midrule

&\multicolumn{9}{c}{DGP3} \\
\cmidrule(l{5pt}r{5pt}){2-10}   \ 
&\multicolumn{3}{c}{Diff-in-Diffs}&\multicolumn{3}{c}{Synthetic Control}&\multicolumn{3}{c}{Constrained Lasso}\\
\cmidrule(l{5pt}r{5pt}){2-4}    \cmidrule(l{5pt}r{5pt}){5-7}    \cmidrule(l{5pt}r{5pt}){8-10}  \ 
$T_0$&$J=20$&$J=50$& $J=100$&$J=20$&$J=50$& $J=100$&$J=20$&$J=50$& $J=100$\\
\midrule

20 & 0.45 & 0.43 & 0.44 & 0.51 & 0.48 & 0.49 & 0.08 & 0.09 & 0.08 \\ 
  50 & 0.78 & 0.75 & 0.78 & 0.82 & 0.80 & 0.80 & 0.09 & 0.09 & 0.09 \\ 
  100 & 0.97 & 0.97 & 0.97 & 0.98 & 0.97 & 0.98 & 0.09 & 0.10 & 0.10 \\ 

\midrule

&\multicolumn{9}{c}{DGP4} \\
\cmidrule(l{5pt}r{5pt}){2-10}   \ 
&\multicolumn{3}{c}{Diff-in-Diffs}&\multicolumn{3}{c}{Synthetic Control}&\multicolumn{3}{c}{Constrained Lasso}\\
\cmidrule(l{5pt}r{5pt}){2-4}    \cmidrule(l{5pt}r{5pt}){5-7}    \cmidrule(l{5pt}r{5pt}){8-10}  \ 
$T_0$&$J=20$&$J=50$& $J=100$&$J=20$&$J=50$& $J=100$&$J=20$&$J=50$& $J=100$\\
\midrule

20 & 0.35 & 0.34 & 0.33 & 0.20 & 0.13 & 0.12 & 0.08 & 0.07 & 0.08 \\ 
  50 & 0.65 & 0.61 & 0.60 & 0.39 & 0.19 & 0.14 & 0.09 & 0.10 & 0.09 \\ 
  100 & 0.89 & 0.86 & 0.86 & 0.61 & 0.33 & 0.21 & 0.09 & 0.10 & 0.10 \\ 

\midrule
\bottomrule
\multicolumn{10}{p{12cm}}{\scriptsize{\it Notes:} Simulation design as described in the main text. Nominal level $\alpha= 0.1$. Based on simulations with $5000$ repetitions.}
\end{tabular}
\label{tab:size_iid_nonstat}
\end{table}
\normalsize

\begin{table}[H]
\centering

\caption{Size Properties: $F_{2t}\sim N(t,1)$, $\rho_\epsilon=\rho_u=0.6$}
\scriptsize
\begin{tabular}{lccccccccc}
\\
\toprule
\midrule

&\multicolumn{9}{c}{DGP1} \\
\cmidrule(l{5pt}r{5pt}){2-10}   \ 
&\multicolumn{3}{c}{Diff-in-Diffs}&\multicolumn{3}{c}{Synthetic Control}&\multicolumn{3}{c}{Constrained Lasso}\\
\cmidrule(l{5pt}r{5pt}){2-4}    \cmidrule(l{5pt}r{5pt}){5-7}    \cmidrule(l{5pt}r{5pt}){8-10}  \ 
$T_0$&$J=20$&$J=50$& $J=100$&$J=20$&$J=50$& $J=100$&$J=20$&$J=50$& $J=100$\\
\midrule
20 & 0.12 & 0.12 & 0.12 & 0.10 & 0.09 & 0.09 & 0.10 & 0.09 & 0.08 \\ 
  50 & 0.10 & 0.12 & 0.11 & 0.11 & 0.11 & 0.11 & 0.12 & 0.12 & 0.11 \\ 
  100 & 0.10 & 0.11 & 0.10 & 0.11 & 0.12 & 0.11 & 0.11 & 0.12 & 0.12 \\ 

\midrule

&\multicolumn{9}{c}{DGP2} \\
\cmidrule(l{5pt}r{5pt}){2-10}   \ 
&\multicolumn{3}{c}{Diff-in-Diffs}&\multicolumn{3}{c}{Synthetic Control}&\multicolumn{3}{c}{Constrained Lasso}\\
\cmidrule(l{5pt}r{5pt}){2-4}    \cmidrule(l{5pt}r{5pt}){5-7}    \cmidrule(l{5pt}r{5pt}){8-10}  \ 
$T_0$&$J=20$&$J=50$& $J=100$&$J=20$&$J=50$& $J=100$&$J=20$&$J=50$& $J=100$\\
\midrule
20 & 0.44 & 0.47 & 0.49 & 0.12 & 0.12 & 0.13 & 0.10 & 0.10 & 0.10 \\ 
  50 & 0.74 & 0.77 & 0.77 & 0.11 & 0.11 & 0.11 & 0.12 & 0.12 & 0.12 \\ 
  100 & 0.95 & 0.96 & 0.96 & 0.11 & 0.11 & 0.11 & 0.11 & 0.11 & 0.11 \\ 

\midrule

&\multicolumn{9}{c}{DGP3} \\
\cmidrule(l{5pt}r{5pt}){2-10}   \ 
&\multicolumn{3}{c}{Diff-in-Diffs}&\multicolumn{3}{c}{Synthetic Control}&\multicolumn{3}{c}{Constrained Lasso}\\
\cmidrule(l{5pt}r{5pt}){2-4}    \cmidrule(l{5pt}r{5pt}){5-7}    \cmidrule(l{5pt}r{5pt}){8-10}  \ 
$T_0$&$J=20$&$J=50$& $J=100$&$J=20$&$J=50$& $J=100$&$J=20$&$J=50$& $J=100$\\
\midrule
20 & 0.45 & 0.45 & 0.45 & 0.53 & 0.51 & 0.53 & 0.10 & 0.09 & 0.09 \\ 
  50 & 0.78 & 0.78 & 0.78 & 0.83 & 0.82 & 0.81 & 0.12 & 0.12 & 0.12 \\ 
  100 & 0.97 & 0.97 & 0.97 & 0.98 & 0.97 & 0.97 & 0.12 & 0.11 & 0.11 \\ 

\midrule

&\multicolumn{9}{c}{DGP4} \\
\cmidrule(l{5pt}r{5pt}){2-10}   \ 
&\multicolumn{3}{c}{Diff-in-Diffs}&\multicolumn{3}{c}{Synthetic Control}&\multicolumn{3}{c}{Constrained Lasso}\\
\cmidrule(l{5pt}r{5pt}){2-4}    \cmidrule(l{5pt}r{5pt}){5-7}    \cmidrule(l{5pt}r{5pt}){8-10}  \ 
$T_0$&$J=20$&$J=50$& $J=100$&$J=20$&$J=50$& $J=100$&$J=20$&$J=50$& $J=100$\\
\midrule

20 & 0.38 & 0.37 & 0.36 & 0.21 & 0.14 & 0.12 & 0.10 & 0.10 & 0.10 \\ 
  50 & 0.66 & 0.63 & 0.63 & 0.39 & 0.20 & 0.14 & 0.12 & 0.12 & 0.11 \\ 
  100 & 0.89 & 0.86 & 0.86 & 0.62 & 0.34 & 0.20 & 0.10 & 0.11 & 0.11 \\ 

\midrule
\bottomrule
\multicolumn{10}{p{12cm}}{\scriptsize{\it Notes:} Simulation design as described in the main text. Nominal level $\alpha= 0.1$. Based on simulations with $5000$ repetitions.}
\end{tabular}
\label{tab:size_wd_nonstat}
\end{table}
\normalsize

\begin{figure}[H]
\caption{Finite Sample Size Properties ($\alpha=0.1$)}
\begin{center}
\includegraphics[width=0.6\textwidth,trim={0 1.5cm 0 2.75cm}]{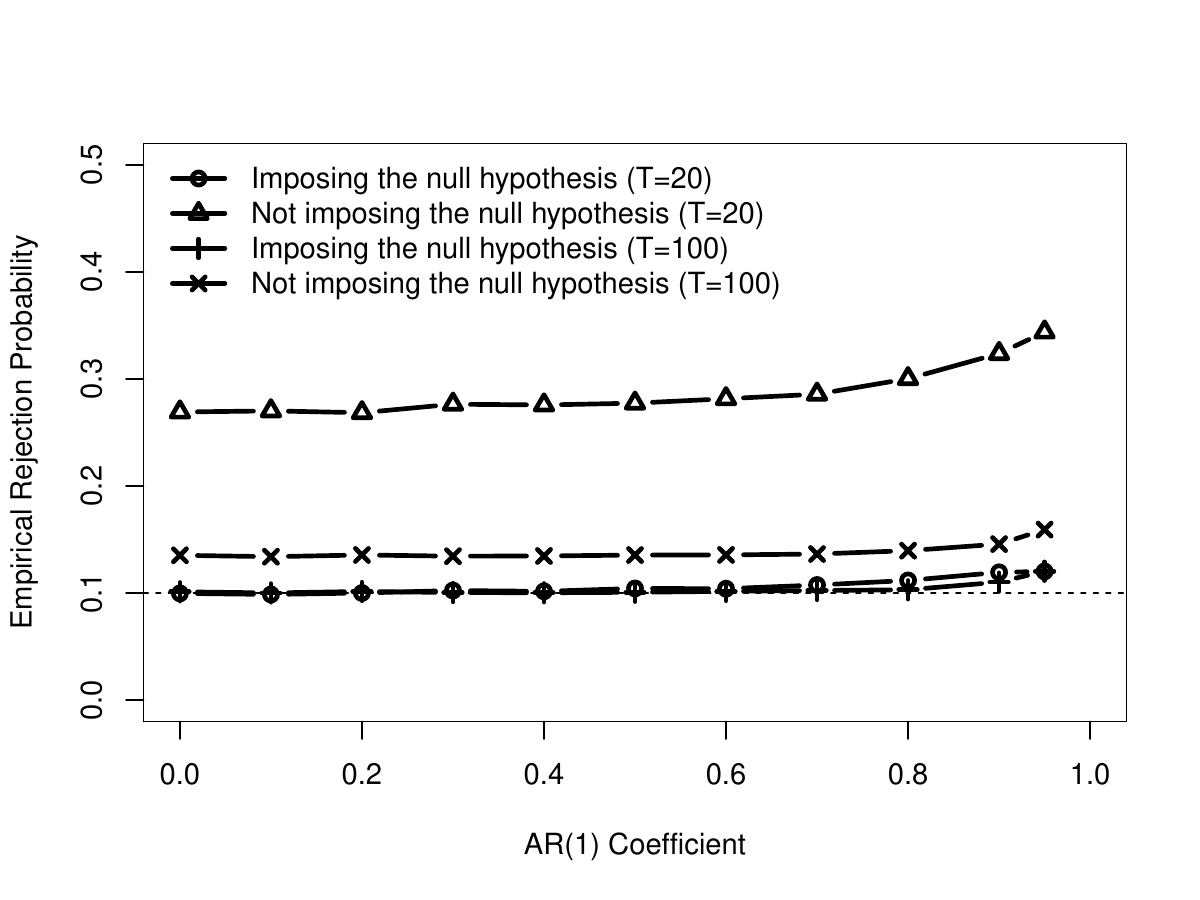}
\end{center}
    {\footnotesize \textit{Notes:} Empirical rejection probability from testing $H_0:\theta_{T}=0$. The data are generated as $Y^N_{1t}=\sum_{j=2}^{J+1}w_jY_{jt}^N+u_t$, where $Y_{jt}^N\sim N(0,1)$ is iid across $(j,t)$, $\{u_t\}$ is a Gaussian AR(1) process, $(w_2,\dots,w_{J+1})'=(1/3,1/3,1/3,0,\dots,0)'$, and $J=50$. The weights are estimated using the canonical SC method (cf. Section \ref{ex:sc}).}
\label{fig:imposing_H0}
\end{figure}

\begin{figure}[H]
\caption{Power Curves}
\begin{center}
\includegraphics[width=0.375\textwidth,trim={0 1cm 0 1.75cm}]{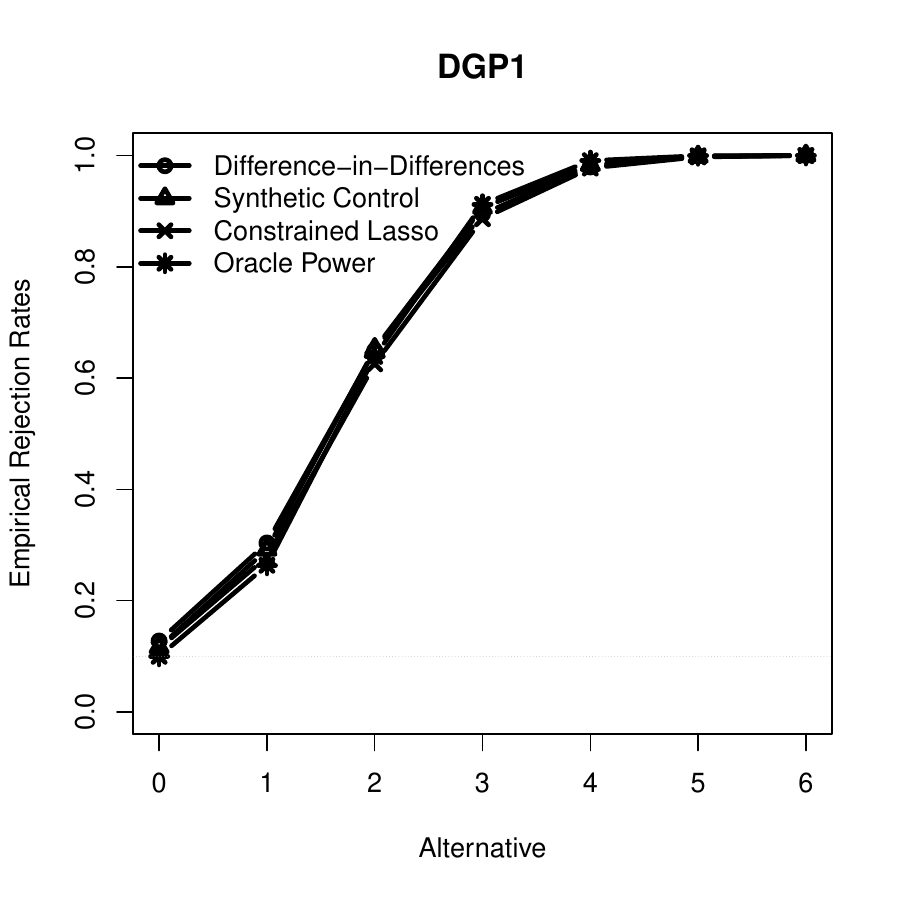}
\includegraphics[width=0.375\textwidth,trim={0 1cm 0 1.75cm}]{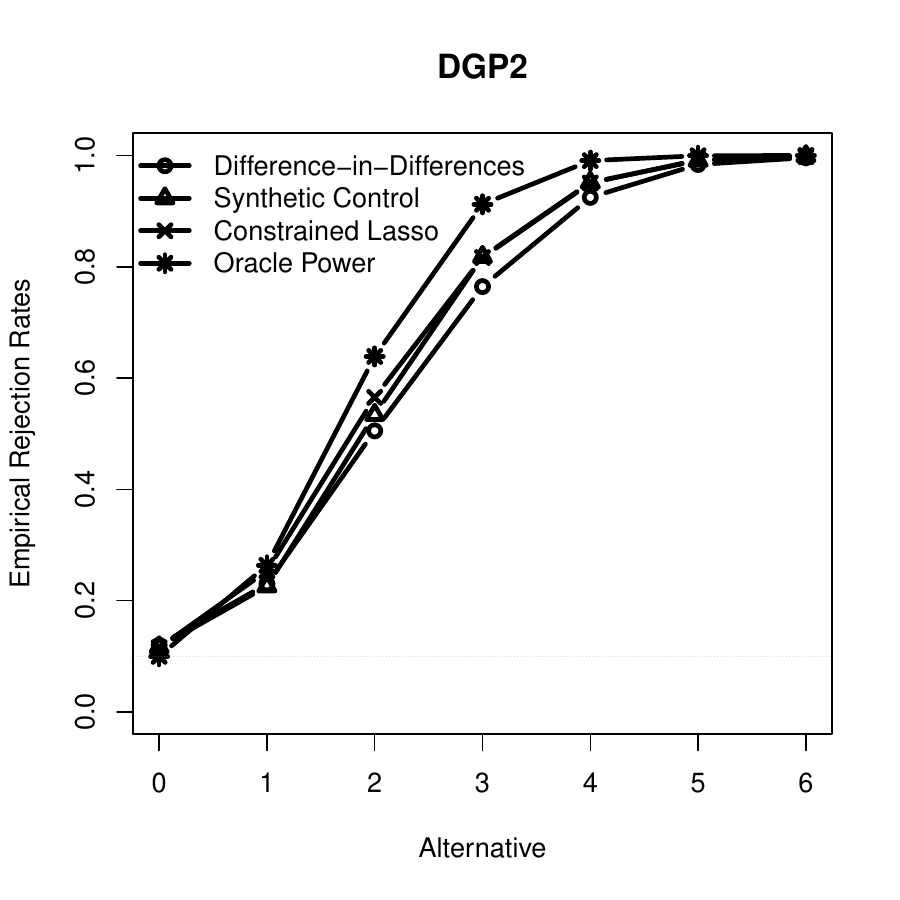}
\includegraphics[width=0.375\textwidth,trim={0 1.75cm 0 0}]{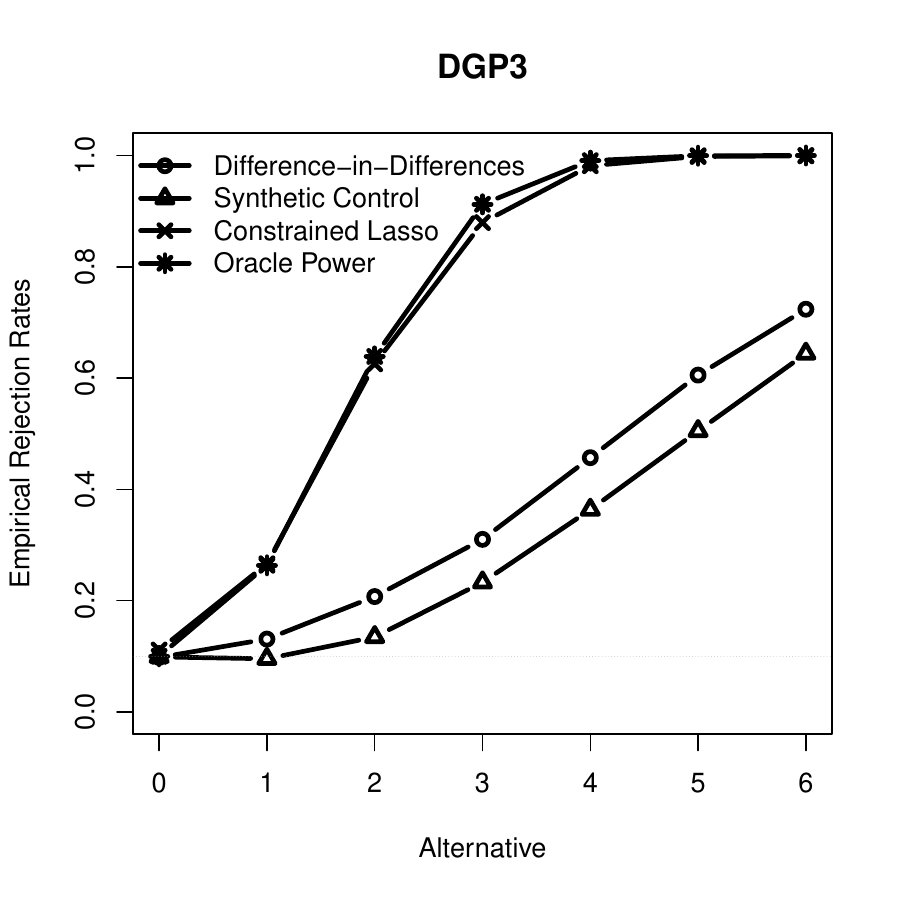}
\includegraphics[width=0.375\textwidth,trim={0 1.75cm 0 0}]{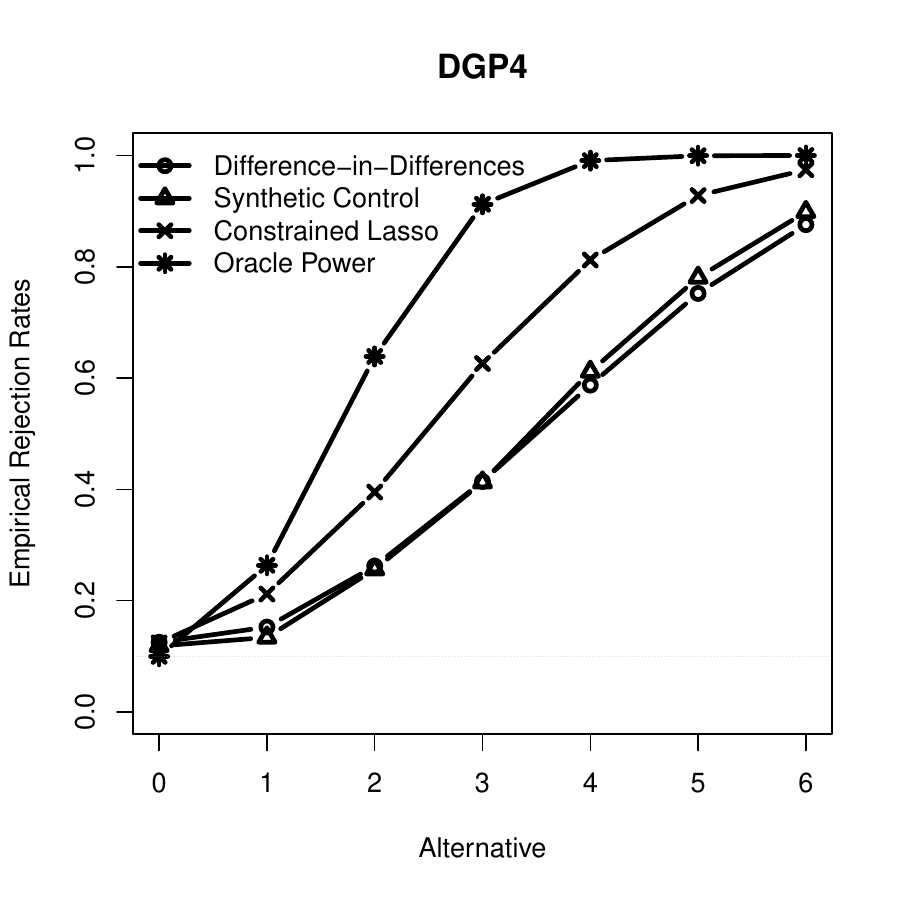}
\end{center}
    {\footnotesize \textit{Notes:} Simulation design as described in the main text. Nominal level $\alpha= 0.1$. Based on simulations with $5000$ repetitions.}
\label{fig:pcs}
\end{figure}

\newpage

\end{document}